\newcommand{\ra}[1]{\renewcommand{\arraystretch}{#1}}
\definecolor{darkbrown}{rgb}{0.4, 0.26, 0.13}
\newcounter{algsubstate}
\renewcommand{\thealgsubstate}{\alph{algsubstate}}
\newenvironment{algsubstates}
  {\setcounter{algsubstate}{0}%
   \renewcommand{\State}{%
     \stepcounter{algsubstate}%
     \Statex {\footnotesize\thealgsubstate:}\space}}
  {}
 \definecolor{orange}{RGB}{230,170,120}
  \definecolor{green}{RGB}{120,200,120}
 \newtheoremstyle{mytheoremstyle}
  {2pt}      
  {3pt}      
  {\itshape} 
  {}         
  {\bfseries}
  {}         
  {3pt}      
  {}         
\theoremstyle{mytheoremstyle}
 \newtheorem{thm}{Theorem}[section]
 \newtheorem{lem}[thm]{Lemma}
 \newtheorem{prop}[thm]{Proposition}
 \newtheorem{cor}{Corollary}
 \theoremstyle{definition}
 \newtheorem{defn}{Definition}[section]
 \newtheorem{exmp}{Example}[section]
  \newtheorem{ass}{Assumption}[section]
 \theoremstyle{definition}
 \newtheorem{rem}{Remark}
 \def\BState{\State\hskip-\ALG@thistlm}
\def\spacingset#1{\renewcommand{\baselinestretch}%
{#1}\small\normalsize} \spacingset{1}
\DeclarePairedDelimiter{\nint}\lfloor\rceil
\definecolor{coquelicot}{rgb}{1.0, 0.22, 0.0}
\algnewcommand\algorithmicforeach{\textbf{for each}}
 \newcommand{\blind}{0}
\begin{document}
  
\if0\blind
{  
  \title{\vspace{-13mm} \bf Policy design in experiments with unknown interference\footnote{We are especially grateful to Graham Elliott, James Fowler, Paul Niehaus, Yixiao Sun, and Kaspar W\"uthrich for their continuous advice and support, and Karun Adusumilli, Isaiah Andrews, Tim Armstrong, Peter Aronow, Susan Athey,  Emily Breza, Ivan Canay, Raj Chetty, Arun Chandrasekhar, Tim Christensen, Aureo de Paula, Tjeerd de Vries, Matt Goldman, Peter Hull, Guido Imbens, Brian Karrer, Max Kasy, Larry Katz, Michal Kolesar, Toru Kitagawa, Pat Kline, Michael Kremer, Amanda Kowalski, Michael Leung, Xinwei Ma, Elena Manresa, Craig Mcintosh, Konrad Menzel, Karthik Muralidharan, Ariel Pakes,  Jack Porter, Max Tabord-Meehan, Ulrich Mueller, Gautam Rao, Jonathan Roth, Cyrus Samii, Pedro Sant'Anna, Fredrik S\"avje, Azeem Shaikh, Jesse Shapiro, Pietro Spini, Elie Tamer, Alex Tetenov, Ye Wang, Andrei Zeleneev, the editor and referees, and participants at numerous seminars for helpful comments. We particularly thank the teams at Precision Development and the Center for Economic Research in Pakistan, especially Jagori Chatterjee, Khushbakht Jamal, and Adeel Shafqat.  The experiment is registered at \url{https://www.socialscienceregistry.org/trials/9945}. The experiment has received IRB approval from Stanford University, where DV was affiliated as a postdoctoral fellow during the data collection, and the University of Chicago. We thank the Harvard Griffin Fund in Economic Research, the Harvard Chae Family Economics Research Fund, the International Fund for Agricultural Development (IFAD), the NBER Social Learning Fund, and the Development Innovation Lab for generous funding support. Helena Franco, Benjamin Zeisberg, Angelina Zhang provided excellent research assistance.
  All mistakes are our own.}}
\author{Davide Viviano\footnote{Department of Economics, Harvard University, 1805 Cambridge St, Cambridge, MA 02138.  Email: dviviano@fas.harvard.edu.} $\quad$ Jess Rudder\footnote{Department of Economics, Oregon State University, 1500 SW Jefferson Way, Corvallis, OR 97331. Email: jessica.rudder@oregonstate.edu} }
    \date{First Version: November, 2020; \\ This Version: July, 2026
   }
    
  \maketitle
} \fi

\if1\blind
{
  \bigskip
  \bigskip
  \bigskip
  \ 
 \\
 
 \ 
 \\

  \medskip
} \fi

\vspace{-5mm} 
 \begin{abstract}
\noindent  

This paper studies experimental designs for estimation and inference on policies with spillover effects. Units are organized into a finite number of large clusters and interact in unknown ways within each cluster. First, we introduce a single-wave experiment that estimates the marginal effect of a change in treatment probabilities, taking spillover effects into account. Using the marginal effect, we propose a policy optimality test and improve the population’s average outcome through marginal changes. Second, we design a multiple-wave experiment to estimate optimal treatment rules, with strong regret guarantees. We provide an implementation in a large-scale field experiment studying climate adaptation.



\end{abstract}

\noindent%
{\it Keywords:} Experimental Design, Spillovers, Welfare Maximization, Causal Inference. \\
{\it JEL Codes:} C31, C54, C90.
\vfill

\newpage
 \section{Introduction} 
\spacingset{1.5} 

One of the goals of a government or NGO is to use experimental evidence to choose policies that improve outcomes in the target population. Interference makes this problem challenging: treating one individual may also affect others. As a result, treatment-effect contrasts are not, on their own, sufficient for welfare maximization.\footnote{Examples include the direct effect of treatment and the overall effect, i.e., the effect of treating all individuals compared with treating none. For welfare maximization, neither estimand is sufficient in general. The direct effect ignores spillovers, whereas the optimal rule may treat some but not all individuals because of treatment costs or constraints.} For example, in an information campaign, information may be costly and may generate large direct effects for individuals in remote areas, while generating smaller spillovers.  Measuring spillovers is also difficult: when they operate through an unobserved network, collecting network data is often costly or infeasible \citep[see][]{breza2017using}. In response, a common approach to measure spillovers is to vary treatment probabilities between many clusters, assumed to be independent \citep{baird2018optimal}. These designs are natural when interference is contained within those (small but many) clusters, but are less appropriate when spillovers occur between such units, such as villages or neighborhoods, a common concern in practice \citep[e.g.,][]{egger2019general}.

This paper studies experimental designs with interference in a complementary regime where we have limited information about interference, formalized by assuming a small (finite) number of large clusters, such as regions or districts. Within each cluster, units may interact in unknown ways; across clusters, spillovers are assumed to be negligible.\footnote{A finite number of large clusters allows researchers to remain agnostic about spillovers within each region and only requires approximate independence across regions. Namely, the number of individuals who interact across regions is small relative to the number of individuals in a region \citep[][]{leung2023network}.} The key restriction is that these few but large clusters are comparable experimental environments. In the baseline model, if two clusters were exposed to the same treatment rule, their average outcomes would be the same in expectation, up to additive cluster and time effects. This restriction allows realized networks and individual outcomes to differ across clusters, but rules out arbitrary cluster-specific responses to the policy through unobserved heterogeneity. 

With a few clusters, a single-wave experiment generally cannot recover the full welfare function over the policy space. In addition, practical or political constraints may make large departures from a baseline policy infeasible, especially when they require implementing substantially different treatment rules across regions. We therefore focus on local policy changes around a baseline assignment rule, bringing the local-perturbation logic familiar from the sufficient-statistics literature in public finance and macroeconomics \citep{barnichon2023sufficient, chetty2009sufficient} to experimental design. The idea is simple: assign two comparable clusters nearby values of the policy parameter (possibly matching on covariates), inducing nearby treatment probabilities, and compare their average outcomes after rescaling by the size of the perturbation. This contrast estimates the marginal policy effect (MPE): the derivative of the average outcome with respect to the policy parameter, taking both direct effects and spillovers into account.\footnote{The MPE is distinct from the marginal treatment effect in observational studies \citep{carneiro2010evaluating}, which depends on individual selection into treatment.} The MPE identifies the direction of a welfare-improving local change and provides the basis for testing whether the baseline policy is locally optimal.

Specifically, the paper makes three contributions. First, we introduce a single-wave experiment for estimating and conducting inference on the MPE with a finite number of large clusters. The design pairs comparable clusters, assigns nearby values of the policy parameter within each pair, and uses the resulting difference-in-differences contrasts to estimate marginal policy effects. The same experiment also provides estimates of direct effects and marginal spillover effects, which can be of independent interest, by pooling clusters with different local perturbations. We derive consistency and inference guarantees for these policy-relevant estimands with a finite number of large clusters. 

Second, we implement the single-wave perturbation logic in a large-scale field experiment. In collaboration with Precision Development (PxD), an NGO providing agronomy advice in developing countries, we conducted an experiment with over 250,000 farmers across 25 counties in rural Pakistan. PxD provided geo-localized weather forecasts to improve farmers' information about local weather conditions, which are central for climate adaptation. Spillovers are relevant in this setting: in a survey conducted by PxD, \(80\%\) of surveyed farmers reported sharing weather information with other farmers. The experiment consisted of two pre-specified waves that implemented local perturbations around two treatment probabilities, \(50\%\) in the first wave and \(70\%\) in the second wave. This allows us to estimate marginal policy effects and spillovers at two policy-relevant saturation levels.

Using high-frequency survey data merged with daily weather information, we show that farmers improve their beliefs about one-day ahead weather forecasts, and the program generates spillovers, where beliefs are often central for climate adaptation and behavior, e.g. \cite{burlig2024long}. Specifically,  we observe positive marginal policy effects over the first wave and close to zero marginal effects over the second wave when treating $70\%$ of individuals. 
Treating only $70\%$ of individuals instead of the full population can reduce the cost of the intervention by one million US dollars/year if implemented at scale in Pakistan.

Third, we study how MPE estimates can be used sequentially in a multiple-wave experiment. In many applications including those involving our partner NGO Precision Development \citep[e.g.][]{kasy2019adaptive}, it is common to conduct experiments over multiple waves, although previous literature has mostly focused on settings with no spillovers. The adaptive design pairs clusters, estimates local marginal effects within each pair, and updates policies using information from a different pair. This cross-pair update is important because repeated sampling and serial dependence can otherwise make the policy assigned to a cluster endogenous to that cluster's own potential outcomes.

We derive regret guarantees for both out-of-sample regret, i.e., the welfare loss when the learned policy is deployed on a new population, and in-sample regret, i.e., the welfare loss borne by experimental participants. The main guarantees require the same cluster-comparability condition used for identification, and smoothness and strong concavity of the welfare function, as in settings with globally decreasing marginal returns. Under these conditions, regret decreases at a fast rate \(1/K\), up to logarithmic factors, when the number of clusters is proportional to the number of iterations. When strong concavity fails, Appendix~\ref{sec:quasi_concavity} derives nearly \(1/K\) rates, up to a small polynomial exponent, under a weaker strict quasi-concavity condition and strong concavity only local at the optimum, but at the expense of a larger per-cluster sample size. Conversely, under strong concavity and larger per-cluster sample size, exponential out-of-sample rates are feasible. These results illustrate a trade-off: the more globally regular the welfare function is, the more aggressively the experiment can use local marginal information to learn from a few clusters. In applications, we view $1/K$-rate as the natural benchmark, and exponential rates as a favorable case.

The appendix studies dynamic treatments, observed cluster heterogeneity proposing a matching strategy on observable cluster characteristics (such as the size of the cluster), and alternative fixed-effect structures. When the number of clusters is finite, however, we mantain some form of cluster-level homogeneity, possibly controlling for cluster-level characteristics.

Our paper relates to the literature on inference under interference and draws from \cite{hudgens2008toward} for definitions of potential outcomes.  \cite{leung2022causal, aronow2017estimating, manski2013identification, li2020random} assume an observed network, while  \cite{sinclair2012detecting},  \cite{vazquez2017identification}, \cite{ibragimov2010t} consider clusters among others. \cite{savje2017average} study inference of the direct effect only. Discussion about relevant estimands in these frameworks can also be found in more recent work by \cite{hu2021average}. None study experimental design.

We contribute to the literature on \textit{single-wave} experiments, where existing network experiments include clustered experiments and saturation designs  \citep{imai2021causal, baird2018optimal}. References with observed networks include  \cite{basse2018model}, 
 \cite{viviano2020experimental} among others. For the analysis of the bias of average treatment effect estimators with interference, see also \cite{basse2016analyzing}, \cite{johari2020experimental}, and \cite{imai2009essential}, and \cite{tabord2018stratification} with $i.i.d.$ data. These authors study experimental designs for inference on treatment effects but not inference on optimal policies. Different from these references, we propose a design to identify the marginal policy effect under interference, used for hypothesis testing and improving policies.

With multiple-wave experiments, we connect to the recent literature on adaptive exploration \citep[e.g.,][among others]{bubeck2012regret, kasy2019adaptive}, and the one on derivative-free stochastic optimization, dating back to \cite{kiefer1952stochastic}, and \cite{flaxman2004online, kleinberg2005nearly, shamir2013complexity, agarwal2010optimal}, among others.
These references do not study the problem of interference (and inference, see Section \ref{sec:main_design} for a comprehensive discussion). 
\cite{wager2019experimenting} study price estimation in a single market \citep[and similarly][]{munro2021treatment}.  They assume infinitely many individuals and an explicit model for market prices under which agents are asymptotically independent. As noted by the authors, the assumptions in their paper do not allow for spillovers on a network, different from here. 
Our setting differs from all these references because individuals are organized into finitely many independent clusters, where unobserved (network) spillovers may occur. These differences motivate (i) our design, which exploits two-level randomization at the cluster and individual level instead of individual-level randomization, and (ii) cluster-level perturbations. From a theoretical perspective, dependence and repeated sampling induce novel challenges.

More broadly, we connect to the treatment choice literature on estimation \cite{manski2004, KitagawaTetenov_EMCA2018, athey2017efficient,  viviano2019policy, ananth2020optimal, rai2018statistical,  hadad2019confidence}. This literature considers an existing experiment instead of experimental designs, and has not studied policy design with unobserved interference. We also relate to the literature on targeting on networks \citep[e.g.,][]{bloch2019centrality}, peer-group composition \citep{graham2010measuring}, and pioneering work on vaccination campaigns \citep{manski2010vaccination, manski2017mandating}. None of these study experimental designs.

Finally, our focus on a few large clusters connects to recent work on experimental design with aggregate units, but absent interference. \cite{abadie2021synthetic} use synthetic-control ideas to select treated and comparison units when only one or a few units can be treated. \cite{ni2025enhancing} study switchback experiments, where treatment is varied over time in settings in which individual-level randomization is impractical and carryover effects may arise. Our setting is complementary as here the key source of variation are local perturbations of treatment probabilities to identify the MPE (in settings with interference/dependence).

\section{Setup and overview} \label{sec:2nd}

\subsection{Notation}

We first introduce necessary notation. We observe \(K\) large clusters, where \(K\) is fixed and, for notational convenience, even. Cluster \(k\) contains \(\tilde N^{(k)}\) individuals, with \(\tilde N^{(k)} \asymp N\) for all \(k \in \{1,\ldots,K\}\). Observables and unobservables are independent across clusters, but may be dependent within clusters because of interference.

For each cluster \(k\), period \(t\), and sampled individual \(i\), we observe an outcome, treatment, and baseline covariates,
$ 
Y_{i,t}^{(k)} \in \mathcal Y, 
D_{i,t}^{(k)} \in \{0,1\}, 
X_i^{(k)} \in \mathcal X \subseteq \mathbb R^L .
$ 
In each period, researchers observe a random subsample
$
\Big(Y_{i,t}^{(k)},X_i^{(k)},D_{i,t}^{(k)}\Big)_{i=1}^n, n \asymp N,
$ 
where, for expositional convenience, \(n\) is the same across clusters. The same individuals may or may not be sampled in different periods; with some abuse of notation, we index sampled units in every period by \(i \in \{1,\ldots,n\}\). In our asymptotic analyses, we let $N$ grow through a sequence of
data-generating processes and let $K$ be fixed.  

\paragraph{Potential outcomes} Potential outcomes allow for within-cluster interference. Let
\[
Y_{i,t}^{(k)}\big(\mathbf d_1^{(k)},\ldots,\mathbf d_t^{(k)}\big),
\qquad
\mathbf d_s^{(k)} \in \{0,1\}^{\tilde N^{(k)}},\quad s\le t,
\]
denote the potential outcome of individual \(i\) in cluster \(k\) at time \(t\), as a function of the treatment assignments of all units in the same cluster up to period \(t\). We use capital letters, such as \(D_{i,t}^{(k)}\), for realized treatment assignments. This notation imposes no cross-cluster interference and no anticipation \citep[e.g.][]{athey2018design}. We write \(Y^{(k)}(\cdot)\) for the collection of potential outcome functions in cluster \(k\), and \(X^{(k)}\) for the corresponding covariates. We take a super-population perspective where potential outcomes are random. 

\paragraph{Assignment policies} 
We consider a parametric class of treatment-assignment rules indexed by a policy parameter $\beta$,
$ 
\pi(\cdot;\beta):\mathcal X \to [0,1],
$ in the interior of a compact support $\mathcal{B}$ throughout,
where $\pi(x;\beta)$ is the probability of assigning treatment to an individual with covariates $x$. We assume $\pi(x;\beta)$ to be twice differentiable in $\beta$. For example, with discrete covariates, $\beta$ may directly collect type-specific treatment probabilities, $\pi(x;\beta)=\beta^{(x)}$, while other smooth parametrizations are also possible. In practice, we will think of $\pi(x;\beta)=\beta^{(x)}$ as the benchmark, which generalizes \cite{baird2018optimal} to discrete covariates. 

In the experiment, the researcher may choose cluster- and time-specific policy parameters $\hat\beta_{k,t}$. We use hats to denote parameters chosen by the experimental design, and write $\beta$ for a generic non-random policy parameter. We let $\mathbb E_\beta[\cdot]$ denote expectation over treatment assignments generated by $\pi(\cdot;\beta)$.

We focus on two-stage experiments. First, the researcher chooses the policy parameter $\hat\beta_{k,t}$ for cluster $k$ at time $t$. Second, conditional on this parameter and covariates, treatments are assigned independently across individuals. This class of assignments is easy to implement and nests standard saturation designs \citep[e.g.][]{baird2018optimal, imai2021causal}. 

\begin{ass}[Treatment assignments in the experiment] \label{defn:bernoulli} 
For $\hat{\beta}_{k,t} \perp \big(X^{(k)},Y^{(k)}(\cdot)\big)$, let
\[
D_{i,t}^{(k)} \mid X^{(k)},Y^{(k)}(\cdot),\hat{\beta}_{k,t}
\sim_{i.n.i.d.}
\mathrm{Bern}\!\left(\pi\!\left(X_i^{(k)};\hat{\beta}_{k,t}\right)\right),
\]
where $i.n.i.d.$ indicates independent but not necessarily identically distributed assignments across individuals within cluster $k$.
\end{ass}

Assumption \ref{defn:bernoulli} requires the policy parameter used in a cluster to be exogenous with respect to that cluster's potential outcomes, as in a randomized experiment (the independence with covariates simplifies notation but it is not necessary). 

\subsection{Overview: ``simple world''}
\label{sec:baseline_model}

We begin with an illustrative discussion of the simplest environment underlying the design. Suppose there is a single period of experimentation, no covariates, and all clusters have the same size \(N=\tilde N^{(k)}\). The relevant potential outcome in cluster \(k\) is then
$ 
Y_{i,1}^{(k)}(\mathbf d_1^{(k)}),
\mathbf d_1^{(k)}\in\{0,1\}^N,
$ 
where outcomes may depend on the treatment assignments of all units in the same cluster. Ideally, if a large number of independent clusters were available, one could vary the policy parameter \(\beta\) across clusters and trace out the average reward function
\[
\bar W(\beta)
=
\frac{1}{NK}
\sum_{k=1}^K
\sum_{i=1}^N
\mathbb E_{\beta}
\!\left[
Y_{i,1}^{(k)}(D_1^{(k)})
\right],
\]
where \(D_1^{(k)}\) is drawn according to the assignment rule indexed by \(\beta\). This object is the average response in the population of clusters as a function of the policy. With many independent clusters, researchers can estimate such a response function by assigning different values of \(\beta\) to different clusters, without requiring the response function to be identical across clusters.

This strategy is less useful in the regime studied here, for two reasons. First, and foremost, the researcher has only a small, finite number of independent clusters. This is the relevant regime when natural small units, such as villages or neighborhoods, may be connected by spillovers and therefore cannot be treated as independent experimental units. Second, assigning significantly different treatment probabilities to different regions can sometimes be infeasible due to political or logistical constraints. 

We therefore take a different route. We impose a comparable-clusters restriction: if two large clusters are assigned the same policy parameter \(\beta\), their average outcomes are the same in expectation. In the simple model without fixed effects, this means that, for each cluster,
\[
\frac{1}{N}
\sum_{i=1}^N
\mathbb E_{\beta}
\!\left[
Y_{i,1}^{(k)}(D_1^{(k)})
\right]
=
W(\beta),
\]
for a common reward \(W(\cdot)\). This condition requires that clusters have the same average response to the same assignment rule in expectation. The formal model below relaxes equality in levels by allowing additive cluster and time fixed effects, but the key restriction remains that fixed effects do not change the marginal response to the policy. (In practice, we recommend matching cluster on observable characteristics and test for balance accordingly.) 

Under this comparable-clusters restriction, local perturbations $\eta$ identify a policy-relevant derivative even when the number of clusters is small. Assign one cluster the nearby policy parameter \(\beta+\eta\) and another comparable cluster the nearby policy parameter \(\beta-\eta\). Then
\[
\frac{W(\beta+\eta)-W(\beta-\eta)}{2\eta} \approx \frac{\partial W(\beta)}{\partial \beta}
\]
as we take $\eta$ to be sufficiently small. 
The experimental analogue replaces \(W(\beta+\eta)\) and \(W(\beta-\eta)\) with the average outcomes observed in the two clusters. This local object indicates the direction for improvement, and allows us to test if  \(\beta\) is an interior local optimum.

To further relate to common practice, note that with a few clusters researchers often choose few (e.g., two) treatment probabilities ($\beta_1, \beta_2$), and assign clusters to \textit{each} of these probabilities. Section \ref{sec:designs} shows that the local randomization approach around these probabilities can return the same contrasts these common designs identify by pooling clusters \textit{around} each treatment probability (therefore without loosing sample size) at the expense of an asymptotically negligible bias; at the same time, it also identifies policy gradients at each of these points, for a suitable choice of $\eta$ as discussed in Remark \ref{rem:eta_choice}.

In a sequential regime, where researchers can choose treatments in a second stage, these policy gradients are useful to improve decisions on the experimental participants.

\subsection{Formal model}

We now state the formal version of the comparable-clusters model. 

\begin{ass}[Model] \label{ass:ass_0} 
Suppose that for any $(i,t,k)$, the following holds.
\begin{itemize} 
\item[(i)] No carryovers and common covariate distribution:
$ 
Y_{i,t}^{(k)}(\mathbf d_1^{(k)},\ldots,\mathbf d_t^{(k)})
$ 
is constant in $\mathbf d_1^{(k)},\ldots,\mathbf d_{t-1}^{(k)}$, and $X_i^{(k)}\sim F_X$ for an unknown distribution $F_X$.

\item[(ii)] Under an assignment in Assumption \ref{defn:bernoulli} with parameter $\hat\beta_{k,t}$,
\begin{equation} \label{eqn:main_Y0} 
\begin{aligned} 
\mathbb E_{\hat\beta_{k,t}}
\Big[
Y_{i,t}^{(k)}
\mid
X_i^{(k)}=x
\Big]
=
\tau_k+\alpha_t+y(x,\hat\beta_{k,t}),
\end{aligned} 
\end{equation}
where $Y_{i,t}^{(k)}=Y_{i,t}^{(k)}(\mathbf D_1^{(k)},\ldots,\mathbf D_t^{(k)})$, the expectation is over the treatment assignments and the distribution of potential outcomes, and $y(\cdot)$ is common across clusters.

\item[(iii)] Within-cluster dependence is local in the following sense: for some $\gamma_N\ge 1$
\[
Y_{i,t}^{(k)}
\perp
\{Y_{j,t}^{(k)}\}_{j\notin \mathcal I_i^{(k)}}
\mid
\hat\beta_{k,t},
\qquad
|\mathcal I_i^{(k)}|\le 2\gamma_N.
\]
\end{itemize} 
\end{ass}   

Assumption \ref{ass:ass_0}(i) rules out carryover effects and imposes a common covariate distribution across clusters. Assumption \ref{ass:ass_0}(ii) is the central comparable-clusters restriction. Conditional on an individual's own  covariates, expected outcomes may depend on the policy parameter $\beta$, which captures spillovers induced by the assignment rule, but this dependence must be the same across clusters up to the additive fixed effects $\tau_k$ and $\alpha_t$.

Assumption \ref{ass:ass_0}(iii) controls within-cluster dependence. The parameter $\gamma_N$ measures the size of the local dependence neighborhood. This condition is sufficient, but not necessary as Theorem \ref{thm:const1} below shows that such condition can be relaxed to other forms of weak dependence.  
Section \ref{sec:why_assumptions} discusses the applicability of these assumptions in our application. 

\begin{defn}[Reward, net-of fixed effects] \label{defn:welf1} 
For treatments assigned as in Assumption \ref{defn:bernoulli} with policy parameter $\beta$, define
$
W(\beta)
=
\int y(x,\beta)dF_X(x).
$
\end{defn} 

The reward $W(\beta)$ is the expected outcome, net of additive cluster and time effects, under assignment rule $\pi(\cdot;\beta)$. The expectation is over covariates, treatment assignments, and potential outcomes. Because $\tau_k$ and $\alpha_t$ are additive and do not depend on $\beta$, we directly define reward net of such fixed effects as these do not affect relevant comparisons.  Note that researchers may also subtract treatment costs $c\beta$ in the definition of $W(\beta)$ when relevant.

Under differentiability in Assumption \ref{ass:regularity_basic} below, define the optimal policy and the MPE
\begin{equation}  \label{eqn:estimand} 
\beta^* \in \mathrm{arg}\sup_{\beta\in\mathcal B} W(\beta),
\qquad
M(\beta)=\frac{\partial W(\beta)}{\partial \beta}.
\end{equation}


\begin{rem}[Direct and spillover effects]
In addition to marginal and welfare effects, we may also be interested separately in direct and spillover effects. Under the additional assumption that $\mathbb E_{\beta}
\Big[
Y_{i,t}^{(k)}
\mid D_{i,t}^{(k)}=d, 
X_i^{(k)}=x
\Big]
=
\tau_k+\alpha_t+m(d, x,\beta)$ for some common function $m(d,x,\beta)$ define the conditional direct effect and the marginal spillover effect as
\[
\Delta(x,\beta)
=
m(1,x,\beta)-m(0,x,\beta),
\qquad
S(d,x,\beta)
=
\frac{\partial m(d,x,\beta)}{\partial \beta},
\quad
d\in\{0,1\}.
\]
The direct effect compares treated and untreated units holding fixed the assignment environment. The marginal spillover effect captures how expected outcomes change when the assignment rule changes, holding fixed the individual's own treatment status.

Similar in spirit to the decomposition in \cite{hudgens2008toward}\footnote{We also note that in more recent work, \cite{hu2021average} motivate targeting as causal estimand the average indirect effect, different from $S(\cdot)$ with heterogeneous assignments.  \cite{graham2010measuring} present peer effects' decompositions in the different contexts of peer groups' formation. 
} 
 \begin{equation} \label{eqn:marginal}  
 \small 
 M(\beta) = \int  \Big[\underbrace{\pi(x;\beta) S(1, x, \beta) + (1 - \pi(x; \beta))  S(0, x, \beta)}_{(S)} +  \underbrace{\frac{\partial \pi(x;\beta)}{\partial \beta} \Delta(x, \beta)}_{(D)}\Big] dF_X(x). 
 \end{equation} 
The term $(D)$ captures the direct effect of changing the individual's treatment probability. The term $(S)$ captures the marginal spillover effect. \qed 
\end{rem} 

\subsection{Illustrative examples}

Two examples illustrate the applicability of Assumption \ref{ass:ass_0}. 
 
\begin{exmp}[Neighbors' effects] \label{exmp:main} 
Suppose, for simplicity, that there are no individual covariates and that treatments are assigned independently with common probability \(\beta\), so that
$ 
D_{i,t}^{(k)} \sim_{i.i.d.} \mathrm{Bern}(\beta).
$ 
Let \(\mathcal N_i^{(k)}\) denote the set of neighbors of individual \(i\) in cluster \(k\). Consider the outcome model
\begin{equation} \label{eqn:ex}
\small 
\begin{aligned} 
Y_{i,t}^{(k)}
&=
\tau_k+\alpha_t
+
D_{i,t}^{(k)}\phi_1
+
\frac{\sum_{j\in\mathcal N_i^{(k)}}D_{j,t}^{(k)}}{|\mathcal N_i^{(k)}|}\phi_2
-
\left(
\frac{\sum_{j\in\mathcal N_i^{(k)}}D_{j,t}^{(k)}}{|\mathcal N_i^{(k)}|}
\right)^2\phi_3  -
D_{i,t}^{(k)}
\frac{\sum_{j\in\mathcal N_i^{(k)}}D_{j,t}^{(k)}}{|\mathcal N_i^{(k)}|}
\phi_4
+
\nu_{i,t}^{(k)},
\end{aligned} 
\end{equation}
with $\mathbb E[\nu_{i,t}^{(k)}]=0$.  The coefficient \(\phi_1\) captures the direct effect of the treatment, while \(\phi_2\) and \(\phi_3\) capture spillovers from neighbors' treatment exposure, allowing for decreasing marginal spillovers. The interaction term \(\phi_4\) allows the effect of an individual's own treatment to depend on neighbors' treatment exposure. Let \(c\) denote the cost of treating an individual, and let
$ 
\iota=\mathbb E\!\left[\frac{1}{|\mathcal N_i|}\right]$.
Taking expectations over the independent treatment assignments and subtracting treatment costs, the reward is
\[
W(\beta)
=
\beta(\phi_1-c)
+
\beta\phi_2
-
\beta\phi_3\iota
-
\beta^2\phi_3(1-\iota)
-
\beta^2\phi_4 .
\]
 
The direct-effect component is \((\phi_1-c)-\beta\phi_4\), while the remaining terms capture how changing \(\beta\) changes neighbors' treatment exposure and hence spillovers.

In this example, realized networks may differ across clusters, and realized treatment assignments and outcomes may differ as well. What must be common across clusters is the average response to the assignment rule. In the model above, this requires the coefficients \(\phi_1,\ldots,\phi_4\) and the relevant distribution of network exposure, summarized here by \(\iota\), to be common across clusters, up to the additive fixed effects \(\tau_k\) and \(\alpha_t\). The fixed effects allow clusters to differ in outcome levels, but not in the slope of \(W(\beta)\). 

Appendix \ref{sec:1a} generalizes to non-linear models with local interference, under a particular network microfoundation, homogeneous across clusters, where we control the dependence structure through $\gamma_N^{1/2}$ characterizing an a-priori bound on the largest degree. 
\qed
\end{exmp}

\begin{exmp}[Linear-in-means peer effects] \label{exmp:linear_means} Consider again the assignment in Example \ref{exmp:main} but now denote  \(A^{(k)}\) the row-normalized adjacency matrix with 
\(\sum_j A_{ij}^{(k)}=1, A_{ii}^{(k)} = 0\). Let
$ 
\bar D_{i,t}^{(k)}=\sum_j A_{ij}^{(k)}D_{j,t}^{(k)}, 
\bar Y_{i,t}^{(k)}=\sum_j A_{ij}^{(k)}Y_{j,t}^{(k)}, 
$ 
and 
\begin{equation} \label{eqn:linear_means}
Y_{i,t}^{(k)}
=
\tau_k+\alpha_t
+
\phi_1D_{i,t}^{(k)}
+
\phi_2 \bar D_{i,t}^{(k)}
-
\phi_4D_{i,t}^{(k)}\bar D_{i,t}^{(k)}
+
\gamma \bar Y_{i,t}^{(k)}
+
\nu_{i,t}^{(k)},
\qquad
\mathbb E[\nu_{i,t}^{(k)}\mid A^{(k)}]=0 .
\end{equation}
Assume \(|\gamma|<1\), so that \(I-\gamma A^{(k)}\) is invertible. In vector form, we can write 
\begin{equation} \label{eqn:linear_means_reduced}
Y^{(k)}
=
(I-\gamma A^{(k)})^{-1}
\Big[
(\tau_k+\alpha_t)\mathbf 1
+
\phi_1D^{(k)}
+
\phi_2A^{(k)}D^{(k)}
-
\phi_4\operatorname{diag}(D^{(k)})A^{(k)}D^{(k)}
+
\nu^{(k)}
\Big].
\end{equation}
Since \(A\mathbf 1=\mathbf 1\), 
$ 
\mathbb E_\beta[D\mid A]=\beta\mathbf 1,
\mathbb E_\beta[AD\mid A]=\beta\mathbf 1,
\mathbb E_\beta[\operatorname{diag}(D)AD\mid A]=\beta^2\mathbf 1, 
$ 
which implies 
\[
\mathbb E_\beta[Y_i^{(k)}\mid A^{(k)}]
=
\frac{\tau_k+\alpha_t}{1-\gamma}
+
\frac{\beta\phi_1+\beta\phi_2-\beta^2\phi_4}{1-\gamma} \qquad \Rightarrow W(\beta)
=
\frac{\beta\phi_1+\beta\phi_2-\beta^2\phi_4}{1-\gamma}
-
c\beta, 
\] 
once we also consider a cost of treatment $c$, which satisfies Assumption \ref{ass:ass_0}(ii).

Thus, 
the average response to the assignment rule is common across clusters, provided that
\(\gamma,\phi_1,\phi_2,\phi_4\) are common across clusters. 
In this example, \(A^{(k)}\) may differ across clusters and, under this linear model, no restrictions on the network formation are imposed since we condition on $A^{(k)}$. In the presence of additional heterogeneity, governed either by $A^{(k)}$ or by covariates, we would also require a common distribution of the two across clusters. 

Finally, although the dependence restrictions need not satisfy Assumption \ref{ass:ass_0}(iii), the model satisfies the weak-dependence conditions required in Theorem \ref{thm:const1}, for independent idiosyncratic shocks $\nu_i$ with bounded variance, provided that the network multiplier is sufficiently diffuse, i.e., 
$
\sup_k \frac{1}{n}
\left\|
\left[(I-\gamma A^{(k)})^{-1}\right]^\top \mathbf 1
\right\|_2^2
< \infty .
$ 
\end{exmp}

\section{Single-wave experiment} \label{sec:designs}

This section introduces the single-wave local-perturbation design. The goal is to estimate the marginal policy effect \(M(\beta)\) at a pre-specified policy parameter \(\beta\). 

\subsection{Experimental design}

Let
\begin{equation} \label{eqn:ej} 
\underline e_j =
\begin{bmatrix}
0,\ldots,0,1,0,\ldots,0
\end{bmatrix},
\qquad
\underline e_j\in\{0,1\}^p,
\qquad
\underline e_j^{(j)}=1,
\end{equation}
denote the \(j\)-th coordinate vector. For expositional simplicity, Algorithm \ref{alg:my_pilot} focuses on the first coordinate, \(j=1\). The extension to multiple coordinates follows verbatim.

The design pairs clusters. Let \(G=K/2\), and index each pair by \(g=(k,k+1)\), with \(k\) odd. Within each pair, one cluster is assigned the perturbed policy parameter \(\beta+\eta_n\underline e_1\), and the other is assigned \(\beta-\eta_n\underline e_1\), where \(\eta_n>0\) is a small deterministic perturbation. Conditional on these policy parameters, treatments are assigned independently within each cluster as in Assumption \ref{defn:bernoulli}. The purpose of using opposite perturbations within a pair is to estimate a local slope while differencing out pair-level fixed differences.

The estimator for pair \((k,k+1)\) is
\begin{equation} \label{eqn:gradient_main} 
\widehat{M}_{(k,k+1)}(\beta)
=
\frac{
\Big[\bar Y_1^{(k)}-\bar Y_0^{(k)}\Big]
-
\Big[\bar Y_1^{(k+1)}-\bar Y_0^{(k+1)}\Big]
}{2\eta_n},
\end{equation}
where \(\bar Y_t^{(h)}\) is the sample average outcome in cluster \(h\) at time \(t\). The estimator is a difference-in-differences across the two clusters in a pair. The baseline outcomes remove additive cluster fixed effects; in the absence of fixed effects, one can equivalently set \(\bar Y_0^{(h)}=0\). 

The design uses two sources of variation. Within each cluster, individual-level randomization identifies direct effects. Across the two clusters, the between-cluster perturbation provides the variation needed to identify the spillover part in Equation \eqref{eqn:marginal}. Thus the same experiment estimates the overall marginal policy effect \(M(\beta)\), while also allowing separate estimation of direct and marginal spillover effects, discussed below.

The same estimator can be used for hypothesis testing. If \(\beta\) is an interior local optimum and \(W(\cdot)\) is differentiable, then \(M(\beta)=0\). More generally, for any subset of coordinates \(p_1\le p\), one may test
\begin{equation} \label{eqn:h0} 
H_0:
M^{(j)}(\beta)=0,
\qquad
\forall j\in\{1,\ldots,p_1\}.
\end{equation}
Rejection provides evidence that \(\beta\) is not an interior local optimum along the tested coordinates.\footnote{Failure to reject should not be interpreted as evidence of optimality unless the test is sufficiently powered.} We aggregate the pair-level estimators and construct a test statistic 
\begin{equation} \label{eqn:test_stat}
\small 
\begin{aligned} 
\mathcal T_n
=
\frac{\sqrt G\,\bar M_n(\beta)}
{
\sqrt{(G-1)^{-1}\sum_g\left(\widehat M_g(\beta)-\bar M_n(\beta)\right)^2}
}, \qquad \bar M_n(\beta)
=
\frac{1}{G}\sum_{g=1}^G \widehat M_g(\beta)
\end{aligned} 
\end{equation}
where \(\widehat M_g(\beta)\) denotes the estimator in Equation \eqref{eqn:gradient_main} for pair \(g\). 

Critical values are obtained by permuting the signs of the pair-level estimates \(\widehat M_g(\beta)\). Since inference uses variation across pairs, power depends on the number of pairs \(G\), while the within-cluster sample size \(n\) controls the precision of each pair-level estimate.

 Finally, the notation \(g=(k,k+1)\) is only for convenience. In applications, pairing can also use pre-treatment cluster characteristics.

\begin{algorithm} [!ht]   \caption{One-wave experiment for inference with $p_1 = 1$}\label{alg:my_pilot}
\footnotesize  
    \begin{algorithmic}[1]
    \Require Value $\beta \in \mathbb{R}^p$ (exogenous), $K$ clusters, constant $\bar{C}$, size $\alpha$; 
    \State Organize clusters into $G = K/2$ pairs with consecutive indexes $\{k, k+1\}$;  
   
    \State $t = 0$ (baseline): either nobody receives treatments or treatments are assigned with $\pi(\cdot;\beta)$.
    \begin{algsubstates}
        \State Experimenters collect baseline outcomes: for $n$ units in each cluster observe $Y_{i,0}^{(h)}, X_i^{(h)}, h  \in \{1, \cdots, K\}$. 
        \end{algsubstates} 
        
    \State $t = 1$: experiment starts
    \begin{algsubstates}
        \State  For each pair $g = \{k, k+1\}$, randomize
   \begin{equation} \label{eqn:perturbation}
     \small 
     \begin{aligned} 
    D_{i,1}^{(h)} | \beta, X_i^{(h)} = x \sim \begin{cases} 
    & \mathrm{Bern}(\pi(x, \beta + \eta_n\underline{e}_1)) \text{ if } h = k  \\ 
    & \mathrm{Bern}(\pi(x, \beta - \eta_n\underline{e}_1)) \text{ if } h = k+1 
    \end{cases} , \quad \bar{C} n^{-1/2} < \eta_n < \bar{C} n^{-1/4},
    \end{aligned}   
    \end{equation} 
    
        \State For $n$ units in each cluster $h$ observe $Y_{i,1}^{(h)}$; 
        \State For each pair $g = (k, k+1)$ construct 
        $$
         \widehat{M}_{(k, k+1)}(\beta)  = \frac{1}{2 \eta_n }  \Big[\bar{Y}_1^{(k)} - \bar{Y}_0^{(k)}\Big] - \frac{1}{2 \eta_n } \Big[\bar{Y}_1^{(k + 1)} - \bar{Y}_0^{(k + 1)}\Big],
         $$ 
      \end{algsubstates}
     
 \State Construct  the t-statistic to test $H_0$ in Equation \eqref{eqn:h0} (with $j = 1$) and return $\mathcal{T}_n$ in Equation \eqref{eqn:test_stat}. 
\State Construct tests $1\Big\{|\mathcal{T}_n| > \mathrm{cv}_{G}(\alpha) \Big\}$ with size $\alpha$, with critical values obtained by permuting the sign of the estimated marginal effect as described in Corollary \ref{thm:inference4b}. 

 
         \end{algorithmic}
\end{algorithm}

\subsection{Consistency, inference and choice of $\eta_n$}

Next, we study properties of the design under the following regularity condition. 

\begin{ass}[Regularity 1] \label{ass:regularity_basic} Suppose that for all $x \in\mathcal{X}, d \in \{0,1\}$, $\pi(x, \beta)$, and $y(x, \beta)$ are uniformly bounded and twice differentiable with bounded derivatives. 
\end{ass}

\begin{thm}[Marginal effects] \label{thm:const1} Suppose that $Y_{i,t}^{(k)}$ is sub-Gaussian with sub-Gaussian parameter bounded by $r^2 < \infty$. Let  Assumptions \ref{ass:ass_0}, \ref{ass:regularity_basic} hold. Let $\mathrm{Var}(\sqrt{n} \hat{M}_{(k, k+1)}(\beta)) \le \tilde{C}_{k , k+1}\frac{\rho_n}{\eta_n^2}$, for arbitrary $\rho_n$ and constant $\tilde{C}_{k, k+1}$. Then, with probability at least $1 - \delta$, for any $\delta \in (0,1)$, for a finite constant $c_0 < \infty$ independent of $(n, N, \gamma_N, K, \beta)$, 
$$
\small 
\begin{aligned} 
 \Big|\hat{M}_{(k, k+1)}(\beta) - M^{(1)}(\beta)\Big| \le c_0 \left(\eta_n +  \min\Big\{\sqrt{\frac{\gamma_N \log(\gamma_N/\delta)}{n \eta_n^2}},  \sqrt{\frac{\tilde{C}_{k, k+1} \rho_n}{n \eta_n^2 \delta}}\Big\}  \right), 
 \end{aligned}
$$  
where $\hat{M}_{(k, k+1)}$ is estimated as in Algorithm \ref{alg:my_pilot}. For $\gamma_N \log(\gamma_N)/n^{1/3} = o(1), \eta_n = n^{-1/3}, \hat{M}_{(k, k+1)}(\beta) \rightarrow_p M^{(1)}(\beta), \bar{M}_n \rightarrow_p M^{(1)}(\beta)$. 
\end{thm} 

The proof is in Appendix \ref{sec:proof1}.
Theorem \ref{thm:const1} shows we can consistently estimate the marginal effects with two large clusters. The convergence rate depends on the \textit{minimum} between the maximum degree of dependence, which is proportional to $\gamma_N^{1/2}$, and the covariances among unobservables, captured by $\rho_n$. 

\begin{rem}[Practical choice of \(\eta_n\)] \label{rem:eta_choice}
Theorem \ref{thm:const1} shows that \(\eta_n\) plays the role of a bandwidth. For coordinate \(j\), smaller perturbations reduce the approximation bias, while larger perturbations reduce the variance generated by dividing by \(2\eta_n\). In applications, we recommend choosing the perturbation in treatment-probability units, where we choose perturbations 
\[
\eta_{n,j}= a_j n^{-1/3} \Big/ \int \partial_j   \pi(x;\beta)  dF_X(x),  
\]
denote the desired change in assignment probability along coordinate \(j\), where \(a_j\) is chosen based on feasibility and power (typically $a_j \approx \sqrt{2 \sigma^2}$ where $\sigma^2$ is the outcome variance). 
Thus, we recommend selecting \(\eta\) so that the actual probability perturbation, rather than the numerical change in the parameter, is of order \(n^{-1/3}\), to account for the parametrization through $\pi(x,\beta)$. Also, note that the reparametrization of $\pi$ does not affect the zero-gradient condition for local optimality when the chain rule is applicable. \qed 
\end{rem}

Given our consistency result, we next discuss inference.

\begin{ass}[Regularity 2] \label{ass:var} Assume that for treatments as assigned in Algorithm \ref{alg:my_pilot}, for all $k \in \{1, \cdots, K\}$, $Y_{i,t}^{(k)}$ has a bounded fourth moment, and for some $\bar{C}_k > 0$, $\rho_n \ge 1$,
\begin{equation} \label{eqn:variance}  
\small 
\begin{aligned} 
\mathrm{Var}\left(\sqrt{n}\Big[\bar{Y}_1^{(k)} - \bar{Y}_0^{(k)}\Big]\right) = \bar{C}_k \rho_n.
\end{aligned} 
\end{equation} 

\end{ass}

Assumption \ref{ass:var} imposes standard moment bounds and a \textit{lower bound} on the variance of the estimator. In particular, Assumption \ref{ass:var} is a mild regularity condition that states that the variance does not converge to zero at a rate faster than $1/n$.\footnote{Note that 
$
\bar{C}_k \rho_n = \frac{1}{n} \sum_{i=1}^n \mathrm{Var}\left(Y_{i, 1}^{(k)} - Y_{i,0}^{(k)}\right) +  \frac{1}{n} \sum_{i, j, j\neq i} \mathrm{Cov}\left(Y_{i,1}^{(k)} - Y_{i, 0}^{(k)}, Y_{j,1}^{(k)} - Y_{j, 0}^{(k)}\right). 
$
Assumption \ref{ass:var} states that $\rho_n \ge 1$, i.e., $\rho_n$ does not converge to zero. This requires that the negative covariance components (if any) do not outweigh the variances, holding for example with no or positive correlations (more generally it only requires non-degenerate variance).}


  \begin{thm} \label{thm:inference1}  Let  Assumptions \ref{ass:ass_0}, \ref{ass:regularity_basic}, \ref{ass:var} hold. Let $n^{1/4} \eta_n = o(1), \gamma_N/n^{1/4} = o(1)$, $K < \infty$. Then, for each pair $(k, k+1)$, for $\widehat{M}_{(k, k+1)}$ estimated as in Algorithm \ref{alg:my_pilot}, \\
$\mathrm{Var}\Big(\widehat{M}_{(k, k+1)}\Big)^{-1/2} \Big(\widehat{M}_{(k, k+1)} - M^{(1)}(\beta)\Big)  \rightarrow_d \mathcal{N}(0,1).  
$
  \end{thm}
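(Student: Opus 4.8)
The plan is to expand $\widehat{V}_{(k,k+1)}$ into a single leading mean-zero stochastic term that carries the asymptotic distribution, plus a deterministic smoothing bias and strictly lower-order fluctuations, and then to prove a central limit theorem for the leading term by a dependency-graph (Stein's method) argument. Using Lemma~\ref{lem:lem0} I write each cluster's outcome average as $\bar{Y}_t^{(h)} = n^{-1}\sum_i y(X_i^{(h)},\beta_{h,t}) + \bar\varepsilon_t^{(h)} + \alpha_t + \tau_h$, so that the difference-in-differences structure in \eqref{eqn:gradient_main} cancels the cluster effects $\tau_h$ and the common time effect $\alpha_1-\alpha_0$. A first-order Taylor expansion of $y(\cdot,\beta\pm\eta_n\underline{e}_1)$ around $\beta$, with the Lagrange remainder bounded through the bounded second derivatives of Assumption~\ref{ass:regularity_basic}, then yields
\[
\widehat{V}_{(k,k+1)} - V^{(1)}(\beta) = E_n + A_n + B_n + R_n,
\]
where $E_n = (2\eta_n)^{-1}\big[(\bar\varepsilon_1^{(k)}-\bar\varepsilon_0^{(k)}) - (\bar\varepsilon_1^{(k+1)}-\bar\varepsilon_0^{(k+1)})\big]$ is the error-driven term, $A_n$ collects the mean-zero baseline-cancellation fluctuations that retain the factor $\eta_n^{-1}$ (and vanish identically when the baseline uses $\pi(\cdot;\beta)$), $B_n = O_p(n^{-1/2})$ is the fluctuation of the sample averages of $\partial y/\partial\beta^{(1)}$ around $V^{(1)}(\beta)$, and $R_n$ is the Taylor remainder, bounded by $C\eta_n$ almost surely since a central difference leaves an $O(\eta_n)$ error under only bounded second derivatives.

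Both $E_n$ and $A_n$ are of order $(\eta_n\sqrt n)^{-1}$, so by Theorem~\ref{thm:const1} the estimator's standard deviation is $\Theta(\sqrt{\rho_n}\,\eta_n^{-1}n^{-1/2})$ with $\rho_n\ge1$, and $E_n+A_n$ dominates the variance. After normalizing by $\mathrm{Var}(\widehat{V}_{(k,k+1)})^{1/2}$ I would discard the remaining pieces: $\mathrm{Var}(B_n)/\mathrm{Var}(\widehat{V}_{(k,k+1)}) = O(\eta_n^2/\rho_n)\to0$ makes $B_n$ negligible, while the almost-sure bound $|R_n|\le C\eta_n$ gives $\mathrm{Var}(\widehat{V}_{(k,k+1)})^{-1/2}|R_n| = O(\eta_n^2\sqrt n/\sqrt{\rho_n})$. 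This is exactly where the hypothesis $n^{1/4}\eta_n=o(1)$ enters: it forces the $O(\eta_n)$ smoothing bias below the $\Theta(\eta_n^{-1}n^{-1/2})$ standard deviation.

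The crux is the central limit theorem for $A_n+E_n$, which I would write as $(2\eta_n n)^{-1}\sum_{h\in\{k,k+1\}}\sum_{i=1}^n v_h\,\zeta_i^{(h)}$, with $v_k=1$, $v_{k+1}=-1$ and $\zeta_i^{(h)}$ a mean-zero summand with bounded fourth moment by Assumption~\ref{ass:var}. By Assumption~\ref{ass:network} and the construction of $\varepsilon$ in Lemma~\ref{lem:lem0}, $\zeta_i^{(h)}$ is measurable with respect to the covariates, unobservables, and assignments of unit $i$ and its at most $\gamma_N^{1/2}$ potential neighbors; two summands are therefore dependent only if they share a neighbor, so each has a dependency neighborhood of size $O(\gamma_N)$ (the two-hop neighborhood of a degree-$\gamma_N^{1/2}$ graph), and there are no edges across the independent clusters $k$ and $k+1$. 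I would then invoke a Stein's-method Berry--Esseen bound for sums with local dependence: with dependency degree $D=O(\gamma_N)$, uniformly bounded third and fourth moments, and total variance of order $n\rho_n$, the bound is $O\big(\gamma_N^2/(n^{1/2}\rho_n^{3/2})\big) + O\big(\gamma_N^{3/2}/(n^{1/2}\rho_n)\big)$, which vanishes precisely under $\gamma_N/N^{1/4}=o(1)$ (recalling $n\asymp N$). This delivers $(A_n+E_n)/\mathrm{Var}(A_n+E_n)^{1/2}\to_d\mathcal{N}(0,1)$.

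To finish I would note that $\mathrm{Var}(\widehat{V}_{(k,k+1)})/\mathrm{Var}(A_n+E_n)\to1$, since $B_n$ and $R_n$ contribute asymptotically negligible variance (and their cross-covariances are controlled by Cauchy--Schwarz), and then combine the three displays through Slutsky's theorem to obtain $\mathrm{Var}(\widehat{V}_{(k,k+1)})^{-1/2}\big(\widehat{V}_{(k,k+1)}-V^{(1)}(\beta)\big)\to_d\mathcal{N}(0,1)$. I expect the main obstacle to be the dependency-graph CLT: one must verify carefully that the network-induced dependence among the $\varepsilon_{i,t}^{(h)}$ produces neighborhoods of size $O(\gamma_N)$ rather than anything larger, because it is this bound, together with the bounded fourth moment of Assumption~\ref{ass:var}, that makes $\gamma_N/N^{1/4}=o(1)$ the sharp degree requirement. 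By contrast, the fixed-effect cancellation and the bias/variance bookkeeping are routine.
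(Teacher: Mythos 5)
Your proposal is correct and follows essentially the same route as the paper's proof: the $\mathcal{O}(\eta_n)$ Taylor bias is killed by the variance lower bound of Assumption \ref{ass:var} together with $n^{1/4}\eta_n = o(1)$, and the dependency-graph Stein's method bound (the paper's Lemma \ref{lem:locally_dep}, with dependency degree $\mathcal{O}(\gamma_N)$ established in Lemma \ref{lem:main}) yields exactly the rates $\gamma_N^2/n^{1/2}$ and $\gamma_N^{3/2}/n^{1/2}$ that vanish under $\gamma_N/N^{1/4}=o(1)$. The only difference is cosmetic: you split the centered estimator into $E_n+A_n+B_n+R_n$ and reassemble via Slutsky, whereas the paper applies the CLT directly to the full centered difference-in-differences of the $Y_{i,t}^{(k)}-Y_{i,0}^{(k)}$.
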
 
  
  The proof is in Appendix \ref{sec:inference_thm}. Given Theorem \ref{thm:inference1}, it is possible to conduct inference by using randomization tests \citep{canay2017randomization}.


\begin{cor}[Randomization tests] \label{thm:inference4b} Let the conditions in Theorem \ref{thm:inference1} hold. For any $\alpha \in (0,1)$,  $\lim_{n \rightarrow \infty} P\Big(|\mathcal{T}_n| \le \mathrm{cv}_{K/2}^P(\alpha)\Big| H_0\Big) = 1 - \alpha$, where $\mathrm{cv}_{K/2}^P(\alpha)$ is a $(1 - \alpha)^{th}$ quantile of t-statistics computed from all permutations over the pairs' sign as described in Appendix \ref{sec:perm_tests}, and $H_0$ is as in Equation \eqref{eqn:h0}.
\end{cor}

Corollary \ref{thm:inference4b} formalizes the inference results. The corollary shows control of Type I error for finite $K$, while the power of the test also depends on $K$.

\subsection{Additional estimands: direct effects and marginal spillovers}

We conclude this discussion by showing how 
Algorithm \ref{alg:my_pilot} also allows us to estimate the direct effect of the treatment and the (marginal) spillover effect separately, and the reward up-to a (time-specific) additive shift. To do this, we need to impose the stronger requirement that the conditional expectation given individual level treatment is common across clusters. 

\begin{ass}[Expectation conditional on individual level treatment] \label{ass:ass_0b} 
Suppose that for any $(i,t,k)$, under an assignment in Assumption \ref{defn:bernoulli} with parameter $\hat\beta_{k,t}$,
$ 
\mathbb E_{\hat\beta_{k,t}}
\Big[
Y_{i,t}^{(k)}
\mid D_{i,t}^{(k)} = d, 
X_i^{(k)}=x
\Big]
=
\tau_k+\alpha_t+ m(d,x,\hat\beta_{k,t})$ for some $m(d,x,\beta)$ common across clusters, uniformly bounded and twice-differentiable with bounded derivatives.
\end{ass}

We can now define these additional target parameters 
$$
\small 
\begin{aligned} 
\Delta(\beta) = \int \left[m(1, x, \beta) - m(0, x, \beta)\right] dF_X(x),  \qquad S_1(d, \beta)= \int \frac{\partial m(d, x, \beta)}{\partial \beta^{(1)}} dF_X(x), \qquad  W(\beta) + c_1,    
\end{aligned} 
$$  
where $c_1$ is a time specific additive shift (that also depends on the baseline policy intervention). 
Each estimand integrates over the distribution of covariates. Note also that the target reward effect includes an additive shift (that depends on time fixed effects and subtracts the effect at baseline); however any comparison between two global contrasts at ($\beta, \beta'$) both estimated using the same two consecutive periods would automatically remove the additive shift $c_1$. 

For a given pair of clusters $(k, k+1)$, we estimate direct effects as 
\begin{equation} \label{eqn:direct_estimated}
\footnotesize  
\begin{aligned} 
\widehat{\Delta}_{(k, k+1)}(\beta) & =  \frac{1}{2 n} \sum_{h \in \{k, k+1\}} \sum_{i=1}^n \left[\frac{D_{i,1}^{(h)} Y_{i,1}^{(h)}}{\pi(X_i^{(h)}, \beta + \eta_n v_h \underline{e}_1)} - \frac{(1 - D_{i,1}^{(h)}) Y_{i,1}^{(h)}}{1 - \pi(X_i^{(h)}, \beta + \eta_n v_h \underline{e}_1)}\right], \quad v_h = \begin{cases}  1 \text{ if } h = k \\
-1 \text{ if } h = k + 1. 
\end{cases}  
\end{aligned} 
\end{equation}  
 We average direct effects across cluster pairs to obtain a single estimate $\bar{\Delta}_n = \frac{1}{G} \sum_{g} \widehat{\Delta}_g(\beta)$. The marginal spillover effect is estimated as follows:
\begin{equation} \label{eqn:estimated_spill}
\small 
\begin{aligned} 
\widehat{S}_{(k, k+1)}(0, \beta) = \frac{1}{2 n} \sum_{h \in \{k, k+1\}} \frac{v_h}{\eta_n} \sum_{i=1}^n  \left[ \frac{Y_{i,1}^{(h)} (1 - D_{i,1}^{(h)})}{1 - \pi(X_i^{(h)}, \beta + v_h \eta_n \underline{e}_1)} - \bar{Y}_0^{(h)}\right]. 
\end{aligned} 
\end{equation}
Researchers can report the between-pairs average $\bar{S}_n(0, \beta) = \frac{1}{G} \sum_{g} \widehat{S}_g(0, \beta)$ (and similarly $\hat{S}(1, \beta)$ for treated units), which estimates spillovers on the control units.

For the reward effect, we construct an estimator
$
\bar{W}_n(\beta) = \frac{1}{K} \sum_{k=1}^K \Big[\bar{Y}_1^{(k)} - \bar{Y}_0^{(k)}\Big].
$

\begin{thm}[Asymptotically negligible bias of treatment effects] \label{thm:bias_direct} Let  Assumptions \ref{ass:ass_0}, \ref{ass:regularity_basic}, \ref{ass:ass_0b} hold, and $\eta_n = o(n^{-1/4})$. Suppose in addition that overlap holds, namely
$\pi(X_i, \beta + \eta_n v_h \underline{e}_1) \in (\delta, 1-\delta)$ for some $\delta \in (0,1)$ almost surely.
Then, $\mathbb{E}\left[\bar{\Delta}_n(\beta)\right] = \Delta(\beta) + o(n^{-1/2})$, where the second term does not depend on $K$. Similarly, 
$\mathbb{E}\left[\bar{W}_n(\beta)\right] = W(\beta) + c_1 + o(n^{-1/2})$, for an additive shift $c_1$ that only depends on $\beta$ at time $t = 0$ (but not $t = 1$), and where the second term does not depend on $K$.  In addition, for all pairs $(k, k+1)$,
$
\mathbb{E}\Big[\widehat{S}_{(k, k+1)}(0, \beta) \Big] = S_1(0, \beta) + \mathcal{O}(\eta_n). 
$
\end{thm} 

The proof is in Appendix \ref{sec:proof3}. 
The bias of the estimated direct effect is asymptotically negligible at a rate \textit{faster} than the parametric rate $n^{-1/2}$ when \textit{pooling} observations from different clusters.  Our main insight here is that, with pairing and perturbations of opposite signs, the first-order bias cancels out.
Here, $\eta_n = o(n^{-1/4})$ is consistent with requirements in previous theorems. Therefore, we can conduct inference on the direct effect using permutation tests, where we simply permute the sign of the estimated direct effect in each cluster $k$.\footnote{This is because, under the same conditions on $\gamma_N$ as those in Theorem \ref{thm:inference1} one can establish asymptotic normality of the estimated direct effect within each cluster from Gaussian approximations for local dependency graphs. As in Corollary \ref{thm:inference4b}, we can apply randomization inference for the direct effect.} Inference on the marginal spillover effects follows verbatim to inference on the MPE.

\begin{table}[!htbp] \centering 
  \scalebox{0.65}{
\begin{tabular}{@{\extracolsep{5pt}} ccccc} 
\\[-1.8ex]\hline 
\hline \\[-1.8ex] 
Estimand & Description & Estimator  & Estimator's Description & Randomization Inference \\ 
\hline \\[-1.8ex] 
$M(\beta)$  & Marginal effect & $\bar{M}_n(\beta) = \frac{1}{G} \sum_{g=1}^G \widehat{M}_g(\beta)$  & DiD (or DiM) estimators  & Permute signs of $\widehat{M}_g(\beta)$ \\ 
& & $\widehat{M}_g(\beta)$ as in Eq \eqref{eqn:gradient_main} & from each cluster pair $g$ & for each cluster pair $g$ \\ 
 \\
$\Delta(\beta)$ & Direct effect & $\bar{\Delta}_n = \frac{1}{G} \sum_g \hat{\Delta}_g(\beta)$ &  Pooled IPW estimators & Permute sign of estimated effect \\
& & $\hat{\Delta}_g$ as in Eq \eqref{eqn:direct_estimated}  & from each cluster & for each pair estimate \\  \\ 
$S(0,\beta)$ & Marginal spillovers & $\bar{S}_n(0, \beta) = \frac{1}{G} \sum_g \widehat{S}_g(0, \beta)$ & DiD (or DiM) + IPW estimators & Permute sign of $\widehat{S}_g(0, \beta)$ \\ 
& on controls & $\widehat{S}_g(0, \beta)$ as in Eq \eqref{eqn:estimated_spill} & from each cluster pairs $g$ & for each cluster pair $g$ \\ 
 \\ 
 $S(1,\beta)$ & Marginal spillovers & $\bar{S}_n(1, \beta) = \frac{1}{G} \sum_g \widehat{S}_g(1, \beta)$ & DiD (or DiM) + IPW estimator & Permute sign of  $\widehat{S}_g(1, \beta)$\\ 
& on treated & & from each cluster pairs $g$  & for each cluster pair $g$ \\ 
\hline \\[-1.8ex] 
\end{tabular} 
}
 \caption[Caption for LOF]{Estimands and estimators from \textit{single wave experiment} with treatment probability $\beta$. Inference procedure is formally described in Corollary \ref{thm:inference4b}, with DiD standing for Difference-in-Differences and DiM stands for difference in means (applicable in the absence of cluster fixed effects) and IPW standing for Inverse Probability Weights. } 
  \label{tab:methods1} 
\end{table}

\section{A large-scale implementation} \label{sec:field}

Next, we illustrate our method through a large-scale experiment where we implemented our single wave experiment over two consecutive experimentation waves. We use each wave to illustrate properties of the single wave experiment, and in particular how we can learn marginal, direct and spillover effects with our design and a \textit{few} large clusters. See Table \ref{tab:main_summaries}. 

\vspace{-4mm} 

\paragraph{Treatment $D$} The experiment was implemented through Precision Development (PxD), an NGO that provides farmers with phone-based agricultural advisory services. Farmers often lack access to geo-localized weather forecasts, and digital delivery offers solutions to address this challenge \citep{fabregas2019realizing}. Prior to the experiment, only 45\% of cotton growers reported consistent access to weather information, usually via radio or television. About 86\% of cotton growers indicated that weather information helps plan agricultural activities (\url{https://precisiondev.org/weather-forecasting-product-for-punjab-pakistan/}). In addition, those farmers with access to weather forecasts only access forecasts at the district level, a higher administrative unit with 3-4 tehsils  (tehsils are equivalent to US counties).   

In partnership with a private forecast provider, Precision Development developed calibrated (geo-localized) weather forecast information localized at the tehsil level. The treatment consists of calling farmers to provide weather forecasts via robocalls, meant to improve farmers' ability to take measures on their plots. The experiment was randomized at a large scale across approximately 400,000 farmers, of which about 287,000 are in our main experiment as we discuss below. In a survey, $80\%$ of the respondents said they actively shared weather information with other farmers, providing suggestive evidence of spillovers.

\vspace{-4mm}

 \paragraph{Policy $\beta$} Our policy is choosing how many farmers to treat. Each treatment costs $0.29\$$ per farmer/year, which can lead to significant costs once implemented at scale in Pakistan. 
 
 \vspace{-4mm}

\paragraph{Target outcome $Y$} We study the effect of the treatment on farmers' ability to predict short-run weather. This is relevant in these applications: correctly predicting weather improves efficiency in the use of resources by, for example, using irrigation or pesticides more efficiently and better invest, see e.g., \cite{burlig2024long}.   

As our main data source, we use repeated high-frequency (daily) cross-sectional survey data collected from June to October 2022. To measure farmers' weather forecasts, we ask farmers: ``What do you expect will be the maximum (minimum) temperature in your area
tomorrow?". We merge this information with PxD forecast weather for the day after the survey interview with the specific farmer. We measure the absolute difference between the farmer's predicted maximum (and minimum) temperature and those predicted by PxD forecasts. 
  To combine beliefs about maximum and minimum temperature, we construct a statistical index as described in \cite{viviano2021should} which serves as our main outcome.

 Temperature variables measure \textit{incorrect} beliefs, i.e.,  negative treatment effects indicate that the farmer's prediction is closer to the PxD forecast.  
Predicted temperature is a convenient proxy for farmers' one-day ahead weather perceptions since it is less volatile than precipitation \citep{grenci2001world}, and it is relatively stable within a tehsil. We focus on farmer's prediction relative to the PxD forecast, since PxD forecast is a good proxy for real temperature (Appendix Table \ref{tab:forecast}), and accuracy relative to PxD forecasts instead of real weather reduces variance, as it does not depend on idiosyncratic noise.
 
The survey was  run over approximately $6,000$ farmers, stratified across tehsils and treatment, of which we have approximately $1,000$ respondents for our main outcome. We show balance on take-up rates for survey responses on many dimensions in Appendix \ref{app:more_exp1}, and balance in baseline covariates across clusters assigned to different treatment probabilities.

\vspace{-4mm}

\paragraph{Clusters} In total, 25 tehsils are exposed to our main experiment/design (with in total 287,000 farmers). 
Figure \ref{fig:sample_size} illustrates the region in Pakistan exposed to experimental variation and the sample size within each district (not all tehsils in these districts were part of the experiment; only tehsils where PxD conducts its main operations were included). 
We consider a tehsil a cluster. The assumption is that spillovers between tehsils are negligible, justified by the fact that tehsils denote large geographic areas, and forecasts are geo-localized at the tehsil level. In contrast to some prior work  \citep[e.g.,][]{alatas2012targeting}, our design allows for spillovers across villages in the same tehsil.

\begin{figure}[!htp]
\centering 
\includegraphics[scale = 0.6, trim={0 1cm 0 1cm},clip]{./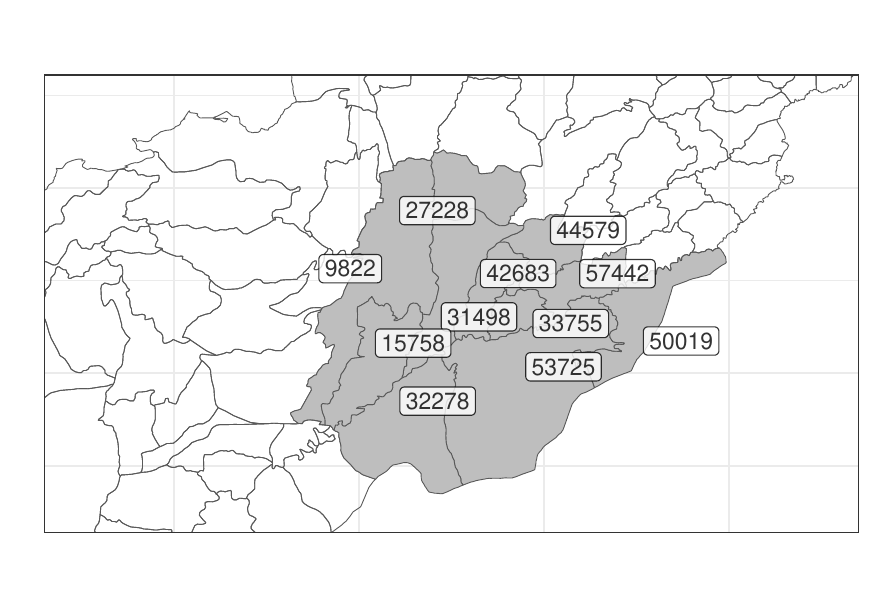}
\includegraphics[scale = 0.5, trim={0 0.3cm 0 0.8cm},clip]{./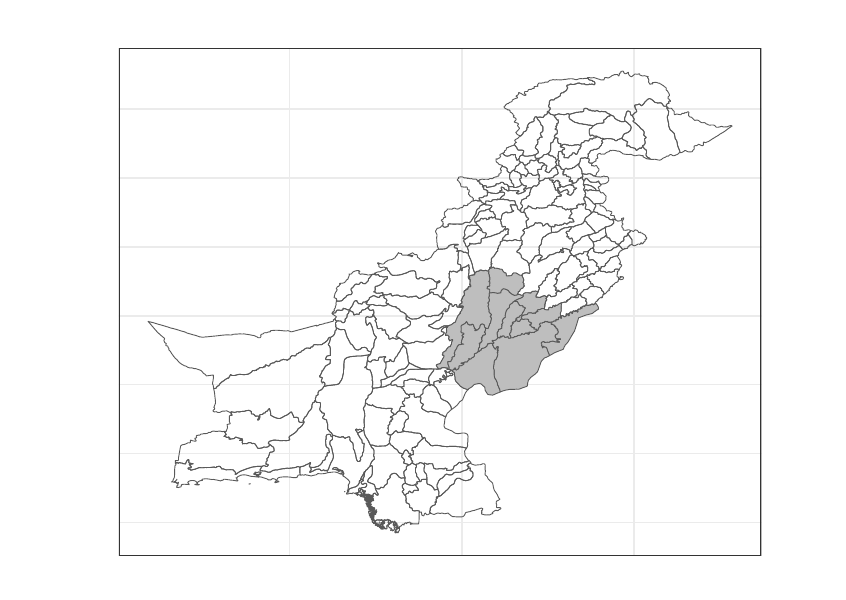}
\caption{Pakistan's map, organized in districts (each district contains multiple tehsils). Gray regions indicate areas selected for the experiment. Next to each district, we report the total sample size obtained from the tehsils in the experiment in the given district.}  \label{fig:sample_size}
\end{figure}

\vspace{-4mm}

 \paragraph{Two wave design} We deployed the \textit{local perturbation design} presented in Section \ref{sec:designs} over two consecutive waves: We induced perturbations around $\beta = 50\%$ in the first wave, where $\beta$ denotes the share of treated individuals, and in the second wave, we induced perturbations around $\beta = 70\%$. The first wave started in April 2022, and the second wave started in August 2022 and ended in October 2022.\footnote{This increase (and treatment probabilities) was planned ex-ante by the NGO's, since the NGO wanted to reach a larger number of treated units by the end of the intervention. This allows us to learn the marginal effects in each wave and guarantee that each treatment rule was chosen ex-ante (i.e., it is not data dependent).} We followed our design in each wave with, in our main specifications, $\eta_n = 10\%$. We also considered variations that allowed us to study multiple choices $\eta_n$ based on power considerations, see Appendix \ref{app:choice_eta}. 

Two additional remarks: of 42  tehsils where PxD conducts its main operations, 40 were originally enrolled into the geo-localized weather program, and five were set aside due to a different intervention conducted by PxD at the same time on these tehsils. In addition, 10 other tehsils (the remaining $\sim 110,000$ farmers) were exposed to tehsil's specific treatment probabilities that do not follow the local perturbation in our original design ($\beta = 11\%$ on average); this group was ex-ante randomly selected and not considered as part of our main analysis (we report effects on this group in Appendix \ref{app:more_exp1}).

\begin{table}[!ht]
\centering 
\scalebox{0.6}{\begin{tabular}{@{\extracolsep{5pt}} l|cc} 
\\[-1.8ex]\hline 
\hline \\[-1.8ex] 
 &  Wave 1  & Wave 2 \\ 
\hline \\[-1.8ex] 
 Treatment $D$ &  One-day ahead geo-localized weather forecast  &  One-day ahead geo-localized weather forecast \\ 
 & & \\ 
 Outcome $Y$ &  Farmer's one-day ahead correct forecast (temp) & Farmer's one-day ahead correct forecast (temp) \\ 
 & & \\ 
  Policy $\beta$ &  Share of treated farmers  & Share of treated farmers \\
  & & \\ 
  Choice of $\beta$ & $50\%$ & $70\%$ \\ 
  & &  \\ 
  Perturbation main exp $\eta_n$ & $10\%$ & $10\%$ \\ 
  & & \\ 
  Total \# of clusters & 25 & 25 \\ 
  & & \\ 
  Total \# of farmers in main experiment & 287,487 & 287,487 \\ 
  & & \\ 
 Total \# surveyed individuals in main exp/ & 247 & 633 \\ 
  & & \\ 
  \hline \\ 
  Estimated marginal effect [one sided p-value] & -3.39$^{**}$ [0.03] & -0.93 [0.22] \\ 
  & & \\ 
 Mechanism  & Large marginal spillover effects & Close to zero marginal spillover effects \\ 
    & & \\ 
  Cost treatment farmer/year  & $0.29\$$ farmer/year & $0.29\$$ farmer/year \\ 
  & & \\
  Policy implication & Increase share of treated individuals & Treat $\sim 70\%$ of farmers \\ & &  (save 1,000,000\$/year once implemented at scale)  
\end{tabular}
} 
  \caption{Illustration of how our theoretical framework maps to this experiment. P-value is for one-sided test computed via randomization inference. Marginal spillovers denote the marginal effect of increasing friends' treatment probabilities of the control units. } 
  \label{tab:main_summaries} 
\end{table}

\begin{table}[!htp] \centering 

\scalebox{0.65}{\begin{tabular}{@{\extracolsep{5pt}} lcccc} 
\\[-1.8ex]\hline 
\hline \\[-1.8ex] 
Group of Tehsils &  Number of Farmers & Number of Tehsils & Average $\beta$ (Wave 1) & Average $\beta$ (Wave 2) \\ 
\hline \\[-1.8ex] 
Negative Perturbation (Medium Saturation)  & 137 729 & 12 & $50\% - \eta_n = 40\%$  & $70\% - \eta_n = 60\%$\\
& & & \\ 
Positive Perturbation (High Saturation) &  149 758 & 13 & $50\% + \eta_n = 60\%$  & $70\% + \eta_n = 80\%$
 \\ & & &  \\
Low Saturation, not following main design & 111 300 & 10 & $11\%$ & $25\%$  \\ 
\hline \\[-1.8ex] 
\end{tabular}
} 
  \caption{Statistics of the experiment. $\beta$ indicates the average treatment probability across each group of tehsils. 
Each group of tehsils was stratified across districts. We use the Negative and Positive Perturbation groups to compute the marginal effects since these groups closely follow Section \ref{sec:designs}. The lower saturation group assigned tehsil's specific treatment probabilities that do not follow the local perturbation in our original design, and therefore not considered as part of our main analysis (we also report effects on this group in Appendix Table \ref{tab:beliefs_forecast2}, see Appendix \ref{app:more_exp1}). } 
  \label{tab:saturations} 
\end{table}


\subsection{Main results: marginal and spillover effects}

Next, we study marginal effects
on beliefs about PxD forecasts (i.e., whether the farmer's prediction agrees with PxD forecast), illustrating properties of our design on our main outcome (forecast temperature). We assume no cluster fixed effects because of lack of baseline outcomes. Although this is a strong assumption, it is motivated by balance across clusters on pre-treatment observables (Table \ref{tab:summaries}). In practice, we recommend to collect baseline outcomes when possible and, as in this case, when infeasible, to check for balance on observable covariates between clusters exposed to different treatment probabilities. 


\vspace{-4mm} 
\paragraph{Inference with a few large clusters} Figure \ref{fig:effects} plots the estimated marginal effects in the main design, i.e., for $\beta \in \{50, 70\}\%$ (with $\eta_n = 10\%$ on average). 
This estimator corresponds to the average difference between all negative and positive perturbation clusters.

We observe decreasing marginal effects when moving from $\beta = 0.5$ to $\beta = 0.7$.  To learn whether PxD should increase expenditure as $\beta = 50\%$ or $\beta = 70\%$, we test the  hypothesis of whether PxD should increase treatment probabilities through a one-sided test. 
The relevant hypothesis is whether PxD should treat more farmers (and expand its budget) to improve information diffusion. 

Table \ref{tab:marginal_effects_my_main} shows that the marginal effect is large and statistically significant at $\beta = 50\%$, preserves its sign but is smaller and non-significant at $\beta = 70\%$. P-values are computed via randomization inference for one-sided test, formally described in Appendix \ref{sec:perm_tests}.  This result is suggestive that treating $50\%$ of the population is sub-optimal, whereas treating $70\%$ of the individuals is close to a local optimum. 

Table \ref{tab:marginal_effects_my_main} reports direct and marginal spillover effects. We observe marginal effects are mostly driven by large and significant marginal spillover effects at $\beta = 50\%$ on control units, whereas \textit{marginal} spillover effects are close to zero at $\beta = 70\%$. 

\vspace{-4mm} 

\paragraph{Summary and cost benefit analysis} In summary, our design allows us to conduct inference on the marginal effect, marginal spillover effects and direct effect with a finite number of large clusters. We find large marginal effects at $\beta = 50\%$, driven by large marginal spillover effects on the control units. We find close to zero marginal effects near $\beta =70\%$. 

Using the two experimental waves, we can estimate the benefit of learning the marginal effects.
Figure \ref{fig:final_benefit} reports the relative improvement from the one wave experiment (i.e., treating $50\%$ of individuals) to the second wave experiment where $70\%$ of individuals are treated. Increasing the number of treated units from $50\%$ to $70\%$ of individuals leads to statistically significant increase in the objective function $W(\beta)$ (decrease in inaccuracy of farmers' predictions). This increase suggests the relevance of learning positive marginal effects at $50\%$. On the other hand, at $70\%$ we find very small marginal improvements of increasing treatment probabilities, suggesting a local optimum.

We contrast our design with a typical grid search method in Figure \ref{fig:final_benefit} that only learns welfare effects at points $\{0, 50\%, 100\%\}$ instead of inducing appropriate perturbations, therefore failing to identify decreasing marginal effects near $70\%$. Such a simpler design would recommend \textit{all} individuals to be assigned to treatment. This would lead to small improvements: We can use as a \textit{conservative} estimate (upper bound) of $W(\beta)$ at $\beta = 100\%$, its Taylor approximation at $\beta = 70\%$, $W(0.7) + 0.3 M(0.7)$.\footnote{This is a conservative estimate even without linearity because we might most likely expect decreasing marginal effects, as supported by Table \ref{tab:marginal_effects_my_main}.} The predicted improvements when treating all units in the population are small relative to only treating $70\%$ (equal to $8\%$) and non-significant. 
Treating only $70\%$ of the individuals instead of $100\%$ would save approximately 0.29\$ per farmer/year over $30\%$ of farmers in the population. This is economically significant if we consider a policy implemented on all farmers in Pakistan (approximately ten million), saving one million US dollars/year.

\begin{minipage}{\textwidth}
  \begin{minipage}[b]{0.49\textwidth}
    \centering
   \includegraphics[scale=0.4]{./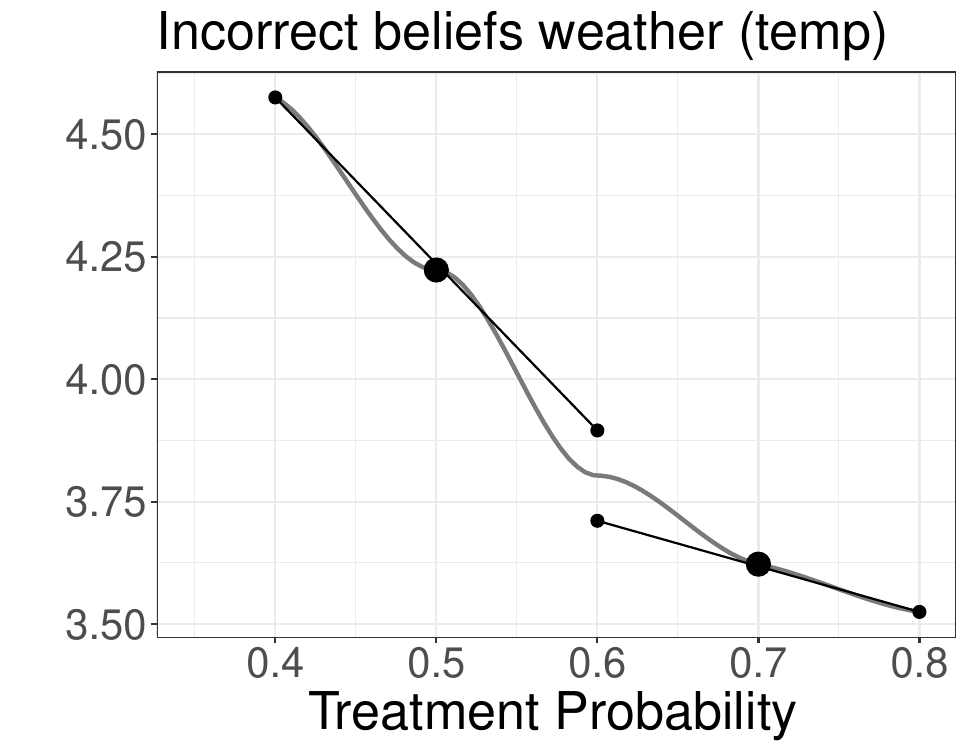}
    \captionof{figure}{Difference between the farmer's predicted temperature and PxD's temperature forecast for the day after the interview. The larger dots report the estimated effects at $\beta = 50\%, \beta = 70\%$, from the first and second wave. The lines report the estimated marginal effects, and the smaller dots the effect estimated at $\beta \in \{40, 60, 80\}\%$ over the first wave (first line) and second wave (second line).} \label{fig:effects}
  \end{minipage}
  \hfill
  \begin{minipage}[b]{0.49\textwidth}
    \centering
   \scalebox{0.6}{\begin{tabular}{@{\extracolsep{5pt}}lcc}
\\[-1.8ex]\hline
\hline \\[-1.8ex]
Incorrect beliefs about  & \multicolumn{2}{c}{PxD forecast Temperature}  \\
\\[-1.8ex] & $\beta = 50\%$ (Wave 1) & $\beta = 70\%$ (Wave 2)\\
\hline \\[-1.8ex]
Marginal Effect & -3.40$^{**}$ & -0.93  \\
one-sided p-value & [0.03] & [0.22] \\
& & \\
Direct Effect & -0.94$^{**}$ & -0.53 \\
one-sided p-value & [0.04] & [0.19]  \\
& & \\
Marginal Spillovers on Treated & 1.70 & -1.91  \\
one-sided p-value & [0.27] & [0.22]  \\
& &  \\
Marginal Spillovers on Controls & -6.41$^{**}$ & 0.78  \\
one-sided p-value & [0.04] & [0.42]  \\
  & & \\
\hline \\[-1.8ex]
Observations & 247 & 633   \\
\hline \\[-1.8ex]
\textit{Note:}  & \multicolumn{2}{r}{$^{*}$p$<$0.1; $^{**}$p$<$0.05; $^{***}$p$<$0.01}
\end{tabular}}
      \captionof{table}{Estimated effects over first and second wave from the main design. P-values are computed via randomization inference for one-sided tests corresponding to testing whether to increase the share of treated farmers.} \label{tab:marginal_effects_my_main} 
    \end{minipage}
  \end{minipage}


\begin{figure}[!ht]
    \centering
   \includegraphics[scale = 0.5]{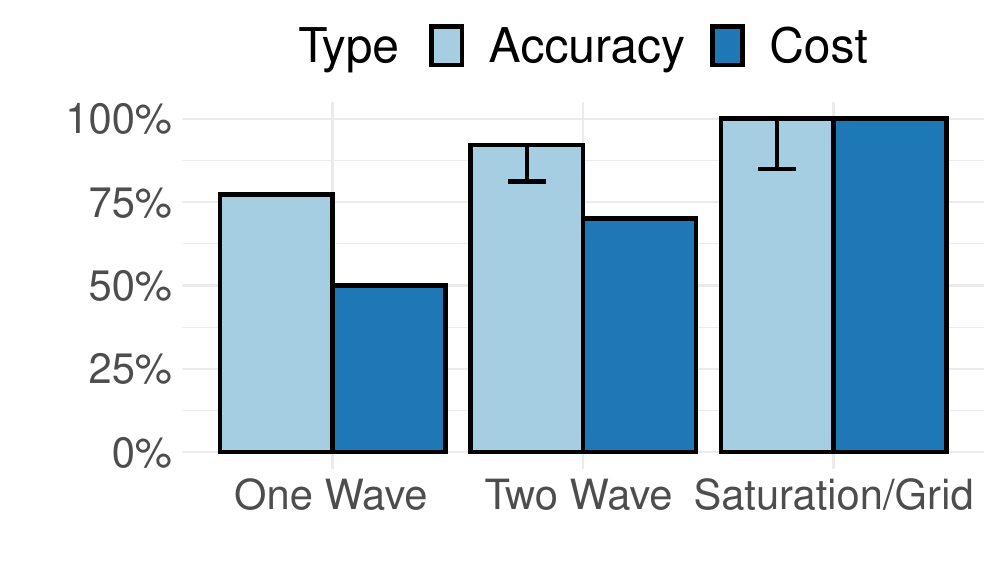}
    \caption{Benefits of a sequential experiment using predicted temperature as the target objective $W(\beta)$. The light blue column reports the percentage change in average forecast accuracy  generated by a policy recommendation using information from the marginal effect, either with one-wave experiment (first column), or two waves (third column), or using a standard saturation experiment with probabilities $\{0, 0.5, 1\}$ (last column). The second, fourth, and sixth columns report the cost of the intervention that would be recommended by each of these experiments relative to the cost of treating everybody in the population. 
    The error bars report $10\%$ lower confidence intervals over the improvement from the first to the second wave and from the two wave to treating everybody in the population (what Saturation/Grid would suggest). These are obtained via randomization inference on the gradient at $\beta = 0.5$ and $\beta = 0.7$, respectively, and using a first-order Taylor approximation to the reward around $0.7$ to obtain a conservative estimate of the effect at $\beta = 100\%$.     }
    \label{fig:final_benefit}
\end{figure}

 
\subsection{Robustness, assumptions, and applicability of the method} \label{sec:why_assumptions}

We conclude this discussion relating our main assumptions to our empirical setting, and a discussion around our main robustness checks, summarized in Table \ref{tab:robustness_checks}.

Inference rests on two major requirements. First, treatments 
must not generate cluster-level heterogeneity in expected outcomes.  Our outcome model is consistent with numerous 
applications in economics and industry, such as agronomy advice programs as in the context of our applications (e.g.,
\cite{duflo2023chat}) or network-based interventions \citep[e.g.][]{miguel2004worms}. 
We check for balance by comparing baseline characteristics across clusters (Appendix Table~\ref{tab:summaries}) where we observe substantial balance. If significant heterogeneity is present, we recommend balancing 
clusters accordingly (see Appendix~\ref{app:het} for details).

Second, within-cluster dependence must 
be limited (``weak dependence'')---that is, each individual can directly interact with a 
small subset of others in the same cluster relative to the population size. This is motivated by clusters being large regions in our application, where we may expect individuals to interact only with a subset of individuals in the region.

By contrast, if 
outcomes exhibit unobserved heterogeneity across clusters or strong dependence within them, it would be impossible to learn optimal policies without network information, unless we have many independent clusters, which in turn would imply more stringent assumptions on how interference propagates.

When we run more than a single wave in our experiments, the cost-benefit analysis in Figure \ref{fig:final_benefit} also rests on the assumption that there are no carry-over (dynamic) effects for the MPE to be comparable across different periods. 
No carry-over often holds when the intervention has only 
short-term effects \citep[e.g.][]{athey2018design, kasy2019adaptive}. In our application, the treatment (providing weather forecast for the upcoming few days) affects our main target outcome, a proxy for one-day ahead temperature predictions, but, as we show in Appendix \ref{app:more_experiment}, it does not affect the outcome in the upcoming weeks. 

 Finally, it is interesting to study how better weather predictions affect behavioral responses. In particular efficacy on belief is a relevant welfare proxy when this may also affect behavioral responses such as choices of irrigation, use of pesticides and ultimately profits. Two papers make a sharp connection between better forecast accuracy and outcomes: \cite{burlig2024long} study this in the context of monsoon season for long-run forecasts, and complementary follow-up work by \cite{rudder2024learning} illustrate effects on behavioral responses for short-run forecasts.


\begin{table}[!ht]
\centering
\footnotesize
\begingroup
\setlength{\tabcolsep}{3pt}
\renewcommand{\arraystretch}{1.22}
\begin{tabular}{
@{}
>{\raggedright\arraybackslash}p{0.16\textwidth}
>{\raggedright\arraybackslash}p{0.14\textwidth}
>{\raggedright\arraybackslash}p{0.34\textwidth}
>{\raggedright\arraybackslash}p{0.28\textwidth}
@{}
}
\hline
\hline
\textbf{Robustness} & \textbf{Reported in} & \textbf{Description} & \textbf{Summary of findings} \\
\hline
\\[-0.6em]

Regression estimates
&
Table \ref{tab:beliefs_forecast2}
&
Regressions of incorrect beliefs on individual treatment status, cluster treatment probability, and their interaction. 
&
Direct and spillover coefficients generally indicate improved forecasts. 
\\[0.8em]
\hline
\\[-0.6em]

Balance table for cluster heterogeneity
&
Table \ref{tab:summaries}
&
Tests whether clusters assigned to different treatment-probability groups are comparable in baseline administrative characteristics.
&
Baseline characteristics are similar across treatment-probability groups. 
\\[0.8em]
\hline
\\[-0.6em]

survey-response balance
&
Tables \ref{tab:summaries_response_rates}, \ref{tab:summaries_response_rates_v2}, \ref{tab:table_non_respondents}
&
Checks whether the surveyed sample is balanced across treatment-probability groups and whether respondents to the weather-belief question differ from non-respondents in baseline characteristics. 
&
Surveyed individuals are broadly balanced across treatment-probability groups. Respondents and non-respondents are also similar on most baseline covariates.
\\[0.8em]
\hline
\\[-0.6em]

dynamic effects
&
Table \ref{tab:dynamics}
&
Tests whether treatment effects vary across waves by including a second-wave indicator and its interaction with individual treatment. 
&
For our main outcome (temperature beliefs), treatment effects preserve sign and magnitude, and the treatment-by-second-wave interaction is close to zero and not significant. 
\\[0.8em]
\hline
\\[-0.6em]

treatment efficacy
&
Tables \ref{tab:preliminary_treatment2}, \ref{tab:forecast}
&
Checks that the intervention affected engagement and that the forecast-based outcome is meaningful.
&
Treated farmers receive more calls and have significantly higher response per call. PxD forecasts strongly track realized precipitation and temperature.
\\[0.8em]
\hline
\\[-0.6em]

More refined marginal effects
&
Table \ref{tab:marginal_effects_ma}
&
Uses smaller five-percentage-point perturbations to estimate additional marginal effects around \(\beta=40\%\) and \(\beta=60\%\). 
&
The smaller-perturbation estimates are substantially noisier. This supports using the better-powered \(10\%\) perturbation as the main specification, and illustrates how in practice we may choose \(\eta\).
\\[0.8em]
\hline
\\[-0.6em]

Pairing sensitivity
&
Table \ref{fig:p_values_distribution}
&
Reports \(p\)-values over 5000 randomly selected pairings ex-ante feasible under the chosen design of positive- and negative-perturbation clusters.
&
average p-value for the marginal effect at $\beta= 0.5$ is below $0.05$ and always below $0.1$ across, and always above $0.2$ for $\beta = 0.7$. Studentized $t$-statistic is on average around $-3$ for $\beta = 0.5$ and $-0.67$ for $\beta = 0.7$. 
\\[0.4em]

\hline
\hline
\end{tabular}
\endgroup
\renewcommand{\arraystretch}{1}
\caption{Summary of robustness checks and additional empirical analyses.}
\label{tab:robustness_checks}
\end{table}

\section{Multi-wave experiment: regret guarantees} \label{sec:main_design}

How should we run the experiment if we are given the possibility of running more waves? This question is relevant for many applications, including those involving agronomy advice \citep[e.g.][]{kasy2019adaptive}.  We therefore discuss the design of sequential experiments. 

For illustrative purposes, we provide the algorithm for the one-dimensional case $p = 1$, in Algorithm \ref{alg:adaptive}, that is, when $\beta \in \mathcal{B} = [\mathcal{B}_1, \mathcal{B}_2]$ is a scalar. In Appendix \ref{app:algorithms}, we provide the complete algorithm for the $p$-dimensional case. Let $\hat{M}_{k, t}$ be as in Equation \eqref{eqn:defn1} for $k$ odd. 

 Theoretical results are for the general $p$-dimensional case ($p$ is finite). Let $\check{T} = T/p$.
 
  \begin{algorithm} [!h]   \caption{Multiple-wave experiment with $\beta$ scalar}\label{alg:adaptive}
  \footnotesize  
    \begin{algorithmic}[1]
    \Require Starting value $\beta_0$ in the interior of $[\mathcal{B}_1 + \eta_n, \mathcal{B}_2 - \eta_n]$, $K$ clusters, $T + 1$ periods, constant $\bar{C}$. 
    \State   Create pairs of clusters $\{k, k+1\}, k \in \{1, 3, \cdots, K-1\}$; 
    \State $t = 0$ (initialization): 
    \begin{algsubstates}
        \State  Assign treatments as 
    $
    D_{i,0}^{(h)} | X_i^{(h)} = x \sim \mathrm{Bern}(\pi(x, \beta_0)) \text{ for all } h \in \{1, \cdots, K\}. 
    $       
     \State For $n$ units in each cluster observe $Y_{i,0}^{(h)}, h \in \{1, \cdots, K\}$; initialize $\widehat{M}_{k,t} = 0$, $\check{\beta}_{k}^0 = \beta_0$. 
        \end{algsubstates} 
    \While{$1 \le t \leq T$}
    \begin{algsubstates}
    \State Define 
    $$
    \small 
    \begin{aligned} 
    \check{\beta}_{h}^t = \begin{cases} & P_{\mathcal{B}_1 + \eta_n, \mathcal{B}_2 - \eta_n}\Big[\check{\beta}_{h}^{t-1} + \alpha_{h + 2, t}\widehat{M}_{h + 2,t - 1}\Big], \quad h \in \{1, \cdots, K - 2\},  \\
    & P_{\mathcal{B}_1 + \eta_n, \mathcal{B}_2 - \eta_n}\Big[\check{\beta}_{h}^{t-1} + \alpha_{1, t} \widehat{M}_{1,t - 1}\Big], \quad \quad \quad h \in \{K - 1, K\};
    \end{cases} 
    \end{aligned}
    $$ 
    where $\alpha_{k,t}$ is the learning rate $P_{a, b}(x) = \mathrm{arg} \min_{x' \in [a, b]^p} ||x - x'||^2$.
        \State  Assign treatments as (for  $\bar{C} n^{-1/2}< \eta_n < \bar{C} n^{-1/4}$)
   \begin{equation} \label{eqn:rand_adaptive}
   \small \begin{aligned} 
    D_{i,t}^{(h)} | X_i^{(h)} = x \sim  
    \mathrm{Bern}(\pi(x, \hat{\beta}_{h,t})), \quad  \hat{\beta}_{h,t} = \begin{cases} & \check{\beta}_{h}^t  + \eta_n \text{ if } h \text{ is odd}  \\
    & \check{\beta}_{h}^t  - \eta_n \text{ if } h \text{ is even}
    \end{cases} 
    \end{aligned} 
    \end{equation} 
        \State For $n$ units in each cluster $h \in \{1, \cdots, K\}$ observe $Y_{i,t}^{(h)}$; 
        \State For each pair $\{k, k+1\}$, estimate
      \begin{equation} \label{eqn:defn1}
      \small 
      \begin{aligned} 
      \hat{M}_{k,t} = 
      \hat{M}_{k + 1,t} = \frac{1}{2 \eta_n} \Big[\bar{Y}_t^{(k)} - \bar{Y}_0^{(k)}\Big] -  \frac{1}{2 \eta_n} \Big[\bar{Y}_t^{(k + 1)} - \bar{Y}_0^{(k +1)}\Big]. 
      \end{aligned} 
      \end{equation}        
        \EndWhile
      \end{algsubstates}
 \State Return 
 $
 \hat{\beta}^* = \frac{1}{K} \sum_{k = 1}^K \check{\beta}_{k}^T
 $  
 
         \end{algorithmic}
\end{algorithm}

 The algorithm pairs clusters (here two consecutive clusters form a pair) and initializes clusters at the same starting value $\beta_0$, $\check{\beta}_1^1 = \cdots = \check{\beta}_K^1 = \beta_0$. At $t = 0$, it randomizes treatments independently using the same starting value $\beta_0$ for all clusters. 
 Here, $\beta_0$ is chosen exogenously, e.g., it is the current policy in place. Over each iteration $t$, we assign treatments based on $\hat{\beta}_{k,t}$  for cluster $k$ at time $t$, which equals the parameter $\check{\beta}_k^t$ obtained from a previous iteration plus a positive (negative) perturbation $\eta_n$ in the first (second) cluster in a pair. The local perturbation follows similarly to what is discussed in the single-wave experiment. Also, by construction, $\check{\beta}_k^t$ is the same for a given pair $(k, k+1)$, where $k$ is odd. We choose $\check{\beta}_k^{t + 1}$ via what we call \textit{sequential cross-fitting}: we wrap clusters in a \textit{circle} and update the parameter in a pair of clusters $(k, k+1)$ using information from the subsequent pair (see Appendix Figure \ref{fig:clusters}). The algorithm runs over $T$ periods and returns 
$
\hat{\beta}^* = \frac{1}{K} \sum_{k = 1}^K \check{\beta}_{k}^{T } .
$
Choosing the average is motivated by the theoretical properties of gradient descent, although other statistics are also possible.

\begin{figure}[!ht]
\vspace{-3mm}
\centering
\begin{tikzpicture}[
node distance = 6mm and 6mm, 
  start chain = going right,
    mw/.style = {minimum width=#1},
  list/.style = {rectangle split, rectangle split parts=1,
                 rectangle split horizontal, draw,
                 align=center,
                 text width=7mm, 
                 minimum height=9mm, 
                 inner sep=1mm, on chain}, 
          > = stealth, 
          ]

  \node[list] (A) {\nodepart[mw=7mm]{two}  };
  \node[list] (B) {\nodepart[mw=7mm]{two}  };
  \node[list] (C) {\nodepart[mw=7mm]{two} 3};

  \node[list, below=of A] (D) {\nodepart[mw=7mm]{two}  };
  \node[list] (E) {\nodepart[mw=7mm]{two}  };
  \node[list] (F) {\nodepart[mw=7mm]{two} 3};
 \node[circle] (g) at (4.5,-0.8) {$\longrightarrow$};

    \node[list, fill = cyan] (A) at (5, 0) {\nodepart[mw=7mm]{two}  };
  \node[list, fill = cyan] (B) {\nodepart[mw=7mm]{two}  };
  \node[list, fill = cyan] (C) {\nodepart[mw=7mm]{two} 3};

  \node[list, below=of A, fill = gray] (D) {\nodepart[mw=7mm]{two}  };
  \node[list, fill = gray] (E) {\nodepart[mw=7mm]{two}  };
  \node[list, fill = gray] (F) {\nodepart[mw=7mm]{two} 3};
  
   \node[circle] (g) at (10.5,-0.8) {$\longrightarrow$};
  \draw[-] (A) edge (D) (B) edge (E) (C) edge (F);
  \node[list, fill = cyan] (A) at (11, 0) {\nodepart[mw=7mm]{two}  };
  \node[list, fill = cyan] (B) {\nodepart[mw=7mm]{two}  };
  \node[list, fill = cyan] (C) {\nodepart[mw=7mm]{two} 3};

  \node[list, below=of A, fill = gray] (D) {\nodepart[mw=7mm]{two}  };
  \node[list, fill = gray] (E) {\nodepart[mw=7mm]{two}  };
  \node[list, fill = gray] (F) {\nodepart[mw=7mm]{two} 3};
  
  \draw[-] (A) edge (D) (B) edge (E) (C) edge (F);
  
  \path[->,  -triangle 90] (A) edge [bend right = 300]  (B);
   \path[->,  -triangle 90] (B) edge [bend right = 300]  (C);
    \path[->,  -triangle 90] (F) edge [bend right = -90]  (D);
\end{tikzpicture}
\vspace{-4mm}
\caption{Sequential cross-fitting method. Clusters (rectangles) are paired. Within each pair, researchers assign different treatment probabilities to clusters with different colors. Finally, the policy in each pair is updated using information from the consecutive pair, with the first pair using information from the last pair. Note that because $K > 2 T$, the outcomes in each cluster are independent of the gradient estimated in the subsequent cluster over each iteration $t$. 
 } \label{fig:clusters}
\end{figure}

\begin{rem}[Why sequential cross fitting] \label{rem:aaa} 
Next, we illustrate the source of bias if the sequential cross-fitting was not employed in the adaptive experiment. Every period, the researcher can only identify the expected outcome of $Y_{i,t}^{(k)}$ conditional on the parameter $\hat{\beta}_{k,t}$, namely (omitting fixed effects for simplicity)
$
\widetilde{W}(\hat{\beta}_{k,t}) = \mathbb{E}_{\hat{\beta}_{k,t}}[Y_{i,t}^{(k)} | \hat{\beta}_{k,t}]. 
$
If $\hat{\beta}_{k,t}$ were chosen exogenously, based on information from a different cluster, $\mathbb{E}_{\hat{\beta}_{k,t}}[Y_{i,t}^{(k)} | \hat{\beta}_{k,t}] = \mathbb{E}_{\hat{\beta}_{k,t}}[Y_{i,t}^{(k)}] = W(\hat{\beta}_{k,t})$, where $W(\hat{\beta}_{k,t})$ defines the expected outcome once we deploy the policy $\hat{\beta}_{k,t}$ on a new population. However, the equality conditional and unconditional on $\hat{\beta}_{k,t}$ does not occur when $\hat{\beta}_{k,t}$ is estimated using information on $Y_{i,t-1}^{(k)}$. Consider the example where the outcome depends on some auto-correlated unobservables $\nu_{i,t}$ and treatment assignments in Figure \ref{fig:dag}. 
The \textit{dependence} structure of Figure \ref{fig:dag} implies:
$
W(\hat{\beta}_{k,t}) = \mathbb{E}_{\hat{\beta}_{k,t}}[Y_{i,t}^{(k)}] \neq \mathbb{E}_{\hat{\beta}_{k,t}}[Y_{i,t}^{(k)} | \hat{\beta}_{k,t}] = \widetilde{W}(\hat{\beta}_{k,t}),  
$ 
if $\hat{\beta}_{k,t}$ depends on covariates and unobservables from previous outcomes (and so on unobservables $\nu_{i,t}^{(k)}$) in cluster $k$. 
Here, $W(\hat{\beta}_{k,t}) $ captures the estimand of interest. Instead, $\widetilde{W}(\hat{\beta}_{k,t})$ denotes what we can identify.
Algorithm \ref{alg:adaptive}   breaks such dependence and guarantees unconfounded experimentation.  We formalize this intuition in Appendix Lemma \ref{lem:1a}.  \qed 
\end{rem}

\begin{rem}[Learning rate] \label{rem:learning_rate}
We are left to discuss how ``large" the step size should be: if the marginal effect is positive, by how much should we increase the treatment probability? Assuming strong concavity of the objective function, the learning rate $\alpha_{k,t}$ should be of order $1/t$ (e.g., $J/t$) to control both the in-sample and out-of-sample regret for small $J$ (e.g., $10\%$). When $\beta$ denotes a treatment probability a natural choice is $J \in [10\%, 20\%]$.  A more robust choice with moderate or large $T$ as we formally discuss in Appendix Theorem \ref{thm:rate} is 
\begin{equation} \label{eqn:gradient}
\small 
\begin{aligned} 
\alpha_{k,t} = \begin{cases} 
&  \frac{J}{\check{T}^{1/2 - v/2} ||\hat{M}_{k,t}||} \text{ if } ||\hat{M}_{k,t}||_2 >  \frac{c}{\check{T}^{1/2 - v/2}} +  \epsilon_n, \\
&0\text{ otherwise} 
\end{cases} ,   
\end{aligned} 
\end{equation} 
for a positive $\epsilon_n$, $\epsilon_n \rightarrow 0$, and small constants $1 \ge v, J, c > 0$.\footnote{Formally, we let   $\epsilon_n$ be proportional to $\sqrt{\frac{\gamma_N}{\eta_n^2 n}} + \eta_n$. See Theorem \ref{thm:rate} for more details. }  
Here, the learning rate divides the estimated marginal effect by its norm (known as gradient norm rescaling, \citealt{hazan2015beyond}) and controls the out-of-sample regret under weaker conditions on $W(\beta)$ (strict quasi-concavity instead of strong concavity).    \qed 
\end{rem}

\begin{figure}[!h]
\centering 
    \begin{tikzpicture}[scale = 1.16]
    \node[draw, black,ultra thick, inner sep=0pt,
  text width=12mm,
  align=center,   circle] (h) at (-3,-2) {$Y_{i,t-1}^{(k)}$};
    \node[draw, black,ultra thick,  inner sep=0pt,
  text width=12mm,
  align=center, circle] (d) at (-3,-4) {$\hat{\beta}_{k,t}$};
    \node[draw, black,ultra thick,inner sep=0pt,
  text width=12mm,
  align=center, circle] (b) at (1,0) {$\nu_{i,t}^{(k)}$};

    \node[] (e) at (-8,-5) {Policy on a new population};
   \node[] (e) at (-1,-5) {Experiment with repeated sampling};
    \node[draw, black,ultra thick, inner sep=0pt,
  text width=12mm,
  align=center, circle] (e) at (1,-2) {$Y_{i,t}^{(k)}$};
 \node[draw, black,ultra thick,  inner sep=0pt,
  text width=12mm,
  align=center, circle] (f) at (1,-4) {$D_{i,t}^{(k)}$};
 \node[draw, black,ultra thick,  inner sep=0pt,
  text width=12mm,
  align=center, circle] (g) at (-3,0) {$\nu_{i,t - 1}^{(k)}$};

    \node[draw, black,ultra thick, inner sep=0pt,
  text width=12mm,
  align=center,   circle] (hh) at (-10,-2) {$Y_{i,t-1}^{(k)}$};
    \node[draw, black,ultra thick,  inner sep=0pt,
  text width=12mm,
  align=center, circle] (dd) at (-10,-4) {$\beta^*$};
    \node[draw, black,ultra thick,inner sep=0pt,
  text width=12mm,
  align=center, circle] (bb) at (-6,0) {$\nu_{i,t}^{(k)}$};

    \node[draw, black,ultra thick, inner sep=0pt,
  text width=12mm,
  align=center, circle] (ee) at (-6,-2) {$Y_{i,t}^{(k)}$};
 \node[draw, black,ultra thick,  inner sep=0pt,
  text width=12mm,
  align=center, circle] (ff) at (-6,-4) {$D_{i,t}^{(k)}$};
 \node[draw, black,ultra thick,  inner sep=0pt,
  text width=12mm,
  align=center, circle] (gg) at (-10,0) {$\nu_{i,t - 1}^{(k)}$};

    \draw[->, -triangle 90]    (g) edge (h)  (h) edge (d) 
    (d) edge (f) (f) edge (e) (b) edge (e) (g) edge (b);

   \draw[->, -triangle 90] (dd) edge (hh)   (gg) edge (hh) 
    (dd) edge (ff) (ff) edge (ee) (bb) edge (ee) (gg) edge (bb);

    \end{tikzpicture}
    \caption{The left panel shows the dependence structure when a static policy is implemented on a new population (we omit $D_{i,t-1}^{(k)}$ for expositional convenience), where $\nu_{i,t}$ denote unobservable characteristics. The right panel shows the dependence structure of a sequential experiment that uses the same units for policy updates over subsequent periods with \textit{repeated} sampling. 
} \label{fig:dag}
    \end{figure}
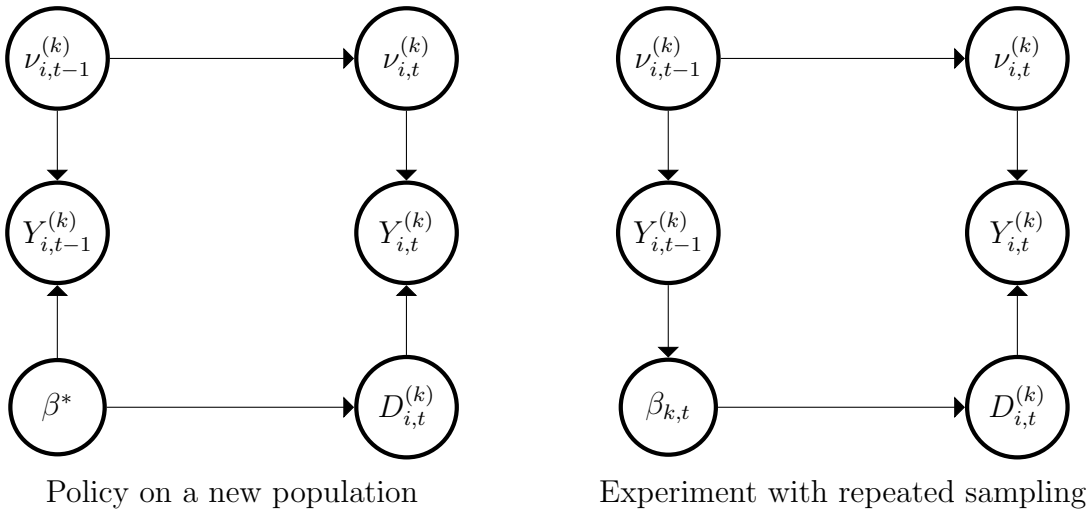

The main distinction from most of the previous literature on adaptive experiments \citep[e.g.][]{ wager2019experimenting, hadad2019confidence, zhang2020inference} is that in all such references repeated sampling does not occur, and batches are independent each period. Here, instead, clusters are dependent over each period, motivating our sequential estimation procedure.

\subsection{Regret guarantees} 

Next, we provide the main regret guarantees. 

\begin{ass}\label{ass:bounded}  Let (A) $Y_{i,t}^{(k)}$ be sub-Gaussian with sub-gassian parameter bounded by $r^2 < \infty$; and  (B) $K \ge 2(T/p + 1)$.  
 \end{ass}

Condition (A) states that unobservables have sub-Gaussian tails (attained by bounded random variables); (B) assumes that the number of clusters is at least twice the number of waves, which guarantees that Assumption \ref{defn:bernoulli} (as shown in Appendix Lemma \ref{lem:1a}) holds.

\begin{ass}[Strong concavity] \label{ass:strong_concavity} Assume $W(\beta)$ is $\sigma$-strongly concave, for some $\sigma > 0$ (i.e., $W(\beta)$'s Hessian is strictly negative definite), with interior $\beta^*$.  
\end{ass} 

An example is Example \ref{exmp:main}, where neighbors' effects induce decreasing marginal effects, and the treatment may entail some costs, see  real-world data example in Appendix Figure \ref{fig:objectives_analysis}. Strong concavity also arises in linear models with negative externalities. A comprehensive discussion is below in Remark \ref{rem:concavity}.

\begin{thm}[Main regret guarantees] \label{thm:rate2b}   Let  Assumptions \ref{ass:ass_0}, \ref{ass:regularity_basic}, \ref{ass:bounded}, \ref{ass:strong_concavity} hold. Take a small $1/4 > \xi > 0$, $\alpha_{k,w} = J/w$ for a finite $J \ge 1/\sigma$.   Let $n^{1/4 - \xi} \ge C \sqrt{p \log(n) \gamma_N T^{B p} \log(KT)}$, $\eta_n = 1/n^{1/4 + \xi}$, for finite constants $B, C > 0$. Then, with probability at least $1 - 1/n$, for constants $c, \bar{C} < \infty$, both independent of $(p, n, N, K, \check{T})$,
\begin{equation} \label{eqn:rate_all} 
\small 
\begin{aligned} 
\max_{k \in \{1,\cdots, K\}} \frac{1}{\check{T}} \sum_{w=1}^{\check{T}} \Big[W(\beta^*) - W(\check{\beta}_k^{w})\Big] \le  c \frac{p^2 \log(\check{T} + 1)}{\check{T}}, \quad W(\beta^*) - W(\hat{\beta}^*) \le \frac{p^2 \bar{C}}{\check{T}}. 
\end{aligned} 
\end{equation} 
For $K = 2 \check{T} + 2$, Equation \eqref{eqn:rate_all} holds with $K/2 - 1$ in lieu of $\check{T}$. 
\end{thm}  

The proof is in Appendix \ref{sec:proof5}. The theorem formalizes the in-sample and out-of-sample regret bound. Assuming that the sample size grows polynomially in the number of clusters (and $T$), and we choose $K = 2 (T/p + 1)$ (the smallest number of clusters under the stated assumptions), the theorem implies a rate of order $\log(K)/K$ for the in-sample regret.\footnote{Also note that a similar bound can be obtained as we control for the perturbed policies for sufficiently small perturbations $\eta_n$.} 

The rate in $T$ (and therefore $K$) does not depend on $p$,  as $n \rightarrow \infty$. This is different from grid-search procedures, where the rate in $K$ would be exponentially slower in $p$.

\begin{thm}[Out-of-sample regret with larger sample size] \label{thm:rate2bb}   Let  Assumptions \ref{ass:ass_0}, \ref{ass:regularity_basic}, \ref{ass:bounded}, \ref{ass:strong_concavity} hold, with $W(\beta)$ being $\tau$-smooth, and $K = 2 \check{T} + 2$. Take a small $1/4 > \xi > 0$, $\alpha_{k,w} = 1/\tau$.   Let $n^{1/4 - \xi} \ge C \sqrt{p \log(n) \gamma_N e^{T B p} \log(KT)}$, $\eta_n = 1/n^{1/4 + \xi}$, for finite constants $B, C > 0$. Then, with probability at least $1 - 1/n$, for constants $0 < c_{p}, c_{p}'< \infty$, independent of $(n, N, K, \gamma_N, \check{T})$,
$$
\small 
\begin{aligned} 
W(\beta^*) - W(\hat{\beta}^*) \le c_{p} \exp(- c_{p}' K).
\end{aligned} 
$$
\end{thm}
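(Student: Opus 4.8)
The plan is to cast the out-of-sample regret as a squared-distance bound and then run the textbook linear-convergence analysis of projected gradient ascent for a strongly concave, smooth objective, taking care that the ``gradients'' $\widehat V_{k,t}$ are only noisy, slightly biased estimates of $V$. First I would pass from welfare to distance. Since $\beta^*$ is an interior maximizer we have $V(\beta^*)=0$, so $\tau$-smoothness gives the quadratic majorant $W(\beta^*)-W(\beta)\le \tfrac{\tau}{2}\|\beta-\beta^*\|^2$. Applying this at $\hat\beta^*=K^{-1}\sum_k \check\beta_k^{T}$ and using convexity of $\|\cdot-\beta^*\|^2$,
$$
W(\beta^*)-W(\hat\beta^*)\;\le\;\frac{\tau}{2}\,\|\hat\beta^*-\beta^*\|^2\;\le\;\frac{\tau}{2K}\sum_{k=1}^{K}\big\|\check\beta_k^{T}-\beta^*\big\|^2 ,
$$
so it suffices to bound each per-cluster distance $e_{k,t}:=\|\check\beta_k^{t}-\beta^*\|^2$ geometrically in $t$. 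The second ingredient is a uniform gradient-error bound: writing $\widehat V_{k,t}=V(\check\beta_k^{t})+\zeta_{k,t}$, I would show $\max_{k\le K,\,t\le T}\|\zeta_{k,t}\|\le \epsilon_n$ with probability at least $1-1/n$, where $\epsilon_n\propto \eta_n+\sqrt{\gamma_N/(\eta_n^2 n)}$. Here the $O(\eta_n)$ term is the deterministic perturbation bias (Theorem \ref{thm:spill_marginal}) and the square-root term is the concentration of the difference-in-differences estimator from Theorem \ref{thm:const1}; crucially, unconfoundedness (Lemma \ref{lem:1a}) makes $\check\beta_k^{t}$ independent of cluster $k$'s data, so $\widehat V_{k,t}$ estimates $V$ evaluated at the \emph{exogenous} $\check\beta_k^{t}$ and not the confounded object $\partial\widetilde W/\partial\beta$ of Remark \ref{rem:aaa}. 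Taking a union bound over the $KT$ (sub-Gaussian) concentration events at level $\delta=1/(nKT)$ produces the $\log(KT)$ and $\log(n)$ factors appearing in the hypothesis.

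Next I would set up the inexact contraction. The update is $\check\beta_k^{t}=P[\check\beta_k^{t-1}+\alpha\,\widehat V_{k,t-1}]$ with constant step $\alpha=1/\tau$, and $P$ the (non-expansive) projection onto the compact $\mathcal B$ with $\beta^*\in\mathcal B$. Expanding $\|(\check\beta_k^{t-1}-\beta^*)+\alpha(V(\check\beta_k^{t-1})+\zeta_{k,t-1})\|^2$, the exact part contracts by the standard strong-concavity/smoothness co-coercivity with step $1/\tau$,
$$
\big\|\check\beta_k^{t-1}-\beta^*+\tau^{-1}V(\check\beta_k^{t-1})\big\|^2\;\le\;(1-\sigma/\tau)\,e_{k,t-1},
$$
while the $\zeta$ cross term is controlled by Cauchy--Schwarz and absorbed by Young's inequality into half the contraction margin. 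This yields, uniformly over $k$ and on the good event,
$$
e_{k,t}\;\le\;(1-\rho)\,e_{k,t-1}+C\,\epsilon_n^2,\qquad \rho=\Theta(\sigma/\tau),
$$
and unrolling from the common initialization $e_{k,0}\le\mathrm{diam}(\mathcal B)^2$ gives $e_{k,T}\le(1-\rho)^{T}\mathrm{diam}(\mathcal B)^2+C\epsilon_n^2/\rho$.

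It then remains to show both terms are exponentially small in $K$. For the first, $(1-\rho)^{T}\le e^{-\rho T}$ and $T=(K-2)/2$, so it is $\le c_0 e^{-c_0'K}$ with $c_0'=\Theta(\sigma/\tau)$. For the second, substituting $\eta_n=n^{-1/4-\xi}$ into $\epsilon_n^2\lesssim \eta_n^2+\gamma_N\log(\cdot)/(\eta_n^2 n)$ and rearranging the hypothesis $n^{1/4-\xi}\ge C\sqrt{p\log(n)\,\gamma_N e^{TB_p}\log(KT)}$ forces $\epsilon_n^2\lesssim e^{-TB_p}\le c_0 e^{-c_0'K}$, again using $T\propto K$. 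Combining the two bounds in the displayed regret inequality delivers $W(\beta^*)-W(\hat\beta^*)\le c_0\exp(-c_0'K)$ on an event of probability $1-1/n$, with constants depending only on $\sigma,\tau,p$ and $\mathcal B$.

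The main obstacle is Step~5's additive floor. Unlike the $\mathcal O(1/T)$ analysis of Theorem \ref{thm:rate2b}, where a decaying step size lets the optimization error dominate, here the geometric decay $e^{-\rho T}$ can be matched only if the statistical floor $\epsilon_n^2/\rho$ is \emph{itself} exponentially small in $T$. This is precisely why the sample-size requirement carries the factor $e^{TB_p}$: $n$ must grow exponentially in the number of waves so that the polynomial-in-$n$ gradient error sits below the exponentially shrinking optimization residual. The delicate bookkeeping is (i) tracking the constant $B_p$ that records how the $p$-dimensional coordinate-wise assembly of $\widehat V_{k,t}$ and the union bound inflate the required $n$, and (ii) verifying that the contraction margin $\rho$ survives after the $\zeta$ cross terms are absorbed uniformly over all $k\le K$ and $t\le T$ on the single high-probability event.
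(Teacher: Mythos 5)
Your overall architecture matches the paper's: pass from regret to squared distance via the quadratic bound and Jensen across clusters, invoke linear convergence of gradient descent with step $1/\tau$ under $\sigma$-strong concavity and $\tau$-smoothness, and use the exponential-in-$T$ sample-size condition to push the statistical error below the geometrically decaying optimization residual. The paper implements this by comparing every cluster's iterate to a single oracle sequence $\beta_w^{**}$ (bounding $\|\beta^*-\beta_{\check T+1}^{**}\|^2$ by Bubeck's Theorem 3.10 and the deviation $\|\check\beta_k^{w}-\beta_w^{**}\|$ by the recursion of Lemma \ref{lem:fg2}), whereas you run the inexact contraction directly on $e_{k,t}=\|\check\beta_k^t-\beta^*\|^2$.

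The genuine gap is in how you set up that contraction. You write the update as $\check\beta_k^{t}=P[\check\beta_k^{t-1}+\alpha\,\widehat V_{k,t-1}]$ with $\widehat V_{k,t-1}=V(\check\beta_k^{t-1})+\zeta_{k,t-1}$ and $\|\zeta_{k,t-1}\|\le\epsilon_n$. But the circular cross-fitting updates pair $k$ using the gradient estimated in the \emph{subsequent} pair, i.e.\ $\check\beta_k^t=P[\check\beta_k^{t-1}+\alpha\,\widehat V_{k+2,t-1}]$, and $\widehat V_{k+2,t-1}$ concentrates around $V(\check\beta_{k+2}^{t-1})$, not $V(\check\beta_k^{t-1})$. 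The effective error in your expansion therefore contains the extra term $V(\check\beta_{k+2}^{t-1})-V(\check\beta_k^{t-1})$, which is of order $L\,\|\check\beta_{k+2}^{t-1}-\check\beta_k^{t-1}\|$ and is not bounded by $\epsilon_n$ a priori. Controlling it requires a separate induction on the cross-cluster spread $D_t=\max_{k,k'}\|\check\beta_k^t-\check\beta_{k'}^t\|$, which obeys $D_t\le(1+\alpha L)D_{t-1}+2\alpha\epsilon_n$ and hence accumulates multiplicatively as $t(1+\alpha L)^t\epsilon_n$ — this is precisely the $(1+B)^w$ recursion in the paper's Lemma \ref{lem:fg2}, and it, rather than the additive statistical floor $\epsilon_n^2/\rho$ alone, is the primary reason the hypothesis carries the factor $e^{TB_p}$. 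Once you add this induction your contraction goes through, since the exponential-in-$T$ sample size absorbs $e^{cT}\epsilon_n$ as well; but as written the key structural feature of the algorithm (the gradient fed to pair $k$ is evaluated at a different pair's policy) is not accounted for, and the claimed uniform bound $\|\zeta_{k,t}\|\le\epsilon_n$ on the update noise is false.
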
  

The proof is in Appendix \ref{sec:exponential2}. The main restriction is that the sample size grows \textit{exponentially} in the number of iterations (instead of polynomially).  The theorem leverages properties of the gradient descent under strong concavity and smoothness \citep{bubeck2012regret}. Fast rates for the out-of-sample regret are achieved under an appropriate choice of the learning rate that leverages the smoothness of the objective function.  The choice of a learning rate invariant in the iteration $t$ requires a sample size exponential in $T$. This differs from the choice of a learning rate as $1/t$ in Theorem \ref{thm:rate2b}, where the adaptive learning rate enables controlling the cumulative error with $n$ growing polynomially in $T$. To our knowledge, these regret guarantees are the first under unknown (and partial) interference.

\begin{rem}[Rates and relevant assumptions] \label{rem:concavity}
The regret guarantees combine two distinct sets of restrictions. The first set is about identification: Assumption \ref{ass:ass_0}, together with the sequential cross-fitting scheme, ensures that marginal effects estimated in one pair of clusters are informative about policy updates in another pair by sharing a common slope. 

The second set of restrictions is about the shape of \(W(\beta)\), and determines the optimization rate. Assumption \ref{ass:strong_concavity} imposes globally decreasing marginal returns and a unique optimum. This condition is natural in models with decreasing marginal spillovers, as in Example \ref{exmp:main}, or in linear models with negative externalities and treatment costs. Under strong concavity and an adaptive learning rate, Theorem \ref{thm:rate2b} obtains the benchmark \(1/K\)-type rate.

The exponential out-of-sample rate in Theorem \ref{thm:rate2bb} requires stronger assumptions, in particular additional smoothness restrictions and a much larger per-cluster sample size in the number of iterations, which therefore we see as a favorable case. 

Importantly, 
strong concavity can fail in settings where information only spreads after enough individuals have been treated, or more generally in settings with phase-transitions. In this case we may expect that the objective function either has multiple optima or is quasi-concave. In the first case, we should interpret our regret guarantees relative to a \textit{local} instead of global optimum absent global strong concavity. The second case instead is possible to handle (assuming strong concavity only locally at the optimum and strict quasi-concavity globally), where rates are nearly $1/K$ (but non-exponential) for an appropriate choice of the learning rate, see Appendix \ref{sec:quasi_concavity}.  \qed
\end{rem}

We now contrast the above results with past literature. In the online optimization literature, the rate $1/T$ is common for convex optimization, assuming independent units \citep[see][for out-of-sample regret rates]{duchi2018minimax}. Here, because of interference, we leverage between-clusters perturbations. Also, we do not have direct access to the gradient, and related optimization procedures are those in the literature on zero-th order optimization \citep{kiefer1952stochastic}. \cite{flaxman2004online, agarwal2010optimal} in particular are related to our approach, where regret can converge at rate $O(1/T)$ in expectation only, whereas high-probability bounds are $1/\sqrt{T}$ \citep[see Theorem 6 in][and the discussion below]{agarwal2010optimal}.  Here, we exploit within-cluster concentration and between clusters' variation to control for large deviations of the estimated gradients and obtain faster rates for high-probability bounds. This approach also allows us to extend out-of-sample guarantees beyond global strong concavity (assumed in the above references) in Appendix \ref{sec:quasi_concavity}. In our derivations, the perturbation parameter depends on the sample size, differently from the references above, and the idea of sequential estimation is novel due to repeated sampling. 
 \cite{wager2019experimenting} derive $1/T$ regret guarantees in the different settings of market pricing, as $n \rightarrow \infty$, with independent units and samples each wave. 
Our results study a complementary framework with locally dependent units and partial interference.  \cite{viviano2019policy} considers a single network, with \textit{observed} neighbors of experiment participants, instead of a sequential experiment. He imposes geometric (VC) restrictions on the policy and solves a mixed-integer linear program. Here, we introduce an adaptive experiment and we do not require network information, using network concentration not studied in previous works.

These differences require a different set of techniques for derivations.  The proof of the theorem (i) uses concentration arguments for locally dependent graphs \citep{janson2004large}; (ii) uses the within-cluster and between-clusters variation for consistent estimation of the marginal effect, together with the cluster pairing; (iii) it uses a recursive argument to bound the cumulative error obtained through the estimation and sequential cross-fitting.

\subsection{Calibrated numerical studies} 

We complement our theoretical findings by collecting numerical studies \textit{calibrated} to real-world applications in \cite{cai2015social} and \cite{alatas2012targeting}. In these studies, we simulate a quadratic model of spillover effects estimated from the data and report welfare effects of our proposed procedure. Details on the implementation and estimation are contained in Appendix \ref{app:experiment}. We compare our method to the best competitor between the one that maximizes the estimated reward obtained from a correctly specified quadratic model and the one that chooses the treatment with the largest value within a grid of values. Our method leads to significant out-of-sample welfare improvements for both applications. Improvements are generally larger for larger $T$. The panel at the bottom of Table \ref{tab:cov02} reports positive and large improvements for the in-sample reward across all the designs, worst-case across clusters. These are often increasing in $T$ with a few exceptions since uniform concentration may deteriorate for large $T$ and small $n$. In Appendix \ref{sec:designs2}, we study the performance of the one-wave experiment.
In the online Appendices  \ref{sec:aa1}, \ref{sec:aa2}, we report results across many other specifications of the network, policy functions, and choice of different parameters and different starting values.

\begin{table}[!htp]\centering
\caption{Multiple-wave experiment. $200$ replications. The relative improvement in reward with respect to the best competitor for $\rho = 2$. The panel at the top reports the out-of-sample regret, and the one at the bottom the worst-case in-sample regret. A description of the data-generating process is in Appendix \ref{app:experiment}. 
}    \label{tab:cov02} 
\ra{1.3}
\scalebox{0.8}{\begin{tabular}{lcccc|cccc}
\hline\hline
& \multicolumn{4}{c}{Information} & \multicolumn{4}{c}{Cash Transfer} \\
\cline{2-5}\cline{6-9}
$T =$ & 5 & 10 & 15 & 20 & 5 & 10 & 15 & 20 \\
\hline
$n = 200$ & 0.058 & 0.147 & 0.297 & 0.212 & 0.621 & 0.520 & 0.737 & 0.752 \\
$n = 400$ & 0.227 & 0.210 & 0.355 & 0.346 & 0.653 & 0.746 & 0.874 & 0.899 \\
$n = 600$ & 0.299 & 0.281 & 0.345 & 0.493 & 0.647 & 0.801 & 0.942 & 1.125 \\
\hline
$n = 200$ & 0.233 & 0.243 & 0.265 & 0.287 & 0.247 & 0.279 & 0.300 & 0.321 \\
$n = 400$ & 0.243 & 0.274 & 0.321 & 0.335 & 0.267 & 0.307 & 0.344 & 0.353 \\
$n = 600$ & 0.262 & 0.314 & 0.343 & 0.360 & 0.294 & 0.360 & 0.388 & 0.387 \\
\hline
\end{tabular}
}
\end{table}

\section{Conclusions} \label{sec:conclusions}

This paper makes two main contributions. First, it introduces a single-wave experimental design to estimate the marginal effect of the policy and test for policy optimality. The experiment also enables identifying and estimating treatment effects, which can be of independent interest.  
Our design is motivated by our novel empirical application, that illustrates its advantages in a large scale implementation to study information campaigns for climate adaptation.  Our empirical application shows that using the marginal effect can be informative for decision-making even with few (two) waves. 
Second, it introduces an adaptive experiment to estimate optimal policies. We derive asymptotic properties for inference and provide a set of guarantees on the in-sample and out-of-sample regret. 


This work opens new questions in the context of experimental design with spillovers. We leave to future research the study of the properties of the estimators in settings where (i) clusters are not fully disconnected, in the spirit of   \cite{leung2023network};  (ii) clusters need to be estimated from the data in the spirit of \cite{viviano2023causal}; (iii)  matching clusters based on observable covariates, as we discuss in Appendix \ref{app:het}. Similarly, studying the properties of the proposed method, as the degree of interference is proportional to the sample size, is an interesting direction. This is theoretically possible, in the spirit of what discussed in Theorem \ref{thm:const1}, and we leave its comprehensive analysis to future research. Finally, an open question is how to estimate policies when the network is only partially observed \citep[e.g.,][]{breza2017using, manresa2013estimating}, and how to measure costs and benefits of collecting network data, for which Appendix \ref{sec:comp} provides novel directions for future research.

\bibliography{my_bib2}
\vspace{-2mm}
\bibliographystyle{chicago}

\newpage

\begin{center}
    \LARGE Appendix to ``Policy design in experiments with unknown interference''
\end{center}

\begin{appendices}

\section{Main extensions and discussions} \label{sec:extensions1}

\vspace{-1.5mm}

\subsection{General equilibrium effects} \label{sec:global} 

In this section, the treatment affects each unit in a cluster $k$ through a global interference mechanism mediated by a variable $p_t^{(k)}$. For simplicity, we let $X_i^{(k)} = 1$.

\begin{ass}[Global interference] \label{ass:global} Let treatments be assigned as in Assumption \ref{defn:bernoulli}. Let 
$$
\small 
\begin{aligned} 
Y_{i,t}^{(k)} = \alpha_t + \tau_k + g\Big(p_t^{(k)}, \hat{\beta}_{k,t}\Big) + \varepsilon_{i,t}^{(k)}, \quad \mathbb{E}_{\hat{\beta}_{k, 1:t}}\Big[\varepsilon_{i,t}^{(k)} | p_{t}^{(k)}\Big] = 0, 
\end{aligned}
$$  
for some function $g(\cdot)$ unknown to the researcher, bounded and twice continuously differentiable with bounded derivatives, and unobservable $p_t^{(k)}$. Assume in addition that $\varepsilon_{i,t}^{(k)} \perp \varepsilon_{j \not \in \mathcal{I}_i^{(k)},t}^{(k)} | \hat{\beta}_{k,1:t}, p_{t}^{(k)}$ for some set $|\mathcal{I}_i^{(k)}| = \mathcal{O}(\gamma_N)$.  
\end{ass} 

Assumption \ref{ass:global} states that the outcome within each cluster is a function of a common factor, and treatment assignment rule $\hat{\beta}_{k,t}$.

\begin{ass}[Global interference component] \label{ass:factor} Let treatments be assigned as in Assumption \ref{defn:bernoulli}. Assume that 
$
p_t^{(k)} = q(\hat{\beta}_{k,t}) +  o_p(\eta_n),  
$  
with $q(\beta)$ being unknown, bounded and twice continuously differentiable in $\beta$ with uniformly bounded derivatives. 
\end{ass}

Assumption \ref{ass:factor} states that the factor can be expressed as the sum of two components. The first component $q(\cdot)$ depends on the policy parameter $\hat{\beta}_{k,t}$ assigned at time $t$. The second component is a stochastic component that depends on the realized treatment effects. We illustrate an example below.

\begin{exmp}[Within cluster average] \label{exmp:general} Suppose that 
$
Y_{i,t}^{(k)} = t(\bar{D}_t^{(k)}, \nu_{i,t}),  \nu_{i,t}^{(k)} \sim_{i.i.d.} \mathcal{P}_{\nu},  D_{i,t}^{(k)} \sim_{i.i.d.} \mathrm{Bern}(\beta)
$ 
where $t(\cdot)$ is some arbitrary (smooth) function. Then
$p_t^{(k)} = \bar{D}_t^{(k)}$ i.e., individuals depend on the average exposure in a cluster. We can write 
$
Y_{i,t}^{(k)} = t(p_t^{(k)},\nu_{i,t}^{(k)})$  where $p_t^{(k)} = \beta + (\bar{D}_t^{(k)} - \beta),$
which satisfies Assumption \ref{ass:factor} for $\eta_n = n^{-1/3}$ or larger. \qed 
\end{exmp} 




We are interested in 
$
M_g(\beta) = \frac{\partial W_g(\beta)}{\partial \beta}, W_g(\beta) = g( q(\beta), \beta). 
$  
Estimation of the marginal effect follows similarly to Equation \eqref{eqn:gradient_main}. The following theorem guarantees consistency. 

\begin{thm} \label{thm:const_steady} Let Assumption \ref{ass:global}, \ref{ass:factor} hold with subgaussian $\varepsilon_{i,t}^{(k)}$, $X_i = 1$. For $\widehat{M}_{(k, k+1)}$ as in Algorithm \ref{alg:my_pilot}, for $k$ being odd: 
$
\Big|\widehat{M}_{(k, k+1)} - M_g(\beta)\Big| = \mathcal{O}_p\left(\sqrt{\frac{\gamma_N \log(n\gamma_N)}{\eta_n^2 n }} +  \eta_n \right) + o_p(1). 
$ 
\end{thm}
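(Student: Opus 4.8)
The plan is to mirror the argument behind Theorem~\ref{thm:const1}, replacing the deterministic response $y(\beta)$ by the factor-mediated response $g(p_t^{(k)},\beta)$ and paying for the randomness of the factor through Assumption~\ref{ass:factor}. Writing the estimator in its scalar, $X\equiv 1$ form from Equation~\eqref{eqn:gradient_main}, I substitute the outcome model of Assumption~\ref{ass:global}, namely $Y_{i,t}^{(h)}=\alpha_t+\tau_h+g(p_t^{(h)},\beta_{h,t})+\varepsilon_{i,t}^{(h)}$, into each sample mean $\bar Y_t^{(h)}$. In the within-cluster difference $\bar Y_1^{(h)}-\bar Y_0^{(h)}$ the cluster effect $\tau_h$ drops out, and in the subsequent between-cluster difference the common time effect $\alpha_1-\alpha_0$ cancels as well; this is exactly the role of the paired difference-in-differences design. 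What remains is a \emph{signal} part built from $g$ evaluated at $t=0$ and $t=1$ in the two clusters, plus a \emph{noise} part $\tfrac{1}{2\eta_n}\big(\bar\varepsilon_1^{(k)}-\bar\varepsilon_0^{(k)}-\bar\varepsilon_1^{(k+1)}+\bar\varepsilon_0^{(k+1)}\big)$, where $\bar\varepsilon_t^{(h)}=n^{-1}\sum_i\varepsilon_{i,t}^{(h)}$.

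For the signal part I use Assumption~\ref{ass:factor} to write $p_1^{(k)}=q(\beta+\eta_n)+o_p(\eta_n)$ and $p_1^{(k+1)}=q(\beta-\eta_n)+o_p(\eta_n)$. Since $g$ is twice differentiable with bounded derivatives, a first-order expansion of $g$ in its first argument gives $g(p_1^{(h)},\beta_{h,1})=g(q(\beta_{h,1}),\beta_{h,1})+o_p(\eta_n)=W_g(\beta_{h,1})+o_p(\eta_n)$. The leading two-point difference is then the symmetric difference quotient $\tfrac{1}{2\eta_n}\big[W_g(\beta+\eta_n)-W_g(\beta-\eta_n)\big]$; because $W_g(\beta)=g(q(\beta),\beta)$ inherits bounded second derivatives from $g$ and $q$, a second-order Taylor expansion collapses this to $V_g(\beta)+\mathcal{O}(\eta_n)$. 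The $o_p(\eta_n)$ factor-approximation errors, once divided by $2\eta_n$, contribute $o_p(1)$. The baseline terms $g(p_0^{(k)},\beta_0)$ and $g(p_0^{(k+1)},\beta_0)$ share the common baseline policy, so by Assumption~\ref{ass:factor} $p_0^{(k)}-p_0^{(k+1)}=o_p(\eta_n)$; linearizing $g$ and dividing by $2\eta_n$ leaves a further $o_p(1)$. These two contributions are precisely the $+\,o_p(1)$ term appearing in the statement.

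The remaining and most delicate step is the concentration bound for the noise part. Conditional on $(\beta,p_t^{(h)})$ the variables $\{\varepsilon_{i,t}^{(h)}\}_{i}$ are mean-zero, subgaussian, and independent outside neighborhoods of size $\mathcal{O}(\gamma_N)$ (Assumption~\ref{ass:global}), so I apply a large-deviation inequality for sums of locally dependent random variables \citep{janson2004large} to each $\bar\varepsilon_t^{(h)}$. Because the subgaussian constant and the dependency-neighborhood size do not depend on the realized factor, the conditional bound $|\bar\varepsilon_t^{(h)}|=\mathcal{O}(\sqrt{\gamma_N\log(\gamma_N)/n})$ holds with high probability uniformly in $p_t^{(h)}$ and therefore survives integrating the factor out. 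A triangle inequality over the four error averages and division by $2\eta_n$ yields the noise rate $\mathcal{O}_p(\sqrt{\gamma_N\log(\gamma_N)/(\eta_n^2 n)})$.

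Collecting the signal bias $\mathcal{O}(\eta_n)$, the noise term $\mathcal{O}_p(\sqrt{\gamma_N\log(\gamma_N)/(\eta_n^2 n)})$, and the two $o_p(1)$ contributions gives the claimed rate. I expect the main obstacle to be the noise step: one must verify that conditioning on the global factor $p_t^{(k)}$ preserves both the conditional mean-zero property and the local-dependence structure of the errors, so that the locally-dependent concentration inequality applies with constants uniform in the factor; the signal analysis is then a routine propagation of the $o_p(\eta_n)$ factor-approximation error of Assumption~\ref{ass:factor} through the smooth map $g$.
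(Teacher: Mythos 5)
Your proposal is correct and follows essentially the same route as the paper's proof of Theorem \ref{thm:const_steady}: condition on the factor, use Assumption \ref{ass:factor} to replace $p_t^{(h)}$ by $q(\beta\pm\eta_n)+o_p(\eta_n)$, Taylor-expand $g$ first in its factor argument and then symmetrically in $\beta$ to extract $V_g(\beta)+\mathcal{O}(\eta_n)+o_p(1)$, cancel the fixed effects through the paired difference-in-differences, and control the noise via the locally-dependent concentration bound of Lemma \ref{lem:concentration1}. Your extra remark on verifying that conditioning on $p_t^{(k)}$ preserves the mean-zero and local-dependence structure is exactly what Assumption \ref{ass:global} supplies, so no gap remains.
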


The proof is in Appendix \ref{sec:proof8}. Given the estimated marginal effect we can follows verbatim  Algorithm \ref{alg:adaptive} to estimate optimal policies.

 \subsection{Non-adaptive experiment with local perturbations} \label{app:non_adaptive}
 
 This subsection serves two purposes. First, it sheds light on comparisons of the adaptive procedure with grid-search-type methods, showing drawbacks of the grid-search approach in terms of convergence of the regret. Second, it shows how, when an adaptive procedure is not available, we can still use information from the marginal effect estimated as we propose in Algorithm \ref{alg:my_pilot}, to improve convergence rates in $K$.

 The algorithm that we propose is formally discussed in Algorithm \ref{alg:one_wave_optimal} and works as follows. First, we construct a fine grid $\mathcal{G}$ of the parameter space $\mathcal{B}$ (with $p$ dimensions), with equally spaced parameters. Second, we pair clusters, and assign a \textit{different} parameter $\beta^k$ for each pair $(k, k+1)$ from the grid $\mathcal{G}$. Third, in each pair, we estimate the gradient $\widehat{M}_{(k, k+1)} \in \mathbb{R}^p$, by perturbing, sequentially for $T = p$ periods, one coordinate at a time of the parameter $\beta^k$.\footnote{Sequentiality here is for notational convenience only, and can be  replaced by $T = 1$, but with $2 p$ clusters  allocated to each coordinate.}  We estimate reward using a first-order Taylor expansion 
\begin{equation} \label{eqn:expansion} 
\small 
\begin{aligned} 
 \widehat{W}(\beta) = \bar{W}^{k^*(\beta)} + \widehat{M}_{(k^*(\beta), k^*(\beta)+1)}^\top (\beta - \beta^{k^*(\beta)}), \quad \hat{\beta}^{ow} = \mathrm{arg} \max_{\beta \in \mathcal{B}} \widehat{W}(\beta), 
 \end{aligned} 
 \end{equation}  
 $$
 \small 
 \begin{aligned} 
\text{where }  k^*(\beta) = \mathrm{arg} \min_{k \in \{1, 3, \cdots, K - 1\}, \beta^k \in \mathcal{G}} ||\beta^k - \beta||^2, \quad \bar{W}^k = \frac{1}{2} \Big[\frac{1}{T} \sum_{t=1}^T \bar{Y}_t^k - \bar{Y}_0^k + \frac{1}{T} \sum_{t=1}^T \bar{Y}_t^{k+1} - \bar{Y}_0^{k+1}\Big].
\end{aligned}
 $$
Here, $\bar{Y}_t^k$ is the average outcome in cluster $k$ at time $t$, 
 and $\widehat{M}_{(k^*, k^*+1)}$ is estimated as in Algorithm \ref{alg:one_wave_optimal}.  
 We can now characterize guarantees as $n \rightarrow \infty$, and $K, p < \infty$.  
 
 \begin{thm} \label{thm:grid_search1} Suppose that $Y_{i,t}^{(k)}$ is sub-Gaussian. Let  Assumptions \ref{ass:ass_0}, \ref{ass:regularity_basic}, and $\eta_n = o(n^{-1/4})$, $\gamma_N \log(n \gamma_N K)/(\eta_n^2 n) = o(1)$. Consider $\hat{\beta}^{ow}$ as in Algorithm \ref{alg:one_wave_optimal}, with $\mathcal{B} \subseteq [0,1]^p$. For a constant $\bar{C} < \infty$ independent of $(n, T, K)$,  
 $
\lim_{n \rightarrow \infty} P\left(W(\beta^*) - W(\hat{\beta}^{ow}) \le \frac{\bar{C}}{K^{2/p}} \right) = 1.   
 $
 \end{thm}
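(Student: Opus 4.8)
The plan is to mimic the argument sketched after Theorem~\ref{thm:opt_dynamics}, adapted to the $p$-dimensional static problem. The backbone is the standard argmax inequality: since $\hat{\beta}^{ow}$ maximizes $\widehat{W}(\cdot)$ over $\mathcal{B}$, for any constant $c$ not depending on $\beta$,
$$
W(\beta^*) - W(\hat{\beta}^{ow}) \;\le\; 2 \sup_{\beta \in \mathcal{B}} \Big|\widehat{W}(\beta) - \big(W(\beta) + c\big)\Big|,
$$
obtained by adding and subtracting $\widehat{W}(\beta^*)$ and $\widehat{W}(\hat{\beta}^{ow})$ and using optimality of $\hat{\beta}^{ow}$. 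The freedom in $c$ matters: by Lemma~\ref{lem:lem0}, $\bar{W}^{k}$ in Equation~\eqref{eqn:expansion} estimates $W(\beta^k)$ only up to the common baseline level $W(\beta_0)$ and the time fixed effects (both subtracted off through $\bar{Y}_0^k$), which are identical across grid points and hence enter $\widehat{W}(\cdot)$ as a single additive constant $c$ that leaves $\arg\max_\beta$ unchanged. Setting $\tilde{W}(\beta) = W(\beta)+c$ keeps $\beta^*$ as its maximizer and reduces the theorem to bounding the uniform approximation error on the right-hand side.

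Next I would decompose this uniform error, at an arbitrary $\beta$ with nearest grid point $\beta^{k^*(\beta)}$, into three pieces paralleling the $(B),(C),(D)$ split in the Theorem~\ref{thm:opt_dynamics} sketch: (a) the pooled-mean error $\bar{W}^{k^*}-\tilde{W}(\beta^{k^*})$; (b) the gradient error $\big(\hat{V}_{(k^*,k^*+1)}-V(\beta^{k^*})\big)^\top(\beta-\beta^{k^*})$; and (c) the second-order Taylor remainder $\tilde{W}(\beta^{k^*}) + V(\beta^{k^*})^\top(\beta-\beta^{k^*}) - \tilde{W}(\beta)$. Since the $K/2$ grid points are equally spaced in $[0,1]^p$, the covering radius satisfies $\|\beta-\beta^{k^*(\beta)}\| = O(K^{-1/p})$ uniformly in $\beta$. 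Piece (c) is then purely deterministic and bounded by $O(K^{-2/p})$ using the uniformly bounded second derivatives of $m(d,x,\cdot)$ and $\pi(x,\cdot)$ from Assumption~\ref{ass:regularity_basic}, which make $W$ twice differentiable with bounded Hessian. This is exactly where using the estimated gradient pays off: interpolating with $\bar{W}^{k^*}$ alone would leave a first-order error $O(K^{-1/p})$, whereas the first-order correction cancels it and leaves only the quadratic remainder.

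It then remains to show that (a) and (b) are asymptotically negligible as $n\to\infty$ with $(K,p)$ finite. For (b), Theorem~\ref{thm:const1} applied coordinate-by-coordinate (as in Appendix~\ref{sec:pilot_general}) gives $\|\hat{V}_{(k^*,k^*+1)}-V(\beta^{k^*})\| = O_p\big(\eta_n + \sqrt{\gamma_N\log(\gamma_N/\delta)/(n\eta_n^2)}\big)$ for each pair; multiplying by the fixed distance $O(K^{-1/p})$ and invoking the rate condition $\gamma_N\log(n\gamma_N K)/(\eta_n^2 n)=o(1)$ together with $\eta_n=o(n^{-1/4})$ renders (b) an $o_p(1)$ term. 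For (a), the pooled average concentrates around $\tilde{W}(\beta^{k^*})$ at rate $O_p(\sqrt{\gamma_N/n})$ by the locally dependent concentration underlying Lemma~\ref{lem:lem0}, and $\gamma_N/n\to 0$ follows from the same rate condition. Both bounds must hold simultaneously over the $K/2$ pairs, which I would secure by a union bound; this is a finite union when $K$ is fixed, and the $\log(n\gamma_N K)$ slack in the rate condition is precisely what absorbs it under sub-gaussianity while also permitting $K$ to grow slowly with $n$. Combining, with probability tending to one the uniform error is at most $\bar{C}' K^{-2/p} + o_p(1)$, so choosing $\bar{C} > \bar{C}'$ yields $P\big(W(\beta^*)-W(\hat{\beta}^{ow}) \le \bar{C}/K^{2/p}\big)\to 1$, where $\bar{C}'$ depends only on the derivative bounds of Assumption~\ref{ass:regularity_basic} (and $p$), hence is independent of $(n,T,K)$.

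The main obstacle, as in the dynamic case, is conceptual rather than computational: one must guarantee that the two stochastic pieces (a) and (b) vanish uniformly over all $K/2$ grid cells at once, so that the deterministic quadratic remainder is the genuinely dominant term of order $K^{-2/p}$. The union bound combined with the $\log(n\gamma_N K)$ slack is what reconciles this uniformity with the consistency rates of Theorem~\ref{thm:const1}; the only additional point requiring care is correctly isolating the baseline and time fixed effects into the $\beta$-independent constant $c$, so that they never interact with the argmax and the first-order Taylor correction cleanly downgrades the interpolation error from $K^{-1/p}$ to $K^{-2/p}$.
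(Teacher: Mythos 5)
Your proposal is correct and follows essentially the same route as the paper's proof in Appendix \ref{sec:aah}: bound the regret by twice the uniform error of $\widehat{W}$, split that error into the pooled-mean term, the gradient term times the covering radius, and the deterministic $O(K^{-2/p})$ Taylor remainder, then kill the two stochastic pieces via Lemma \ref{lem:concentration1} and Lemma \ref{lem:2} with a union bound over the $K/2$ pairs and $\delta = 1/n$. Your explicit handling of the $\beta$-independent constant $c$ (baseline outcome and fixed effects) is a small point of added care that the paper's own write-up leaves implicit, but it does not change the argument.
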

 
 The proof is in Appendix \ref{sec:aah}. 
 Theorem \ref{thm:grid_search1} showcases two properties of the method. First, for $p = 1$, the rate of convergence is of order $1/K^2$, which is possible \textit{because} we also estimate and leverage the gradient $\widehat{M}$. The insight is to augment the estimator of the welfare with $\widehat{M}$, since, otherwise, the rate would be slower in $K$.\footnote{By a second-order Taylor expansion, using information from the gradient guarantees that $\widehat{W}(\beta)$ converges to $W(\beta)$ up-to a second-order term of order $O(||\beta - \beta^k||^2)$, instead of a first-order term $O(||\beta - \beta^k||)$.} One drawback of a grid search approach is that, as $p > 1$, the method suffers a curse of dimensionality, and the rate in $K$ decreases as $p$ increases. This is different from the adaptive procedure (e.g., Theorem \ref{thm:rate2b}), where the rate in $K$ does not depend on $p$. A second \textit{disadvantage} of the grid search is that the method does not control the in-sample regret, formalized below.

 \begin{prop}[Non-vanishing in-sample regret] \label{prop:in_sample_grid} Take $\sum_k \beta_k/K = 0.5$ (any other number would work). There exists a strongly concave $W(\cdot)$, such that, for $p = 1$, 
 $
 W(\beta^*) - \frac{1}{K} \sum_{k=1}^K W(\beta^k) \ge c,    
 $
 for $c > 0$ independent of $(n, K, T)$.  
  \end{prop}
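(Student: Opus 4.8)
The plan is to prove the proposition by an explicit lower-bound construction: I would exhibit a single strongly concave welfare function for which the grid search necessarily spreads its assigned parameters $\{\beta^k\}$ across the whole parameter space, so that a fixed positive fraction of participants receive treatment probabilities bounded away from the optimum, no matter how fine the grid. Writing $\mathcal{B}=[a,b]$ in the scalar case, I would take
\begin{equation*}
W(\beta) = \phi_0 - \tfrac{\sigma}{2}(\beta-\beta^*)^2, \qquad \beta^*\in(a,b),\ \sigma>0,
\end{equation*}
which is $\sigma$-strongly concave with interior maximizer $\beta^*$ and is realizable by the quadratic outcome model of Example \ref{exmp:main}. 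Because $\beta^*$ is interior, for this choice $W(\beta^*)-W(\beta)=\tfrac{\sigma}{2}(\beta-\beta^*)^2$ exactly.

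Next I would reduce the cluster average in the statement to a grid average. Since clusters are paired and both clusters of a pair $g$ in Algorithm \ref{alg:one_wave_optimal} share the same base grid parameter $\beta^g\in\mathcal{G}$, the sum over clusters collapses to
\begin{equation*}
\frac{1}{K}\sum_{k=1}^K W(\beta^k) = \frac{1}{M}\sum_{g=1}^M W(\beta^g), \qquad M=K/2,
\end{equation*}
where $\{\beta^g\}_{g=1}^M$ are the $M$ equally spaced grid points spanning $[a,b]$. Substituting the quadratic form gives
\begin{equation*}
W(\beta^*) - \frac{1}{K}\sum_{k=1}^K W(\beta^k) = \frac{\sigma}{2}\cdot\frac{1}{M}\sum_{g=1}^M (\beta^g-\beta^*)^2 .
\end{equation*}

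The one quantitative step is a lower bound on $\tfrac{1}{M}\sum_g(\beta^g-\beta^*)^2$ that is uniform in $M$. I would use the elementary decomposition $\tfrac{1}{M}\sum_g(\beta^g-\beta^*)^2 = \mathrm{Var}(\{\beta^g\}) + (\bar\beta-\beta^*)^2 \ge \mathrm{Var}(\{\beta^g\})$, with $\bar\beta$ the grid mean, so the average squared distance to \emph{any} target is at least the grid variance. A direct computation for $M$ equally spaced points on $[a,b]$ yields $\mathrm{Var}(\{\beta^g\}) = \frac{(b-a)^2}{12}\cdot\frac{M+1}{M-1} \ge \frac{(b-a)^2}{12}$ for every $M\ge 2$. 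Combining, the regret is at least $c:=\frac{\sigma(b-a)^2}{24}>0$, which depends only on $(\sigma,a,b)$ and hence is independent of $(n,K,T)$, as claimed.

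There is no serious analytic obstacle here; the construction is deliberately elementary. The only point requiring care is the \emph{uniformity} of the bound over all grid sizes $M=K/2$: a crude ``at least one far grid point'' argument would degrade for small $M$, which is why I route the estimate through the grid variance and its exact closed form, guaranteeing the constant does not shrink as $K$ varies. Finally, I would note that the $\pm\eta_n$ perturbations and the (single, for $p=1$) gradient-estimation period inside Algorithm \ref{alg:one_wave_optimal} shift each participant's realized parameter only by $O(\eta_n)\to 0$ and therefore cannot erase a fixed positive constant; since the statement is phrased directly in terms of the base grid values $\beta^k$, these perturbations play no role, and the contrast with the in-sample guarantee of Theorem \ref{thm:rate3} for the adaptive design is exactly the intended conclusion.
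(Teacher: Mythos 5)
Your proof is correct, but it takes a genuinely different route from the paper's. The paper's argument is a one-line application of Jensen's inequality: by concavity, $\frac{1}{K}\sum_k W(\beta^k) \le W\bigl(\frac{1}{K}\sum_k \beta^k\bigr) = W(\bar\beta)$ where $\bar\beta$ is the midpoint of the grid's range, so the regret is at least $W(\beta^*)-W(\bar\beta)$, and one then picks a quadratic with $\beta^*$ bounded away from the midpoint (the paper uses $\beta^*=0.3$ against a grid centered at $0.5$). Your argument instead computes the regret exactly for the quadratic as $\frac{\sigma}{2}\cdot\frac{1}{M}\sum_g(\beta^g-\beta^*)^2$ and lower-bounds the average squared deviation by the grid variance via the decomposition $\frac{1}{M}\sum_g(\beta^g-\beta^*)^2=\mathrm{Var}(\{\beta^g\})+(\bar\beta-\beta^*)^2$, with the closed form $\mathrm{Var}(\{\beta^g\})=\frac{(b-a)^2}{12}\cdot\frac{M+1}{M-1}\ge\frac{(b-a)^2}{12}$. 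What your route buys: the bound holds for \emph{any} interior $\beta^*$, including the degenerate case $\beta^*=\bar\beta$ where the paper's Jensen step gives zero, and it produces an explicit constant $c=\sigma(b-a)^2/24$ that is manifestly uniform in $M=K/2$. What the paper's route buys is brevity and the fact that it applies to any strongly (indeed merely strictly) concave $W$ with $\beta^*\neq\bar\beta$, without needing the exact quadratic form. Your closing remarks on the $\pm\eta_n$ perturbations and the collapse of the cluster sum to the $M=K/2$ grid points are consistent with Algorithm \ref{alg:one_wave_optimal} and do not affect the conclusion.
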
 
  
  \begin{proof}[Proof of Proposition \ref{prop:in_sample_grid}] By concavity,
$
W(\beta^*) - \frac{1}{K} \sum_{k=1}^K W(\beta^k) \ge 
W(\beta^*) - W( \frac{1}{K} \sum_{k=1}^K \beta^k) = W(\beta^*) - W(0.5),   
$ 
which completes the proof, for a suitable choice of $W(\cdot)$. 
  \end{proof} 
  

\vspace{-4mm}

\subsection{Permutation tests} \label{sec:perm_tests}

\paragraph{Permutation tests} For permutation tests, consider  the vectors 
\begin{equation} \label{eqn:Vs}
\small 
\begin{aligned} 
V_1 = \frac{1}{2 \eta_n} \begin{bmatrix}
& \bar{Y}_1^{(1)} - \bar{Y}_0^{(1)} \\ 
& \bar{Y}_1^{(3)} - \bar{Y}_0^{(3)} \\ 
& \vdots \\ 
& \bar{Y}_1^{(K - 1)} - \bar{Y}_0^{(K - 1)}
 \end{bmatrix}, \quad V_2 = \frac{1}{2 \eta_n} \begin{bmatrix}
& \bar{Y}_1^{(2)} - \bar{Y}_0^{(2)} \\ 
& \bar{Y}_1^{(4)} - \bar{Y}_0^{(4)} \\ 
& \vdots \\ 
& \bar{Y}_1^{(K)} - \bar{Y}_0^{(K )}. 
 \end{bmatrix}
 \end{aligned}
\end{equation}  
We consider permutation tests over the sign of $\tilde{V}_s = s(V_1 - V_2), s \in \{-1, 1\}^{K/2}$. We define $T(\tilde{V}_{s})$ the t-static obtained from the vector $\tilde{V}_{s}$, and $C_K^P(\alpha)$ the $(1 -\alpha)^{th}$ quantile of $|T(\tilde{V}_{s})|,s \in \{-1, 1\}^{K/2}$ (up-to rounding), for two sided tests (one-sided test follows similarly by studying the distribution of $T(\tilde{V}_{s})$). From Theorem \ref{thm:inference1}, the distribution of $T(\tilde{V}_{s})$ is asymptotically invariant under the null hypothesis. Therefore, the justification of the test is asymptotic in the sense of \cite{canay2017randomization}. The complete description of the permutation test is provided in Algorithm \ref{alg:permutation_test}.

\begin{algorithm} [!ht]   \caption{Permutation test with $p_1 = 1$}\label{alg:permutation_test}
\footnotesize 
    \begin{algorithmic}[1]
    \Require Value $\beta \in \mathbb{R}^p$ (exogenous), $K$ clusters, constant $\bar{C}$, size $\alpha$; 
    \State Organize clusters into $G = K/2$ pairs with consecutive indexes $\{k, k+1\}$;  
   
    \State $t = 0$ (baseline): either nobody receives treatments or treatments are assigned with $\pi(\cdot;\beta)$ (either case is allowed).
    \begin{algsubstates}
        \State Experimenters collect baseline outcomes: for $n$ units in each cluster observe $Y_{i,0}^{(h)}, X_i^{(h)}, h  \in \{1, \cdots, K\}$. 
        \end{algsubstates} 
        
    \State $t = 1$: experiment starts
    \begin{algsubstates}
        \State  For each pair $g = \{k, k+1\}$, randomize
   \begin{equation} \label{eqn:perturbation}
     \small 
     \begin{aligned} 
    D_{i,1}^{(h)} | \beta, X_i^{(h)} = x \sim \begin{cases} 
    & \mathrm{Bern}(\pi(x, \beta + \eta_n\underline{e}_1)) \text{ if } h = k  \\ 
    & \mathrm{Bern}(\pi(x, \beta - \eta_n\underline{e}_1)) \text{ if } h = k+1 
    \end{cases} , \quad \bar{C} n^{-1/2} < \eta_n < \bar{C} n^{-1/4},
    \end{aligned}   
    \end{equation} 
    
        \State For $n$ units in each cluster $h$ observe $Y_{i,1}^{(h)}$; 
        \State Construct $(V_1, V_2)$ as in Equation \eqref{eqn:Vs}. 
      \end{algsubstates}
     \State For all $s \in \{-1, 1\}^{K/2}$ 
     \begin{algsubstates} 
     \State Compute $\tilde{V}_s = s(V_1 - V_2)$ 
     \State Compute 
     $$
   T(\tilde{V}_s) = \frac{\bar{V}_s}{\sqrt{\frac{1}{K/2 (K/2 - 1)} \sum_{i=1}^{K/2} (\tilde{V}_{s,i} - \bar{V}_s)^2}} \quad \text{where} \quad \bar{V}_s = \frac{1}{K/2} \sum_{i=1}^{K/2} \tilde{V}_{s,i}
     $$ 
     \end{algsubstates} 
     \item Compute $q_\alpha$ defined as the $(1 - \alpha)^{th}$ quantile of $|T(\tilde{V}_s)|$ for two-sided tests and of $T(\tilde{V}_s)$ for one-sided test 
 \State Reject the null hypothesis if $|T(\tilde{V})| \ge q_\alpha$ for two sided test and if $T(\tilde{V}) \ge q_\alpha$ for one-sided test. 
 
         \end{algorithmic}
\end{algorithm}

\paragraph{Choice of the pairing in our application}

In our application, any randomly selected pairing would work. Therefore, for simplicity we simply use the alphabetical ordering of the clusters to sort clusters into pairs. However, any other randomly selected pairing would work. Since we may have more clusters in one group than the other (e.g., in our application we have 13 clusters with a positive perturbation and 12 with a negative perturbation). In this case, we aggregate the two clusters with the smallest number of units into a single cluster, so to make the number of clusters even. 

To establish sensitivity of the results to the choice of the pairing (which would not affect the point estimate, since this average over all positive and negative perturbations regardless, but may affect the studentized t-statistic and the p-value, we recommend report the distribution of p-values and t-statistics over several randomly shuffled pairing which are consistent with the design (if the design admits a single pairing this sensitivity analysis is clearly not needed). We do so in the context of our application, where we report the distribution of p-values and t-statistics in Figure \ref{fig:p_values_distribution} over 5000 randomly generated pairings.  Conclusions remain robust under pairing reshuffling, with the p-value always below 0.1 and on average below 0.05 for the marginal effect at $\beta = 0.5$ and above $0.2$ uniformly for $\beta = 0.7$.

\begin{figure}[!ht]
    \centering
    \includegraphics[scale = 0.5]{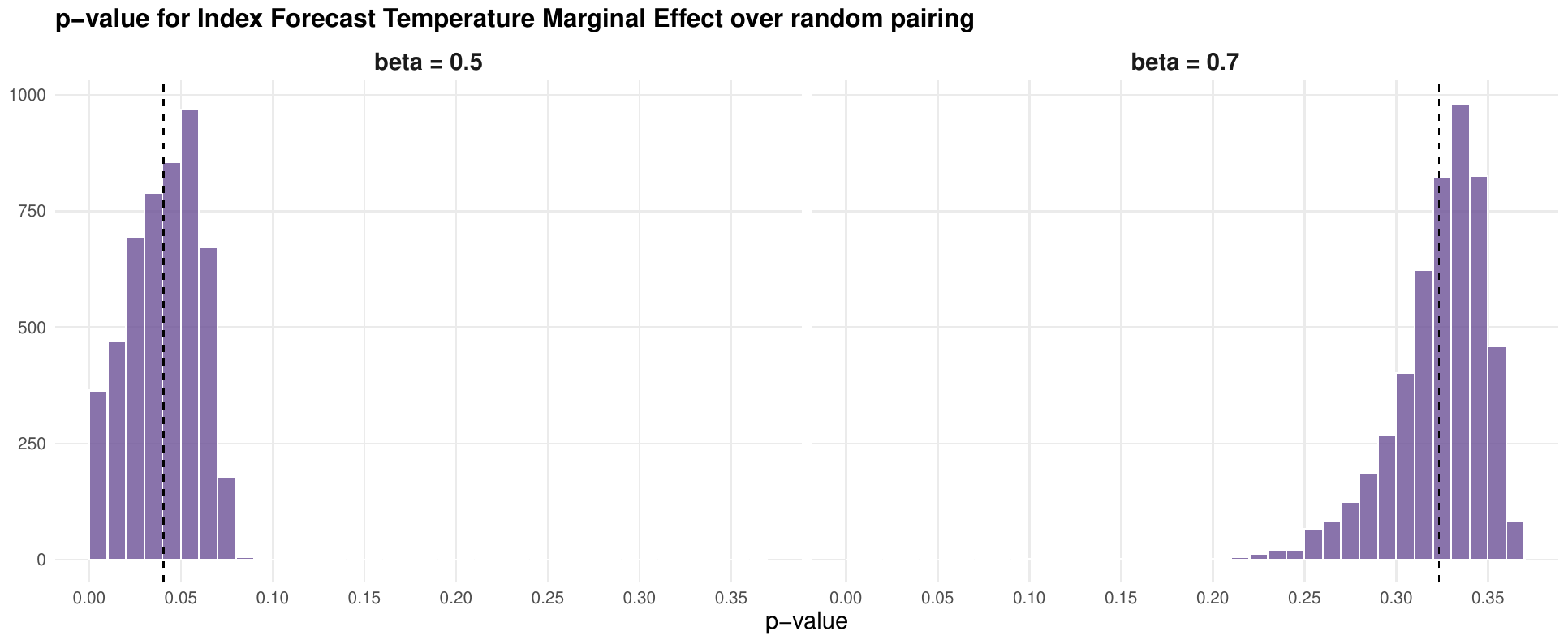}
    \includegraphics[scale = 0.5]{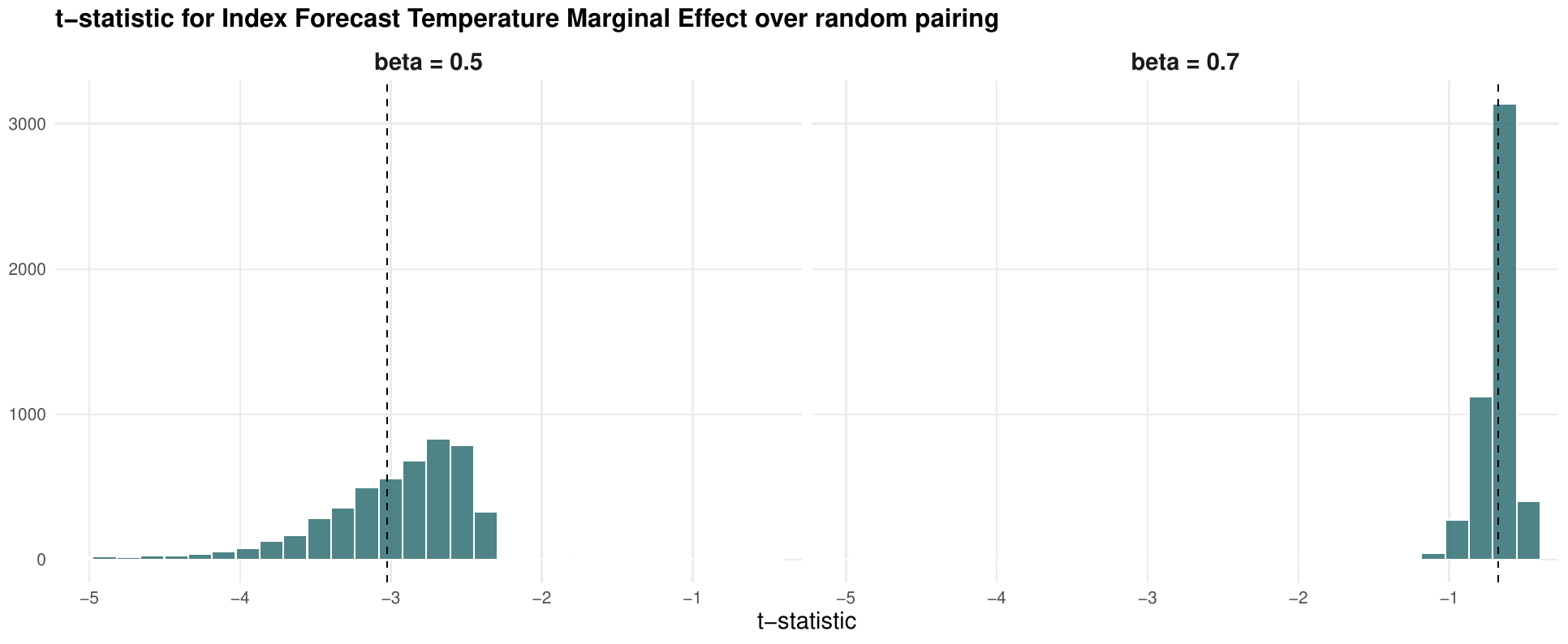}
    \caption{Distribution of  inference results across 5000 random pairings of positive and negative perturbation clusters for the marginal effect of the treatment on the index of forecast temperature accuracy. The panels at the top report the distribution of one-sided  $p$-values for $\beta = 0.5$ and $\beta = 0.7$, while the panels at the bottom report the corresponding distribution of studentized  $t$-statistics for $\beta = 0.5$ and $\beta = 0.7$. Dotted line corresponds to the average value}
\label{fig:p_values_distribution}
\end{figure}

\paragraph{Balance tables} In the context of permutation tests for balance tables (Tables \ref{tab:summaries}, \ref{tab:summaries_response_rates}), permutation tests are similar as described above, where, however, we replace $\bar{Y}_1^{(k)} - \bar{Y}_0^{(k)}$ in each entry of the vectors $V_1, V_2$ with the average baseline $j^{th}$ covariate $\bar{X}_j^{(k)}$ in cluster $k$. The null hypothesis of interest is therefore whether the average baseline covariate $\bar{X}_j^{(k)}$ has the same expectation across all clusters.

\subsection{A micro-foundation with network spillovers}\label{sec:1a}

We provide here a micro-foundation of Assumption \ref{ass:ass_0} in contexts with network spillovers, our leading application. Suppose individuals are connected with other individuals through an unobserved and cluster-specific adjacency matrix $A^{(k)}$. Individuals can form a link with an (unknown) subset of individuals in each cluster. Nodes in each cluster are spaced under some latent space \citep{hoff2002latent, lubold2020identifying} and can interact with at most the $\gamma_N^{1/2}$ closest nodes under the latent space, where we assume that $\gamma_N^{1/2}$ (the maximum number of possible connections) is the same across all clusters. We say $1\{i_k \leftrightarrow j_k\} = 1$ if individual $i$ can interact with $j$ in cluster $k$ (we assume symmetry for simplicity). 
Conditional on $1\{i_k \leftrightarrow j_k\}$, 
\begin{equation} \label{eqn:network} 
\small 
\begin{aligned} 
&  (X_i^{(k)}, U_i^{(k)}) \sim_{i.i.d.}  F_X F_{U|X}, \quad A_{i,j}^{(k)} = l\Big(X_i^{(k)}, X_j^{(k)}, U_i^{(k)}, U_j^{(k)}\Big)1\{i_k \leftrightarrow j_k \}, \quad l:\mathcal{X}^2 \times \mathcal{U}^2 \mapsto [0,1], 
\end{aligned} 
\end{equation}
for an arbitrary and unknown function $l(\cdot)$ and unobservables $U_i^{(k)}$. Whether two individuals interact depends on (i) whether they $i_k$ and $j_k$ are close enough within a certain latent space (captured by $1\{i_k \leftrightarrow j_k \}$); (ii) their covariates and unobserved individual heterogeneity (i.e., $X_i, U_i$), which capture homophily. Equation \eqref{eqn:network} also states that covariates are $i.i.d.$ unconditionally on $A^{(k)}$, but not necessarily conditionally.
Figure \ref{fig:network} provides an illustration.
Here, we condition on the indicators $1\{i_k \leftrightarrow j_k\}$ (which can differ across clusters) to control the network's maximum degree, but we do not condition on the network $A^{(k)}$. We can interpret such indicators as exogenously drawn from some arbitrary distribution, and then conditioning on such indicators.\footnote{Formally, $\mathcal{I}_k \sim \mathcal{P}_k, \quad (X_i^{(k)}, U_i^{(k)}) | \mathcal{I}_k \sim_{i.i.d.} F_{U|X} F_X, \quad A_{i,j}^{(k)} = l\left(X_i^{(k)}, X_j^{(k)}, U_i^{(k)}, U_j^{(k)}\right)1\{i_k \leftrightarrow j_k \}$, where $\mathcal{I}_k$ is the matrix of such indicators in cluster $k$ and $\mathcal{P}_k$ is a cluster-specific distribution left unspecified.}

Note that here the latent space can differ across clusters. The reason why this is possible is because, although the latent space (i.e., indicators $1\{i_k \leftrightarrow j_k\}$) can differ across clusters, the distribution of the outcomes remain the same whenever outcomes only depends on the unobservables and observables $(X_i, U_i)$ of their friends (since these are $i.i.d.$ conditional on the latent space), but not necessarily on the specific \textit{identity} of their friends. That is, a sufficient condition for Assumption \ref{ass:ass_0} to hold 
is that the outcome depends on neighbors not through their identity directly, but only through their individual specific observable and unobservables.  
We formalize this intuition below.





\begin{exmp}[Microfoundation with network model] \label{exmp:microfoundation} Consider the following restrictions: 
\begin{itemize}
\item[(A)]  For $k \in \{1, \cdots, K\}, i \in \{1, \cdots, \tilde{N}^{(k)}\}$, let Equation \eqref{eqn:network} hold given the indicators $1\{i_k \leftrightarrow j_k\}$, for some unknown $l(\cdot)$; in addition, $\sum_{j = 1}^{\tilde{N}^{(k)}} 1\{i_k \leftrightarrow j_k\} = \gamma_N^{1/2}$.  
\item[(B)] Suppose that for any $i,t,k, \mathbf{d}_s^{(k)} \in \{0,1\}^N, s \le t$
$$
\small 
\begin{aligned} 
Y_{i,t}^{(k)}(\mathbf{d}_1^{(k)}, \cdots, \mathbf{d}_{t}^{(k)}) = r\Big( \mathbf{d}_{i,t}^{(k)}, \mathbf{d}_{\mathcal{N}_i^{(k)},t}^{(k)}, X_i^{(k)},  X_{\mathcal{N}_i^{(k)}}^{(k)}, U_i, U_{\mathcal{N}_i^{(k)}}, |\mathcal{N}_i^{(k)}|, \nu_{i,t}^{(k)}\Big) + \tau_k + \alpha_{t}
\end{aligned} 
$$
where $\mathcal{N}_i^{(k)} = \{j: A_{i,j}^{(k)} > 0\}$, for some unknown $r(\cdot)$ symmetric in the neighbors' $\mathbf{d}_{\mathcal{N}_i^{(k)},t}^{(k)}, U_{\mathcal{N}_i^{(k)}}$, stationary (but possibly serially dependent) unobservables $\nu_{i,\cdot}^{(k)} | X^{(k)}, U^{(k)} \sim_{i.i.d.} P_{\nu}$, fixed effects $\tau_k, \alpha_t$.
\end{itemize}  
\end{exmp}

Condition (A) states the following: before being born, each individual may interact with $\gamma_N^{1/2}$ many other individuals (i.e., maximum degree), where the maximum number of potential connections $\gamma_N^{1/2}$ is homogeneous across different clusters. After birth, the individual's gender, income, and parental status determine her type and the distribution of her and her potential connections' edges.\footnote{See \cite{jackson1996strategic}, \cite{li2020random} for pairwise interactions. Extensions where the networks also depend on non-separable shocks $\omega_{i,j}$ are possible, as discussed in previous versions of this draft.} Condition (B) states that potential outcomes depend on neighbors' assignments, observables, and unobservables. \textit{Heterogeneity} in spillovers occurs arbitrarily through neighbors' observables and unobservables $(D_j, U_j, X_j)$. Such variables can interact with each other, allowing for observed and unobserved heterogeneity in direct and spillover effects. Whereas treatments may exhibit individual-level heterogeneity, treatments do not interact with clusters' fixed effects. 

Note that, however, as a sufficient condition, we require that the marginal distribution \(F_X\), the conditional distribution \(F_{U|X}\), the link function \(l(\cdot)\), and the potential outcome map \(r(\cdot)\) to be common across clusters.

\begin{prop}[Microfoundation with network spillovers] \label{lem:lem0} Consider treatments assigned as in Assumption \ref{defn:bernoulli}. Let (A) and (B) in Example \ref{exmp:microfoundation} hold. Then Assumption \ref{ass:ass_0} holds.  
\end{prop}   

The proof is in Appendix \ref{sec:lem_main}. Proposition \ref{lem:lem0} motivates Assumption \ref{ass:ass_0} in our leading example with network spillovers.

 \begin{figure}[!ht]
 \centering
 \vspace{-7mm}
    \begin{tikzpicture}

\coordinate (1) at (-10,1);
\coordinate (2) at (-2,1);
\coordinate (3) at (-2,5);
\coordinate (4) at (-10,5);

  \node[draw, circle] (a) at (-6, 3.2) {};
  \node[draw, circle] (b) at (-6, 4.1) {};
  \node[draw, circle] (c) at (-7, 3.5) {};
 \node[draw, circle] (d) at (-6.8, 2.5) {};
  \node[draw, circle] (e) at (-5.3, 2.5) {};
  \node[draw, circle] (f) at (-5, 3.5) {};
 \node[circle] (h) at (-5.1, 4.4) {};
 \node[circle] (i) at (-4.3, 2.9) {};
  \node[circle] (l) at (-4.3, 4) {};
   \node[circle] (m) at (-6.9, 4.4) {};
   \node[circle] (n) at (-7.6, 3) {};
   \node[circle] (o) at (-7.4, 2) {};
   \node[circle] (p) at (-7.7, 3.9) {};
     \node[circle] (q) at (-6, 4.8) {};
       \node[ circle] (r) at (-6, 1.9) {};
       \node[circle] (s) at (-4.8, 1.9) {};
     
 \node[circle] (g) at (-3,3) {$\longrightarrow$};
\node[circle] (g) at (3,3) {$\longrightarrow$};

 \node[circle] (g) at (-3,3) {$\longrightarrow$};
 
 \node[circle] (g) at (-6,1.5) {$\text{Possible connections}$};
  \node[circle] (g) at (0,1.5) {$\text{Types' assignment}$};
  \node[circle] (g) at (6,1.5) {$\text{Network formation}$};
  
  \node[draw, fill = green, circle] (aaa) at (0, 3.2) {};
  \node[draw, fill = green, circle] (bbb) at (0, 4.1) {};
  \node[draw, fill = blue, circle] (ccc) at (-1, 3.5) {};
 \node[draw, fill = blue, circle] (ddd) at (-0.8, 2.5) {};
  \node[draw, fill = blue, circle] (eee) at (0.7, 2.5) {};
  \node[draw, fill = blue, circle] (fff) at (1, 3.5) {};
 \node[circle] (hhh) at (0.9, 4.4) {};
 \node[circle] (iii) at (1.7, 2.9) {};
  \node[circle] (lll) at (1.7, 4) {};
   \node[circle] (mmm) at (-0.9, 4.4) {};
   \node[circle] (nnn) at (-1.6, 3) {};
   \node[circle] (ooo) at (-1.4, 2) {};
   \node[circle] (ppp) at (-1.7, 3.9) {};
     \node[circle] (qqq) at (0, 4.8) {};
       \node[ circle] (rrr) at (0, 1.9) {};
       \node[circle] (sss) at (1.2, 1.9) {};

  \node[draw, fill = green, circle] (aa) at (6, 3.2) {};
  \node[draw,fill = green,  circle] (bb) at (6, 4.1) {};
  \node[draw, fill = blue, circle] (cc) at (5, 3.5) {};
 \node[draw,  fill = blue,circle] (dd) at (5.2, 2.5) {};
  \node[draw, fill = blue, circle] (ee) at (6.7, 2.5) {};
  \node[draw,  fill = blue, circle] (ff) at (7, 3.5) {};
 \node[circle] (hh) at (6.9, 4.4) {};
 \node[circle] (ii) at (7.7, 2.9) {};
  \node[circle] (ll) at (7.7, 4) {};
   \node[circle] (mm) at (5.1, 4.4) {};
   \node[circle] (nn) at (4.4, 3) {};
   \node[circle] (oo) at (4.6, 2) {};
   \node[circle] (pp) at (4.3, 3.9) {};
     \node[circle] (qq) at (6, 4.8) {};
       \node[ circle] (rr) at (6, 1.9) {};
       \node[circle] (ss) at (7.2, 1.9) {};

    \draw[-, dashed] (a) edge (b) (a) edge (c) (a) edge (d) (a) edge (e) (a) edge (f);
    \draw[-] (aa) edge (bb) (aa) edge (ee); 
 
 \draw[-, dashed] (b) edge (c) (c) edge (d) (d) edge (e) (e) edge (f) (f) edge (b); 
 \draw[-, dashed] (f) edge (h) (f) edge (i) (e) edge (i) (b) edge (h) (f) edge (l); 
\draw[-, dashed] (m) edge (c) (n) edge (c) (p) edge (c) (n) edge (d) (o) edge (d) (r) edge (d) (r) edge (e) (s) edge (e) (b) edge (q) (b) edge (m); 
 
    \draw[-, dashed] (aaa) edge (bbb) (aaa) edge (ccc) (aaa) edge (ddd) (aaa) edge (eee) (aaa) edge (fff);
 
 \draw[-, dashed] (bbb) edge (ccc) (ccc) edge (ddd) (ddd) edge (eee) (eee) edge (fff) (fff) edge (bbb); 
 \draw[-, dashed] (fff) edge (hhh) (fff) edge (iii) (eee) edge (iii) (bbb) edge (hhh) (fff) edge (lll); 
\draw[-, dashed] (mmm) edge (ccc) (nnn) edge (ccc) (ppp) edge (ccc) (nnn) edge (ddd) (ooo) edge (ddd) (rrr) edge (ddd) (rrr) edge (eee) (sss) edge (eee) (bbb) edge (qqq) (bbb) edge (mmm); 
   
 \draw[-] (bb) edge (cc) (dd) edge (ee) (ee) edge (ff) (ff) edge (bb); 
 \draw[-] (ff) edge (ii)  (bb) edge (hh) (ff) edge (ll); 
\draw[-] (mm) edge (cc) (nn) edge (cc) (pp) edge (cc);


    \end{tikzpicture}
    \vspace{-18mm}
  \caption{Example of the network formation model, with $\gamma_N^{1/2} = 5$ satisfying Assumption \ref{ass:ass_0} under the microfoundation in Appendix \ref{sec:1a}.  Individuals are assigned different types, which may or may not be observed by the researcher (corresponding to different colors). Individuals interact based on their types and form links among the possible connections. The possible connections and the realized adjacency matrix remain unobserved to the researcher. }  \label{fig:network}
\end{figure}

\subsection{Computing the value of collecting network data} \label{sec:comp}

Here, we ask
how $\beta^*$ compares with the policy that assigns treatments without restrictions on the policy function, and provide useful bounds on the value of collecting network data. We focus on a setting with network spillovers, where $A$ denotes the unobserved adjacency matrix as in Example \ref{exmp:microfoundation}, and omit the super-script $k$ because the argument applies to any cluster. We study
\begin{equation} \label{eqn:difference_global} 
\small 
\begin{aligned} 
W_N^* - W(\beta^*),  \quad W_N^* = \sup_{\mathcal{P}_N(\cdot) \in \mathcal{F}} \frac{1}{N} \sum_{i=1}^N \mathbb{E}\Big[\mathbb{E}_{D \sim \mathcal{P}_N(A, X)}[Y_{i,t} | A, X]\Big]
\end{aligned} 
\end{equation} 
with $\mathcal{F}$ as the set of \textit{all} conditional distribution of the vector $D \in \{0,1\}^N$, given network $A$ and the covariates of all observations $X$ as defined in Section \ref{sec:1a}. 
Equation \eqref{eqn:difference_global} denotes the difference between the expected outcomes, evaluated at the global optimum over all possible assignments (with $A, X$ observed), and the welfare evaluated at $\beta^*$ (without observing $A$).  
\begin{ass}[Discrete parameter space, assignment, and minimum degree] \label{ass:discrete_space} Consider a network model as in Example \ref{exmp:microfoundation}. Assume that $X_i \in \mathcal{X}, \mathcal{X} = \{1, \cdots, |\mathcal{X}|\}, |\mathcal{X}| < \infty, P(X = x) > \bar{\kappa} > 0$ $\forall x \in \mathcal{X}$. Let $\pi(x, \beta) = \beta_x$, and $\mathcal{B} = [0,1]^{|\mathcal{X}|}$. Let $\inf_{x, x', u'} \int l(x,x', u, u') dF_{U| X = x}(u) \ge \underline{\kappa}$, for some $\underline{\kappa}, \bar{\kappa} \in (0, 1]$, with $l(\cdot)$ defined in Equation \eqref{eqn:network}. 
\end{ass} 
Assumption \ref{ass:discrete_space} states that researchers assign treatments based on finitely many observable types as in \cite{manski2004}, \cite{graham2010measuring}. Each type $x \in \mathcal{X}$ is assigned a different probability $\beta_x$, which can take any value between zero and one. Assumption \ref{ass:discrete_space} also states that conditional on individual's type $(X_i, U_i)$, any other unobserved type $U_j$ can form a connection with individual $i$ with some positive probability, provided that $i$ and $j$ are connected under the latent space representation (recall Equation \ref{eqn:network}). This condition is consistent with the model in Example \ref{exmp:microfoundation} (and restrictions on $\gamma_N$), because the assumption states that the expected minimum degree is bounded from below by $\underline{\kappa} \gamma_N^{1/2}$, which is smaller than the maximum degree $\gamma_N^{1/2}$.  The second restriction is on the potential outcomes. Let
\begin{equation} \label{eqn:outcome_optimality} 
\small 
\begin{aligned} 
Y_{i,t}(\mathbf{d}_t) & = \Big[\Delta(X_i) - v(X_i)\Big] \mathbf{d}_{i,t} + \mathcal{S}_{i,t}(\mathbf{d}_t) + \nu_{i,t}, \quad \mathbb{E}[\nu_{i,t} | X, A] = 0 \\ 
\mathcal{S}_{i,t}(\mathbf{d}_t) &=  s\Big(\frac{\sum_{j = 1}^N A_{i,j} \mathbf{d}_{j,t} 1\{X_j = 1\}}{\sum_{j = 1}^N A_{i,j} 1\{X_j = 1\}}, \cdots, \frac{\sum_{j = 1}^N A_{i,j} \mathbf{d}_{j,t} 1\{X_j = |\mathcal{X}|\}}{\sum_{j = 1}^N A_{i,j} 1\{X_j = |\mathcal{X}|\}}\Big), 
\end{aligned} 
\end{equation}  
where whenever $\sum_j A_{i,j} 1\{X_j = x\} = 0$ we indicate the corresponding entry as equal to zero. 
Here, $\Delta(\cdot)$ is the direct treatment effect, and $v(\cdot)$ is the cost of the treatment; $s(\cdot)$ captures the spillover effects. Spillovers depend on the fraction of treated neighbors and are heterogeneous in the neighbors' types, with no interactions with direct effects.

\begin{thm} \label{thm:optimality_global} Consider a model in Example \ref{exmp:microfoundation}. Let Equation \eqref{eqn:outcome_optimality} hold, with $s(\cdot)$ twice differentiable with bounded derivatives.  Suppose that Assumption  \ref{ass:discrete_space} hold. Then, with $W_N^*$ as in Equation \eqref{eqn:difference_global}, 
$
\lim_{N, \gamma_N \rightarrow \infty} \Big\{W_N^* - W(\beta^*)\Big\} \le  \mathbb{E}\Big[|\Delta(X) - v(X)|\Big]. 
$  

\end{thm}   

The proof is in Appendix \ref{sec:proof7}. 
Theorem \ref{thm:optimality_global} bounds the welfare difference by the expected direct effects minus costs. If direct effects are small compared with the treatment costs, such a difference is negligible (for any spillover effects). The bound is identified \textit{without} network data under separability of direct and spillover effects. The theorem assumes that the maximum degree converges to infinity, but it may converge at a slower rate than $N$, consistent with our conditions in previous theorems. This result is novel in the context of the literature on targeting networked individuals and provides a formal characterization of the \textit{value} of collecting network information.\footnote{We note \cite{akbarpour2018just} study network value from the different angle of network diffusion: for a class of network formation models and diffusion mechanisms, the authors show that random seeding is approximately optimal as researchers treat a few more individuals. The main differences are that here (i) we do not study the problem from the perspective of network diffusion but instead focus on an exogenous interference mechanism with heterogeneity; (ii) we provide an upper bound in terms of the direct treatment effect, leveraging a different model and theory. Different from \cite{akbarpour2018just}, the upper bound does not state that we should treat $\epsilon$-more individuals (since we consider a different model of spillovers). } 
Theorem \ref{thm:optimality_global} does \textit{not} state that spillovers are not relevant ($\beta^*$ depends on the spillovers). Instead, it states that one can compute best policies, without knowledge of the network in settings where direct effects are small. 

One can estimate the bound by taking an absolute difference between the treated and control units for different individual types, and average across types. 
In Example \ref{exmp:main}, the bound equals $\phi_1$ (the direct treatment effect) minus the cost of implementing the treatment. 

\begin{cor} \label{cor:c_e} Let the conditions in Theorem \ref{thm:optimality_global} hold. Let $C_e$ be the cost of collecting network information per individual (with total cost for observing the network $A$ equal to $N C_e$). Then, $\lim_{N, \gamma_N \rightarrow \infty} W_N^* - W(\beta^*) - C_e \le 0$, if $C_e \ge \mathbb{E}\Big[|\Delta(X) - v(X)|\Big]$. 
\end{cor}

\subsection{Pairing clusters with heterogeneity} \label{app:het}

\vspace{-1mm} 

\subsubsection{Inference and estimation with observed cluster heterogeneity}

In this subsection, we discuss an extension to allow for cluster heterogeneity. Consider $\theta_k \in \Theta$ to denote the cluster's type for cluster $k$, where $\Theta$ is a finite space (i.e., there are finitely many cluster types). Let $\theta_k$ be observable by the researcher and be non-random. For example, $\theta_k$ may denote a binary classification of clusters as large or small based on their observable size $\tilde{N}^{(k)}$.  For expositional, we focus on our leading example Example \ref{exmp:microfoundation} of network spillovers, but our discussion directly extend beyond this model. 

\begin{ass} \label{ass:het_cluster} Consider a data generating process as in Example \ref{exmp:microfoundation}. For each cluster $k$, Equation \eqref{eqn:network} holds, with $F_X, F_{U|X}$ replaced by $F_X(\theta_k), F_{U|X}(\theta_k)$ functions of $\theta_k$; the model in Example \ref{exmp:microfoundation} holds with $r(\cdot)$ that also depends on $\theta_k$.  
\end{ass} 

Assumption \ref{ass:het_cluster} allows for both the distribution of covariates and unobservables and potential outcomes to also depend on the cluster's type $\theta_k$. 

\begin{lem} \label{lem:lem02} Under Assumption \ref{ass:het_cluster}, under an assignment in Assumption \ref{defn:bernoulli}, for some function $y(\cdot)$ unknown to the researcher,  
\begin{equation} \label{eqn:main_Y} 
\small 
\begin{aligned} 
Y_{i,t}^{(k)} = y\Big(X_i^{(k)},\hat{\beta}_{k,t}, \theta_k\Big) + \varepsilon_{i,t}^{(k)} + \alpha_t + \tau_k, \quad \mathbb{E}_{\hat{\beta}_{k,t}}\Big[\varepsilon_{i,t}^{(k)} |  X_i^{(k)}\Big] = 0. 
\end{aligned} 
\end{equation}
\end{lem}  

Different from Proposition \ref{lem:lem0}, here the the functions also depend on the cluster's type $\theta_k$. 
The proof of Lemma \ref{lem:lem02} follows verbatim from the one of Proposition \ref{lem:lem0}, taking here into account also the (deterministic) cluster's type. 

\vspace{-3mm}
\paragraph{Single wave experiment} In the context of a single-wave experiment, 
we are interested in testing the null hypothesis of whether a \textit{class} of decisions $\beta(\theta), \theta \in \Theta$, which depends on the cluster's type, is optimal. Namely, let 
$
W\Big(\beta(\theta), \theta\Big) = \int_x y(x, \beta(\theta), \theta) dF_X(x; \theta), \quad \beta: \Theta \mapsto \mathcal{B}, \quad \theta \in \Theta 
$ 
be the reward corresponding to cluster's type $\theta$, for a decision rule $\beta(\theta)$. Also, let 
$ 
M\Big(\beta(\theta), \theta\Big) = \frac{\partial W(b, \theta)}{\partial b} \Big|_{b = \beta(\theta)}
$
be the marginal effect with respect to changing $\beta(\theta)$ (for fixed $\theta$). 
The null hypothesis is 
$
H_0: M\Big(\beta(\theta), \theta\Big) = 0, \forall \theta \in \Theta,  
$
i.e., the (baseline) policy $\beta(\theta)$ is optimal for all clusters under consideration. The algorithm follows similarly to Algorithm \ref{alg:my_pilot} with the following modification: instead of matching arbitrary clusters, we construct pairs such that elements in the same pair $(k, k+1)$ are such that $\theta_k = \theta_{k+1}$. 
 
\vspace{-3mm}

\paragraph{Multi-wave experiment} For the multi-wave experiment, our goal is to find $\beta^*(\theta)$ such that 
$
\beta^*(\theta) \in \mathrm{arg} \max_{b \in \mathcal{B}} W(b, \theta), \forall \theta \in \Theta. 
$ 
Similarly to the single-wave experiment, clusters $(k, k')$ of the same type $\theta_k = \theta_{k'}$ are first matched together. We can then consider two extensions. The first extension consists of grouping clusters of the same type together, estimating separately $\beta^*(\theta)$ for each $\theta \in \Theta$. In this case the regret bound holds up-to a factor of order $1/\min_t P(\theta = t)$, with $P(\theta = t)$ denoting the (exact) share of clusters of type $t$. The second approach instead consists of updating the same policy from a given pair using information from that \textit{same} pair. The validity of this latter extension relies on time independence.

\vspace{-3mm}

\subsubsection{Matching clusters with distributional embeddings} \label{sec:matching}

Next, we turn to settings where covariates have different distributions in different clusters. Let
$
X_i^{(k)} \sim_{i.i.d.} F_{X}^{(k)}, 
$
with $F_{X}^{(k)}$ being cluster-specific.
Treatments are assigned as follows 
\begin{equation} \label{eqn:tt} 
\small 
\begin{aligned} 
t = 0: \quad & D_{i,0}^{(h)} \sim  \pi(X_i^{(h)};\beta_0) , \quad h \in \{k, k'\} \\  
t = 1: \quad & D_{i,1}^{(k)} \sim   \pi(X_i^{(k)};\beta) , \quad  D_{i,1}^{(k')} \sim   \pi(X_i^{(k')};\beta') . 
\end{aligned} 
\end{equation}   
The estimand and estimator are respectively 
$$
\small 
\begin{aligned} 
\omega_k & =  \int y(x;\beta)  dF_X^{(k)}(x)  -  \int y(x;\beta') dF_X^{(k)}(x), \quad \widehat{\omega}_{k}(k') = \Big[\bar{Y}_1^{(k)} - \bar{Y}_1^{(k')}\Big] - \Big[\bar{Y}_0^{(k)} - \bar{Y}_0^{(k')}\Big]. 
\end{aligned} 
$$  
our focus is to control the bias of the estimator via matching.  

\begin{lem} \label{lem:discrepancy} Let Assumption \ref{ass:ass_0} hold, with $X_i^{(k)} \sim F_X^{(k)}$ having potentially different distributions across clusters, and treatments assigned as in Equation \eqref{eqn:tt}. Then
$$
\small 
\begin{aligned} 
\mathbb{E}[\widehat{\omega}_{k}(k')] - \omega_k = \int \Big(y(x;\beta') - y(x;\beta_0)\Big) d\Big(F_X^{(k)}(x) - F_X^{(k')}(x)\Big).
\end{aligned} 
$$ 
\end{lem}  
Lemma \ref{lem:discrepancy} shows the bias depends on the difference between the expectations averaged over two different distributions. The bias is unknown since it depends on the function $y(\cdot)$, which is not identifiable with finitely many clusters. We therefore bound the worst-case error over a class of functions $x \mapsto [y(x;\beta') - y(x;\beta_0)] \in \mathcal{M}$, with $\mathcal{M}$ defined below.

We start by defining $\mathcal{M}$ be a reproducing kernel Hilbert space (RKHS) equipped with a norm $||\cdot||_{\mathcal{M}}$.\footnote{A RKHS is a Hilbert space of functions where all the evaluations functionals are bounded, namely, where for each $f \in \mathcal{M}$, and $x \in \mathcal{X}$, $f(x) \le C ||f||_{\mathcal{M}}$  for a finite constant $C$.  Intuitively, assuming that $[y(\cdot; \beta') - y(\cdot; \beta_0)]  \in \mathcal{M}$ imposes smoothness conditions on the average effect as a function of $x$. } Without loss of generality, we study the worst-case functionals over the unit-ball. Formally, we focus on bounding the worst-case error of the form\footnote{Here Equation \eqref{eqn:ajk} follows directly from Lemma \ref{lem:discrepancy} and the fact that if $f \in \mathcal{M}, - f \in \mathcal{M}$. 
}  
\begin{equation} \label{eqn:ajk} 
\small 
\begin{aligned} 
& \sup_{[y(\cdot;\beta') - y(\cdot;\beta_0)] \in \mathcal{M}: ||y(\cdot;\beta') - y(\cdot;\beta_0)||_{\mathcal{M}} \le 1}  \Big|\omega_k - \mathbb{E}[\widehat{\omega}_{k}(k')] \Big| = \sup_{f \in \mathcal{M}: ||f||_{\mathcal{M}} \le 1}  \Big\{\int f(x) d(F_X^{(k)} - F_X^{(k')})\Big\}. 
\end{aligned} 
\end{equation}   

The right-hand side is know as the maximum mean discrepancy (MMD), a measure of distances in RKHS  \citep[see][and references therein]{muandet2016kernel}.  It is known that the MMD can be consistently estimated using kernels. In particular, given a particular choice of a kernel $k(\cdot)$, which corresponds to a certain RKHS, we can estimate 
 \begin{equation} \label{eqn:mmd_est} 
 \begin{aligned} 
 \widehat{\mathrm{MMD}}^2(k, k') & = \frac{1}{n(n-1)} \sum_{i=1}^n \sum_{j \neq i} h\Big(X_i^{(k)}, X_i^{(k')}, X_j^{(k)}, X_j^{(k')}\Big), \\ h(x_i, y_i, x_j, y_j) & = k(x_i, x_j) + k(y_i, y_j) - k(x_i, y_j) - k(x_j, y_i). 
 \end{aligned} 
 \end{equation}  
Consistency follows from results discussed in \cite{muandet2016kernel}.

 We now turn to the problem of matching clusters. The following matching algorithm is considered: (i) construct
$
k^* \in \mathrm{arg} \min_{k' \neq k}  \widehat{\mathrm{MMD}}^2(k, k'). 
$  based on the minimum estimated MMD in Equation \eqref{eqn:mmd_est}; (ii) randomize treatments as in Equation \eqref{eqn:tt}; (iii) estimate $\widehat{\omega}_k(k^*)$. With many clusters, we suggest minimizing the average MMD over cluster pairs.

\section{Derivations of results in main text}

First, we introduce conventions and notation. 
we say that $x \lesssim y$ if $x \le c y$ for a positive constant $c < 
\infty$. For $K$ many clusters, we say that $\nint{k} = k 1\{k \le K\} + (k - K)1\{k > K\}$, so that whenever $k$ is below the overall number of clusters $K$, $\nint{k}$ equals $k$ otherwise it ``loops back'' whenever $k > K$ and equal $k - K$. This  notation will be useful for our proof in the sequential cross-fitting argument. 
we will refer to $\widehat{M}_{(k, k+1)}$ as $\widehat{M}_k$ for $k$ is odd for short of notation. Also, we define $\check{M}_{k,s} = \widehat{M}_{\nint{k + 2}, s}$.  The following definition introduces the notion of a dependency graph \citep{janson2004large}.

 \begin{defn}[Dependency graph] For given random variables $R_1, \cdots, R_n$, $W_n \in \{0,1\}^{n \times n}$ is a non-random matrix defined as a dependency graph of $(R_1, \cdots, R_n)$ if, for any $i$, $R_i 
\perp R_{j: W_n^{(i,j)} = 0}$. we denote the dependency neighbors $N_i = \{j: W_n^{(i,j)} = 1\}$.   \qed 
  \end{defn} 
  

\begin{defn}[Cover] \label{defn:cover}
Given an adjacency matrix $A_n$, with $n$ rows and columns, a family $\mathcal{C}_n = \{\mathcal{C}_n(j)\}_j$ of  disjoint subsets of $\{1, \cdots, n\}$ is a proper cover of $A_n$ if $\cup_j \mathcal{C}_n(j) = \{1, \cdots, n\}$ and $\mathcal{C}_n(j)$ contains units such that for any pair of elements $\{i,k \in \mathcal{C}_n(j)\}$, $A_n^{(i,k)} = 0$.   \qed 
\end{defn} 
Namely, a proper cover of $A_n$ defines a set of disjoint sets, where each disjoint set contains some indexes of units that are not neighbors in $A_n$. Note that a proper cover always exists, since, if $A_n$ is fully connected, then the number of disjoint sets is just $n$, one for each element.  

The size of the smallest proper cover is the chromatic number, defined as $\chi(A_n)$. 
\begin{defn}[Chromatic number] \label{defn:chromatic} 
The chromatic number $\chi(A_n)$, denotes the size of the smallest proper cover of $A_n$. \qed 
\end{defn} 

We define the oracle descent procedure absent of sampling error. Let $\beta \in \mathcal{B} = [\mathcal{B}_1, \mathcal{B}_2]^p$, where $\mathcal{B}_1, \mathcal{B}_2$ are finite and $\beta$ assumed to be in the interior throughout. Also, let $P_{\mathcal{B}_1, \mathcal{B}_2}$ be the projection operator onto $\mathcal{B}$.

\begin{defn}[Oracle gradient descent under strong concavity]  We define, for $
\alpha_w = \frac{J}{w + 1}
$, 
\begin{equation}  \label{eqn:beta_start2}
\small 
\begin{aligned} 
\beta_{w}^{**} = 
 P_{\mathcal{B}_1, \mathcal{B}_2} \Big[\beta_{w-1}^{**} + \alpha_{w- 1} M(\beta_{w-1}^{**})\Big] , \quad \beta_{1}^{**} = \beta_0. 
 \end{aligned} 
\end{equation} 
\end{defn} 

Note that in the proofs, we will refer to the general $p$-dimensional case for the multi-wave experiment, which uses $\check{T} = T/p$ waves. See Algorithm \ref{alg:adaptive_complete}. 


\subsection{Lemmas and propositions}

\subsubsection{Preliminary lemmas} \label{sec:app_lemma}

\begin{lem}\label{lem:locally_dep} \citep[Theorem 3.5][]{ross2011fundamentals} Let $X_1, ..., X_n$ be random variables such that $\mathbb{E}[X_i^4] < \infty$, $\mathbb{E}[X_i] = 0$, $\tilde{\sigma}^2 = \mathrm{Var}(\sum_{i = 1}^n X_i)$ and define $W = \sum_{i = 1}^n X_i/\tilde{\sigma}$. Let the collection $(X_1, ..., X_n)$ have dependency neighborhoods $N_i$, $i = 1, ..., n$ and also define $D = \mathrm{max}_{1 \le i \le n} |N_i|$. Then for $Z$ a standard normal random variable, 
$
d_W(W, Z) \le \frac{D^2}{\tilde{\sigma}^3} \sum_{i = 1}^n \mathbb{E}| X_i|^3 + \frac{\sqrt{28} D^{3/2}}{\sqrt{\pi} \tilde{\sigma}^2} \sqrt{\sum_{i = 1}^n \mathbb{E}[X_i^4]},
$
where $d_W$ denotes the Wasserstein metric.
\end{lem}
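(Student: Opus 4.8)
The plan is to prove this by Stein's method for normal approximation, the standard route for local-dependence CLT bounds of this form. First I would recall the Stein characterization of the Wasserstein distance: since $d_W(W,Z)=\sup_{h}|\mathbb{E}[h(W)]-\mathbb{E}[h(Z)]|$ over $1$-Lipschitz $h$, for each such $h$ one solves the Stein equation $f'(w)-wf(w)=h(w)-\mathbb{E}[h(Z)]$, whose solution $f=f_h$ obeys the classical bounds $\|f_h'\|_\infty\le\sqrt{2/\pi}$ and $\|f_h''\|_\infty\le 2$. This reduces the problem to bounding $|\mathbb{E}[f'(W)-Wf(W)]|$ uniformly over all $f$ satisfying these two derivative bounds.

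The core step exploits the dependency-neighborhood structure. Writing $\tilde X_i=X_i/\sigma$, $W=\sum_i\tilde X_i$, $A_i=\sum_{j\in N_i}\tilde X_j$ and $W_i=W-A_i$, the defining property $X_i\perp\{X_j:j\notin N_i\}$ gives $\tilde X_i\perp W_i$, hence $\mathbb{E}[\tilde X_i f(W_i)]=0$. Combined with the normalization $\mathbb{E}[W^2]=1$, which rearranges to $\sum_i\mathbb{E}[\tilde X_i A_i]=\mathbb{E}[S]=1$ for $S:=\sum_i\tilde X_i A_i$, I would expand $\mathbb{E}[Wf(W)]=\sum_i\mathbb{E}[\tilde X_i(f(W)-f(W_i))]$ and Taylor-expand each increment as $f(W)-f(W_i)=A_if'(W)-R_i$, with remainder $R_i=\int_0^{A_i}\big(f'(W)-f'(W_i+t)\big)\,dt$ satisfying $|R_i|\le\tfrac12\|f''\|_\infty A_i^2$. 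This yields the clean decomposition
$$\mathbb{E}[f'(W)-Wf(W)]=\mathbb{E}\big[f'(W)(1-S)\big]+\sum_i\mathbb{E}[\tilde X_i R_i].$$

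Each of the two terms produces exactly one summand of the claimed bound. For the remainder term, $\big|\sum_i\mathbb{E}[\tilde X_i R_i]\big|\le\tfrac12\|f''\|_\infty\sum_i\mathbb{E}[|\tilde X_i|A_i^2]$; expanding $A_i^2=\sum_{j,k\in N_i}\tilde X_j\tilde X_k$ and applying $|abc|\le\tfrac13(|a|^3+|b|^3+|c|^3)$ together with $|N_i|\le D$ bounds this by $D^2\sum_i\mathbb{E}|\tilde X_i|^3=\tfrac{D^2}{\sigma^3}\sum_i\mathbb{E}|X_i|^3$ after using $\|f''\|_\infty\le2$. For the first term, since $\mathbb{E}[S]=1$ I would use $\big|\mathbb{E}[f'(W)(1-S)]\big|\le\|f'\|_\infty\,\mathbb{E}|1-S|\le\sqrt{2/\pi}\,\sqrt{\mathrm{Var}(S)}$, and then invoke a variance estimate of the form $\mathrm{Var}(S)\le 14\,D^3\sum_i\mathbb{E}[\tilde X_i^4]$, giving $\sqrt{2/\pi}\,\sqrt{14}\,D^{3/2}\sqrt{\sum_i\mathbb{E}\tilde X_i^4}=\tfrac{\sqrt{28}}{\sqrt\pi}\,\tfrac{D^{3/2}}{\sigma^2}\sqrt{\sum_i\mathbb{E}X_i^4}$. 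Adding the two bounds delivers the stated inequality.

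The main obstacle is precisely the variance bound on $S=\sum_{(i,j):\,j\in N_i}\tilde X_i\tilde X_j$, a sum of products indexed by the edges of the dependency graph. Two such products are uncorrelated unless their index sets meet within a common dependency neighborhood, so obtaining the correct power $D^3$ (rather than a cruder $D^4$) requires careful combinatorial bookkeeping: each correlated quadruple is anchored at an index $i$ with the remaining three indices confined to $O(D^3)$ choices inside overlapping neighborhoods, after which repeated Hölder and AM-GM reductions collapse everything onto the single fourth-moment sum $\sum_i\mathbb{E}[\tilde X_i^4]$. Tracking the constants so that the combinatorial factor and the Stein factor $\sqrt{2/\pi}$ combine into $\sqrt{28}/\sqrt\pi$ is the delicate part; the Taylor estimates and the independence cancellations are otherwise routine.
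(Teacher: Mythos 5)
Your proposal is correct and is essentially the proof of Theorem 3.6 in \cite{ross2011fundamentals}, which is exactly the source the paper cites for this lemma (the paper itself gives no proof). The Stein-equation bounds, the decomposition into the Taylor-remainder term and the $\mathbb{E}[f'(W)(1-S)]$ term, the AM--GM collapse onto third moments, and the $\mathrm{Var}(S)\le 14\,D^3\sum_i\mathbb{E}[\tilde X_i^4]$ estimate all match the cited argument, including the constants.
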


\begin{lem}[From \cite{brooks1941colouring}] \label{lem:brooktheorem} 
For any connected undirected graph $G$ with maximum degree $\Delta$, the chromatic number of $G$ is at most $\Delta + 1$.  
\end{lem}


\begin{lem}[Concentration for dependency graphs] \label{lem:concentration1} Define $\{R_i\}_{i=1}^n$ sub-Gaussian random variables with parameter $r^2 < \infty$, forming a dependency graph with adjacency matrix $A_n$ with maximum degree bounded by $\gamma_N$. Then, with probability at least $1 - \delta$, for any $\delta \in (0,1)$, 
$
\Big|\frac{1}{n} \sum_{i=1}^n (R_i - \mathbb{E}[R_i])\Big| \le  \sqrt{ \frac{2 r^2 \gamma_N \log(2 \gamma_N/\delta)}{n}}. 
$
\end{lem}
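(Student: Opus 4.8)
The plan is to prove the concentration inequality by partitioning the dependency graph into color classes (independent sets) via a proper cover, applying a standard sub-gaussian concentration bound within each color class where the random variables are mutually independent, and then combining the pieces with a union bound. This is the classical strategy for concentration under local dependence, and the chromatic number is what controls the final constant.

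\textbf{Setup via proper cover.} First I would invoke Definition~\ref{defn:cover} and Definition~\ref{defn:chromatic} to obtain a proper cover $\mathcal{C}_n = \{\mathcal{C}_n(1), \dots, \mathcal{C}_n(\chi)\}$ of the dependency graph $A_n$, where $\chi = \chi(A_n)$ is the chromatic number. By Brooks' theorem (Lemma~\ref{lem:brooktheorem}), since the maximum degree is bounded by $\gamma_N$, we have $\chi(A_n) \le \gamma_N + 1$; for the clean bound I would simply use $\chi \le \gamma_N$ up to the additive constant absorbed into the statement, or carry the $\gamma_N+1$ and note it is dominated. The key structural fact is that within each color class $\mathcal{C}_n(j)$, the variables $\{R_i : i \in \mathcal{C}_n(j)\}$ are mutually independent, because no two of them are adjacent in the dependency graph.

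\textbf{Within-class concentration and combination.} Writing the centered average as
\begin{equation}
\frac{1}{n}\sum_{i=1}^n (R_i - \mathbb{E}[R_i]) = \sum_{j=1}^{\chi} \frac{|\mathcal{C}_n(j)|}{n} \cdot \frac{1}{|\mathcal{C}_n(j)|}\sum_{i \in \mathcal{C}_n(j)} (R_i - \mathbb{E}[R_i]),
\end{equation}
I would apply the standard sub-gaussian Hoeffding-type tail bound to each inner sum over the independent variables in $\mathcal{C}_n(j)$: with probability at least $1 - \delta/\chi$, the deviation of the class-$j$ average is at most $\sqrt{2\sigma^2 \log(2\chi/\delta) / |\mathcal{C}_n(j)|}$. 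A union bound over the $\chi$ classes then holds all of these simultaneously with probability at least $1-\delta$. Combining the class bounds using the weights $|\mathcal{C}_n(j)|/n$ and the Cauchy--Schwarz inequality (or convexity), together with $\sum_j |\mathcal{C}_n(j)| = n$ and $\chi \le \gamma_N$, yields the factor $\gamma_N$ inside the square root and the target bound $\sqrt{2\sigma^2 \gamma_N \log(2\gamma_N/\delta)/n}$.

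\textbf{Main obstacle.} The routine parts are the sub-gaussian tail bound and the union bound; the delicate step is the bookkeeping in the combination, ensuring the weighted sum of per-class deviations aggregates to exactly the claimed $\gamma_N/n$ scaling rather than a worse power. The cleanest route is to bound each class deviation by $\sqrt{2\sigma^2 \log(2\gamma_N/\delta)/|\mathcal{C}_n(j)|}$, multiply by weight $|\mathcal{C}_n(j)|/n$, sum, and apply Cauchy--Schwarz as $\sum_j \frac{|\mathcal{C}_n(j)|}{n}\sqrt{c/|\mathcal{C}_n(j)|} = \frac{1}{n}\sum_j \sqrt{c\,|\mathcal{C}_n(j)|} \le \frac{1}{n}\sqrt{c\,\chi \sum_j |\mathcal{C}_n(j)|} = \sqrt{c\chi/n}$, with $c = 2\sigma^2\log(2\gamma_N/\delta)$. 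This directly produces the stated constant, so I expect the whole argument to go through cleanly once the coloring is in place.
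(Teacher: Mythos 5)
Your proposal is correct and follows essentially the same route as the paper's proof: a proper cover of the dependency graph, a sub-gaussian Chernoff/Hoeffding bound within each (mutually independent) color class, a union bound over the $\chi(A_n)\le\gamma_N$ classes via Brooks' theorem, and the same final aggregation $\sum_j\sqrt{|\mathcal{C}_n(j)|}\le\sqrt{\chi\sum_j|\mathcal{C}_n(j)|}$ (your Cauchy--Schwarz step is the paper's concavity/Jensen step in disguise). No substantive differences.
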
 

\begin{proof}[Proof of Lemma \ref{lem:concentration1}] For the smallest proper cover $\mathcal{C}_n$ as in Definition \ref{defn:cover}, 
 $$
 \small 
 \begin{aligned} 
 \Big|\frac{1}{n} \sum_{i=1}^n (R_i - \mathbb{E}[R_i])\Big| \le \sum_{j = 1}^{\chi(A_n)}  \underbrace{\Big| \frac{1}{n} \sum_{i \in \mathcal{C}_n(j)} (R_i - \mathbb{E}[R_i])\Big|}_{(A)}. 
 \end{aligned}
$$ 
Here, we sum over each subset of index $\mathcal{C}_n(j) \in \mathcal{C}_n$ in the proper cover, and then we sum over each element in the subset $\mathcal{C}_n(j)$. 
 Observe now that by definition of the dependency graph, components in $(A)$ are mutually independent. Using the Chernoff's bound \citep{wainwright2019high}, we have that with probability at least $1 - \delta$, for any $\delta \in (0,1)$ 
 $
 \Big| \sum_{i \in \mathcal{C}_n(j)} (R_i - \mathbb{E}[R_i])\Big| \le  \sqrt{2 r^2 |\mathcal{C}_n(j)| \log(2/\delta)}, 
 $ 
where $|\mathcal{C}_n(j)|$ denotes the number of elements in $\mathcal{C}_n(j)$. Using the union bound, we obtain that with probability at least $1 - \delta$, for any $\delta \in (0,1)$ 
 $$ 
 \small 
 \begin{aligned} 
  \Big|\frac{1}{n} \sum_{i=1}^n (R_i - \mathbb{E}[R_i])\Big| \le \underbrace{\frac{1}{n} \sum_{j = 1}^{\chi(A_n)} \sqrt{2 r^2 |\mathcal{C}_n(j)| \log(2 \chi(A_n)/\delta)}}_{(B)}. 
  \end{aligned}
$$ 
Using concavity of the square-root function, after multiplying and dividing (B) by $\chi(A_n)$, 
$$
\small 
\begin{aligned} 
(B) &\le \frac{1}{n} \chi(A_n)  \sqrt{2 r^2 \frac{1}{\chi(A_n)} \sum_{j = 1}^{\chi(A_n)} |\mathcal{C}_n(j)| \log(2 \chi(A_n)/\delta)} = \frac{1}{n}   \sqrt{2 r^2 \chi(A_n)n \log(2\chi(A_n)/\delta)}. 
\end{aligned} 
$$ 
The last equality follows since $ \sum_{j = 1}^{\chi(A_n)} |\mathcal{C}_n(j)| =n$. By Lemma \ref{lem:brooktheorem} the proof completes. 
\end{proof}

\subsubsection{Concentration of the average outcomes} 


\begin{lem} \label{lem:jjhu} Suppose that treatments are assigned as in Assumption \ref{defn:bernoulli} with 
$$
\small 
\begin{aligned} 
& D_{i,0}^{(k)} \sim \pi(X_i^{(k)}, \beta_0), \quad 
D_{i,0}^{(k + 1)} \sim \pi(X_i^{(k + 1)}, \beta_0), \quad 
  D_{i,t}^{(k)} \sim \pi(X_i^{(k)}, \beta), \quad 
D_{i,t}^{(k + 1)} \sim \pi(X_i^{(k + 1)}, \beta') 
\end{aligned} 
$$ 
with exogenous parameters $\beta_0, \beta, \beta'$ (i.e., independent of $\bar{Y}_t^{(k+1)}, \bar{Y}_0^{(k +1)}, \bar{Y}_t^{(k)}, \bar{Y}_0^{(k)}$). Let Assumption \ref{ass:ass_0} hold, with sub-Gaussian $Y_{i,t}^{(k)}$. Then with probability at least  $1 - \delta$, for any $\delta \in (0,1)$
$$
\Big| \bar{Y}_t^{(k)} - \bar{Y}_t^{(k + 1)} - \bar{Y}_0^{(k)} + \bar{Y}_0^{(k + 1)} - \int (y(x, \beta) -  y(x, \beta')) dF_X(x)\Big|  \le c_0 \sqrt{\frac{\gamma_N \log(\gamma_N/\delta)}{n}},  
$$  
for a finite constant $c_0 < \infty$ independent of $(n, N, \gamma_N, \delta, t, T, k, K)$. 
\end{lem}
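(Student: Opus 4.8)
The plan is to start from the outcome decomposition in Lemma \ref{lem:lem0} (justified by Lemma \ref{lem:main}), which under Assumptions \ref{ass:network}, \ref{ass:no_carryovers} and the Bernoulli assignment of Assumption \ref{defn:bernoulli} lets me write, for each cluster $h$ and period $s$,
\begin{equation*}
\bar{Y}_s^{(h)} = \frac{1}{n}\sum_{i=1}^n y\big(X_i^{(h)},\beta_{h,s}\big) + \frac{1}{n}\sum_{i=1}^n \varepsilon_{i,s}^{(h)} + \alpha_s + \tau_h .
\end{equation*}
Substituting the four averages ($(h,s)\in\{(k,t),(k{+}1,t),(k,0),(k{+}1,0)\}$) into the difference-in-differences and collecting terms, the fixed effects cancel exactly: the $\tau$-terms give $\tau_k-\tau_{k+1}-\tau_k+\tau_{k+1}=0$ and the $\alpha$-terms give $\alpha_t-\alpha_t-\alpha_0+\alpha_0=0$. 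This is the structural reason for using a baseline period. I would then split the remaining expression as $A+B$, where $A$ collects the $y(\cdot)$-averages (using $\beta_{k,t}=\beta$, $\beta_{k+1,t}=\beta'$, and $\beta_{k,0}=\beta_{k+1,0}=\beta_0$) and $B$ collects the corresponding averages of the $\varepsilon_{i,s}^{(h)}$.

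For the signal term $A$, I would center against the target $\int\big(y(x,\beta)-y(x,\beta')\big)\,dF_X(x)$ and write the deviation $A-\int(y(x,\beta)-y(x,\beta'))\,dF_X$ as a sum of four centered averages of the form $\frac1n\sum_i y(X_i^{(h)},b)-\int y(x,b)\,dF_X$, plus the baseline discrepancy $\frac1n\sum_i y(X_i^{(k)},\beta_0)-\frac1n\sum_i y(X_i^{(k+1)},\beta_0)$. Here I would use that in this section covariates are identically distributed across clusters with common law $F_X$, so the two baseline $\beta_0$-averages have the same mean $\int y(x,\beta_0)\,dF_X$ and their difference vanishes in expectation. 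Since $y(\cdot,\cdot)$ is uniformly bounded (Assumption \ref{ass:regularity_basic}) and the $X_i^{(h)}$ are i.i.d., each such centered average is a bounded i.i.d. average, so Hoeffding's inequality gives, with probability at least $1-\delta$, a bound of order $\sqrt{\log(1/\delta)/n}$ for each of the (constantly many) pieces.

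For the noise term $B=\bar\varepsilon_t^{(k)}-\bar\varepsilon_t^{(k+1)}-\bar\varepsilon_0^{(k)}+\bar\varepsilon_0^{(k+1)}$, the key point is that by Lemma \ref{lem:main} the $\varepsilon_{i,s}^{(h)}$ within a cluster are mean-zero and locally dependent with dependency neighborhoods of size $\mathcal{O}(\gamma_N)$, and (assuming, as in the invoking results, that $\varepsilon_{i,s}^{(h)}$ is sub-gaussian) I would apply the dependency-graph concentration bound of Lemma \ref{lem:concentration1} to each of the four averages, obtaining a bound of order $\sqrt{\gamma_N\log(\gamma_N/\delta)/n}$ for each; clusters being independent and the number of terms being constant, a union bound combines them at the same order. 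Finally, adding the contributions of $A$ and $B$ and noting that $\gamma_N\ge 1$ makes the noise rate $\sqrt{\gamma_N\log(\gamma_N/\delta)/n}$ dominate the $\sqrt{\log(1/\delta)/n}$ rate of $A$, I recover the claimed $\mathcal{O}\big(\sqrt{\gamma_N\log(\gamma_N/\delta)/n}\big)$ bound. The only genuinely technical step is controlling $B$, and that is essentially discharged by Lemma \ref{lem:concentration1}; everything else is bookkeeping (fixed-effect cancellation and a Hoeffding bound for the i.i.d. covariate averages), so I do not anticipate a substantive obstacle beyond keeping track of the local-dependence structure inherited from the network.
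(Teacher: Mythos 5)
Your proposal is correct and follows essentially the same route as the paper: the paper likewise uses Lemma \ref{lem:main} to compute the expectations (so that the $\tau_k,\alpha_t$ fixed effects cancel in the difference-in-differences and the mean equals $\int(y(x,\beta)-y(x,\beta'))\,dF_X$), and then invokes the dependency-graph concentration bound of Lemma \ref{lem:concentration1}, noting that the $Y_{i,t}^{(k)}$ themselves form a dependency graph of maximum degree $\mathcal{O}(\gamma_N)$. The only cosmetic difference is that you split the signal (i.i.d.\ covariate averages, handled by Hoeffding) from the noise (locally dependent $\varepsilon$'s), whereas the paper applies the single concentration lemma to the full outcomes; both yield the same rate.
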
 

\begin{proof}[Proof of Lemma \ref{lem:jjhu}] First, note that by Assumption \ref{ass:ass_0}, we can write 
\begin{equation}
\small 
\begin{aligned}
\mathbb{E}\Big[\bar{Y}_t^{(k)} - \bar{Y}_t^{(k + 1)}\Big] & = \int (y(x, \beta) -  y(x, \beta')) dF_X(x)  + \tau_k - \tau_{k+1},\quad \mathbb{E}\Big[\bar{Y}_0^{(k)} - \bar{Y}_0^{(k + 1)}\Big] & = \tau_k - \tau_{k+1}. 
\end{aligned}  
\end{equation} 
In addition, by Assumption \ref{ass:ass_0}, $Y_{i,t}^{(k)}$ form a dependency graph with maximum degree bounded by $2 \gamma_N$. The proof completes by invoking Lemma \ref{lem:concentration1}. 
\end{proof} 

\begin{lem} \label{lem:expect} Let $y(x, \beta)$ be twice differentiable with uniformly bounded derivatives for all $x \in \mathcal{X}, \beta \in \mathcal{B}$. Then for all interior $\beta \in \mathcal{B}$, where $\mathcal{B}$ is a compact space 
$$
\small 
\begin{aligned} 
\Big|\int \Big[y(x, \beta + \eta_n \underline{e}_j) - y(x, \beta - \eta_n \underline{e}_j)\Big] dF_X(x) -  2 \eta_n M^{(j)}(\beta)\Big| \le c_0 \eta_n^2. 
\end{aligned} 
$$  
for a finite constant $c_0 < \infty$, 
\end{lem}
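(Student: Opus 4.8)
The plan is to prove the claim by a pointwise second-order Taylor expansion of $y(x,\cdot)$ in the single coordinate $\beta^j$, followed by integration against $F_X$, with the hypothesized uniform bound on the second derivatives controlling the remainder. First I would fix $x \in \mathcal{X}$ and introduce the scalar function $g_x(s) = y(x, \beta + s\underline{e}_j)$, which, because $\underline{e}_j$ perturbs only the $j$-th coordinate, is twice differentiable in $s$ with $g_x'(0) = \partial y(x,\beta)/\partial \beta^j$ and $g_x''(s) = \partial^2 y(x, \beta + s\underline{e}_j)/\partial (\beta^j)^2$. The Lagrange form of Taylor's theorem then gives, for some $\xi_+ \in (0,\eta_n)$ and $\xi_- \in (-\eta_n,0)$,
\[
g_x(\eta_n) = g_x(0) + \eta_n g_x'(0) + \tfrac12 \eta_n^2 g_x''(\xi_+), \qquad g_x(-\eta_n) = g_x(0) - \eta_n g_x'(0) + \tfrac12 \eta_n^2 g_x''(\xi_-).
\]

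Subtracting the two expansions cancels the zeroth-order terms and doubles the first-order term, so that
\[
y(x, \beta + \eta_n \underline{e}_j) - y(x, \beta - \eta_n \underline{e}_j) = 2\eta_n \frac{\partial y(x,\beta)}{\partial \beta^j} + \tfrac12 \eta_n^2\bigl(g_x''(\xi_+) - g_x''(\xi_-)\bigr).
\]
By hypothesis the second derivatives of $y$ are uniformly bounded, say $|g_x''(s)| \le C$ for all $x$ and all $s$ with $\beta + s\underline{e}_j \in \mathcal{B}$, so the remainder is bounded in absolute value by $C\eta_n^2$ uniformly in $x$. Integrating both sides against $F_X$ and noting that an integrand uniformly of size $\mathcal{O}(\eta_n^2)$ integrates to $\mathcal{O}(\eta_n^2)$ (the total mass being one), I obtain
\[
\int \bigl[y(x, \beta + \eta_n \underline{e}_j) - y(x, \beta - \eta_n \underline{e}_j)\bigr]\, dF_X(x) = 2\eta_n \int \frac{\partial y(x,\beta)}{\partial \beta^j}\, dF_X(x) + \mathcal{O}(\eta_n^2).
\]

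Finally I would identify $\int \partial y(x,\beta)/\partial \beta^j\, dF_X(x)$ with $V^{(j)}(\beta)$. Since $W(\beta) = \int y(x,\beta)\, dF_X(x)$ by Definition \ref{defn:welf1} and $V(\beta) = \partial W(\beta)/\partial \beta$ by \eqref{eqn:estimand}, this amounts to interchanging differentiation in $\beta^j$ with the integral over $x$. The interchange is justified by dominated convergence: the difference quotients $[y(x,\beta+h\underline{e}_j)-y(x,\beta)]/h$ converge pointwise to $\partial y(x,\beta)/\partial \beta^j$ as $h\to 0$ and, by the mean value theorem together with the uniform bound on the first derivatives, are dominated by a constant (hence $F_X$-integrable) envelope. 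This justification of differentiating under the integral sign is the only point requiring care, but it is routine here precisely because the uniform-boundedness hypothesis supplies the needed envelope; no genuine obstacle arises.
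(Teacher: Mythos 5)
Your proposal is correct and follows essentially the same route as the paper's proof: a second-order Taylor (mean-value) expansion in the perturbed coordinate, cancellation of the zeroth-order terms upon subtraction, a uniform bound on the remainder from the bounded second derivatives, and dominated convergence to identify the integrated derivative with $V^{(j)}(\beta)$. Your version simply spells out the Lagrange remainder and the envelope argument more explicitly than the paper does.
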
 
\begin{proof}[Proof of Lemma \ref{lem:expect}] 
The lemma follows from the mean-value theorem, and the dominated convergence theorem (used to interchange integration and differentiation). 
\end{proof} 

\begin{lem} \label{lem:2} Let the conditions in Lemma \ref{lem:jjhu} hold.  Let $y(x, \beta)$ be twice differentiable in $\beta$ with uniformly bounded derivatives for all $x \in \mathcal{X}, \beta \in \mathcal{B}$. Suppose that $\beta = \check{\beta} + \eta_n \underline{e}_j$ and $\beta' = \check{\beta} - \eta_n \underline{e}_j$, with an $\check{\beta}$ exogenous parameter (i.e., independent of $\bar{Y}_t^{(k+1)}, \bar{Y}_0^{(k +1)}, \bar{Y}_t^{(k)}, \bar{Y}_0^{(k)}$), both in the interior of $\mathcal{B}$. Then with probability at least $1 - \delta$, for any $\delta \in (0,1)$
$$
\small 
\begin{aligned} 
\Big| \frac{\bar{Y}_t^{(k)} - \bar{Y}_t^{(k + 1)} - \bar{Y}_0^{(k)} + \bar{Y}_0^{(k + 1)}}{2 \eta_n} - M^{(j)}(\check{\beta})\Big|  \le c_0 \sqrt{\frac{\gamma_N \log(\gamma_N/\delta)}{\eta_n^2 n}} + c_0 \eta_n,
\end{aligned} 
$$ 
for a finite constant $c_0 < \infty$ independent of $(n, N, \gamma_N, \delta, t, T, k, K)$. 
\end{lem} 

\begin{proof}[Proof of Lemma \ref{lem:2}] The proof is immediate from Lemma \ref{lem:jjhu}, and Lemma \ref{lem:expect}.
\end{proof}


\subsubsection{Lemmas for the adaptive experiment} \label{lem:main_adaptive}

The following 
lemma shows that the parameters used in the experiment are independent of potential outcomes and covariates in the same cluster. The sequential cross-fitting breaks the dependence due to repeated sampling, which would otherwise confound the experiment.

\begin{lem}[Unconfoundedness] \label{lem:1a} Let $T/p +  1\le K/2$. Consider the experimental design in Algorithm \ref{alg:adaptive_complete} for generic $p$-dimensions (and Algorithm \ref{alg:adaptive} for $p = 1$). Then, for any $k$, 
$$
\small 
\begin{aligned} 
\Big(\hat{\beta}_{k,1}, \cdots, \hat{\beta}_{k, T} \Big) \perp \Big\{Y_{i,t}^{(k)}(\mathbf{d}), X_i^{(k)}, \mathbf{d} \in \{0,1\}^N\Big\}_{i \in \{1, \cdots, N\}, t \le T}. 
\end{aligned}
$$
\end{lem}

\begin{proof}[Proof of Lemma \ref{lem:1a}]
Let
$ 
G:=K/2, P_g:=\{2g-1,2g\}, g\in\{1,\ldots,G\},
$ 
denote the cluster pairs. Define the successor map on pairs by
\[
s(g)=
\begin{cases}
g+1, & g<G,\\
1, & g=G,
\end{cases}
\qquad
s^r(g):=\underbrace{s\circ\cdots\circ s}_{r\text{ times}}(g).
\]
For a set of pair indices \(A\subseteq\{1,\ldots,G\}\), let \(\mathcal F_A\) denote the sigma-field generated by the covariates, potential outcomes, and experimental randomization draws in all pairs \(P_a\), \(a\in A\). Let \(\mathcal Z_g\) denote the sigma-field generated by the covariates and potential outcomes in pair \(P_g\). By independence across clusters, and by independent experimental randomization, \(\mathcal F_A\perp \mathcal Z_g\) whenever \(g\notin A\).

It is enough to prove that, for every pair \(P_g\), the center parameter used in that pair is independent of \(\mathcal Z_g\). The perturbation sign and the coordinate perturbation are deterministic functions of the pair position and the coordinate, so this implies the desired statement for each cluster \(k\in P_g\).

Write \(b_g^w\) for the common center parameter \(\check\beta_h^w\) used for the two clusters \(h\in P_g\). This is common within a pair by construction of Algorithm \ref{alg:adaptive_complete}. Write \(\widehat M_{g,w}\) for the vector of estimated marginal effects from pair \(P_g\) during wave \(w\). In pair notation, the update in Algorithm \ref{alg:adaptive_complete} can be written as
$ 
b_g^w
=
P_{\mathcal B_1,\mathcal B_2-\eta_n}
\left[
b_g^{w-1}
+
\alpha_{s(g),w-1}\widehat M_{s(g),w-1}
\right],
 b_g^0=\beta_0 .
$ 
We prove by induction that, for every \(w\in\{0,\ldots,\check T\}\),
$ 
b_g^w
\quad\text{is } \mathcal F_{A_g(w)}\text{-measurable},
A_g(w):=\{s(g),s^2(g),\ldots,s^w(g)\},
$ 
with \(A_g(0)=\varnothing\).

For \(w=0\), \(b_g^0=\beta_0\) is deterministic, so the claim holds. Suppose the claim holds for \(w-1\), for all pairs. The estimator \(\widehat M_{s(g),w-1}\) is computed from the data in pair \(P_{s(g)}\) during wave \(w-1\), using the policy center \(b_{s(g)}^{w-1}\). By the induction hypothesis applied to pair \(s(g)\),
$ 
b_{s(g)}^{w-1}
\text{ is }
\mathcal F_{\{s^2(g),\ldots,s^w(g)\}}\text{-measurable}.
$ 
Therefore \(\widehat M_{s(g),w-1}\) is measurable with respect to
$ 
\mathcal F_{\{s(g),s^2(g),\ldots,s^w(g)\}}.
$ 
Also, by the induction hypothesis,
$ 
b_g^{w-1}
\text{ is }
\mathcal F_{\{s(g),\ldots,s^{w-1}(g)\}}\text{-measurable}.
$ 
Since \(b_g^w\) is a deterministic function of \(b_g^{w-1}\) and \(\widehat M_{s(g),w-1}\), it follows that \(b_g^w\) is \(\mathcal F_{A_g(w)}\)-measurable. This proves the induction claim.

Now use the cluster-count condition. Since
$ 
\check T+1=T/p+1\le K/2=G,
$ 
we have \(\check T\le G-1\). Hence, for every \(w\le \check T\),
$ 
g\notin A_g(w)=\{s(g),\ldots,s^w(g)\};
$ 
that is, the circular chain of successor pairs used to construct \(b_g^w\) has not yet returned to the original pair \(P_g\). Therefore \(b_g^w\) is measurable with respect to a sigma-field generated only by pairs different from \(P_g\). By independence across clusters,
$ 
b_g^w\perp \mathcal Z_g .
$ 
For any cluster \(k\in P_g\), the parameter \(\hat\beta_{k,t}\) used at any period \(t\le T\) is a deterministic function of \(b_g^w\), the coordinate being perturbed, and the deterministic sign assigned to cluster \(k\) within the pair. Therefore
$ 
\Big(\hat{\beta}_{k,1},\ldots,\hat{\beta}_{k,T}\Big)
\perp
\Big\{
Y_{i,t}^{(k)}(\mathbf d),X_i^{(k)}
:
\mathbf d\in\{0,1\}^N,\ i\in\{1,\ldots,N\},\ t\le T
\Big\}.
$ 
This proves the lemma.
\end{proof}

The following lemma follows by standard properties of the gradient descent algorithm. Recall the definition of $\beta^*$ in Equation \eqref{eqn:estimand} (main text) and $\beta_w^{**}$ in Equation \eqref{eqn:beta_start2}. 

  \begin{lem} \label{lem:gradient_descent} For the learning rate $\alpha_w = J/(w+1)$, and $\beta_w^{**}$ as in Equation \eqref{eqn:beta_start2}, under Assumption \ref{ass:regularity_basic}, \ref{ass:bounded}, \ref{ass:strong_concavity}, with $\sigma$-strong concavity, for $J \ge 1/\sigma$,  then  
  $
  ||\beta_{w}^{**} - \beta^*||^2 \le \frac{L p}{w}, 
  $ 
  where $L = \max\{2(\mathcal{B}_2 - \mathcal{B}_1)^2,  G^2 J^2, 1\}$, $G = \sup_{ \beta} ||\frac{\partial W(\beta)}{\partial \beta}||_{\infty}$. 
  \end{lem}

\begin{proof}[Proof of Lemma \ref{lem:gradient_descent}]    The proof follows standard arguments of the gradient descent method \citep{bottou2018optimization}, where, here, we leverage strong concavity and the assumption that the gradient is uniformly bounded.  Denote $\beta^*$ the estimand of interest and recall the definition of $\beta^{**}_w$ in Equation \eqref{eqn:beta_start2}, and since it is an interior the marginal effect evaluated at $\beta^*$ equals zero. We define $\nabla_{w-1}$ the gradient evaluated at $\beta_{w-1}^{**}$. By strong concavity,  
 
  \begin{equation} \label{eqn:helper222} 
  \small 
  \begin{aligned} 
\left(\frac{\partial W\left(\beta_{w-1}^{* *}\right)}{\partial \beta}-\frac{\partial W\left(\beta^{*}\right)}{\partial \beta}\right)\left(\beta^{*}-\beta_{w-1}^{* *}\right) \geq \sigma\left\|\beta_{w-1}^{* *}-\beta^{*}\right\|_{2}^{2}. 
 \end{aligned} 
  \end{equation} 
 In addition, we can write: (because $\beta^* \in [\mathcal{B}_1, \mathcal{B}_2]^p$)
 $$
 \small 
 \begin{aligned} 
 ||\beta_w^{**} - \beta^*||_2^2 = ||\beta^* - P_{\mathcal{B}_1, \mathcal{B}_2}(\beta_{w-1}^{**} + \alpha_{w-1} \nabla_{w-1})||_2^2 \le  ||\beta^* - \beta_{w-1}^{**} - \alpha_{w-1} \nabla_{w-1}||_2^2.  
 \end{aligned} 
 $$
 Observe that we have 
$
||\beta^* - \beta_w^{**}||_2^2 \le 
||\beta^* - \beta_{w-1}^{**}||_2^2 - 2\alpha_{w-1} \nabla_{w-1} (\beta^* - \beta_{w-1}^{**}) + \alpha_{w-1}^2 ||\nabla_{w-1}||_2^2.  
 $
 Using Equation \eqref{eqn:helper222}, we can write 
 $
 ||\beta_{w+1}^{**} - \beta^*||_2^2 \le (1 - 2 \sigma \alpha_w) ||\beta_{w}^{**} - \beta^*||_2^2 +  \alpha_w^2 G^2 p . 
 $ 
 we prove the statement by induction. At time $w = 1$, the statement trivially holds. For general $w$,   
 $$
 \small 
 \begin{aligned} 
 ||\beta_{w+1}^{**} - \beta^*||_2^2\le (1 - 2\frac{1}{w + 1}) \frac{L p}{w} +  \frac{L p}{(w+1)^2} & \le  (1 - 2\frac{1}{w + 1}) \frac{L p}{w} +  \frac{L p}{w(w+1)} = (1 - \frac{1}{w + 1}) \frac{L p}{w}.
 \end{aligned}  
 $$ 
 The right-hand side above equals $\frac{L p}{w+1}$, completing the proof. \end{proof} 
 
 \begin{lem}  \label{lem:fg2}  Let Assumptions \ref{ass:ass_0}, \ref{ass:regularity_basic}, \ref{ass:bounded} hold. Let $\alpha_w$ be as defined in Lemma \ref{lem:gradient_descent}.  Then with probability at least $1 - \delta$, for any $\delta \in (0,1)$, for all $w \ge 1$, assuming $\beta_0 \in [\mathcal{B}_1 + \eta_n, \mathcal{B}_2 - \eta_n]^p$,
 
$$
 \small 
 \begin{aligned} 
 \Big|\Big| \check{\beta}_k^w - \beta_w^{**} \Big|\Big|_{\infty} \le c_0 P_{w}(\delta)
 \end{aligned} 
$$
 where $P_{1}(\delta) = \alpha_1 \times \mathrm{err}(\delta)$ and $P_{w}(\delta) =  B p \alpha_{w} P_{w-1}(\delta) + P_{w-1}(\delta) +  \alpha_w \mathrm{err}(\delta)$, and $\mathrm{err}(\delta)\le c_0 \Big(\sqrt{ \gamma_N\frac{\log(p \check{T} K/\delta)}{\eta_n^2 n}} + p\eta_n\Big)$,  for finite constants $B < \infty, c_0 < \infty$ independent of $(n, N, \gamma_N, \delta, t, T, k, K, p)$.  
 \end{lem}

\begin{proof}[Proof of Lemma \ref{lem:fg2}]
Under Assumption \ref{ass:regularity_basic}, let
\[
B
:=
\sup_{\beta\in\mathcal B}
\max_{j,\ell\in\{1,\ldots,p\}}
\left|
\frac{\partial^2 W(\beta)}
{\partial \beta_j\partial \beta_\ell}
\right| <\infty .
\]
By Lemmas \ref{lem:2} and \ref{lem:1a}, and by a union bound over
\(j\in\{1,\ldots,p\}\), \(k\in\{1,\ldots,K\}\), and
\(w\in\{1,\ldots,\check T\}\), with probability at least \(1-\delta\),
\[
\left|
\check M_{k,w}^{(j)}
-
M^{(j)}\left(\check\beta_{k+2}^w\right)
\right|
\le
c_0
\left(
\sqrt{
\gamma_N
\frac{\log(p\check T K/\delta)}
{\eta_n^2 n}
}
+\eta_n
\right)
\]
for all \(j,k,w\). Therefore, on the same event,
\[
\left\|
\check M_{k,w}
-
M\left(\check\beta_{k+2}^w\right)
\right\|_\infty
\le
\mathrm{err}(\delta),
\]
where, increasing \(c_0\) if necessary,
\[
\mathrm{err}(\delta)
\le
c_0
\left(
\sqrt{
\gamma_N
\frac{\log(p\check T K/\delta)}
{\eta_n^2 n}
}
+
p\eta_n
\right).
\]
We prove the result on this event.

We use induction on \(w\). For \(w=1\), the algorithm initializes all pairs at the common value \(\beta_0\). Since the wave-zero gradient is initialized at zero,
\[
\check\beta_k^1=\beta_0,
\qquad
\beta_1^{**}=\beta_0.
\]
Therefore
\[
\left\|
\check\beta_k^1-\beta_1^{**}
\right\|_\infty
=
0
\le c_0P_1(\delta),
\]
which proves the base case.

Suppose now that, for some \(w\ge2\),
\[
\left\|
\check\beta_k^{w-1}-\beta_{w-1}^{**}
\right\|_\infty
\le
c_0P_{w-1}(\delta)
\]
for all \(k\). By the recursive update in Algorithm \ref{alg:adaptive_complete},
\[
\check\beta_k^w
=
P_{\mathcal B_1+\eta_n,\mathcal B_2-\eta_n}
\left[
\check\beta_k^{w-1}
+
\alpha_{w-1}\check M_{k,w-1}
\right],
\]
where \(\check M_{k,w-1}\) is the gradient estimated in the subsequent pair at wave \(w-1\). The oracle recursion satisfies
\[
\beta_w^{**}
=
P_{\mathcal B_1,\mathcal B_2}
\left[
\beta_{w-1}^{**}
+
\alpha_{w-1}M(\beta_{w-1}^{**})
\right].
\]
Since projection on a rectangle is non-expansive in the sup norm, and since the two projection rectangles differ by at most \(\eta_n\) in each coordinate,
\[
\begin{aligned}
\left\|
\check\beta_k^w-\beta_w^{**}
\right\|_\infty
&\le
\left\|
\check\beta_k^{w-1}
-
\beta_{w-1}^{**}
\right\|_\infty
+
\alpha_{w-1}
\left\|
\check M_{k,w-1}
-
M(\beta_{w-1}^{**})
\right\|_\infty
+
c_0\eta_n .
\end{aligned}
\]
Because \(P_{w-1}(\delta)\ge P_1(\delta)=\alpha_1\mathrm{err}(\delta)\) and \(\mathrm{err}(\delta)\) contains the term \(p\eta_n\), the last term is absorbed into \(c_0P_{w-1}(\delta)\). Thus,
\[
\begin{aligned}
\left\|
\check\beta_k^w-\beta_w^{**}
\right\|_\infty
&\le
c_0P_{w-1}(\delta)
+
\alpha_{w-1}
\left\|
\check M_{k,w-1}
-
M(\beta_{w-1}^{**})
\right\|_\infty .
\end{aligned}
\]

Next,
\[
\begin{aligned}
\left\|
\check M_{k,w-1}
-
M(\beta_{w-1}^{**})
\right\|_\infty
&\le
\left\|
\check M_{k,w-1}
-
M(\check\beta_{k+2}^{w-1})
\right\|_\infty
\\
&\quad+
\left\|
M(\check\beta_{k+2}^{w-1})
-
M(\beta_{w-1}^{**})
\right\|_\infty .
\end{aligned}
\]
The first term is bounded by \(\mathrm{err}(\delta)\). For the second term, by the mean-value theorem and the definition of \(B\),
\[
\left\|
M(\check\beta_{k+2}^{w-1})
-
M(\beta_{w-1}^{**})
\right\|_\infty
\le
Bp
\left\|
\check\beta_{k+2}^{w-1}
-
\beta_{w-1}^{**}
\right\|_\infty .
\]
By the induction hypothesis applied to the pair \(k+2\),
\[
\left\|
\check\beta_{k+2}^{w-1}
-
\beta_{w-1}^{**}
\right\|_\infty
\le
c_0P_{w-1}(\delta).
\]
Therefore,
\[
\left\|
\check M_{k,w-1}
-
M(\beta_{w-1}^{**})
\right\|_\infty
\le
\mathrm{err}(\delta)
+
c_0BpP_{w-1}(\delta).
\]
Combining the previous displays gives
\[
\begin{aligned}
\left\|
\check\beta_k^w-\beta_w^{**}
\right\|_\infty
&\le
c_0P_{w-1}(\delta)
+
c_0Bp\alpha_{w-1}P_{w-1}(\delta)
+
c_0\alpha_{w-1}\mathrm{err}(\delta).
\end{aligned}
\]
Since \(\alpha_w=J/(w+1)\), for \(w\ge2\) we have \(\alpha_{w-1}\le 2\alpha_w\). Hence, increasing \(c_0\) if necessary,
\[
\left\|
\check\beta_k^w-\beta_w^{**}
\right\|_\infty
\le
c_0
\left[
P_{w-1}(\delta)
+
Bp\alpha_wP_{w-1}(\delta)
+
\alpha_w\mathrm{err}(\delta)
\right].
\]
By the definition of \(P_w(\delta)\),
\[
P_w(\delta)
=
Bp\alpha_wP_{w-1}(\delta)
+
P_{w-1}(\delta)
+
\alpha_w\mathrm{err}(\delta),
\]
and therefore
\[
\left\|
\check\beta_k^w-\beta_w^{**}
\right\|_\infty
\le
c_0P_w(\delta).
\]
This completes the induction and proves the lemma.
\end{proof}

 \begin{lem} \label{thm:regret2} Let the conditions in  Lemmas \ref{lem:gradient_descent},  \ref{lem:fg2} hold. Then  with probability at least $1 - \delta$, for any $\delta \in (0,1)$, for all $w \ge 1, k \in \{1, \cdots, K\}$,  for finite constants $B, L< \infty$
  \begin{equation} \label{eqn:lemmab9} 
  \small 
  \begin{aligned} 
 ||\beta^{*} - \check{\beta}_{k}^{w}||_2^2 \le \frac{L p}{w} +p  w^{p B} e^{Bp} \times c_0 \Big(\gamma_N \frac{\log(p \check{T}K/\delta)}{\eta_n^2 n} + p^2 \eta_n^2 \Big),
 \end{aligned}
 \end{equation} 
  for a finite constant $c_0 < \infty$ independent of $(n, N, \gamma_N, \delta, t, T, k, K, p)$, 
 \end{lem} 
 \begin{proof}[Proof of Lemma \ref{thm:regret2}] \renewcommand{\qedsymbol}{} We can write 
 $
 ||\beta^{*} - \check{\beta}_{k}^{w}||_2^2 \le 
 2 ||\beta^{*} - \beta_{w}^{**}||_2^2 + 2 ||\check{\beta}_{k}^{w} - \beta_{w}^{**}||_2^2. 
 $
 The first component on the right-hand side is bounded by Lemma \ref{lem:gradient_descent}. Using Lemma \ref{lem:fg2}, we bound the second component with probability at least $1 - \delta$, as follows 
 $
 ||\check{\beta}_{k}^{w} - \beta_{w}^{**}||_2^2 \le 
 p ||\check{\beta}_{k}^{w} - \beta_{w}^{**}||_{\infty}^2   \le p c_0 (P_{w}^2(\delta)), 
$ 
for a finite constant $c_0$.
We conclude the proof by characterizing $P_{w}(\delta)$ as defined in Lemma \ref{lem:fg2}. Following Lemma \ref{lem:fg2}, we can define recursively $P_{w}(\delta)$ for any $1 \le w \le \check{T}$ (recall that $\alpha_{w} \propto 1/w$) as
$$
\small 
\begin{aligned} 
P_{w}(\delta) \le (1 + \frac{B p}{w}) P_{w - 1}(\delta) + \frac{1}{w} \mathrm{err}_n(\delta), \quad P_1(\delta) = \mathrm{err}_n(\delta). 
\end{aligned} 
$$  
where $\mathrm{err}_n(\delta) \le c_0 \left( \sqrt{\gamma_N \frac{\log(p \check{T} K/\delta)}{\eta_n^2 n}} + p \eta_n \right)$. Take, without loss of generality, $B \ge 1$ (if $B < 1$, we can find an upper bound with a different $B =1$). 
Substituting recursively each term, we can write
$
P_{w}(\delta) \le \mathrm{err}_n(\delta) \sum_{s=1}^w \frac{1}{s} \prod_{j=s}^w (\frac{B p}{j} + 1). 
$  
We now write 
$$
\small 
\begin{aligned} 
\sum_{s=1}^w \frac{1}{s} \prod_{j=s}^w (\frac{B p}{j} + 1)  & \le  \sum_{s=1}^w \frac{1}{s} \exp(\sum_{j=s}^w \frac{B p}{j}) \le  \sum_{s=1}^w \frac{1}{s} e^{\Big(Bp + B p \log(w) - B p \log(s) \Big)}   & \lesssim \sum_{s=1}^w \frac{1}{s^2} e^{B p \log(w) + Bp} \lesssim w^{B p} e^{Bp},   
\end{aligned} 
$$ 
completing the proof.
 \end{proof}

 \subsection{Proofs of the theorems}  \label{sec:const1}

For the following proofs, define a finite constant $c_0 < \infty$ independent of $(n, N, \gamma_N, \delta, t, T, k, K, p)$.

\subsubsection{Proof of Theorem \ref{thm:const1}} \label{sec:proof1}

First observe that for any $\delta \in (0,1)$, 
$
\Big|\mathbb{E}\Big[\widehat{M}_{k}(\beta) \Big] - M^{(1)}(\beta)\Big| \le c_0 \eta_n, \quad P\Big(\Big|\widehat{M}_{k}(\beta) - M^{(1)}(\beta)\Big| > c_0 \Big(\eta_n + \sqrt{\frac{\gamma_N \log(\gamma_N/\delta)}{n \eta_n^2}}\Big)\Big) \le \delta, 
$ 
with the proof of the first claim follows similarly as in the proof of Lemma \ref{lem:expect} and the second claim being a direct corollary of Lemma \ref{lem:2}. Finally observe that with probability at least $1 - \delta$, for any $\delta \in (0,1)$, we also have 
$
\Big|\widehat{M}_{k}(\beta) - M^{(1)}(\beta)\Big| \le c_0 \eta_n + c_0 \Big( \sqrt{\frac{\rho_n}{ \delta n \eta_n^2}}\Big),  
$
by Chebyshev inequality and the triangular inequality. 

\vspace{-2mm}

\subsubsection{Proof of Theorem \ref{thm:inference1}} \label{sec:inference_thm}

Consider Algorithm \ref{alg:my_pilot} for a generic coordinate $j$. Let $\beta$ be the target parameter as in Algorithm \ref{alg:my_pilot}. 
By Lemma \ref{lem:expect}, we have 
$
\Big|\mathbb{E}[\widehat{M}_{k}^{(j)}] -  M^{(j)}(\beta) \Big| \le  c_0 \eta_n.  
$
we have  
\begin{equation} \label{eqn:rhs_inference}
\small 
\begin{aligned} 
\Big|\frac{\widehat{M}_{k}^{(j)} -\mathbb{E}[\widehat{M}_{k}^{(j)} ]}{\sqrt{\mathrm{Var}(\widehat{M}_{k}^{(j)})}} - \frac{\widehat{M}_{k}^{(j)} - M^{(j)}(\beta)}{\sqrt{\mathrm{Var}(\widehat{M}_{k}^{(j)})}} \Big| \le c_0 \left(\frac{\eta_n  }{\sqrt{\mathrm{Var}(\widehat{M}_{k}^{(j)})}}\right). 
\end{aligned}
\end{equation} 
Observe that under Assumption \ref{ass:var}, 
$
\frac{\eta_n}{\sqrt{\mathrm{Var}(\widehat{M}_{k}^{(j)})}} \le (C_k + C_{k + 1}) \eta_n^2 \times \sqrt{n} , 
$
because $\mathrm{Var}(\sqrt{n} \widehat{M}_{k}^{(j)}) \ge (C_k + C_{k + 1}) \rho_n/\eta_n^2$, where $\rho_n \ge 1$ by Assumption \ref{ass:var} (i.e., the variance is not degenerate), and $(C_k + C_{k + 1}) > 0$ are positive constants in Assumption \ref{ass:var}. For $\eta_n = o(n^{-1/4})$, the right-hand side in Equation \eqref{eqn:rhs_inference} is $o(1)$. 
 
Observe now that by Assumption \ref{ass:ass_0}, $Y_{i,t}^{(k)} - Y_{i,0}^{(k)}$ form a locally dependent graph of maximum degree of order $\mathcal{O}(\gamma_N)$. By Lemma \ref{lem:locally_dep}, 
$$
\small 
\begin{aligned} 
& d_W\Big(\frac{1}{2 \eta_n \sqrt{\mathrm{Var}(\widehat{M}_{k}^{(j)})}} \Big[\bar{Y}_t^{(k)} - \bar{Y}_0^{(k)}\Big] - \frac{1}{2 \eta_n \sqrt{\mathrm{Var}(\widehat{M}_{k}^{(j)})}} \Big[\bar{Y}_t^{(k + 1)} - \bar{Y}_0^{(k+1)}\Big] - \frac{\mathbb{E}[\widehat{M}_{k}^{(j)}]}{\sqrt{\mathbb{V}(\widehat{M}_{k}^{(j)})}}, \mathcal{Z}\Big) \\ &\le \underbrace{\frac{\gamma_N^2}{\tilde{\sigma}^3} \sum_{h \in \{k, k+1\}} \sum_{i = 1 }^n  \Big[\mathbb{E}\Big| \frac{Y_{i,t}^{(k)} - Y_{i,0}^{(k)}}{\eta_n n}\Big|^3\Big]}_{(A)} + \underbrace{\frac{\sqrt{28} \gamma_N^{3/2}}{\sqrt{\pi} \tilde{\sigma}^2}  \sqrt{\sum_{i  = 1}^n \Big[\mathbb{E}\Big| \frac{Y_{i,t}^{(k)} - Y_{i,0}^{(k)}}{\eta_n n}\Big|^4 \Big]}}_{(B)}, 
\end{aligned} 
$$
where $\mathcal{Z} \sim \mathcal{N}(0,1), \quad \tilde{\sigma}^2 = \mathrm{Var}\Big(\frac{1}{2 \eta_n}  \Big[\bar{Y}_t^{(k)} - \bar{Y}_0^{(k)}\Big] - \frac{1}{2 \eta_n} \Big[\bar{Y}_t^{(k + 1)} - \bar{Y}_0^{(k+1)}\Big]\Big)$, 
and $d_W$ denotes the Wasserstein metric. Under Assumption \ref{ass:var}, $\tilde{\sigma}^2 \ge (C_k + C_{k'})\frac{1}{n \eta_n^2}$ for a constant $C_k + C_{k'} > 0$, and the third and fourth moment are bounded. Hence, we have for a constant $C' < \infty$, 
$ 
(A) \le C' \frac{\gamma_N^2}{n^3 \eta_n^3} \times n^{5/2} \eta_n^3 \lesssim \frac{\gamma_N^2}{n^{1/2}} \rightarrow 0.   
$
Similarly, for (B), we have 
$
(B) \le c' \frac{\gamma_N^{3/2} n \eta_n^2 }{\eta_n^2 n^{3/2}} \lesssim \frac{\gamma_N^{3/2} }{ n^{1/2}} \rightarrow 0.  
$

\subsubsection{Proof of Theorem \ref{thm:bias_direct}} \label{sec:proof3}

\paragraph{Direct effect and reward} By Assumptions \ref{ass:ass_0}, \ref{ass:ass_0b}, we can write (we omit the superscript $k$ from $X^{(k)}$ for the sake of brevity)
$$
\small 
\begin{aligned} 
& \mathbb{E}\Big\{\frac{1}{2 n} \sum_{i=1}^n \Big[\frac{D_{i,1}^{(k + 1)} Y_{i,1}^{(k + 1)}}{\pi(X_i, \beta + \eta_n \underline{e}_1)} - \frac{(1 - D_{i,1}^{(k + 1)}) Y_{i,1}^{(k + 1)}}{1 - \pi(X_i, \beta + \eta_n \underline{e}_1)}\Big] + \frac{1}{2 n} \sum_{i=1}^n\Big[\frac{D_{i,1} Y_{i,1}^{(k)}}{\pi(X_i, \beta - \eta_n \underline{e}_1)} - \frac{(1 - D_{i,1}^{(k)}) Y_{i,1}^{(k)}}{1 - \pi(X_i, \beta - \eta_n \underline{e}_1)} \Big]\Big\} \\ 
&= \underbrace{\frac{1}{2} \int \Big[m(1, x, \beta + \eta_n \underline{e}_1) - m(0, x, \beta + \eta_n \underline{e}_1) + m(1, x, \beta - \eta_n \underline{e}_1) - m(0, x, \beta - \eta_n \underline{e}_1)\Big] dF_X(x)}_{(i)}.  
\end{aligned} 
$$ 
The last equality follows from Assumption \ref{ass:ass_0b} and exogeneity of $\beta$. 
By the mean-value theorem
  $$
  \small 
  \begin{aligned} 
  (i) & = \int \Big[ m(1,x, \beta) - m(0, x, \beta) + \frac{\partial m(1, x, \beta)}{2 \partial \beta^1} \eta_n - \frac{\partial m(0, x, \beta)}{2\partial \beta^1} \eta_n - \frac{\partial m(1, x, \beta)}{2 \partial \beta^1} \eta_n  + \frac{\partial m(0, x, \beta)}{2 \partial \beta^1} \eta_n\Big] dF_X(x) \\ & + \mathcal{O}(\eta_n^2) = \int \Big[m(1,x, \beta) -  m(0, x, \beta)\Big] dF_X(x) + o(n^{-1/2}) \quad (\because \eta_n = o(n^{-1/4})).  
  \end{aligned} 
$$ 
The case for $\bar{W}_n(\beta)$ follows verbatim and omitted for brevity, where we also need to account for the difference in time fixed effect and the effect at the baseline intervention $W(\beta_0)$, all included as part of the additive constant $c_1$.

\paragraph{Marginal spillover effect} Finally, consider studying
$$
\small 
\begin{aligned} 
(I) = \mathbb{E}\Big\{\frac{1}{2 n} \sum_{h \in \{k, k+1\}} \frac{v_h}{\eta_n} \sum_{i=1}^n  \Big[ \frac{Y_{i,1}^{(h)} (1 - D_{i,1}^{(h)})}{1 - \pi(X_i^{(h)}, \beta + v_h \eta_n \underline{e}_1)} - \bar{Y}_0^{(h)}\Big] \Big\}, 
\end{aligned}
$$
where $v_h = 1 \{h = k\} - 1\{h = k +1\}$. 
Using Assumption \ref{ass:ass_0b}, similarly to the derivation of Lemma \ref{lem:expect}, we can write $(I)$ equal to 
$
\frac{1}{2 \eta_n} \int [m(0, x, \beta + \eta_n \underline{e}_1) - m(0, x, \beta - \eta_n \underline{e}_1)] dF_X(x). 
$ 
Note that from the mean value theorem, and Assumption \ref{ass:ass_0b}
$
m(0, x, \beta + \eta_n \underline{e}_1) - m(0, x, \beta - \eta_n \underline{e}_1) = 
m(0, x, \beta ) - m(0, x, \beta)  + 2 \frac{\partial m(0, x, \beta)}{\partial \beta^1} \eta_n + \mathcal{O}(\eta_n^2)
$ 
which completes the proof.

\vspace{-3mm} 

 \subsubsection{Proof of Theorem \ref{thm:rate2b}} \label{sec:proof5}

\paragraph{In-sample regret} By the mean value theorem and Assumption \ref{ass:regularity_basic}, we have 
 $
 \sum_{w=1}^{\check{T}} W(\beta^*) - W(\check{\beta}_{k}^w) \le \bar{C} p  \sum_{w=1}^{\check{T}} ||\beta^* - \check{\beta}_{k}^w||_2^2   ,
 $ 
 for a finite constant $\bar{C} < \infty$. This follows from two facts: first, in the mean value expansion $\frac{\partial W(\beta^*)}{\partial \beta} = 0$; second, for the quadratic component, the Hessian is uniformly bounded (Assumption \ref{ass:regularity_basic}), and therefore the maximum eigenvalue of the Hessian can be bounded by $p$ times a constant equal to the largest entry of the Hessian. This latter fact implies that the second order term in the quadratic expansion for $ W(\beta^*) - W(\check{\beta}_{k}^w)$ obtained through the mean value theorem is bounded by $\bar{C} p  ||\beta^* - \check{\beta}_{k}^w||_2^2$,  for a constant $\bar{C}$ proportional to the largest entry of the Hessian. By Lemma \ref{thm:regret2}, choosing $\delta = 1/n$, and for $n$ satisfying the conditions in the statement of the Theorem, it follows that for all $k$, 
 $
 \sum_{w  = 1}^{\check{T}} W(\beta^*) - W(\check{\beta}_{k}^w) \le \sum_{w  = 1 }^{\check{T}} \frac{p^2 \kappa'}{w} \lesssim  p^2 \log(\check{T} + 1) 
 $
 for $\kappa' < \infty$ being a finite constant. The proof completes.

\paragraph{Out-of-sample regret} Consider Lemma \ref{thm:regret2} where we choose $\delta = 1/n$. We can write for each $k$ 
 $
 ||\beta^* - \check{\beta}_k^{\check{T}}||_2^2 \le \frac{p L}{\check{T}} + c_0(1/\check{T}), 
 $ 
 for a finite constant $L < \infty$, 
 under the conditions for $n$ stated in Theorem \ref{thm:rate2b} by Lemma \ref{thm:regret2} (since the second component in the right-hand side of Equation \eqref{eqn:lemmab9} in Lemma \ref{thm:regret2} is bounded by $c_0/\check{T}$).  Note that 
 $
 ||\beta^* - \frac{1}{K} \sum_{k=1}^K \check{\beta}_k^{\check{T}}||_2^2 \le 
 \frac{1}{K} \sum_{k=1}^K ||\beta^* -  \check{\beta}_k^{\check{T}}||_2^2  
 $ 
 by Jensen's inequality. We can write $W(\beta^*) - W(\hat{\beta}) \le c_0 p  ||\beta^* - \frac{1}{K} \sum_{k=1}^K \check{\beta}_k^{\check{T}}||_2^2$ from Assumption \ref{ass:regularity_basic} and the mean-value theorem, for a finite constant $c_0 < \infty$. The proof is complete.

\subsubsection{Proof of Theorem \ref{thm:rate2bb}} \label{sec:exponential2} First, note that for a finite constant $c_0$, under Assumption \ref{ass:regularity_basic} and Assumption \ref{ass:strong_concavity}
$
W(\beta^*) - W(\hat{\beta}^*) \le c_0 ||\beta^* - \hat{\beta}||^2 \le c_0 p \frac{1}{K} \sum_{k=1}^K ||\beta^* - \check{\beta}_k^{\check{T}}||^2 
$ 
where in the first inequality we used strong concavity (gradient equals zero) together with Assumption \ref{ass:regularity_basic} (bounded Hessian) and the mean-value theorem, and in the second equality  we used Jensen's inequality. 
 Define $\beta_w^{**}$ as in Equation \eqref{eqn:beta_start2}, where, however, the learning rate is chosen so that $\alpha_w = 1/\tau$. 
we can write  
 $
 ||\beta^{*} - \check{\beta}_{k}^{\check{T}}||_2^2 \le 
 2 ||\beta^{*} - \beta_{\check{T}}^{**}||_2^2 + 2 ||\check{\beta}_{k}^{\check{T}} - \beta_{\check{T}}^{**}||_2^2. 
 $
 The first component is bounded by Theorem 3.10 in \cite{bubeck2014convex} (using the fact that $\mathcal{B}$ is compact) as follows:
 $
 ||\beta^{*} - \beta_{\check{T}}^{**}||_2^2 \le c_{0,p} \exp(-c_{0,p}' 2(\check{T} + 1)) = c_{0,p} \exp(-K c_{0,p}') 
 $ 
 for finite constants $0< c_{0,p}, c_{0,p}' < \infty$ that do not depend on $(N,\gamma_N, K, \check{T})$, where we used the fact that $2(\check{T} + 1) = K$. Using Lemma \ref{lem:fg2}, we bound the second component with probability at least $1 - \delta$, as follows (for any $w \le \check{T}$) 
 $
 ||\check{\beta}_{k}^{w} - \beta_{w}^{**}||_2^2 \le 
 p ||\check{\beta}_{k}^{w} - \beta_{w}^{**}||_{\infty}^2   = p \times c_0(P_{w}^2(\delta)),  
$ 
for a finite constant $c_0 < \infty$. 
We conclude the proof by characterizing $P_{w}(\delta)$ as defined in Lemma \ref{lem:fg2}. Following Lemma \ref{lem:fg2}, we can define recursively $P_{w}(\delta)$ for any $1 \le w \le \check{T}$ as
$$
\small 
\begin{aligned} 
P_{w}(\delta) \le (1 + B p) P_{w - 1}(\delta) + \mathrm{err}_n(\delta), \quad P_1(\delta) = \mathrm{err}_n(\delta). 
\end{aligned}
$$  
where $\mathrm{err}_n \le c_0( \sqrt{\gamma_N \frac{\log(p \check{T} K/\delta)}{\eta_n^2 n}} + p \eta_n )$, and $B > 0$ is a finite constant as in Lemma \ref{lem:fg2}. Using a recursive argument, we can write 
$
P_w(\delta) \lesssim w (1 + p B)^w \mathrm{err}_n(\delta). 
$  
The proof completes as we choose $n$ sufficiently large as stated in the theorem.

 \subsection{Corollaries} \label{sec:corollary}

\begin{proof}[Corollary \ref{thm:inference4b}]
Theorem \ref{thm:inference1} implies asymptotic normality. Because cluster are mutually independent this implies that the conditions of Theorem 3.1 
The result follows from Theorem 3.1 in \cite{canay2017randomization} (and the corresponding application in Section 4 in \cite{canay2017randomization}) are attained proving the result. 
\end{proof}

 \section{Additional results from the experiment} \label{app:more_experiment}

\subsection{Balance and robustness checks: Overview} \label{app:more_exp1}

Next, we provide an overview of additional analyses and balance checks in the Appendix.

\vspace{-4mm}

\paragraph{Balance} We use auxiliary data about farmers' baseline characteristics for \textit{all} farmers enrolled with PxD in the main experiment (more than 287,000 farmers) to test for homogeneity in covariates between different clusters, a relevant assumption in our framework. 
Appendix Table \ref{tab:summaries} reports the sample means across observable baseline covariates from program administrative data (each covariate is described below Table \ref{tab:summaries}).  
We test for differences in covariates between clusters exposed to different treatment probabilities. (When estimating marginal effects, it is easy to show that our framework only requires homogeneity restrictions between groups of clusters used to estimate the marginal effects (e.g., the group of clusters in different treatment exposures), but not necessarily between individual clusters having the same exposures.)    The relevant null hypothesis is that the expected value of each covariate in Table \ref{tab:summaries} in each cluster is the same across all clusters. We construct these tests via randomization inference formally described in Appendix \ref{sec:perm_tests}.
These tests are informative of whether such groups are comparable and are conducted with a large sample size ($n \approx 10,000$ on average in each tehsil). We observe similar estimates across all covariates. The smallest p-value is $0.21$, the median is above $0.5$, suggesting lack of  tehsil-level heterogeneity.

In Appendix Tables \ref{tab:summaries_response_rates}, \ref{tab:summaries_response_rates_v2}, \ref{tab:table_non_respondents} we also report balance table on response rates (both among all surveyed individuals and between respondents and non-respondents individuals), where results show substantial balance in relevant baseline characteristics. 

\vspace{-4mm} 
\paragraph{Treatment take-up and accuracy} The treatment group received approximately three times more frequent calls than the control group by design -- where the control group's calls were about other activities of the NGO. Appendix Table \ref{tab:preliminary_treatment2} shows that the larger number of calls does \textit{not} negatively affect response rates. Treated individuals present higher (and statistically significant at the $1\%$ level) response rates per call, engaging more with calls. 

Table \ref{tab:forecast} shows that forecast and real precipitation and temperature are strongly positively correlated, motivating our main focus on farmers' beliefs about PxD forecasts: PxD predicted and real weather follow very similar patterns, but beliefs about PxD forecasts are less noisy.  

\begin{table}[!htbp] \centering 
 
\scalebox{0.6}{\begin{tabular}{lccc}
\hline
Dependent variable: & Real Precipitation & Real Temperature Max & Correct Rain Forecast \\
\hline
Forecast Precipitation & 0.675*** &  &  \\
 & (0.020) &  &  \\
Forecast Temperature Max &  & 0.914*** &  \\
 &  & (0.029) &  \\
Constant & 1.585*** & 0.274 & 0.786*** \\
 & (0.069) & (1.112) & (0.005) \\
\hline
\multicolumn{4}{l}{Sample size: 22230} \\
\multicolumn{4}{l}{\textit{Note:} $^{*}$p$<$0.1; $^{**}$p$<$0.05; $^{***}$p$<$0.01} \\
\end{tabular} }
  \caption{Forecast vs real weather in 2022. Sample size equal to 22230. The first column uses precipitation as a continuous variable and the last column regresses the indicator of whether the forecast of whether it will rain correctly predicts whether it rains.   In parenthesis standard errors clustered at the tehsil level.} 
  \label{tab:forecast}
\end{table}

\vspace{-4mm} 
\paragraph{Parametric regression estimates} Our design allows for standard regression methods. We illustrate this in Tables \ref{tab:beliefs_forecast2}. Table \ref{tab:beliefs_forecast2} reports regression estimates of farmers' incorrect beliefs about temperature and rain with respect to forecast rain from PxD, for which we find mostly significant spillover effects. For parametric regression estimates we can use information from all clusters, including the lower saturation group, after appropriately controlling for the treatment probability, since also in this group treatment are randomized. 
 
\paragraph{Dynamics and additional outcomes} Appendix Table \ref{tab:dynamics} illustrates lack of dynamics on our primary outcome (temperature forecasts). In Table \ref{tab:dynamics}, we also illustrates effects on other outcomes. We collect information about predicted rain, asking ``Do you think it will rain in your
area tomorrow?" We use a binary indicator indicating whether the farmers incorrectly predict no rain and, instead, it rains or vice versa (or replies ``I do not know"). As shown in Table \ref{tab:dynamics}, we do not consider rain as the main target outcome because, different from temperature, this may exhibit treatment effect heterogeneity over time, since the experiment spans seasons of different rain intensity (dry and monsoon seasons). Finally, we use survey information about farming activities to show effects on these in Appendix \ref{app:more_experiment}.

\subsection{Balance tables}

\begin{table*}[!ht]\centering

\ra{1.3}
\scalebox{0.6}{
\begin{tabular}{lcccccc}
\hline
Saturation & \multicolumn{2}{c}{Medium} & \multicolumn{2}{c}{High} & \multicolumn{2}{c}{Medium/High} \\
First wave $\beta$ & 0.35 & 0.45 & 0.55 & 0.65 & $0.4 \pm 0.05$ & $0.6 \pm 0.05$ \\
\hline
Average \# of Farmers per tehsil & 11817 & 11137 & 10031 & 12795 & 11477 & 11519 \\
(p-value) & \multicolumn{2}{c}{(0.875)} & \multicolumn{2}{c}{(0.718)} & \multicolumn{2}{c}{(0.982)} \\
Education & 0.539 & 0.515 & 0.564 & 0.595 & 0.527 & 0.583 \\
(p-value) & \multicolumn{2}{c}{(0.875)} & \multicolumn{2}{c}{(0.875)} & \multicolumn{2}{c}{(0.211)} \\
Female & 0.016 & 0.019 & 0.021 & 0.030 & 0.018 & 0.026 \\
(p-value) & \multicolumn{2}{c}{(0.500)} & \multicolumn{2}{c}{(0.250)} & \multicolumn{2}{c}{(0.223)} \\
Acres & 4.158 & 4.159 & 4.468 & 4.067 & 4.158 & 4.228 \\
(p-value) & \multicolumn{2}{c}{(0.875)} & \multicolumn{2}{c}{(0.562)} & \multicolumn{2}{c}{(0.901)} \\
Male Dependants & 2.491 & 2.795 & 2.606 & 2.669 & 2.639 & 2.644 \\
(p-value) & \multicolumn{2}{c}{(0.593)} & \multicolumn{2}{c}{(0.937)} & \multicolumn{2}{c}{(0.988)} \\
Female Dependants & 2.485 & 2.750 & 2.645 & 2.637 & 2.613 & 2.641 \\
(p-value) & \multicolumn{2}{c}{(0.718)} & \multicolumn{2}{c}{(1)} & \multicolumn{2}{c}{(0.942)} \\
Age & 50.9 & 51.5 & 50.9 & 50.9 & 51.2 & 50.9 \\
(p-value) & \multicolumn{2}{c}{(0.937)} & \multicolumn{2}{c}{(1)} & \multicolumn{2}{c}{(0.970)} \\
Wheat & 0.644 & 0.510 & 0.470 & 0.546 & 0.579 & 0.515 \\
(p-value) & \multicolumn{2}{c}{(0.562)} & \multicolumn{2}{c}{(0.343)} & \multicolumn{2}{c}{(0.617)} \\
Whatsapp & 0.257 & 0.295 & 0.263 & 0.273 & 0.276 & 0.269 \\
(p-value) & \multicolumn{2}{c}{(0.812)} & \multicolumn{2}{c}{(0.937)} & \multicolumn{2}{c}{(0.702)} \\
\hline
\end{tabular}
}
\caption{Clusters' balance table. Each entry reports the average value (average between the clusters in a given group) of a given baseline characteristic for clusters exposed to different treatment probabilities. Each column collects results for two groups of clusters. For example, the first row/first column reports the average number of farmers in clusters with $\beta = 0.35$ (note that, similarly, also the last two columns also report the \textit{average} value across the clusters in a given group and not their sum).   P-values test the two-sided null hypothesis that the point estimates for the two groups are different and are computed via randomization inference. Covariates are the average number of individuals in the experiment in each cluster, whether individuals have only attended primary or no education, the percentage of female farmers, the size of landholding in acres, the number of male and female dependants, the farmer's age, whether farmers are also wheat farmers, and whether they have ``Whatsapp". (When estimating marginal effects, it is easy to show that our framework only requires homogeneity restrictions between groups of clusters used to estimate the marginal effects (e.g., the group of clusters in different treatment exposures), since we can redefine a cluster as a group of cluster exposed to the same treatment exposure).}    \label{tab:summaries} 
\end{table*}

\paragraph{Balance checks on all individuals enrolled in the program} In Table \ref{tab:summaries} we report the balance checks among \textit{all} individuals enrolled in the program, where we see no significant imbalance (note that in the first row we report the \textit{average} number of individuals per tehsils).

\vspace{-4mm} 
\paragraph{Effects on response rates} In Table \ref{tab:preliminary_treatment2} we report further evidence of the effectiveness of the intervention on response rates. 

\vspace{-4mm} 
\paragraph{Balance checks on surveyed individuals} In Table \ref{tab:summaries_response_rates}, we report balance checks among surveyed individuals. We see no imbalance in relevant covariates, with all tests being nonsignificant, with one single exception. This exception  is for the difference in the number of females for the Negative Perturbation group, where in one group, the number of females is $0.5\%$, and in the other group is $2\%$. Even in this scenario, the small proportion of females in either groups of clusters (at most $3\%$) makes this difference not economically relevant.\

\vspace{-4mm} 
\paragraph{Response rates}  In Table \ref{tab:summaries_response_rates_v2} we report balance table for individuals who answered the question about ``what do you expect the maximum (minimum) temperature will be tomorrow?". We see the same patterns as in the other tables. In Table \ref{tab:table_non_respondents} we report the difference in means in baseline characteristics between the respondents to the question about predicted temperature and the non-respondents. We do observe that respondents are similar in all characteristics  to non-respondents (and for which we cannot reject the null hypothesis that means in baseline characteristics are different) except that they tend to be those that are more likely to have installed the App ``Whatsapp" on their mobile phone ($40\%$ have Whatsapp between respondents and $30\%$ have Whatsapp between non-respondents). 

\begin{table*}[!ht]\centering

\ra{1.3}
\scalebox{0.6}{\begin{tabular}{lcccccc}
\hline
Saturation & \multicolumn{2}{c}{Medium} & \multicolumn{2}{c}{High} & \multicolumn{2}{c}{High/Medium} \\
First wave $\beta$ & 0.35 & 0.45 & 0.55 & 0.65 & $0.6 \pm 0.05$ & $0.4 \pm 0.05$ \\
\hline
Average \# of Sampled Farmers per tehsil & 143 & 161 & 140 & 158 & 149 & 152 \\
(p-value) & \multicolumn{2}{c}{(0.625)} & \multicolumn{2}{c}{(0.687)} & \multicolumn{2}{c}{(0.902)} \\
Education & 0.484 & 0.457 & 0.495 & 0.581 & 0.544 & 0.470 \\
(p-value) & \multicolumn{2}{c}{(0.718)} & \multicolumn{2}{c}{(0.562)} & \multicolumn{2}{c}{(0.169)} \\
Female & 0.004 & 0.022 & 0.015 & 0.030 & 0.024 & 0.014 \\
(p-value) & \multicolumn{2}{c}{(0)} & \multicolumn{2}{c}{(0.281)} & \multicolumn{2}{c}{(0.158)} \\
Acres & 4.37 & 4.35 & 4.567 & 3.912 & 4.195 & 4.360 \\
(p-value) & \multicolumn{2}{c}{(0.937)} & \multicolumn{2}{c}{(0.312)} & \multicolumn{2}{c}{(0.916)} \\
Male Dependants & 2.69 & 2.70 & 2.784 & 2.858 & 2.826 & 2.700 \\
(p-value) & \multicolumn{2}{c}{(0.750)} & \multicolumn{2}{c}{(0.937)} & \multicolumn{2}{c}{(0.628)} \\
Female Dependants & 2.71 & 2.68 & 2.737 & 2.552 & 2.632 & 2.698 \\
(p-value) & \multicolumn{2}{c}{(0.812)} & \multicolumn{2}{c}{(0.781)} & \multicolumn{2}{c}{(0.863)} \\
Age & 50.83 & 51.44 & 51.161 & 51.029 & 51.086 & 51.159 \\
(p-value) & \multicolumn{2}{c}{(0.937)} & \multicolumn{2}{c}{(0.968)} & \multicolumn{2}{c}{(0.992)} \\
Wheat & 0.636 & 0.513 & 0.461 & 0.572 & 0.524 & 0.571 \\
(p-value) & \multicolumn{2}{c}{(0.343)} & \multicolumn{2}{c}{(0.468)} & \multicolumn{2}{c}{(0.729)} \\
Whatsapp & 0.322 & 0.342 & 0.326 & 0.330 & 0.328 & 0.333 \\
(p-value) & \multicolumn{2}{c}{(0.968)} & \multicolumn{2}{c}{(1)} & \multicolumn{2}{c}{(0.860)} \\
\hline
\end{tabular}
}
\caption{Clusters' balance table on response rate. Each entry reports the average value (average between the clusters in a given group) of a given baseline characteristic for clusters exposed to different treatment probabilities, averaging over individuals who replied to the survey. Each column collects results for two groups of clusters. For example, the first row/first column reports the average number of farmers in clusters with $\beta = 0.35$ (note that, similarly, also the last two columns also report the \textit{average} value across the clusters in a given group and not their sum).   P-values test the two-sided null hypothesis that the point estimates for the two groups are different and are computed via randomization inference. }    \label{tab:summaries_response_rates} 
\end{table*}

\begin{table*}[!ht]\centering

\ra{1.3}
\scalebox{0.6}{\begin{tabular}{lcccccc}
\hline
Saturation & \multicolumn{2}{c}{Medium} & \multicolumn{2}{c}{High} & \multicolumn{2}{c}{High/Medium} \\
First wave $\beta$ & 0.35 & 0.45 & 0.55 & 0.65 & $0.6 \pm 0.05$ & $0.4 \pm 0.05$ \\
\hline
Average \# of Sampled Farmers per tehsil & 30 & 33.5 & 27.8 & 33.8 & 31.07 & 31.75 \\
(p-value) & \multicolumn{2}{c}{(0.562)} & \multicolumn{2}{c}{(0.718)} & \multicolumn{2}{c}{(0.875)} \\
Education & 0.383 & 0.418 & 0.396 & 0.554 & 0.488 & 0.401 \\
(p-value) & \multicolumn{2}{c}{(0.656)} & \multicolumn{2}{c}{(0.750)} & \multicolumn{2}{c}{(0.148)} \\
Female & 0.000 & 0.034 & 0.011 & 0.025 & 0.019 & 0.018 \\
(p-value) & \multicolumn{2}{c}{(0)} & \multicolumn{2}{c}{(0.625)} & \multicolumn{2}{c}{(0.984)} \\
Acres & 4.990 & 5.128 & 5.356 & 4.229 & 4.695 & 5.062 \\
(p-value) & \multicolumn{2}{c}{(0.875)} & \multicolumn{2}{c}{(0.625)} & \multicolumn{2}{c}{(0.793)} \\
Male Dependants & 2.837 & 2.995 & 2.379 & 2.754 & 2.599 & 2.921 \\
(p-value) & \multicolumn{2}{c}{(0.687)} & \multicolumn{2}{c}{(0.968)} & \multicolumn{2}{c}{(0.406)} \\
Female Dependants & 2.721 & 2.725 & 2.494 & 2.692 & 2.610 & 2.723 \\
(p-value) & \multicolumn{2}{c}{(0.875)} & \multicolumn{2}{c}{(1)} & \multicolumn{2}{c}{(0.819)} \\
Age & 50.917 & 54.468 & 48.876 & 53.239 & 51.436 & 52.790 \\
(p-value) & \multicolumn{2}{c}{(0.562)} & \multicolumn{2}{c}{(0.937)} & \multicolumn{2}{c}{(0.860)} \\
Wheat & 0.600 & 0.517 & 0.467 & 0.540 & 0.509 & 0.556 \\
(p-value) & \multicolumn{2}{c}{(0.437)} & \multicolumn{2}{c}{(0.687)} & \multicolumn{2}{c}{(0.707)} \\
Whatsapp & 0.416 & 0.437 & 0.413 & 0.405 & 0.408 & 0.427 \\
(p-value) & \multicolumn{2}{c}{(0.937)} & \multicolumn{2}{c}{(0.718)} & \multicolumn{2}{c}{(0.833)} \\
\hline
\end{tabular}
}
\caption{Clusters' balance table on response rate for \textit{respondents} to the question about temperature (i.e., for individuals for which we do not observe missing values). Each entry reports the average value (average between the clusters in a given group) of a given baseline characteristic for clusters exposed to different treatment probabilities, averaging over individuals who replied to the survey. Each column collects results for two groups of clusters. For example, the first row/first column reports the average number of farmers in clusters with $\beta = 0.35$ (note that, similarly, also the last two columns also report the \textit{average} value across the clusters in a given group and not their sum).   P-values test the two-sided null hypothesis that the point estimates for the two groups are different and are computed via randomization inference. }    \label{tab:summaries_response_rates_v2} 
\end{table*}

\begin{table}[!ht] \centering 

\scalebox{0.65}{\begin{tabular}{lcccc}
\hline
 &  & nCalls/Person & Total Response/Person & Average Response \\
\hline
Treated & 158 697 & 111 & 26 & 0.236 \\
Controls & 240 354 & 45 & 10 & 0.222 \\
p-value Response &  & & & [0.000] \\
\hline
\end{tabular}
}
  \caption{Summary statistics of treated and control units for May - July (Wave 1), pooled across all tehsils in the experiment. $p$-value is obtained via randomization inference at the cluster level.} 
  \label{tab:preliminary_treatment2} 
\end{table}

\begin{table}[!htbp] \centering 

\scalebox{0.6}{\begin{tabular}{lccccc}
\hline
 & Non Respondents & Respondents & P-value High Saturation & P-value Medium Saturation & P-value Low Saturation \\
\hline
Education & 0.536 & 0.443 & 0.334 & 0.109 & 0.141 \\
Female & 0.020 & 0.017 & 0.500 & 0.625 & 0.227 \\
Acres & 4.163 & 4.871 & 0.429 & 0.083 & 0.152 \\
Male Dependants & 2.834 & 2.859 & 0.528 & 0.150 & 0.697 \\
Female Dependants & 2.778 & 2.712 & 0.952 & 0.864 & 0.475 \\
Age & 50.766 & 52.007 & 0.870 & 0.448 & 0.695 \\
Wheat & 0.538 & 0.527 & 0.805 & 0.716 & 0.834 \\
Whatsapp & 0.302 & 0.412 & 0.011 & 0.001 & 0.012 \\
Average \# Farmers per Cluster & 119 & 30 &  &  &  \\
\hline
\end{tabular}}
 \caption{Balance table between respondentents to the question about maximum (minimum) temperature and non respondents using baseline characteristics. The first two columns report the mean of each covariate for the non respondents and respondents and the last three columns the p-values obtained via permutation tests for each group of tehsil. Permutation tests are at the cluster level.} 
  \label{tab:table_non_respondents} 
\end{table} 

\subsection{Regression estimates and dynamics} \label{sec:activities}

\paragraph{Regression estimates for forecast and real weather} In Table \ref{tab:beliefs_forecast2} we report regression estimates for change in beliefs relative to forecast weather. We present a descriptive regression of the outcome on the treatment status and the share of treated individuals in the same clusters \citep[e.g., as in][]{cai2015social}. This is a standard regression that our design, as well as other designs, can allow for. We also consider other specifications, controlling for the interaction between the individual treatment status and the share of treated individuals in the clusters. We control for three variables: \textit{Treatment} measures the effect on treated farmers,  \textit{Cluster Treat Prob} measures the spillover effect, and  \textit{Cluster Treat Prob} $\times$ \textit{Treatment} measures the interaction between the share of treated farmers and individual treatment. Throughout each specification, we include time(wave)-fixed effects since treatment probability is increased over the second wave. 
We include information about low, medium, and high saturation level, after appropriately controlling for tehsil-specific treatment probabilities (and fixed effects).

 Table \ref{tab:beliefs_forecast2} reports regression estimates of farmers' incorrect beliefs about temperature and rain with respect to forecast rain from PxD. Note that response rates for rain are higher than temperature. Standard errors in parentheses are clustered at the tehsil level. Results are suggestive that both treatment and spillover effects improve forecasts. In the absence of the interaction between direct and spillover effects, we observe negative and significant direct and spillover effects (with and without tehsil fixed effects). Spillovers exhibit similar or larger coefficients than direct effects, suggesting a ``multiplier effect", when \textit{all} farmers are informed. The multiplier effect can be due to farmers being more attentive to what other farmers report or receiving (the same) information from multiple farmers, and can be found also in other information campaigns \citep[e.g.][]{cai2015social}. When including tehsil fixed effects, point estimates remain significant although standard errors are larger because of lower variation, due to lack of baseline outcomes. When also including the interaction term, point estimates are noisier as often occuring in experiments \citep[see][]{muralidharan2023factorial}, overall spillover effects preserve negative signs, although are not always significant. 
 
\begin{table}[!ht] \centering 
\scalebox{0.65}{\begin{tabular}{@{\extracolsep{5pt}}lcccc|cccc} 
\\[-1.8ex]\hline 
\hline \\[-1.8ex] 
 & \multicolumn{8}{c}{\textit{Dependent variable:}} \\ 
\cline{2-7} 
Incorrect beliefs about \\[-1.8ex] & \multicolumn{4}{c}{PxD forecast Temperature} & \multicolumn{4}{c}{PxD forecast Rain} \\ 
\\[-1.8ex] & (1) & (2) & (3) & (4) & (5) & (6) & (7) & (8)\\ 
\hline \\[-1.8ex] 
Treatment & $-$0.796$^{***}$ & $-$0.828$^{***}$ & $-$1.033$^{**}$ & $-$1.005$^{**}$ & $-$0.042$^{***}$ & $-$0.036$^{***}$ & 0.004 & 0.00 \\ 
  & (0.165) & (0.171) & (0.429) & (0.420) & (0.013) & (0.013) & (0.024) & (0.027)\\ 
  & & & & & & \\ 
Cluster Treat Prob & $-$0.647$^{*}$ & $-$3.212$^{*}$ & $-$0.894 & $-$3.398$^{*}$ &$-$0.093$^{***}$ & $-$1.137$^{***}$ & $-$0.050 & $-$1.100$^{***}$\\ 
  & (0.384) & (1.721) & (0.583) & (1.820) & (0.035) & (0.155) & (0.043) & (0.162)  \\ 
  & & & & & & \\ 
Cluster Treat Prob $\times$ Treatment &   &  & 0.454 & 0.342  &  &  & $-$0.086$^{*}$ & $-$0.068 \\ 
  &  & & (0.746) & (0.745) &  &  & (0.049) & (0.054) \\ 
  & & & & & & \\ 
  Time (Wave) Fixed Effects & Yes & Yes & Yes & Yes & Yes & Yes & Yes & Yes \\ 
  & & & & & & \\ 
 Tehsil Fixed Effects & No & Yes & No & Yes & No & Yes & No & Yes \\ 
  & && && & \\ 
\hline \\[-1.8ex] 
Observations & 1,181 & 1,181 & 1,181 & 1,181 & 5,297 & 5,297 & 5,297 & 5,297  \\  
\hline \\[-1.8ex] 
\hline 
\hline \\[-1.8ex] 
\textit{Note:}  & \multicolumn{8}{r}{$^{*}$p$<$0.1; $^{**}$p$<$0.05; $^{***}$p$<$0.01} \\ 
\end{tabular} 
}
 \caption{The left-hand-side panel reports the regression of whether the farmer incorrectly predicts temperature (in absolute difference). The right-hand-side panel reports results of whether the farmers incorrectly predicts whether it will rain or not. In parenthesis, standard errors clustered at the tehsil level. Regression uses observations from Low, Medium and Positive Perturbation tehsils, since in each tehsil treatment is randomly assigned with time and cluster-specific treatment probabilities. } 
  \label{tab:beliefs_forecast2} 
\end{table}

\paragraph{Tests on dynamic effects on beliefs} In Table \ref{tab:dynamics} (left-panel), we present results on dynamic effects by controlling for whether individuals are surveyed during the second wave and the interaction between the individual treatment and the second wave. We focus on dynamics on direct effects for simplicity, whereas results for spillovers are robust (preserve sign and magnitude but are noisier) and omitted for brevity. The first two columns report the effects and dynamics of beliefs about temperature, our main outcomes. Treatment effects preserve the sign and magnitude as in our main specification in the main text, after controlling for the interaction on dynamics. Importantly, the coefficient interacting the treatment with the second wave experiment is very close to zero and non-significant. This is suggestive that effects in improving predictions on weather do not exhibit dynamic treatment effects. This result formalizes the intuition that correctly predicting short-term temperature the next day may not affect correct short-term predictions in the upcoming weeks or months. 

\paragraph{Effect on farming activities and power of tests on dynamics} An interesting question is whether our specification for beliefs is sufficiently powered to detect dynamics or treatment effect heterogeneity over time. To do so, 
in Table \ref{tab:dynamics} (second panel), we explore how treatment affects predicting rain and short-term farming activities. Different from temperature, for rain we do find some effect heterogeneity over time as farmers in different periods may differently being impacted by the treatment. This is intuitive, since different periods correspond to different rain seasons. This further motivates using temperature as a welfare proxy for sequential experimentation, as rain may exhibit some time effect heterogeneity. 

We also measure effects on activities. We use survey information on the timing of farming tasks, such as ``Can you recall the exact day when
you applied pesticides?" and use the same questions for irrigation, use of fertilizers, and planting decisions. We then match the reported date of the farming task with the realized rainfall for the same day and create an indicator variable if it rained on the day of the farming task. We see statistically significant dynamic effects on actions (e.g., whether individuals do not irrigate when it rains). This may suggest that individuals adjust their actions dynamically, and show that our specifications are sufficiently powered to detect dynamics. These results provide suggestive evidence of lack of dynamics on temperature forecasts accuracy, our main outcome, but not necessarily on others such as actions.

\subsection{Details about perturbations $\eta_n$ and variants of the experiment} \label{app:choice_eta} 

In this section, we provide further details on the choice of $\eta_n$ in the main experiment.

\paragraph{Details about first wave and choice of $\eta_n$}   The first experimentation wave allows us to learn the marginal effect around $\beta = 50\%$. 
Over the first experimental wave, we randomly draw a group of twelve tehsils (``Negative Perturbation/Medium Saturation") to have an average treatment probability across tehsils in this group of $\beta = 0.4$, hence inducing a negative perturbation $\eta_n = 10\%$. The choice of the perturbation should depend on power considerations, as, in principle, we may also be interested in more refined marginal effects, at the expense of lower power. To study here trade-offs in the choice of the perturbation parameter $\eta_n$, we select the Negative Perturbation group to have $\beta = 0.4$ \textit{on average}, with half of the (randomly selected) clusters in the Negative Perturbation group having exactly $\beta = 0.35$ and half of the clusters with $\beta = 0.45$. 
We repeat the same with a ``Positive Perturbation/High Saturation" group with approximately $\beta = 0.6$ on average (and, similarly as before, with six tehsils in this group having $\beta = 0.55$ and seven $\beta = 0.65$). This gives us two (nested) perturbation designs. First, we obtain a better-powered perturbation design (which we refer to as our \textit{main design}) with a total of 25 clusters and perturbations around $\beta = 0.5$, with perturbations equal to $\eta_n = 10\%$ on average. The second design induces \textit{within} group perturbation of smaller order $5\%$, which allows us to also learn marginal effects at two more values $\beta \in \{40, 60\}\%$, with half of the clusters. A key intuition is that, by pooling clusters around smaller perturbations, all of our theoretical results directly apply to the main design, up to a small bias.\footnote{That is, within our framework it is possible to choose different levels of perturbations in the same design by assigning a group of clusters to a treatment probability $\beta + \eta_n$, a different group to $\beta -\eta_n$; within each group, then repeat this same procedure, inducing more minor perturbation of order $\beta + \eta_n \pm \eta_n', \eta_n' = o(\eta_n)$, and similarly for the second group. These two nested designs do not affect our theoretical results for inference on $M(\beta)$, as long as $\eta_n' = o(\eta_n)$. Choosing different levels of perturbations has the advantage that we can learn a larger set of marginal effects (both at $\beta$, and at  $\beta \pm \eta_n$), while avoiding under-powered studies for the main effect $M(\beta)$. } We report results from the main (better powered) design with $\beta = 50\%, \eta_n = 10\%$ on average; we show that for smaller choice of $\eta_n$ estimates can be under-powered, see Appendix Table \ref{tab:marginal_effects_ma}. We recommend the choice of two nested designs to avoid under-powered studies.

\paragraph{Details about second wave} The second wave experiment allows us to learn marginal effects at $\beta = 0.7$. 
Over the second wave (August - October), the ``Negative Perturbation" group was exposed to a larger treatment probability $\beta = 0.6$ and the ``Positive Perturbation" group was exposed to a treatment probability $\beta = 0.8$. Therefore, over the second experimentation wave, we have two groups with treatment probabilities $\beta = 70\% \pm \eta_n, \eta_n = 10\%$. Over the second wave, we also perturbed by 0.05 the probability of treatment for different types of
farmers, those below and above the median response rate in the first round, keeping the overall treatment
probability constant. This latter perturbation enables estimating heterogeneous treatment effects, omitted
from the main analysis for brevity and discussed in this Appendix below (in the following subsection).

\subsection{Additional results} 

\paragraph{More refined marginal effects from first wave experiment} In Table \ref{tab:marginal_effects_ma} we report estimated effects over the first wave for a secondary design where we use half of the clusters to learn marginal effects at $\beta \in \{40, 60\}\%$. Results report noisy estimates, due to lower sample size and smaller perturbation. This is suggestive that using a too small choice of the perturbation may lead to under-powered studies. We therefore recommend in practice to consider at least two-nested design as we did here to increase power.  


\vspace{-4mm} 
\paragraph{Marginal effects on response rates in the first wave (using baseline outcomes to control for tehsils fixed effects)} For illustrative purposes, in Table \ref{tab:main_experiment3} we collect marginal effects for response rates for our secondary design, for which we can control for baseline outcomes. We use the estimators proposed in Section \ref{sec:designs}, with baseline outcomes as the outcomes in the control group over the first week of the intervention (assuming no spillovers during the first week of the experiment). We find significant direct treatment effects whereas marginal spillover effects are noisier/closer to zero as we may expect (since spillovers may less likely occur on higher response to phone calls given that the control group does not receive phone calls about weather forecasts).

\begin{table}[!ht] \centering 
  \caption{Study on dynamics. The first four columns report the regression of the absolute difference between the maximum temperature tomorrow predicted by the farmer and the forecast maximum temperature (first column) or true maximum temperature (second column), or the inaccuracy in predicting forecast and real reain (third and fourth column). The last four columns reports the effects on the farming actions (irrigation, use of fertilizers, pesticides, and planting) as defined in the main text. The regression controls for the individual treatment, an indicator of whether the observation is in the first or second wave and an interaction of the individual treatment with such an indicator. Results also controlling for spillover effects are robust and omitted. In parenthesis standard errors clustered at the tehsil level. } 
  \label{tab:dynamics} 
\scalebox{0.5}{\begin{tabular}{lcccccccc}
\hline
Dependent variable: & Incorrect Beliefs Forecast Temp & Beliefs Real Temp & Beliefs Forecast Rain & Beliefs Real Rain & Irrigation & Fertilizer & Pesticides & Planting \\
 & (1) & (2) & (3) & (4) & (5) & (6) & (7) & (8) \\
\hline
Treatment & -0.620* & -0.762* & 0.003 & 0.006 & -0.040* & -0.049** & -0.018 & -0.037 \\
 & (0.355) & (0.393) & (0.022) & (0.0213) & (0.021) & (0.022) & (0.017) & (0.038) \\
Second Wave & -0.929*** & -1.062*** & -0.152*** & -0.103*** & 0.129*** & 0.030 & 0.200*** & 0.033 \\
 & (0.322) & (0.342) & (0.019) & (0.025) & (0.021) & (0.020) & (0.025) & (0.033) \\
Treatment $\times$ Second Wave & -0.009 & -0.155 & -0.045 & -0.056** & 0.112*** & 0.091*** & 0.044* & 0.001 \\
 & (0.396) & (0.391) & (0.030) & (0.0267) & (0.028) & (0.030) & (0.026) & (0.051) \\
\hline
Tehsil Fixed Effects & Yes & Yes & Yes & Yes & Yes & Yes & Yes & Yes \\
\hline
\multicolumn{9}{l}{\textit{Note:} $^{*}$p$<$0.1; $^{**}$p$<$0.05; $^{***}$p$<$0.01} \\
\end{tabular}
}
\end{table}

\begin{table}[!ht] \centering 
  \caption{The left table reports the effects in percentage points for unconditional case with $\beta = 0.4$ (first panel) and $\beta = 0.6$ (second panel). The right table report the difference in engagement by increasing treatment probababilities from $0.35\%$ to $0.65\%$. $p$-value for one-sided tests computed via randomization inference at the cluster level are in parenthesis.} 
  \label{tab:main_experiment3} 
\scalebox{0.65}{ \begin{tabular}{@{\extracolsep{5pt}} lcccc} 
\\[-1.8ex]\hline 
\hline \\[-1.8ex] 
Y: Phone Response Rate  &   $\beta = 0.4$  &  $\beta = 0.5$ & $\beta = 0.6$  \\ 
 \hline \\[-1.8ex] 
 & May - July &   May - July & May - July \\ 
\hline
Y: Phone Response Rate & $\beta = 0.4$ & $\beta = 0.5$ & $\beta = 0.6$ \\
 & May - July & May - July & May - July \\
\hline
Marginal Effect & 5.058 & 1.968 & -2.171 \\
p-value & [0.146] & [0.321] & [0.317] \\
Direct Effect & 1.802 & 1.122 & 1.247 \\
p-value & [0.007] & [0.015] & [0.000] \\
Marginal Spillovers on the Treated & 8.245 & -1.108 & -5.177 \\
p-value & [0.142] & [0.423] & [0.226] \\
Marginal Spillovers on the Controls & 0.833 & 2.142 & -0.301 \\
p-value & [0.417] & [0.293] & [0.471] \\
\hline\\[-1.8ex]  
\end{tabular} 
}
\scalebox{0.7}{\begin{tabular}{@{\extracolsep{5pt}} lccc} 
\\[-1.8ex]\hline 
\hline
$\beta = 0.35 \uparrow$ & Improvement & p-value & \\
$\beta = 0.45$ & 0.50 & [0.140] & \\
$\beta = 0.55$ & 0.70 & [0.094] & \\
$\beta = 0.65$ & 0.48 & [0.225] & \\
\hline \\[-1.8ex]
\end{tabular} }
\end{table}

\begin{table}[!ht] \centering 
  \caption{Marginal effect for secondary design (perturbation is $\eta_n = 5\%$) over the first experimentation wave.} 
  \label{tab:marginal_effects_ma} 
\scalebox{0.7}{

\begin{tabular}{@{\extracolsep{5pt}}lcccc}
\\[-1.8ex]\hline
\hline \\[-1.8ex]
Dependent variable: & \multicolumn{2}{c}{Incorrect beliefs about PxD forecast Temperature} & \multicolumn{2}{c}{PxD forecast Rain} \\
\\[-1.8ex] & $\beta = 40\%$ (Wave 1) & $\beta = 60\%$ (Wave 1) & $\beta = 40\%$ (Wave 1) & $\beta = 60\%$ (Wave 1) \\
\hline \\[-1.8ex]
Marginal Effect & 6.55 & 0.02 & -0.12 & 0.07 \\
p-value & [0.15] & [0.50] & [0.37] & [0.48] \\
Direct Effect & -1.67$^{**}$ & -0.05 & 0.03 & -0.01 \\
p-value & [0.01] & [0.47] & [0.35] & [0.41] \\
Spillover on Treated & 0.54 & -2.41 & 0.31 & 0.28 \\
p-value & [0.48] & [0.36] & [0.41] & [0.37] \\
Spillover on Controls & 9.17 & 3.67 & -0.39 & -0.18 \\
p-value & [0.11] & [0.39] & [0.28] & [0.41] \\
Observations & 119 & 128 & 352 & 371 \\
\hline \\[-1.8ex]
\end{tabular}
}
\end{table}

\begin{table}[!ht] \centering 
  \caption{Heterogeneous treatment effect on engagement (response rates) as the main outcome computed in Wave 2. Wave 2 indicates the average effect in August 2022.  In parenthesis $p$-value computed via randomization inference at the cluster level.} 
  \label{tab:main_experiment3} 
\scalebox{0.7}{\begin{tabular}{@{\extracolsep{5pt}} lcccc} 
\\[-1.8ex]\hline 
\hline \\[-1.8ex] 
Y: Phone Response Rate & Wave 2: &  Low & Medium & High  \\ 
\hline \\[-1.8ex] 
 Marginal Effect & &  -4.155 & 1.421 & 1.755  \\ 
 & & (0.333) & (0.420) & (0.409) 
 \\ & & &  \\
Direct Effect &  & 0.331 & 4.129 & 4.836  \\
& & (0.433) & (0.001) & (0.001)    \\ 
& & & 
\\
Spillovers on High Types & & -1.096 & -5.915 & 15.393   \\
& & (0.476) & (0.409) & (0.307)  \\  
& & & \\ 
Spillovers on Low Types & & 6.669 & -1.266 & 6.971  \\
& & (0.238) & (0.413)  & (0.231)   \\  
\hline \\[-1.8ex] 
\end{tabular} }
\end{table}

\section{Additional Extensions}  \label{app:e}


\subsection{Tests with a $p$-dimensional vector of marginal effects} \label{sec:pilot_general}

In the following lines we extend Algorithm \ref{alg:my_pilot} to testing the following null
$
H_0: M^{(j)}(\beta) = 0, \text{ for some } p \ge p_1 \ge 1, 
$ 
where we consider a generic number of dimensions tested $p_1$.

\begin{algorithm} [h]   \caption{One wave experiment for inference}\label{alg:my_pilot_general}
\footnotesize 
    \begin{algorithmic}[1]
    \Require Value $\beta \in \mathbb{R}^p$, $K$ clusters, $2$ periods of experimentation, number of tests $t$. 
    \State Match clusters into pairs $K/2$ pairs with consecutive indexes $\{k, k+1\}$;  
    \State $t = 0$ (\textit{baseline}): 
    \begin{algsubstates}
    \State Treatments are assigned at some baseline $\beta_0$ 
    $
    D_{i,0}^{(h)} \sim \pi(X_i^{(h)}, \beta_0), h \in \{1, \cdots, K\}$.
        \State Collect baseline values: for $n$ units in each cluster observe $Y_{i,0}^{(h)}, h \in \{1, \cdots, K\}$.  
        \end{algsubstates} 
    \State $t = 1$ (\textit{experimentation-wave})
    \State Assign each pair of clusters $\{k, k+1\}$ to a coordinate $j \in \{1, \cdots, p\}$ (with the same number of pairs to each coordinate)
    \State For each pair $\{k, k+1\}$, $k$ is odd, assigned to coordinate $j$
    \begin{algsubstates}
        \State  Randomize 
    $$
    \small 
    \begin{aligned} 
    D_{i,1}^{(h)} \sim \begin{cases} 
    & \pi(X_i^{(h)}, \beta + \eta_n \underline{e}_j ) \text{ if } h = k  \\ 
    & \pi(X_i^{(h)}, \beta - \eta_n \underline{e}_j ) \text{ if } h = k+1 
    \end{cases} , \quad n^{-1/2} < \eta_n \le n^{-1/4}
    \end{aligned}
    $$
        \State For $n$ units in each cluster $h \in \{k,k+1\}$ observe $Y_{i,1}^{(h)}$.  
     
 \State Estimate the marginal effect for coordinate $j$ as 
$
\widehat{M}_{k} = \frac{1}{2 \eta_n}\Big[\bar{Y}_1^{(k)} - \bar{Y}_0^{(k)}\Big] - 
 \frac{1}{2 \eta_n}\Big[\bar{Y}_1^{(k + 1)} - \bar{Y}_0^{(k + 1)}\Big] 
$
 \Return 
$
\widetilde{M}_n = \Big[\widehat{M}_1, \widehat{M}_3, \cdots, \widehat{M}_{K-1}\Big]$. 
  \end{algsubstates}
 
         \end{algorithmic}
\end{algorithm}

We define $\mathcal{K}_j$ the set of pairs in Algorithm \ref{alg:my_pilot_general} used to estimate the $j^{th}$ entry of $M(\beta)$. 
  Define 
  $
\bar{M}_{n}^{(j)} =\frac{2p_1}{K} \sum_{k \in \mathcal{K}_j} \widehat{M}_{k}, 
  $ 
the average marginal effect for coordinate $j$ estimated from those clusters is used to estimate the effect of the $j^{th}$ coordinate. 
we construct 
\begin{equation} \label{eqn:test_statB}
\small 
\begin{aligned} 
Q_{j,n} = \frac{\sqrt{K/(2p_1)}  \bar{M}_{n}^{(j)}}{\sqrt{(K/(2p_1) - 1)^{-1} \sum_{k \in \mathcal{K}_j} (\widehat{M}_{k}^{(j)} - \bar{M}_{n}^{(j)})^2}}, \quad \mathcal{T}_n = \max_{j \in \{1, \cdots, p_1\}} |Q_{j,n}|, 
\end{aligned} 
\end{equation} 
where $\mathcal{T}_n$ denotes the test statistics. The proposed test-statistic is particularly suited when a large deviation occurs over one dimension of the vector. 


\begin{thm}[Nominal coverage] \label{thm:inference4}  Let Assumptions \ref{ass:ass_0}, \ref{ass:var} hold. Let $n^{1/4} \eta_n = o(1),  \gamma_N^2/N^{1/4} = o(1)$, $K < \infty$. Let $K \ge 4p_1$, $H_0$ be as defined in Equation \eqref{eqn:h0}. For any $\alpha \le 0.08$,  
$
 \lim_{n \rightarrow \infty} P\Big(\mathcal{T}_n \le q_\alpha\Big| H_0\Big) \ge 1 - \alpha, \text{ where } q_\alpha = \mathrm{cv}_{K/(2p_1) - 1}\Big(1 - (1 - \alpha)^{1/p_1}\Big), 
$
with $\mathrm{cv}_{K/(2p_1) - 1}(h)$ denotes the critical value of a two-sided t-test with level $h$ with test-statistic having $K/(2p_1) - 1$ degrees of freedom. 
\end{thm}

The proof is in Appendix \ref{sec:proof10}.

 \subsection{Out-of-sample regret with strict quasi-concavity} \label{sec:quasi_concavity}

 In the following lines, we provide guarantees on the regret bounds for the adaptive algorithm in Section \ref{sec:main_design} under quasi-concavity. We replace Assumption \ref{ass:strong_concavity} with the following condition. 
 
\begin{ass}[Local strong concavity and strict quasi-concavity] \label{ass:quasi_convexity} Assume that the following conditions hold: (A) For every $\beta, \beta' \in \mathcal{B}$, such that $W(\beta') - W(\beta) \ge 0$, then $
M(\beta)^\top (\beta' - \beta) \ge 0; 
$
(B) For every $\beta \in \mathcal{B}$, $||M(\beta)||_2 \ge \mu ||\beta - \beta^*||_2$, for a positive constant $\mu > 0$; (C) $W(\beta)$ is $\sigma$-strongly concave at $\beta^*$ (but not necessarily for $\beta \neq \beta^*$), with $\beta^* \in \tilde{\mathcal{B}} \subset \mathcal{B}$ being in the interior of $\mathcal{B}$. 
\end{ass}

Condition (A) imposes a quasi-concavity of the objective function.  Condition (B) assumes that the marginal effect only vanishes at the optimum, ruling out regions over which marginal effects remain constant at zero. A notion of strict quasi-concavity can be found in \cite{hazan2015beyond}. Condition (C) imposes strong concavity locally at $\beta^*$ but not necessarily globally. The choice of the learning rate consists of a gradient norm rescaling, as discussed in Remark \ref{rem:learning_rate}.

\begin{thm} \label{thm:rate}  Let  Assumptions \ref{ass:ass_0}, \ref{ass:bounded}, \ref{ass:quasi_convexity} hold. Consider a learning rate $\alpha_{k,w}$ as in Equation \eqref{eqn:gradient}, for arbitrary $v \in (0,1)$, and $\epsilon_n$ such that $
\epsilon_n \ge \sqrt{p} \Big[\bar{C} \sqrt{ \gamma_N\frac{\log(\gamma_N \check{T} K/\delta)}{\eta_n^2 n}} + \eta_n \Big], \quad \frac{\mu \sqrt{\kappa}}{ \check{T}^{1/2 - v/2}}  - 2\epsilon_n \ge 0  .  
$ 
Take a small $1/4 > \xi > 0$, and let $n^{1/4 - \xi} \ge \bar{C} \sqrt{\log(n) p \gamma_N T^2 e^{B p T} \log(KT)}$, $\eta_n = 1/n^{1/4 + \xi}$, for finite constants $\infty > B, \bar{C} > 0$. 
Then, for $T \ge \zeta^{1/v}$, for a finite constant $\zeta < \infty$, there exists a sufficiently small and finite  $\kappa > 0$ in Equation \eqref{eqn:gradient} such that with probability at least $1 - 1/n$, 
$
W(\beta^*) - W(\hat{\beta}^*) = \mathcal{O}(\check{T}^{-1 + v}). 
$
\end{thm}  

The proof of Theorem \ref{thm:rate} leverages properties of gradient descent with gradient norm rescaling in \cite{hazan2015beyond}, together with concentration bounds similar to those obtained to derive Theorem \ref{thm:rate2b}. The rate obtained differs from Theorem \ref{thm:rate2b} in two aspects: it is of order $T^{-1 + v}$ for arbitrary small $v$ instead of $T^{-1}$ and the sample size grows \textit{exponentially} instead of polynomially in $T$. The reason for the first is to control the inverse gradient when close to zero, and the reason for the second is due to the different learning rate which does not divide by $1/t$ (see the proof of Lemma \ref{lem:fg2} for details). See Appendix \ref{sec:proof11} for more details.

\subsection{Non separable fixed effects} \label{sec:non_separable}

 In the following lines, we show how we can leverage direct and marginal spillover effects to identify the marginal effects when fixed effects are non-separable in time and cluster identity. 

\begin{thm}[Marginal effects with non-separable fixed effects] \label{thm:non_separable} Let $X = 1$, and suppose that $m(d, 1, \beta)$ is bounded and twice differentiable with bounded derivatives for $d \in \{0,1\}$.  Suppose that fixed-effects are non-separable, with  
\begin{equation} \label{eqn:non_s}
\small 
\begin{aligned} 
Y_{i,t}^{(k)} = m(D_{i,t}^{(k)}, 1, \beta) + \alpha_{k,t} + \varepsilon_{i,t}^{(k)},  \quad \mathbb{E}[\varepsilon_{i,t}^{(k)} | D_{i,t}^{(k)}] = 0, \quad D_{i,t}^{(k)} \sim_{i.i.d.} \mathrm{Bern}(\beta), 
\end{aligned} 
\end{equation}
and $m(1, 1, \beta)$ being a constant function in $\beta$. Then 
$$
\small 
\begin{aligned} 
\mathbb{E}\Big[\hat{\Delta}_k(\beta) + \hat{S}(0, \beta)(1 - \beta) - (1 - \beta)\hat{S}(1, \beta)\Big] = M(\beta) + \mathcal{O}(\eta_n). 
\end{aligned} 
$$  
\end{thm} 

The proof is in Appendix \ref{sec:proof12}. 
Theorem \ref{thm:non_separable} shows that we can use the information on the spillover and direct treatment effects to identify the marginal effects in the presence of non-separable time and cluster fixed effects. The theorem leverages the assumption that spillovers only occur on the control individuals but not the treated. The assumption of lack of spillovers on the treated may hold in some \citep[e.g][]{duflo2023chat} but not all settings.



\subsection{Policy choice with dynamic treatments} \label{sec:dynamics_main}


This section studies an experimental design with carry-overs occur. Let $X_i= 1$ for simplicity.

\begin{ass}[Dynamic model] \label{ass:dynamic} 
For treatments assigned with exogenous parameters $(\beta_{k,1}, \cdots, \hat{\beta}_{k,t})$ as in Assumption \ref{defn:bernoulli}, let 
$
Y_{i,t}^{(k)} = \Gamma(\beta_t, \beta_{t-1}) + \varepsilon_{i,t}^{(k)}, \mathbb{E}_{\hat{\beta}_{k, 1:t}}\Big[\varepsilon_{i,t}^{(k)}\Big] = 0,  
$ 
for some unknown $\Gamma(\cdot)$, $\varepsilon_{i,t}^{(k)}$. 
\end{ass} 

 The components $\hat{\beta}_{k,t}, \beta_{k,t-1}$ capture present and carry-over effects that result from individual and neighbors' treatments in the past two periods. We estimate a \textit{path} of policies $(0, \beta_1, \cdots, \beta_T)$ from an experiment, where, in the first period, we assume for simplicity that none of the individuals is treated. This path is then implemented on a new population.

 \begin{exmp} \label{exmp:main_dynamic} Suppose that 
 $
 Y_{i,t}^{(k)} =  D_{i,t}^{(k)} \phi_1 +  \frac{\sum_{j = 1}^N A_{i,j}^{(k)} D_{j,t-1}}{\sum_{j = 1}^N A_{i,j}^{(k)}} \phi_2  + \nu_{i,t}^{(k)}, D_{i,t}^{(k)} \sim_{i.i.d.} \mathrm{Bern}(\beta_t). 
 $  
 Let $\nu_{i,t}$ be a zero-mean random variable. The expression simplifies to 
 $
 Y_{i,t}^{(k)} = \beta_t \phi_1 + \beta_{t-1} \phi_2 + \varepsilon_{i,t}^{(k)}
 $ 
 where $\varepsilon_{i,t}^{(k)}$ is zero mean, and depends on neighbors' and individual assignments. \qed 
 \end{exmp} 

Given a horizon $T^*$, define the long-run welfare as follows: 
 $
\mathcal{W}(\{\beta_s\}_{s=1}^{T^*}) = \sum_{t=1}^{T^*} q^t \Gamma(\beta_t, \beta_{t-1}), 
$
for a known discounting factor $q < 1$, where $\beta_0 =0$. The long-run welfare deifines the cumulative (discounted) welfare obtained from a certain sequence of decisions $(\beta_1, \beta_2, \cdots)$. The goal is to maximize the long-run welfare. 

The choice of future treatment probabilities must depend on the ones chosen in the past. 
We parametrize future treatment probabilities based on past treatment probabilities as follows 
$
\beta_{t+1} = h_\theta(\beta_{t}, \beta_{t-1}),  \theta \in \Theta.
$  
The parametrization is imposed for computational convenience. The choice of letting $\beta_{t+1}$ be a function of the past two $(\beta_t,\beta_{t-1})$ only follows from the first order conditions with respect to $\beta_{t+1}$.  
For some arbitrary large $T^*$, the objective function takes the following form 
\begin{equation} \label{eqn:jjc} 
\small 
\begin{aligned} 
\widetilde{W}(\theta) &=  \sum_{t=1}^{T^*} q^t \Gamma\Big(\beta_t, \beta_{t-1} \Big), \quad \beta_t = h_\theta(\beta_{t-1},\beta_{t-2}) \quad \text{ for all } t \ge 1, \quad \beta_0 = \beta_{-1} = 0. 
\end{aligned} 
\end{equation} 
Here $\widetilde{W}(\theta)$ denotes the long-run welfare indexed by a given policy's parameter $\theta$. The objective function defines the discounted cumulative welfare induced by the policy $h_\theta$.

Algorithm \ref{alg:sequential} estimates the function $\Gamma(\cdot)$ using a single wave experiment. It then uses the estimated function $\Gamma(\cdot)$ and \textit{its gradient} to estimate the welfare-maximizing parameter $\theta$. Specifically, we conduct the randomization using two periods of experimentation only. We partition the space $[0,1]^2$ into a grid $\mathcal{G}$ of equally spaced components $(\beta_1^r, \beta_2^r)$ for each triad of clusters $r$. Within each triad, we induce small deviations to the parameters $\beta$. 
For each triad $r$, the algorithm returns 
$
\widetilde{\Gamma}(\beta_2^r, \beta_1^r), \widehat{g}_1(\beta_2^r, \beta_1^r), \widehat{g}_2(\beta_2^r, \beta_1^r)
$ 
where the latter two components are the estimated partial derivatives of $\Gamma(\cdot)$, and $\widetilde{\Gamma}(\beta_2^r, \beta_1^r)$ is the within cluster average.

For each pair of parameters $(\beta_2, \beta_1)$, we estimate $\widehat{\Gamma}(\beta_2, \beta_1)$ as follows 
\begin{equation} 
\small 
\begin{aligned} 
& \widehat{\Gamma}(\beta_2, \beta_1) = \widetilde{\Gamma}(\beta_2^r, \beta_1^r) + \widehat{g}_2(\beta_2^r, \beta_1^r)(\beta_2 - \beta_2^r) + \widehat{g}_1(\beta_2^r, \beta_1^g)(\beta_1 - \beta_1^r), \\ & \text{ where } \quad  (\beta_1^r, \beta_2^r) = \mathrm{arg} \min_{(\tilde{\beta}_1, \tilde{\beta}_2) \in \mathcal{G}} \Big\{||\beta_1 - \tilde{\beta}_1||^2 + ||\beta_2 - \tilde{\beta}_2||^2  \Big\}. 
\end{aligned} 
\end{equation}  
we estimate $\Gamma(\beta_2, \beta_1)$ at $(\beta_2, \beta_1)$ using a a first-order Taylor approximation around the closest pairs of parameters in the grid $\mathcal{G}$. Given $\widehat{\Gamma}$, we estimate the welfare-maximizing parameter
$$
\small 
\begin{aligned} 
\widehat{\theta} \in \mathrm{arg} \max_{\theta \in \Theta} \sum_{t=1}^{T^*} q^t \widehat{\Gamma}(\beta_t, \beta_{t-1}), \quad \beta_t = h_\theta(\beta_{t-1}, \beta_{t-2}) \quad \forall t \ge 1, \quad \beta_0 = \beta_{-1} = 0. 
\end{aligned} 
$$

In the following theorem, we study the out-of-sample regret. 

\begin{thm}[Out-of-sample regret] \label{thm:opt_dynamics} Let Assumption \ref{ass:dynamic} hold.  Let $X = 1$, and suppose that $\Gamma(\beta_2, \beta_1)$ is twice differentiable with bounded derivatives. Let treatments be assigned as in Algorithm \ref{alg:sequential}. 
Suppose in addition that $\varepsilon_{i,t}^{(k)} \perp \varepsilon_{j \not \in \mathcal{I}_i^{(k)}}^{(k)}$ where $|\mathcal{I}_i^{(k)}| \le \gamma_N$, for some arbitrary $\gamma_N$ and $\varepsilon_{i,t}^{(k)}$ is sub-Gaussian. Let $\gamma_N\log(\gamma_N)/(\eta_n^2 n) = o(1)$. Then 
$
\lim_{n \rightarrow \infty} P\Big(\sup_{\theta \in \Theta} \widetilde{W}(\theta) - \widetilde{W}(\widehat{\theta}) \le  \frac{\bar{C}}{K} \Big) = 1
$ 
for a constant $\bar{C}$ independent of $K$. 
\end{thm} 

The proof is in Appendix \ref{sec:proof9}. To our knowledge, Algorithm \ref{alg:sequential} is novel to the literature of experimental design.\footnote{We note that optimal dynamic treatments have been studied in the literature on bio-statistics, see, e.g., \cite{laber2014dynamic}, while here we consider the different problem of the design of the experiment. 
\cite{adusumilli2019dynamically} discuss off-line policy estimation in the presence of dynamic budget constraints with $i.i.d.$ observations. The authors assume no carry-overs, and do not discuss the problem of experimental design.}
 
Theorem \ref{thm:opt_dynamics} shows that the regret scales at a rate $1/K$. The key insight is to use information of the estimated gradient. Different from previous sections, the rate $1/K$ is specific to the one-dimensional setting and carry-overs over two consecutive periods. In $p$ dimensions, the rate would be of order $1/K^{2/(p+1)}$ due to the \textit{curse of dimensionality}.


\subsection{Staggered adoption}  \label{sec:staggered}

  In this section, we sketch the experimental design in the presence of staggered adoption, i.e., when treatments are assigned only once to individuals and post-treatment outcomes are collected once. The algorithm works similarly to what was discussed in Section \ref{sec:main_design} with one small difference: every period, we only collect information from a given clusters' pair and update the policy for the subsequent pair and proceed in an iterative fashion. 
 

\begin{thm}[In-sample regret] \label{thm:rate3b} Let the conditions in Theorem \ref{thm:rate2b} hold and let $\beta \in \mathbb{R}$, with $\check{\beta}^{t}$ estimated as in Algorithm \ref{alg:adaptive2}. Then 
$
P\Big(\frac{1}{T} \sum_{t=1}^{T} \Big[W(\beta^*) - W(\check{\beta}^{t})\Big] \le  \bar{C} \frac{ p^2 \log(T)}{T} \Big) \ge 1 - 1/n
$
for a finite constant $\bar{C} < \infty$. 
\end{thm} 
 
See Appendix \ref{sec:proof_last} for the proof. The disadvantage of the staggered adoption case is that we cannot control the in-sample regret \textit{worst-case} over all clusters as in Section \ref{sec:main_design}, but only the average regret across clusters.

\begin{algorithm} [h]   \caption{Adaptive Experiment with staggered adoption}\label{alg:adaptive2}
\footnotesize 
    \begin{algorithmic}[1]
    \Require Starting value $\beta \in \mathbb{R}$, $K$ clusters, $T + 1$ periods of experimentation. 
    \State   Create pairs of clusters $\{k, k+1\}, k \in \{1, 3, \cdots, K-1\}$; 
    \State $t = 0$: 
    \begin{algsubstates}
        \State For $n$ units in each cluster observe  the baseline outcome $Y_{i,0}^{(h)}, h \in \{1, \cdots, K\}, \check{\beta}^0 = \beta$.  
        \State Initalize a gradient estimate $\widehat{M}_{t} = 0$
        \end{algsubstates} 
    \While{$1 \le t \leq T$}
    \begin{algsubstates}
    \State Sample without replacement one pair of clusters $\{k, k+1\}$ not observed in previous iterations; 
    \State Define 
    $
    \check{\beta}^t = \check{\beta}^{t-1} + \alpha_t \widehat{M}_{t}; 
    $
        \State  Assign treatments for this pair $h$ as 
    $$
    \small 
    \begin{aligned}
    D_{i,t}^{(h)} \sim  
    \pi(1, \beta_{t}), \quad  \beta_{t} = \begin{cases} & \check{\beta}^t  + \eta_n \text{ if } h \text{ is even}  \\
    & \check{\beta}^t  - \eta_n \text{ if } h \text{ is odd}
    \end{cases} ,  \quad n^{-1/2} < \eta_n \le n^{-1/4}
    \end{aligned}
    $$
        \State For $n$ units in each cluster $h \in \{1, \cdots, K\}$ observe $Y_{i,t}^{(h)}$ for this pair 
         \State Update the gradient $\hat{M}_t$ accordingly
        \EndWhile
      \end{algsubstates}
 \State Return 
 $
 \hat{\beta}^* =  \check{\beta}^T
 $  
 
         \end{algorithmic}
\end{algorithm}

\subsection{Learning the marginal effect with arbitrary cluster heterogeneity and many (possibly small) clusters} \label{app:arbitrary_heterogeneity}

In this subsection we sketch an extension where clusters are allowed to exhibit arbitrary unobserved cluster level heterogeneity. We show that in this case, it is still possible to run an adaptive experiment by randomly allocating clusters into groups of clusters, and inducing local perturbation between each \textit{group} of clusters. Because of the random allocation to groups, researchers are able to estimate the marginal effect of the reward, averaged across all clusters in the population.

However, rates of convergence of the in-sample (and out-of-sample) regret are slower in the number of clusters compared to our main results in Theorem \ref{thm:rate2b} and therefore rely on researchers having access to a very large (growing) number of (possibly small) clusters. This is intuitive: as we allow for arbitrary cluster level heterogeneity, it is not possible to consistently estimate reward and marginal effect with a finite number of clusters. 

For simplicity, consider a univariate parameter $\beta \in \mathbb{R}$. Suppose we can write 
\begin{equation} \label{eqn:outcomes_heterogeneous}
\small 
\begin{aligned} 
\mathbb{E}_\beta[Y_{i,t}^{(k)}] = W_k(\beta) + \alpha_t 
\end{aligned} 
\end{equation} 
where $\mathbb{E}_\beta[\cdot]$ integrates over the distribution of covariates, individual and neighbors' treatment assignments. Equation \eqref{eqn:outcomes_heterogeneous} is a generalization of the model in Equation \eqref{eqn:main_Y0} as it allows for arbitrary (unobserved) cluster-level heterogeneity. Here $W_k(\beta)$ denotes a reward function which is cluster specific (therefore incorporating arbitrary heterogeneity in addition to fixed effects). Define the average reward function, the marginal effect and the optimal policy as 
$$
W(\beta) = \frac{1}{K} \sum_{k=1}^K W_k(\beta), \quad M(\beta) = \frac{\partial W(\beta)}{\partial \beta}, \quad \beta^* \in \mathrm{arg} \max_\beta W(\beta) 
$$ 
That is, $W(\beta)$ denotes the average reward across all clusters and $M(\beta)$ its derivative. 

Ideally, we would like to learn the marginal effect to then control 
notions of in-sample and out-of-sample regret, respectively defined for a sequence of policies $\hat{\beta}_{k,t}$ deployed in the experiment and final estimated policy $\hat{\beta}^*$, $W(\beta^*)- \frac{1}{K T} \sum_{t=1}^T \sum_{k=1}^K W_k(\hat{\beta}_{k,t}), W(\beta^*) - W(\hat{\beta}^*)$. 

To do so, we can follows similar experiments as those in the main text, with appropriate modifications. 
In particular, we learn the marginal effect by randomizing clusters into two groups. Specifically, denote $R_k \in \{-1,1\}$ a sampling indicator for each cluster $k \in \{1, \cdots, K\}$. The sampling indicator is such that if $R_k = 1$, a cluster $k$ is assigned to the first group of clusters and if $R_k = -1$ cluster $k$ is assigned to the second group. We can then proceed similarly to the local perturbation method presented in Section \ref{sec:designs}, with the difference that here the perturbation $\eta$ should depend on the overall number of clusters $K$. Specifically, suppose at time $t = 1$ we randomize treatments as follows
$$
D_{i,t = 1}^{(k)} | R_1, \cdots, R_K \sim \pi(X_i^{(k)}, \beta + R_k \eta_{K}), \quad P(R_k = 1) = P(R_k = -1) = 1/2
$$ 
where treatments are independent across individuals and sampling indicators $R_k$ are independent across clusters. 
We estimate the marginal effect by taking the difference between the clusters exposed to each perturbation so that 
$$
\hat{M}(\beta) = \frac{1}{n \eta_{K} K} \sum_{k=1}^{K} \sum_{i=1}^n  Y_{i, t = 1}^{(k)} \cdot R_k. 
$$ 

The approach therefore mimics Section \ref{sec:designs} with the difference that here we first randomly allocate clusters into two groups (instead of using two clusters only) and then induce perturbations between these two groups. The estimated marginal effect does not require to differentiate out baseline outcomes because of the random allocation of clusters into groups. 

Using a second order Taylor expansion, assuming bounded second derivative of each $W_k(\cdot)$ we can write 
$$ 
\begin{aligned} 
\mathbb{E}[\hat{M}(\beta)] & = \frac{1}{\eta_K} \left\{\frac{1}{K} \sum_{k=1}^K \mathbb{E}[R_k] W_k(\beta) + \frac{1}{K} \sum_{k=1}^K \mathbb{E}[\mathrm{sign}(R_k) R_k] \frac{\partial W_k(\beta)}{\partial \beta} \eta_K + \mathcal{O}(\eta_K^2) \right\} \\ 
& = M(\beta)  + \mathcal{O}(\eta_K), 
\end{aligned} 
$$ 
where we used the fact that $\mathbb{E}[R_k] = 0$ and $\mathbb{E}[\mathrm{sign}(R_k) R_k] = \mathbb{E}[|R_k|] = 1$. Therefore, the estimated marginal effect has a bias proportional to $\eta_K$. In addition, under mild regularity conditions the variance is of order (even with small $n$ in each cluster)
$$
\mathbb{V}(\hat{M}(\beta)) = \mathcal{O}\left(\frac{1}{\eta_K^2 K}\right). 
$$ 

It follows that the fastest rate of convergence we can achieve for $\hat{M}(\beta)$ to be consistent for $M(\beta)$ takes $\eta_K \approx K^{-1/4}$ with convergence rate of order $\hat{M}(\beta) - M(\beta) = \mathcal{O}_p(K^{-1/4})$. 
Using the estimator for the marginal effect, we can then
leverage adaptive experiments similar in spirit to \cite{bubeck2012regret} and \cite{lattimore2024bandit} to control the in-sample and out-of-sample regret as a function of periods $T$ through a sequential experiment with $T$ iterations, while letting $K \rightarrow \infty$.
Finite sample regret bounds in $K$ and $T$ would be slower in $K$ than those in Theorem \ref{thm:rate2b} due to the dependence of the convergence rate of the marginal effect on the number of clusters $K$.\footnote{For finite sample bound in $K$, we can adopt the sequential cross-fitting procedure similar to Algorithm \ref{alg:adaptive} where clusters are randomly allocated into pairs, each pair containing two groups of randomly allocated clusters. The algorithm then proceeds similarly to Algorithm \ref{alg:adaptive} using group of clusters in lieu of single clusters, and choosing $\eta_K \approx K^{-1/3}$. One can provide finite sample bounds for the in-sample and out-of-sample regret both as a functions of $T$ and $K$ using finite sample concentration inequalities and following similar derivations as those we derived for the adaptive experiment in Theorem \ref{thm:rate2b} using groups of clusters in lieu of single clusters, and adjusting the rates as functions of the number of clusters within each group.  }

\section{Additional Algorithms} \label{app:algorithms}

\vspace{-5mm}

\begin{algorithm} [H]   \caption{Welfare maximization with a ``non-adaptive" experiment}\label{alg:one_wave_optimal}
\footnotesize 
    \begin{algorithmic}[1]
    \Require $K$ clusters, $T = p$ periods of experimentation, $n^{-1/2} < \eta_n \le n^{-1/4}$. 
    \State   Create pairs of clusters $\{k, k+1\}, k \in \{1, 3, \cdots, K-1\}$; 
    \State $t = 0$: For $n$ units in each cluster observe  the baseline outcome $Y_{i,0}^{(h)}, h \in \{1, \cdots, K\}$. Assign each pair $(k, k+1)$ to an element $\beta^k \in \mathcal{G}$, where $\mathcal{G}$ is an equally spaced grid.
   \While{$1 \le t \leq T$}
    \begin{algsubstates}
        \State  Assign treatments as 
    $
    D_{i,t}^{(h)} \sim  
    \pi(1, \beta^h), \quad  \beta^h = \check{\beta}^{h}  \pm \eta_n \underline{e}_t (h \text{ is even/odd}),  
    $
        \State For $n$ units in each cluster $h \in \{1, \cdots, K\}$ observe $Y_{i,t}^{(h)}$; estimate for pair $(k, k+1)$, entry $t$, 
        $ 
        \widehat{M}_{(k, k+1)}^{(t)}(\beta^k) = \frac{1}{2 \eta_n} \Big[ \bar{Y}_t^k - \bar{Y}_0^k\Big] -  \frac{1}{2 \eta_n} \Big[ \bar{Y}_t^{k+1} - \bar{Y}_0^{k+1}\Big]. 
        $
  \end{algsubstates} 
   \EndWhile, \Return $\hat{\beta}^{ow}$ as in Equation \eqref{eqn:expansion}. 
 
         \end{algorithmic}
\end{algorithm}

\vspace{-4mm}

\begin{algorithm} [H]   \caption{Adaptive Experiment with Many Coordinates}\label{alg:adaptive_complete}
\footnotesize 
\begin{algorithmic}[1]
    \Require Starting value $\beta_0$ in the interior of $[\mathcal{B}_1 + \eta_n, \mathcal{B}_2 - \eta_n]^p$, $K$ clusters, $T + 1$ periods of experimentation, constant $\bar{C}$. 
    \State   Create pairs of clusters $\{k, k+1\}, k \in \{1, 3, \cdots, K-1\}$; 
    \State $t = 0$(\textit{baseline}): 
 Assign treatments as 
    $
    D_{i,0}^{(h)} | X_i^{(h)} = x  \sim \pi(x;\beta_0)  \text{ for all } h \in \{1, \cdots, K\}; 
    $
    for $n$ units in each cluster observe $Y_{i,0}^{(h)}, h \in \{1, \cdots, K\}$; for cluster $k$ initalize a gradient estimate $\widehat{M}_{k,t} = 0$ and initial parameters $\check{\beta}_{k}^o = \beta_0$. 
    \While{$1 \le w \leq \check{T} = \frac{T}{p}$}
    \begin{algsubstates}
    \ForEach {$j \in \{1, \cdots, p\}$} ($P_{\mathcal{B}_1, \mathcal{B}_2 - \eta_n}$ is the projection operator onto $[\mathcal{B}_1, \mathcal{B}_2 - \eta_n]^p$)
    $$
    \small 
    \begin{aligned} 
    \check{\beta}_{h}^w = 
   &   P_{\mathcal{B}_1 + \eta_n, \mathcal{B}_2 - \eta_n}\Big[\check{\beta}_{h}^{w-1} + \alpha_{\nint{h+2}, w - 1} \widehat{M}_{\nint{h + 2},w - 1}\Big],  \quad \nint{h} = h 1\{h \le K\} + (h - K)1\{h > K\}.
    \end{aligned}
    $$ 
        \State  Assign treatments as (for a finite constant $\bar{C}$, $\underline{e}_j$ in Equation \eqref{eqn:ej})
    $$
    \small 
    \begin{aligned}
    D_{i,t}^{(h)} | X_{i,t}^{(h)} = x \sim  
    \pi(x, \hat{\beta}_{h,w}), \quad  \hat{\beta}_{h,w} = \check{\beta}_{h}^w  \pm \eta_n \underline{e}_j (h\text{ is even/odd}) ,  \quad \bar{C} n^{-1/2}< \eta_n < \bar{C} n^{-1/4}
    \end{aligned}
    $$ 
        \State For $n$ units in clusters $h \in \{1, \cdots, K\}$ observe $Y_{i,t}^{(h)}$, and for pair $\{k, k+1\}$, estimate
      $
      \widehat{M}_{k,w}^{(j)} = 
      \widehat{M}_{k + 1,w}^{(j)} = \frac{1}{2 \eta_n} \Big[\bar{Y}_t^{(k)} - \bar{Y}_0^{(k)}\Big] -  \frac{1}{2 \eta_n} \Big[\bar{Y}_t^{(k + 1)} - \bar{Y}_0^{(k +1)}\Big]. 
      $  
       \State $t  \leftarrow t + 1$.  
      \EndFor
      \State $w  \leftarrow w + 1$.          
        \EndWhile
      \end{algsubstates}
,  \Return 
 $
 \hat{\beta}^* = \frac{1}{K} \sum_{k = 1}^K \check{\beta}_{k}^{\check{T}}
 $  
 
         \end{algorithmic}
\end{algorithm}

\begin{algorithm} [H]   \caption{Dynamic Treatment Effects with $\beta \in \mathbb{R}$}\label{alg:sequential}
\footnotesize 
    \begin{algorithmic}[1]
    \Require Parameter space $\mathcal{B}$, clusters $\{1, \cdots, K\}$, two periods $\{t, t+1\}$, perturbation $\eta_n$.
    \State Group clusters into triads $r \in \{1, \cdots, K/3\}$ with consecutive indeces $\{k, k+1, k+2\}$; construct a grid of parameters $\mathcal{G} \subset [0,1]^2$ equally spaced on $[0,1]^2$; assign each parameter $(\beta_1^r, \beta_2^r) \in \mathcal{G}$ to a different triad $r$. 
   \State For each $r \in \{1, \cdots, K/3\}$, and triad $(k, k+1, k+2)$ randomize treatments 
\begin{equation} 
\small 
\begin{aligned} 
 D_{i,t}^{(k)} | X_i^{(k)}, \beta_1^r, \beta_2^r &\sim \pi(X_i^{(k)}, \beta_2^r), \quad 
 & D_{i,t + 1}^{(k)} | X_i^{(k)}, \beta_1^r, \beta_2^r \sim \pi(X_i^{(k)}, \beta_1^r), \\ 
 D_{i,t}^{(k + 1)} | X_i^{(k)}, \beta_1^r, \beta_2^r & \sim \pi(X_i^{(k)}, \beta_2^r + \eta_n) , \quad 
 &  D_{i,t + 1}^{(k + 1)} | X_i^{(k)}, \beta_1^r, \beta_2^r \sim \pi(X_i^{(k)}, \beta_1^r) \\ 
 D_{i,t}^{(k + 2)} | X_i^{(k)},\beta_1^r, \beta_2^r &\sim \pi(X_i^{(k)}, \beta_2^r), \quad 
 & D_{i,t + 1}^{(k + 2)} | X_i^{(k)}, \beta_1^r, \beta_2^r \sim\pi(X_i^{(k)}, \beta_1^r + \eta_n)
\end{aligned} . 
\end{equation} 
 \State For each $k \in \{1, 4, \cdots, K - 2\}$ estimate 
\begin{equation} \label{eqn:dynamic_est}  
\small 
\begin{aligned}  
 & \widehat{g}_{1, k} = \frac{\bar{Y}_{t+1}^{(k+2)} - \bar{Y}_{t+1}^{(k)}}{\eta_{n}}, \quad \widehat{g}_{2, k} = \frac{\bar{Y}_{t+1}^{(k+1)} - \bar{Y}_{t+1}^{(k)}}{\eta_{n}}, \quad \widetilde{\Gamma}_k = \frac{1}{3} \sum_{h \in \{k, k+1, k+2\}} \bar{Y}_{t+1}^{(h)}
 \end{aligned} 
 \end{equation}  

         \end{algorithmic}
\end{algorithm}

\section{Proofs for the extensions} \label{app:gradient}

 \subsection{Proof of Theorem \ref{thm:const_steady}} \label{sec:proof8}

We write
$
\mathbb{E}\Big[\bar{Y}_t^{(k)}\Big| p_t^{(k)}\Big]  = \alpha_t + \tau_k + g\Big(  q(\beta + \eta_n) + o_p(\eta_n), \beta + \eta_n\Big) . 
$ 
From a Taylor expansion in its first argument around $q(\beta + \eta_n)$, we obtain 
$
g\Big(  q(\beta + \eta_n) + o_p(\eta_n), \beta + \eta_n\Big) =  
g\Big(  q(\beta + \eta_n), \beta + \eta_n\Big) + o_p(\eta_n). 
$ 
Similarly, $\mathbb{E}\Big[\bar{Y}_t^{(k)}\Big| p_t^{(k+1)}\Big]  = \alpha_t + \tau_k + g\Big(   q(\beta - \eta_n) + o_p(\eta_n), \beta  - \eta_n\Big) = g\Big(  q(\beta - \eta_n), \beta - \eta_n\Big) + o_p(\eta_n)$.
Therefore, 
$$ 
\small 
\begin{aligned} 
\mathbb{E}\Big[\bar{Y}_t^{(k)}\Big| p_t^{(k)}\Big] - \mathbb{E}\Big[\bar{Y}^{(k+1)}\Big| p_t^{(k+1)}\Big] & = \tau_k - \tau_{k+1} + g\Big(  q(\beta + \eta_n), \beta + \eta_n\Big) + o_p(\eta_n) - g\Big(   q(\beta - \eta_n), \beta  - \eta_n\Big). 
\end{aligned}
$$ 
We can now proceed with a Taylor expansion around of the functions $g(\cdot)$ around $\beta$ to obtain (this follows similarly to Lemma \ref{lem:expect})
$
g\Big(  q(\beta + \eta_n), \beta + \eta_n\Big) - g\Big(   q(\beta - \eta_n), \beta  - \eta_n\Big) = 2 M_g(\beta) \eta_n + O(\eta_n^2). 
$ 
In addition observe that since at the baseline $\beta_0$ is the same for both clusters, 
$
\mathbb{E}[Y_0^{(k)} - Y_0^{(k + 1)} | p_t^{(k)}, p_t^{(k+1)}] = \tau_k - \tau_{k+1} + o_p(\eta_n).  
$ 
The proof concludes from Lemma \ref{lem:concentration1} with $\delta = 1/n$ and the local dependence assumption in Assumption \ref{ass:global}.

\subsection{Proof of Theorem \ref{thm:grid_search1}} \label{sec:aah}

Recall that $\mathcal{G}$ denotes a finite grid with $K/2$ elements. 
 First, we bound 
$
W(\beta^*) - W(\hat{\beta}^{ow}) \le 2 \sup_{\beta \in [0,1]^p} \Big|W(\beta) - \hat{W}(\beta) \Big|.
$ 
By the mean value theorem, we can write for any $\beta^k \in \mathcal{G}$
$
W(\beta) = W(\beta^k) + M(\beta^k)^\top(\beta - \beta^k) + O\Big(||\beta^k - \beta||^2\Big), 
$ 
Since we construct $\hat{W}(\beta)$ as in Equation \eqref{eqn:expansion}, we can choose $\beta^k$ closest to $\beta$, such that $ O\Big(||\beta^k - \beta||^2\Big) = O(1/K^{2/p})$ by construction of the grid. We can write
$$
\small 
\begin{aligned} 
\sup_{\beta \in [0,1]} \Big|W(\beta) - \hat{W}(\beta) \Big| & \le \sup_{\beta \in [0, 1]^p, k \in \{1, \cdots, K\}} \Big| W(\beta^k) + M(\beta^k)^\top(\beta - \beta^k) - \bar{W}^{k} - \widehat{M}_{(k, k+1)}^\top(\beta - \beta^{k})\Big| + O(1/K^{2/p}) \\ 
&\le 
\sup_{k \in \{1, \cdots, K\}} \Big| W(\beta^k)- \bar{W}^{k}\Big| + ||M(\beta^k) - \widehat{M}_{k, k+1}||_{\infty} O(1) + O(1/K^{2/p}) 
\end{aligned} 
$$ 
In addition, similarly to what discussed in Lemma \ref{lem:expect}, it follows 
$$
\small 
\begin{aligned} 
2 \mathbb{E}\Big[\bar{W}^k\Big] = \int y(x, \beta^k + \eta_n)dF_X(x) + \int y(x, \beta^k -  \eta_n)dF_X(x) = 2 \int y(x, \beta^k)dF_X(x) + O(\eta_n^2). 
\end{aligned}
$$ 
Using Lemma \ref{lem:concentration1}, we can write for all $k \le K$, with probability at least $1 - \delta$, 
$
\Big|\bar{W}^k-  W(\beta^k)\Big| \le c_0\Big(\sqrt{\gamma_N \log(p K \gamma_N/\delta)/n} + \eta_n^2 \Big),   
$ 
where we used the union bound over $K, p$ in the expression. Similarly, from Lemma \ref{lem:2}, also using the union bound over $K$ and $p$, with probability at least $1 - \delta$, 
$
||\widehat{M}_{(k, k+1)} - M(\beta^k)||_{\infty} \le c_0\Big(\sqrt{\gamma_N \log(K p \gamma_N/\delta)/(n \eta_n^2)} + \eta_n\Big),  
$ 
which concludes the proof as we choose $\delta = 1/n$, since $\eta_n = o(1)$, and $p$ is finite.

\subsection{Proof of Proposition \ref{lem:lem0}} \label{sec:lem_main}

\paragraph{Proof of $(i)$ in Assumption \ref{ass:ass_0}} The condition $(i)$ state no carryovers and covariates with same distribution. This condition holds directly under $(B)$ in Example \ref{exmp:microfoundation} and Equation \eqref{eqn:network} 

\paragraph{Proof of $(ii)$ in Assumption \ref{ass:ass_0}} 
Under Condition (B) in Example \ref{exmp:microfoundation}, and using the fact that $r(\cdot)$ is symmetric in $A_{i, \cdot}^{(k)}$, we can write for some function $g$,  
 $$
 \small 
 \begin{aligned} 
 r\Big(D_{i,t}^{(k)}, D_{j: A_{i,j}^{(k)} > 0,t}^{(k)}, X_i^{(k)}, X_{j: A_{i,j}^{(k)} > 0}^{(k)},  U_i^{(k)}, U_{j: A_{i,j}^{(k)} > 0}^{(k)}, |\mathcal{N}_i^{(k)}|, \nu_{i,t}^{(k)}\Big) = g(Z_{i,t}^{(k)}). 
 \end{aligned} 
 $$ 
 Here, $Z_{i,t}^{(k)}$ depends on $A_i^{(k)}$, i.e., the edges of individual $i$, and on unobservables and observables of all those individuals such that $A_{i,j}^{(k)} > 0$, namely, 
 $$
 \small 
 \begin{aligned} 
 Z_{i,t}^{(k)} = \Big[D_{i,t}^{(k)}, X_i^{(k)}, U_i^{(k)}, \nu_{i,t}^{(k)}, A_i^{(k)} \otimes \Big( X^{(k)}, U^{(k)}, D_{t}^{(k)}\Big), \Big\{\Big[X_j^{(k)}, U_j^{(k)}\Big], j :  1\{i_k \leftrightarrow j_k\} = 1\Big\}\Big].
 \end{aligned} 
 $$ 
 The last element in $Z_{i,t}$ captures the dependence of $r(\cdot)$ with $A_{i, \cdot}^{(k)}$. 
Such representation follows under Condition (A) in Example \ref{exmp:microfoundation}, $A_i^{(k)}$ is a function of 
 $
 \Big(X_i^{(k)}, U_i^{(k)}, \Big\{\Big[X_j^{(k)}, U_j^{(k)}\Big], j :  1\{i_k \leftrightarrow j_k\} = 1\Big\}\Big), 
 $ 
 only, and each entry depends on $(X_j, U_j, X_i, U_i)$ through the same function $f$ for each individual. What is important, is that $\sum_j 1\{i_k \leftrightarrow j_k\} = \gamma_N^{1/2}$ for each unit $i$. 
 $$
 \small 
 \begin{aligned} 
 Z_{i,t}^{(k)} = \tilde{g}(D_{i,t}^{(k)}, \nu_{i,t}^{(k)}, X_i^{(k)}, U_i^{(k)}, \tilde{Z}_{i,t}^{(k)}), \quad \tilde{Z}_{i,t}^{(k)} =   \Big\{\Big[X_j^{(k)}, U_j^{(k)}, D_{j,t}^{(k)}\Big], j \neq i :  1\{i_k \leftrightarrow j_k\} = 1 \Big\}, 
 \end{aligned} 
 $$ 
where $\tilde{Z}_{i,t}^{(k)}$ is the vector of $\Big[X_j^{(k)}, U_j^{(k)}, D_{j,t}^{(k)}\Big]$ of all individuals $j$ with $1\{i_k \leftrightarrow j_k\} = 1$.

To show that Assumption \ref{ass:ass_0} $(ii)$ holds it suffices to show that $Z_{i,t}^{(k)}$ are identically distributed (conditional on the indicators $1\{i_k \leftrightarrow j_k\}$). 

To show this, observe first that since $(U_i^{(k)}, X_i^{(k)}) \sim_{i.i.d.} F_{X|U} F_U$, and $\{\nu_{i,t}\}$ are $i.i.d.$ conditionally on $U^{(k)}, X^{(k)}$ (Condition (B) in Example \ref{exmp:microfoundation}) and treatments are randomized independently (Assumption \ref{defn:bernoulli}), we have  
$ 
\Big[X_j^{(k)}, U_j^{(k)}, \nu_{j,t}^{(k)}, D_{j,t}^{(k)}\Big] \Big| \hat{\beta}_{k,t} \sim_{i.i.d.} \mathcal{D}(\hat{\beta}_{k,t})
$
is $i.i.d$ with some distribution $\mathcal{D}(\hat{\beta}_{k,t})$ which only depends on the coefficient $\hat{\beta}_{k,t}$ governing the distribution of $D_{i,t}^{(k)}$ under Assumption \ref{defn:bernoulli}. 

As a result for $\hat{\beta}_{k,t} \perp (X^{(k)}, \nu_t^{(k)}, U^{(k)})$, $Z_{i,t}^{(k)}$ are identically distributed across $i,k$ and therefore Assumption \ref{ass:ass_0} $(ii)$ holds. The fact that $Z_{i,t}^{(k)}$ are identically distributed follows from the fact that $\sum_j  1\{i_k \leftrightarrow j_k\} = \gamma_N^{1/2}$ for all $i$, and $\tilde{Z}_{i,t}^{(k)} \perp [D_{i,t}^{(k)}, \nu_{i,t}^{(k)}, X_i^{(k)}, U_i^{(k)}]$ for all $(i,k,t)$ because $[D_{i,t}^{(k)}, \nu_{i,t}^{(k)}, X_i^{(k)}, U_i^{(k)}]$ are iid from Assumption \ref{defn:bernoulli} and Condition (B) in Example \ref{exmp:microfoundation}. 
 
 \paragraph{Proof of Assumption \ref{ass:ass_0} $(iii)$}
From the argument above, $Y_{i,t}^{(k)} | \hat{\beta}_{k,t}$ is a measurable function of a vector $\Big[X_j^{(k)}, U_j^{(k)}, \nu_{j,t}^{(k)}, D_{j,t}^{(k)}\Big]_{j :  1\{i_k \leftrightarrow j_k\} = 1}$.\footnote{Here for notational convenience convenience only, we are letting $1\{i_k \leftrightarrow i_k\} = 1$.} Therefore, given $\hat{\beta}_{k,t}$, $Y_{i,t}^{(k)}$ is mutally independent of $Y_{v, t}^{(k)}$ for all $v$ such that they do not share a common element $\Big[X_j^{(k)}, U_j^{(k)}, \nu_{j,t}^{(k)}, D_{j,t}^{(k)}\Big]$, that is, such that 
 $
\max_j 1\{i_k \leftrightarrow j_k\} 1\{v_k \leftrightarrow j_k\} = 0. 
 $  There are at most $\gamma_N^{1/2} + \gamma_N$ many of $Y_{v,t}^{(k)}$ which can share a common neighbor with $Y_{i,t}^{(k)}$ ($\gamma_N^{1/2}$ neighhbors and $\gamma_N$ neighbors of the neighbors).

 \subsection{Proof of Theorem \ref{thm:optimality_global}} \label{sec:proof7}
 
\vspace{-2mm}
We break the proof into several steps. Recall that  the theorem assumes the outcome model in Example \ref{exmp:microfoundation}. 
\paragraph{Upper bound on $W_N^*$}  Recall that from (B) in Example \ref{exmp:microfoundation}, the maximum degree is $\gamma_N^{1/2}$. Consider first the case where Assumption $\Delta(x) = v(x)$. We return to the case where $\Delta(x) \neq v(x)$ at the end of the proof. For $\Delta(x) = v(x)$ 
$$
\small 
\begin{aligned} 
W_N^* \le \frac{1}{N} \sum_{i=1}^N \sup_{\mathcal{P}} \mathbb{E}\Big[\mathbb{E}_{D \sim \mathcal{P}(A, X)}\Big[s\Big(\Big[\frac{\sum_{j = 1}^N A_{i,j} D_j 1\{X_j = x\}}{\sum_{j = 1}^N A_{i,j} 1\{X_j = x\}}\Big]_{x \in \{1, \cdots, |\mathcal{X}|\}}\Big) \Big| A, X\Big]\Big]. 
\end{aligned} 
$$  
Let 
$
\beta^G = \mathrm{arg} \max_{\beta_1, \cdots, \beta_{|\mathcal{X}|} \in [0,1]^{|\mathcal{X}|}} s\Big(\beta_1, \cdots, \beta_{|\mathcal{X}|}\Big).
$ 
Note that since $D_j \in \{0,1\}$, we can write 
$$
\small 
\begin{aligned} 
\sup_{\mathcal{P}} \mathbb{E}\Big[\mathbb{E}_{D \sim \mathcal{P}(A, X)}\Big[s\Big(\Big[\frac{\sum_{j = 1}^N A_{i,j} D_j 1\{X_j = x\}}{\sum_{j = 1}^N A_{i,j} 1\{X_j = x\}}\Big]_{x \in \{1, \cdots, |\mathcal{X}|\}}\Big) \Big| A, X\Big]\Big] \le s\Big(\beta_1^G, \cdots, \beta_{|\mathcal{X}|}^G\Big).
\end{aligned} 
$$ 
\vspace{-8mm} 
\paragraph{Lower bound on $W(\beta^*)$} Using the fact that $\mathcal{B} = [0,1]^{|\mathcal{X}|}$, we can write\footnote{$\mathbb{E}_\beta\Big[s\Big(\Big[\frac{\sum_{j = 1}^N A_{i,j} D_j 1\{X_j = x\}}{\sum_{j = 1}^N A_{i,j} 1\{X_j = x\}}\Big]_{x \in \{1, \cdots, |\mathcal{X}|\}}\Big)\Big]$ does not depend on  $i$ similarly to the proof of Proposition \ref{lem:lem0}.}
$$
\small 
\begin{aligned} 
W(\beta^*) & = \max_{\beta \in [0,1]^{|\mathcal{X}|}} \mathbb{E}_\beta\Big[s\Big(\Big[\frac{\sum_{j = 1}^N A_{i,j} D_j 1\{X_j = x\}}{\sum_{j = 1}^N A_{i,j} 1\{X_j = x\}}\Big]_{x \in \{1, \cdots, |\mathcal{X}|\}}\Big)\Big] \ge \mathbb{E}_{\beta^G}\Big[s\Big(\Big[\frac{\sum_{j = 1}^N A_{i,j} D_j 1\{X_j = x\}}{\sum_{j = 1}^N A_{i,j} 1\{X_j = x\}}\Big]_{x \in \{1, \cdots, |\mathcal{X}|\}}\Big)\Big], 
\end{aligned} 
$$ 
where we use the fact that $\beta^G = (\beta_1^G, \cdots, \beta_{|\mathcal{X}|}^G) \in [0,1]^{|\mathcal{X}|}$, and $\Delta(\cdot) = v(\cdot)$. It follows
$$
\small
\begin{aligned} 
&  \mathbb{E}_{\beta^G}\Big[s\Big(\Big[\frac{\sum_{j = 1}^N A_{i,j} D_j 1\{X_j = x\}}{\sum_{j = 1}^N A_{i,j} 1\{X_j = x\}}\Big]_{x \in \{1, \cdots, |\mathcal{X}|\}}\Big)\Big] 
\\ & =  s(\beta^G) + \mathbb{E}_{\beta^G}\Big\{\frac{\partial s(\beta)}{\partial \beta}\Big|_{\tilde{\beta}} \times  \Big(\Big[\frac{\sum_{j = 1}^N A_{i,j} D_j 1\{X_j = x\}}{\sum_{j = 1}^N A_{i,j} 1\{X_j = x\}}\Big]_{x \in \{1, \cdots, |\mathcal{X}|\}} - \beta^G \Big)\Big\},  
\end{aligned} 
$$ 
with $\frac{\partial s(\cdot)}{\partial \beta}$ evaluated at a (random) $\tilde{\beta} \in \Big[\Big[\frac{\sum_{j = 1}^N A_{i,j} D_j 1\{X_j = x\}}{\sum_{j = 1}^N A_{i,j} 1\{X_j = x\}}\Big]_{x \in \{1, \cdots, |\mathcal{X}|\}}, \beta^G \Big]$. It follows 
$$
\small 
\begin{aligned} 
& W_N^* - W(\beta^*) \le \underbrace{\Big|\mathbb{E}_{\beta^G}\Big\{\frac{\partial s(\beta)}{\partial \beta}\Big|_{\tilde{\beta}} \times  \Big(\Big[\frac{\sum_{j = 1}^N A_{i,j} D_j 1\{X_j = x\}}{\sum_{j = 1}^N A_{i,j} 1\{X_j = x\}}\Big]_{x \in \{1, \cdots, |\mathcal{X}|\}} - \beta^G \Big) \Big\}\Big|}_{(I)}. 
\end{aligned} 
$$ 
\vspace{-8mm}
\paragraph{Bound with Cauchy-Schwarz} We can now bound $(I)$ as follows. 
$$
\small
\begin{aligned} 
&(I) \le \sup_{\beta} \Big|\Big|\frac{\partial s(\beta)}{\partial \beta}\Big|\Big|_2  \times |\mathcal{X}| \max_{x \in \mathcal{X}} \underbrace{\sqrt{\mathbb{E}_{\beta^G}\Big[\Big(\frac{\sum_{j = 1}^N A_{i,j} D_j 1\{X_j = x\}}{\sum_{j = 1}^N A_{i,j} 1\{X_j = x\}} - \beta_x^G\Big)^2\Big]}}_{(II)}, 
\end{aligned} 
$$ 
where we used Cauchy-Schwarz and then bound the first component by the supremum over $\beta,x$ and the second component by the largest term over $x \in \mathcal{X}$ times $|\mathcal{X}|$. 
\vspace{-3mm}
\paragraph{Bound for $(II)$}
Recall that here $\mathbb{E}_{\beta^G}$ indicates that  $D_{i,t} | X_i^{(k)} = x \sim_{i.i.d.} \mathrm{Bern}(\beta_x^G)$. It follows 
$
\mathbb{E}_{\beta^G}\Big[\frac{\sum_{j = 1}^N A_{i,j} D_j 1\{X_j = x\}}{\sum_{j = 1}^N A_{i,j} 1\{X_j = x\}} \Big| X^{(k)},  A^{(k)}\Big] = \beta_x^G 1\{\sum_{j = 1}^N A_{i,j} 1\{X_j = x\} > 0\}
$ (since we defined $0/0 = 0$). 
Let $p_x = P(\sum_{j = 1}^N A_{i,j} 1\{X_j = x\} > 0)$ By the law of total variance,
$$
\small 
\begin{aligned} 
\mathbb{E}_{\beta^G}\Big[\Big(\frac{\sum_{j = 1}^N A_{i,j} D_j 1\{X_j = x\}}{\sum_{j = 1}^N A_{i,j} 1\{X_j = x\}} - \beta_x^G\Big)^2\Big] = \mathbb{E}_{\beta^G}\Big[\mathrm{Var}\Big(\frac{\sum_{j = 1}^N A_{i,j} D_j 1\{X_j = x\}}{\sum_{j = 1}^N A_{i,j} 1\{X_j = x\}} \Big| X^{(k)}, A^{(k)}\Big)\Big] + (\beta_x^G)^2 (1 - p_x). 
\end{aligned} 
$$ 
In addition, 
$
\mathrm{Var}_{\beta^G}\Big(\frac{\sum_{j = 1}^N A_{i,j} D_j 1\{X_j = x\}}{\sum_{j = 1}^N A_{i,j} 1\{X_j = x\}} \Big| X^{(k)}, A^{(k)}\Big) \le 1\{\sum_{j = 1}^N A_{i,j} 1\{X_j = x\} > 0\}\beta_x^G(1 - \beta_x^G)\Big/\sum_{j = 1}^N A_{i,j} 1\{X_j = x\}.
$ 
Let $\kappa ' = \underline{\kappa} P(X = x)$, $P(X = x) > 0$ and $\kappa'$ is bounded away from zero by Assumption \ref{ass:discrete_space}. Let $1_x = 1\{\sum_{j = 1}^N A_{i,j} 1\{X_j = x\} > 0\}$ (recall that whenever $\sum_{j = 1}^N A_{i,j} 1\{X_j = x\} = 0$ we indicate the entry in $s(\cdot)$ as zero so that effectively in our notation we adopt the convention that $0/0 = 0$ by definition). 
$$
\small 
\begin{aligned} 
& \mathbb{E}\Big[1_x \beta_x^G(1 - \beta_x^G)\Big/\sum_{j = 1}^N A_{i,j} 1\{X_j = x\}\Big]  = \beta_x^G(1 - \beta_x^G) 
\mathbb{E}\Big[1_x\Big/\sum_{j = 1}^N A_{i,j} 1\{X_j = x\}\Big] \\ 
& \le \beta_x^G(1 - \beta_x^G)  P\Big(0 < \sum_{j = 1}^N A_{i,j} 1\{X_j = x\} < \kappa' \gamma_N^{1/4}\Big) +  \beta_x^G(1 - \beta_x^G) P\Big(\sum_{j = 1}^N A_{i,j} 1\{X_j = x\} \ge \kappa' \gamma_N^{1/4}\Big) \frac{1}{\kappa' \gamma_N^{1/4}} \\
& \le \beta_x^G(1 - \beta_x^G)  P\Big(\sum_{j = 1}^N A_{i,j} 1\{X_j = x\} < \kappa' \gamma_N^{1/4}\Big) + \frac{1}{\kappa' \gamma_N^{1/4}} .  
\end{aligned} 
$$ 
\vspace{-5mm} 
\paragraph{Final bound} Next, we derive a bound for $P\Big(\sum_{j = 1}^N A_{i,j} 1\{X_j = x\} < \kappa' \gamma_N^{1/4}\Big)$, since $\frac{1}{\kappa' \gamma_N^{1/4}} = o(1)$ as $\gamma_N \rightarrow \infty$. 
Define $h_x(X_i, U_i) = P(X = x)\int l(X_i, U_i, x, u)dF_{U | X = x}(u)$. Note that (for $i \neq j$)
$
\mathbb{E}[A_{i,j} 1\{X_j = x\}| X_i, U_i] = h_x(X_i, U_i) 1\{i \leftrightarrow j\}, 
$
since, conditional on $(X_i, U_i)$,the indicator $1\{i \leftrightarrow j\}$ is fixed (exogenous), and $(X_i, U_i) \sim_{i.i.d.} F_XF_{U|X}$. Also, recall that $\sum_j  1\{i \leftrightarrow j\} = \gamma_N^{1/2}$. Hence, only $\gamma_N^{1/2}$ many edges of $i$ can at most be non-zero, while the remaining ones are zero almost surely. Therefore, using Hoeffding's inequality \citep{wainwright2019high}, and using independence conditional on $X_i, U_i$,
\begin{equation} \label{eqn:helper1a}
\small 
\begin{aligned} 
P\Big(\Big|\frac{1}{\gamma_N^{1/2}} \sum_{j = 1}^N A_{i,j} 1\{X_j = x\} - h_x(X_i, U_i)\Big| \le \bar{C} \sqrt{\frac{\log(2 \gamma_N)}{\gamma_N^{1/2}}}\Big| X_i, U_i\Big) \ge 1 - 1/\gamma_N,  
\end{aligned}
\end{equation}  
for a finite constant $\bar{C} < \infty$. 
Observe that $h_x(X_i, U_i) \ge \kappa' > 0, \kappa' =P(X = x) \kappa$ almost surely by assumption. 
Define the event
$
\mathcal{E} =  \Big\{|\sum_{j = 1}^N A_{i,j} 1\{X_j = x\} - \gamma_N^{1/2} h_x(X_i, U_i)| \le \bar{C} \sqrt{\log(2 \gamma_N) \gamma_N^{1/2}} \Big\},  
$ and $\mathcal{E}^c$ its complement. 
we can write 
\begin{equation} \label{eqn:helper2}
\small 
\begin{aligned} 
 & P\Big(\sum_{j = 1}^N A_{i,j} 1\{X_j = x\} < \kappa' \gamma_N^{1/4}\Big)   =  P\Big(\sum_{j = 1}^N A_{i,j} 1\{X_j = x\} - \gamma_N^{1/2} h_x(X_i, U_i) + \gamma_N^{1/2} h_x(X_i, U_i) < \kappa' \gamma_N^{1/4}\Big) \\ &\le 
    P\Big( \gamma_N^{1/2} h_x(X_i, U_i) < \kappa' \gamma_N^{1/4} + |\sum_{j = 1}^N A_{i,j} 1\{X_j = x\} - \gamma_N^{1/2} h_x(X_i, U_i)| \Big)  \\
    &\le 
 P\Big( \gamma_N^{1/2} h_x(X_i, U_i) < \kappa \gamma_N^{1/4} + |\sum_{j = 1}^N A_{i,j} 1\{X_j = x\} - \gamma_N^{1/2} h_x(X_i, U_i)| \Big| \mathcal{E} \Big) \\ &\quad \quad + P\Big( \gamma_N^{1/2} h_x(X_i, U_i) < \kappa' \gamma_N^{1/4} + |\sum_{j = 1}^N A_{i,j} 1\{X_j = x\} - \gamma_N^{1/2} h(X_i, U_i)| \Big | \mathcal{E}^c \Big)  \times
 P\Big( \mathcal{E}^c \Big). 
 \end{aligned} 
\end{equation} 
Note that by Equation \eqref{eqn:helper1a} (which holds conditionally and so also unconditionally)
$$
\small 
\begin{aligned}
& P\Big( \gamma_N^{1/2} h_x(X_i, U_i) < \kappa' \gamma_N^{1/4} + |\sum_{j = 1}^N A_{i,j} 1\{X_j = x\} - \gamma_N^{1/2} h_x(X_i, U_i)| \Big | \mathcal{E}^c \Big)  \times
 P\Big( \mathcal{E}^c \Big) \le \frac{1}{\gamma_N} = o(1).   
\end{aligned} 
$$ 
Finally, we can write for a finite constant $\bar{C} < \infty$, 
 $$
 \small 
 \begin{aligned}
 & P\Big( \gamma_N^{1/2} h_x(X_i, U_i) < \kappa' \gamma_N^{1/4} + |\sum_{j = 1}^N A_{i,j} 1\{X_j = x\} - \gamma_N^{1/2} h_x(X_i, U_i)| \Big| \mathcal{E}\Big)  \\
 & \le P\Big( \gamma_N^{1/2} h_x(X_i, U_i) < \kappa' \gamma_N^{1/4} + \bar{C} \sqrt{\log(2 \gamma_N) \gamma_N^{1/2}} \Big| \mathcal{E}\Big) \le  1\Big\{\inf_{x,x', u'} h_x(x', u') < \kappa' \gamma_N^{-1/4} + \bar{C} \sqrt{\log(2 \gamma_N)}\gamma_N^{-1/4}\Big\}
 \end{aligned} 
 $$ 
 which equals to zero for $N, \gamma_N$ large enough, since $\inf_{x,x', u'} h_x(x', u') > 0$. Using a similar argument (which we omit for space constraints), it is easy to show that $p_x \rightarrow 1$ as $\gamma_N, N \rightarrow \infty$. 
\vspace{-5mm} 
\paragraph{Case where $\Delta(x) \neq v(x)$} Consider the case where $\Delta(x) \neq v(x)$. We have
$$
\small 
\begin{aligned} 
& W_N^* \le \sum_{x \in \mathcal{X}} \Big[\Delta(x) - v(x)\Big]_{+} P(X = x) + \frac{1}{N} \sum_{i=1}^N \sup_{\mathcal{P}} \mathbb{E}\Big[\mathbb{E}_{D \sim \mathcal{P}(A, X)}\Big[s\Big(\Big[\frac{\sum_{j = 1}^N A_{i,j} D_j 1\{X_j = x\}}{\sum_{j = 1}^N A_{i,j} 1\{X_j = x\}}\Big]_{x \in \{1, \cdots, |\mathcal{X}|\}}\Big) \Big| A, X\Big]\Big] \\ 
&W(\beta^*) \ge  \sum_{x \in \mathcal{X}} \Big[\Delta(x) - v(x)\Big]_{-} P(X = x) + \max_{\beta \in [0,1]^{|\mathcal{X}|}} \mathbb{E}_\beta\Big[s\Big(\frac{\sum_{j = 1}^N A_{i,j} D_j 1\{X_j = x\}}{\sum_{j = 1}^N A_{i,j} 1\{X_j = x\}}\Big]_{x \in \{1, \cdots, |\mathcal{X}|\}}\Big) \Big], 
\end{aligned} 
$$ 
where $[x]_{+} = \max\{0,x\}, [x]_{-} = \min\{0,x\}$. 
Note that $\sum_{x \in \mathcal{X}} \Big[\Delta(x) - v(x)\Big]_{+} P(X = x) - \sum_{x \in \mathcal{X}} \Big[\Delta(x) - v(x)\Big]_{-} P(X = x) = \mathbb{E}\Big[|\Delta(X) - v(X)|\Big]$. The rest of the proof follows as above, taking into account the additional term $\mathbb{E}\Big[|\Delta(X) - v(X)|\Big]$.

\subsection{Proof of Theorem \ref{thm:inference4}} \label{sec:proof10} Let $\tilde{K} = K/2p_1$. Take 
 $
 t_z^j = \frac{\frac{1}{\sqrt{z}} \sum_{i=1}^z X_i^j}{\sqrt{(z - 1)^{-1} \sum_{i=1}^z (X_i^j - \bar{X}^j)^2}}, X_i^j \sim \mathcal{N}(0, \sigma_i^j).  
 $
By Theorem 1 in \cite{ibragimov2010t}, we have that for $\alpha \le 0.08$
 $
 \sup_{\sigma_1, \cdots, \sigma_q} P(|t_z| \ge \mathrm{cv}_\alpha) = P(|T_{q-1}| \ge \mathrm{cv}_\alpha), 
 $ 
 where $\mathrm{cv}_\alpha$ is the critical value of a t-test with level $\alpha$, and $T_{z-1}$ is a t-student random variable with $z-1$ degrees of freedom. The equality is attained under homoskedastic variances \citep{ibragimov2010t}.   We now write 
 $$
 \small 
 \begin{aligned} 
 P\Big(\mathcal{T}_n \ge q | H_0\Big) &= 
 P\Big(\max_{j \in \{1, \cdots, l\}}|Q_{j,n}| \ge q | H_0\Big) = 1 -   P\Big(|Q_{j,n}| \le q \forall j| H_0\Big) = 1 -   \prod_{j=1}^{p_1} P\Big(|Q_{j,n}| \le q| H_0\Big),  
 \end{aligned} 
 $$
 where the last equality follows by between cluster independence.
 Observe now that by Theorem \ref{thm:inference1} and the fact that the rate of convergence is the same for all clusters (Assumption \ref{ass:var}), for all $j$, for some $(\sigma_1, \cdots, \sigma_{z})$, $z = \tilde{K}$, 
 $
\sup_q \Big| P\Big(|Q_{j,n}| \le q| H_0\Big) - P\Big(|t_{\tilde{K}}^j| \le q\Big)\Big| = o(1).  
 $
Using the result in \cite{ibragimov2010t}, we have 
$
\inf_{\sigma_1^j, \cdots, \sigma_{\tilde{K}}^j}  P\Big(|t_{\tilde{K}}^j| \le q\Big) = P\Big(|T_{\tilde{K} - 1}| \le q| H_0\Big). 
$
For size equal to $\alpha$, we obtain 
$
1 - P^{p_1}\Big(|T_{\tilde{K} - 1}| \le q\Big) = \alpha \Rightarrow  P\Big(|T_{\tilde{K} - 1}| \ge q\Big) = 1 - (1 - \alpha)^{1/p_1}. 
$
The proof completes after solving for $q$.

\subsection{Proof of Theorem \ref{thm:rate}} \label{sec:proof11} 

In this subsection, we derive the theorem for the gradient descent method under Assumption \ref{ass:quasi_convexity}. The derivation is split into the following lemmas.

\begin{defn}[Oracle gradient descent] \label{defn:oracle_gradient}
Fix \(v\in(0,1)\), \(J\in(0,1]\), and a positive constant \(\kappa\) defined in Lemma \ref{lem:gradient_descent3}. Set \(\beta_0^*=\beta_0\). For \(w=0,\ldots,\check T-1\), define
\begin{equation} \label{eqn:beta_start}
\small
\begin{aligned}
\beta_{w+1}^*
=
\begin{cases}
P_{\mathcal B_1,\mathcal B_2}
\left[
\beta_w^*
+
\displaystyle
\frac{J}{\check T^{1/2-v/2}}
\frac{M(\beta_w^*)}{\|M(\beta_w^*)\|_2}
\right],
&
\text{if }
\|M(\beta_w^*)\|_2^2>\mu^2\kappa\check T^{-1+v},
\\[1.2em]
\beta_w^*,
&
\text{otherwise.}
\end{cases}
\end{aligned}
\end{equation}
\end{defn}

\begin{lem}[Adaptive gradient descent for quasi-concave functions and locally strongly concave functions]
\label{lem:gradient_descent3}
Let \(\mathcal B\) be compact and define
\[
G
=
\max\left\{
\sup_{\beta,\beta'\in\mathcal B}\|\beta-\beta'\|_2^2,
1
\right\}.
\]
Let Assumptions \ref{ass:regularity_basic}, \ref{ass:bounded}, and \ref{ass:quasi_convexity} hold. Then there exists a positive finite constant \(\kappa\), defined in Equation \eqref{eqn:kappa}, such that, for any \(v\in(0,1)\) and for \(\check T\) sufficiently large with
$ 
\check T\ge \left(\frac{G+1}{J}\right)^{1/v},
$ 
the oracle sequence in Definition \ref{defn:oracle_gradient} satisfies
$ 
\|\beta_{\check T}^*-\beta^*\|_2^2
\le
\kappa\check T^{-1+v}.
$ 
\end{lem}

\begin{proof}[Proof of Lemma \ref{lem:gradient_descent3}]
The proof follows the normalized-gradient argument of \cite{hazan2015beyond}, adapted to obtain the explicit rate \(\check T^{-1+v}\). Let
$ 
\epsilon:=\check T^{-1+v}.
$ 
Since \(\beta^*\) is an interior optimum and \(W\) is locally strongly concave at \(\beta^*\), compactness of \(\mathcal B\), continuity of \(W\), and Assumption \ref{ass:regularity_basic} imply that there exist finite constants
\[
0<\lambda_{\min}\le \lambda_{\max}<\infty
\]
such that, for all \(\beta\in\mathcal B\),
\[
\lambda_{\min}\|\beta-\beta^*\|_2^2
\le
W(\beta^*)-W(\beta)
\le
\lambda_{\max}\|\beta-\beta^*\|_2^2 .
\]
Define
\begin{equation} \label{eqn:kappa}
\small
\begin{aligned}
\bar\kappa:=\frac{\lambda_{\max}}{\lambda_{\min}},
\qquad
\kappa:=4\left(\bar\kappa+J^2+1\right).
\end{aligned}
\end{equation}
Clearly, if the oracle algorithm stops at some \(w\), then
\[
\|M(\beta_w^*)\|_2^2
\le
\mu^2\kappa\epsilon.
\]
By Assumption \ref{ass:quasi_convexity}(B),
\[
\|\beta_w^*-\beta^*\|_2^2
\le
\kappa\epsilon.
\]
Because the same point is then repeated at all subsequent iterations, the conclusion follows. It remains to study the case in which the oracle algorithm updates.

Fix \(w\) such that the oracle algorithm updates, and write
$ 
\nabla_w:=M(\beta_w^*).
$ 
Suppose first that
\[
\|\beta_w^*-\beta^*\|_2^2>\bar\kappa\epsilon.
\]
For \(\check T\) large enough, the point
\[
\tilde\beta
:=
\beta^*
-
\sqrt{\epsilon}
\frac{\nabla_w}{\|\nabla_w\|_2}
\]
belongs to \(\mathcal B\), since \(\beta^*\) is an interior point of \(\mathcal B\). By the quadratic bounds above,
$ 
W(\beta^*)-W(\tilde\beta)
\le
\lambda_{\max}\epsilon,
$ 
whereas
$ 
W(\beta^*)-W(\beta_w^*)
>
\lambda_{\min}\bar\kappa\epsilon
=
\lambda_{\max}\epsilon.
$ 
Therefore \(W(\tilde\beta)>W(\beta_w^*)\). By Assumption \ref{ass:quasi_convexity}(A),
\[
\nabla_w^\top(\tilde\beta-\beta_w^*)\ge 0.
\]
Substituting the definition of \(\tilde\beta\) gives
\begin{equation} \label{eqn:jhg}
\nabla_w^\top(\beta^*-\beta_w^*)
\ge
\sqrt{\epsilon}\,\|\nabla_w\|_2 .
\end{equation}

Using the update rule and the fact that projection onto the convex set \(\mathcal B\) is non-expansive,
\[
\begin{aligned}
\|\beta^*-\beta_{w+1}^*\|_2^2
&\le
\left\|
\beta^*
-
\beta_w^*
-
\frac{J}{\check T^{1/2-v/2}}
\frac{\nabla_w}{\|\nabla_w\|_2}
\right\|_2^2
\\
&=
\|\beta^*-\beta_w^*\|_2^2
-
\frac{2J}{\check T^{1/2-v/2}}
\frac{\nabla_w^\top(\beta^*-\beta_w^*)}{\|\nabla_w\|_2}
+
\frac{J^2}{\check T^{1-v}} .
\end{aligned}
\]
Since \(\sqrt{\epsilon}=\check T^{-1/2+v/2}\), Equation \eqref{eqn:jhg} implies
\[
\|\beta^*-\beta_{w+1}^*\|_2^2
\le
\|\beta^*-\beta_w^*\|_2^2
-
(2J-J^2)\epsilon
\le
\|\beta^*-\beta_w^*\|_2^2
-
J\epsilon,
\]
where the last inequality uses \(J\le 1\).

Now suppose instead that
\[
\|\beta_w^*-\beta^*\|_2^2\le \bar\kappa\epsilon.
\]
If the algorithm stops, the conclusion follows from the preceding stopping argument. If it updates, then the update has length \(J\sqrt{\epsilon}\), and therefore
\[
\|\beta_{w+1}^*-\beta^*\|_2
\le
\|\beta_w^*-\beta^*\|_2
+
J\sqrt{\epsilon}
\le
(\sqrt{\bar\kappa}+J)\sqrt{\epsilon}.
\]
By the definition of \(\kappa\), this implies
\[
\|\beta_{w+1}^*-\beta^*\|_2^2
\le
\kappa\epsilon.
\]
Moreover, once the oracle sequence enters the ball
\[
\left\{\beta:\|\beta-\beta^*\|_2^2\le \kappa\epsilon\right\},
\]
it remains in this ball. To see this, suppose first that the algorithm stops at \(w\). Then
\[
\beta_{w+1}^*=\beta_w^*,
\]
and the claim is immediate. Suppose instead that the algorithm updates. If
\[
\|\beta_w^*-\beta^*\|_2^2\le \bar\kappa\epsilon,
\]
then the one-step bound above gives
\[
\|\beta_{w+1}^*-\beta^*\|_2^2\le \kappa\epsilon.
\]
If instead
\[
\bar\kappa\epsilon
<
\|\beta_w^*-\beta^*\|_2^2
\le
\kappa\epsilon,
\]
then the descent inequality applies and gives
\[
\|\beta_{w+1}^*-\beta^*\|_2^2
\le
\|\beta_w^*-\beta^*\|_2^2-J\epsilon
\le
\kappa\epsilon.
\]
Thus the ball of radius squared \(\kappa\epsilon\) is invariant for the oracle sequence.

It remains to show that the sequence enters that ball by time \(\check T\). Suppose not. Then, for every \(w=0,\ldots,\check T-1\),
\[
\|\beta_w^*-\beta^*\|_2^2>\kappa\epsilon\ge \bar\kappa\epsilon,
\]
so the descent inequality applies at every step. Therefore,
\[
\|\beta_{\check T}^*-\beta^*\|_2^2
\le
\|\beta_0-\beta^*\|_2^2
-
J\check T\epsilon
\le
G-J\check T^v.
\]
If
\[
\check T\ge \left(\frac{G+1}{J}\right)^{1/v},
\]
the right-hand side is negative, a contradiction. Therefore the sequence must enter the \(\kappa\epsilon\)-ball by time \(\check T\), and by the invariance argument above,
\[
\|\beta_{\check T}^*-\beta^*\|_2^2
\le
\kappa\epsilon
=
\kappa\check T^{-1+v}.
\]
This completes the proof.
\end{proof}

\begin{lem} \label{lem:fg}
Let Assumptions \ref{ass:ass_0}, \ref{ass:bounded}, and \ref{ass:quasi_convexity} hold. Assume that
\[
\small
\begin{aligned}
\epsilon_n
&\ge
\sqrt{p}
\left[
\bar C
\sqrt{
\gamma_N
\frac{\log(\gamma_N p \check T K/\delta)}
{\eta_n^2 n}
}
+\eta_n
\right],
\qquad
\frac{\mu\sqrt{\kappa}}{\check T^{1/2-v/2}}-2\epsilon_n>0
\end{aligned}
\]
for a finite constant \(\bar C>0\). Then, for any \(\delta\in(0,1)\), with probability at least \(1-\delta\), for any \(w\le \check T\),
\[
\small
\begin{aligned}
\text{either } \quad
(i)\quad
\left\|
\check\beta_k^w-\beta_w^*
\right\|_\infty
&=
\mathcal O\big(P_w(\delta)+p\eta_n\big),
\\
\text{or } \quad
(ii)\quad
\left\|
\check\beta_k^w-\beta^*
\right\|_2^2
&\le
\frac{p}{\check T^{1-v}} .
\end{aligned}
\]
Here \(P_1(\delta)=\mathrm{err}(\delta)\) and
\[
\small
\begin{aligned}
P_w(\delta)
=
P_{w-1}(\delta)
+
\frac{2\sqrt p}{\nu_n}Bp\frac{1}{\check T^{1/2-v/2}}P_{w-1}(\delta)
+
\frac{2\sqrt p}{\nu_n}\frac{1}{\check T^{1/2-v/2}}\mathrm{err}(\delta),
\end{aligned}
\]
for a finite constant \(B<\infty\), where
\[
\small
\begin{aligned}
\mathrm{err}(\delta)
\le
c_0
\left(
\sqrt{
\gamma_N
\frac{\log(\gamma_N p \check T K/\delta)}
{\eta_n^2 n}
}
+
p\eta_n
\right),
\end{aligned}
\]
with
\[
\small
\begin{aligned}
\nu_n
=
\frac{\mu\sqrt{\kappa}}{\check T^{1/2-v/2}}
-
2\epsilon_n,
\end{aligned}
\]
and a finite constant \(c_0<\infty\).
\end{lem}

\begin{proof}[Proof of Lemma \ref{lem:fg}]
Let
\[
\tau_{\check T}:=\frac{1}{\check T^{1/2-v/2}}.
\]
We prove the result on an event that has probability at least \(1-\delta\). By Lemmas \ref{lem:1a} and \ref{lem:2}, and by a union bound over \(j\in\{1,\ldots,p\}\), \(k\in\{1,\ldots,K\}\), and \(w\in\{1,\ldots,\check T\}\), with probability at least \(1-\delta\),
\[
\left|
\check M_{k,w}^{(j)}
-
M^{(j)}(\check\beta_{k+2}^w)
\right|
\le
c_0
\left(
\sqrt{
\gamma_N
\frac{\log(\gamma_N p\check T K/\delta)}
{\eta_n^2 n}
}
+\eta_n
\right)
\]
for all \(j,k,w\). Therefore, increasing \(c_0\) if necessary,
\[
\left\|
\check M_{k,w}
-
M(\check\beta_{k+2}^w)
\right\|_\infty
\le
\mathrm{err}(\delta),
\]
where
\[
\mathrm{err}(\delta)
\le
c_0
\left(
\sqrt{
\gamma_N
\frac{\log(\gamma_N p\check T K/\delta)}
{\eta_n^2 n}
}
+
p\eta_n
\right).
\]
We work on this event throughout the proof.

Let
\[
B:=
\sup_{\beta\in\mathcal B}
\max_{j,\ell\in\{1,\ldots,p\}}
\left|
\frac{\partial^2 W(\beta)}
{\partial\beta_j\partial\beta_\ell}
\right|<\infty .
\]
We use that the sample-size condition in Theorem \ref{thm:rate} implies, after increasing the constant in the lower bound for \(\epsilon_n\), that
\[
Bp\big\{P_{\check T}(\delta)+p\eta_n\big\}
\le
\epsilon_n,
\qquad
P_{\check T}(\delta)+p\eta_n
\le
c_0\tau_{\check T}.
\]
These inequalities are used only to keep the normalized-gradient denominator bounded away from zero.

We prove the claim by induction. The statement is immediate for \(w=1\), since
\[
\check\beta_k^1=\beta_1^*=\beta_0
\]
for all \(k\). Suppose the claim holds up to wave \(w\), with \(w<\check T\). We show it for wave \(w+1\).

First suppose that, for all \(k\),
\[
\left\|
\check\beta_k^w-\beta^*
\right\|_2^2
\le
c_0\frac{p}{\check T^{1-v}}.
\]
Then the conclusion also holds at wave \(w+1\). Indeed, if the algorithm stops, then
\[
\check\beta_k^{w+1}=\check\beta_k^w .
\]
If it does not stop, the update has length at most \(\tau_{\check T}\). Since \(\beta^*\) is in the interior of \(\mathcal B\), for \(n\) large enough it belongs to
\[
[\mathcal B_1+\eta_n,\mathcal B_2-\eta_n]^p,
\]
and projection onto this rectangle is non-expansive. Hence
\[
\left\|
\check\beta_k^{w+1}-\beta^*
\right\|_2
\le
\left\|
\check\beta_k^w-\beta^*
\right\|_2
+
\tau_{\check T}
\le
c_0\sqrt p\,\tau_{\check T},
\]
which implies
\[
\left\|
\check\beta_k^{w+1}-\beta^*
\right\|_2^2
\le
c_0\frac{p}{\check T^{1-v}}.
\]

It remains to consider the case where, at wave \(w\),
\[
\left\|
\check\beta_k^w-\beta_w^*
\right\|_\infty
\le
c_0\big\{P_w(\delta)+p\eta_n\big\}
\qquad\text{for all }k .
\]
For any \(k\), by the mean-value theorem and the preceding display,
\[
\begin{aligned}
\left\|
\check M_{k,w}
-
M(\beta_w^*)
\right\|_\infty
&\le
\left\|
\check M_{k,w}
-
M(\check\beta_{k+2}^w)
\right\|_\infty
+
\left\|
M(\check\beta_{k+2}^w)
-
M(\beta_w^*)
\right\|_\infty
\\
&\le
\mathrm{err}(\delta)
+
c_0Bp\big\{P_w(\delta)+p\eta_n\big\}.
\end{aligned}
\]
By the small-error condition above, this implies
\[
\left\|
\check M_{k,w}
-
M(\beta_w^*)
\right\|_\infty
\le
\mathrm{err}(\delta)+c_0BpP_w(\delta)+c_0\epsilon_n .
\]

If the stochastic update stops for some \(k\), then
\[
\left\|
\check M_{k,w}
\right\|_2
\le
\mu\sqrt{\kappa}\tau_{\check T}
+
\epsilon_n .
\]
Using the previous display and the fact that \(\epsilon_n\ge \sqrt p\,\mathrm{err}(\delta)\), we obtain
\[
\left\|
M(\beta_w^*)
\right\|_2
\le
\mu\sqrt{\kappa}\tau_{\check T}
+
c_0\epsilon_n .
\]
Increasing constants, Assumption \ref{ass:quasi_convexity}(B) gives
\[
\left\|
\beta_w^*-\beta^*
\right\|_2
\le
c_0\tau_{\check T}.
\]
Therefore, for every \(h\),
\[
\begin{aligned}
\left\|
\check\beta_h^w-\beta^*
\right\|_2
&\le
\left\|
\check\beta_h^w-\beta_w^*
\right\|_2
+
\left\|
\beta_w^*-\beta^*
\right\|_2
\\
&\le
c_0\sqrt p\,\big\{P_w(\delta)+p\eta_n\big\}
+
c_0\tau_{\check T}
\le
c_0\sqrt p\,\tau_{\check T}.
\end{aligned}
\]
As above, one more update has length at most \(\tau_{\check T}\), and hence
\[
\left\|
\check\beta_h^{w+1}-\beta^*
\right\|_2^2
\le
c_0\frac{p}{\check T^{1-v}}
\]
for all \(h\). This gives alternative (ii).

Now suppose the stochastic update does not stop. Then
\[
\left\|
\check M_{k,w}
\right\|_2
\ge
\mu\sqrt{\kappa}\tau_{\check T}
+
\epsilon_n .
\]
Using again the previous bounds,
\[
\begin{aligned}
\left\|
M(\beta_w^*)
\right\|_2
&\ge
\left\|
\check M_{k,w}
\right\|_2
-
\sqrt p\,
\left\|
\check M_{k,w}
-
M(\beta_w^*)
\right\|_\infty
\\
&\ge
\mu\sqrt{\kappa}\tau_{\check T}
-
2\epsilon_n
=:
\nu_n .
\end{aligned}
\]
Also, by the same inequality and the condition
\[
\mu\sqrt{\kappa}\tau_{\check T}-2\epsilon_n>0,
\]
we have
\[
\left\|
\check M_{k,w}
\right\|_2
\ge
\nu_n .
\]
Therefore,
\[
\begin{aligned}
\left\|
\frac{\check M_{k,w}}{\|\check M_{k,w}\|_2}
-
\frac{M(\beta_w^*)}{\|M(\beta_w^*)\|_2}
\right\|_\infty
&\le
\frac{2\sqrt p}{\nu_n}
\left\|
\check M_{k,w}
-
M(\beta_w^*)
\right\|_\infty
\\
&\le
\frac{2\sqrt p}{\nu_n}
\left\{
\mathrm{err}(\delta)
+
c_0BpP_w(\delta)
+
c_0p\eta_n
\right\}.
\end{aligned}
\]

The stochastic and oracle updates are
\[
\check\beta_k^{w+1}
=
P_{\mathcal B_1+\eta_n,\mathcal B_2-\eta_n}
\left[
\check\beta_k^w
+
\tau_{\check T}
\frac{\check M_{k,w}}{\|\check M_{k,w}\|_2}
\right],
\]
and
\[
\beta_{w+1}^*
=
P_{\mathcal B_1,\mathcal B_2}
\left[
\beta_w^*
+
\tau_{\check T}
\frac{M(\beta_w^*)}{\|M(\beta_w^*)\|_2}
\right],
\]
up to constants such as \(J\), which are absorbed into \(c_0\). Since projection on a rectangle is non-expansive in the sup norm, and since the two projection rectangles differ by at most \(\eta_n\) in each coordinate,
\[
\begin{aligned}
\left\|
\check\beta_k^{w+1}-\beta_{w+1}^*
\right\|_\infty
&\le
\left\|
\check\beta_k^w-\beta_w^*
\right\|_\infty
\\
&\quad
+
\tau_{\check T}
\left\|
\frac{\check M_{k,w}}{\|\check M_{k,w}\|_2}
-
\frac{M(\beta_w^*)}{\|M(\beta_w^*)\|_2}
\right\|_\infty
+
c_0p\eta_n
\\
&\le
c_0\big\{P_w(\delta)+p\eta_n\big\}
+
\frac{2\sqrt p}{\nu_n}
\tau_{\check T}
\left\{
\mathrm{err}(\delta)
+
c_0BpP_w(\delta)
\right\}
+
c_0p\eta_n .
\end{aligned}
\]
By the recursive definition of \(P_{w+1}(\delta)\),
\[
P_{w+1}(\delta)
=
P_w(\delta)
+
\frac{2\sqrt p}{\nu_n}Bp\tau_{\check T}P_w(\delta)
+
\frac{2\sqrt p}{\nu_n}\tau_{\check T}\mathrm{err}(\delta),
\]
and therefore
\[
\left\|
\check\beta_k^{w+1}-\beta_{w+1}^*
\right\|_\infty
\le
c_0\big\{P_{w+1}(\delta)+p\eta_n\big\}.
\]
This gives alternative (i) at wave \(w+1\).

The induction is complete.
\end{proof}

\begin{lem} \label{thm:regret1}
Let the conditions in Lemma \ref{lem:fg} hold. Then, with probability at least \(1-\delta\), for any \(k\in\{1,\ldots,K\}\), \(v\in(0,1)\), \(\delta\in(0,1)\), and \(\check T\ge \zeta^{1/v}\),
\[
\small
\begin{aligned}
\left\|
\beta^{*} - \check{\beta}_{k}^{\check T}
\right\|_2^2
\le
\frac{\kappa}{\check T^{1-v}}
+
\check T
\exp\left(
\frac{B p \sqrt p}{\mu\sqrt{\kappa}}\check T
\right)
c_0
\left(
\gamma_N
\frac{\log(p\gamma_N \check T K/\delta)}{\eta_n^2 n}
+
p^2\eta_n^2
\right),
\end{aligned}
\]
with \(0<\zeta,\kappa,B<\infty\) being constants independent of \((n,\check T)\), \(\epsilon_n\) as defined in Lemma \ref{lem:fg}, and a finite constant \(c_0<\infty\).
\end{lem}

\begin{proof}[Proof of Lemma \ref{thm:regret1}]
We invoke Lemma \ref{lem:fg}. Observe that we only have to check that the result holds for case (i) in Lemma \ref{lem:fg}, since otherwise the claim holds directly up to a change in constants. Using the triangle inequality, we can write
\[
\small
\begin{aligned}
\left\|
\beta^{*} - \check{\beta}_{k}^{\check T}
\right\|_2^2
&\le
2
\left\|
\beta^{*} - \beta_{\check T}^*
\right\|_2^2
+
2
\left\|
\check{\beta}_{k}^{\check T} - \beta_{\check T}^*
\right\|_2^2 .
\end{aligned}
\]
The first component on the right-hand side is bounded by Lemma \ref{lem:gradient_descent3}, with \(\check T\ge \zeta^{1/v}\), where \(\zeta\) is a constant defined in Lemma \ref{lem:gradient_descent3}. Renaming constants, this gives the first term in the statement.

Using Lemma \ref{lem:fg}, we bound with probability at least \(1-\delta\) the second component as follows:
\[
\small
\begin{aligned}
\left\|
\check{\beta}_{k}^{\check T} - \beta_{\check T}^*
\right\|_2^2
&\le
p
\left\|
\check{\beta}_{k}^{\check T} - \beta_{\check T}^*
\right\|_{\infty}^2
\\
&\le
p \times
\mathcal O\left(
\big(P_{\check T}(\delta)+p\eta_n\big)^2
\right).
\end{aligned}
\]
We conclude the proof by explicitly bounding \(P_{\check T}(\delta)\). From Lemma \ref{lem:fg}, for all \(1<w\le \check T\),
\[
\small
\begin{aligned}
P_w(\delta)
=
\left(
1+
\frac{2Bp\sqrt p}{\nu_n\check T^{1/2-v/2}}
\right)
P_{w-1}(\delta)
+
\frac{2\sqrt p}{\nu_n\check T^{1/2-v/2}}
\mathrm{err}(\delta),
\qquad
P_1(\delta)=\mathrm{err}(\delta),
\end{aligned}
\]
where
\[
\small
\begin{aligned}
\mathrm{err}(\delta)
\le
c_0
\left(
\sqrt{
\gamma_N
\frac{\log(p\gamma_N \check T K/\delta)}
{\eta_n^2 n}
}
+
p\eta_n
\right),
\end{aligned}
\]
and \(B<\infty\) denotes a finite constant. Using a recursive argument, we obtain
\[
\small
\begin{aligned}
P_w(\delta)
&\le
\mathrm{err}(\delta)
\sum_{s=1}^w
\left(
\frac{2\sqrt p}{\nu_n\check T^{1/2-v/2}}
\right)
\prod_{j=s}^w
\left(
\frac{2Bp\sqrt p}{\nu_n\check T^{1/2-v/2}}
+
1
\right)
\\
&\quad
+
\mathrm{err}(\delta)
\prod_{j=1}^w
\left(
\frac{2Bp\sqrt p}{\nu_n\check T^{1/2-v/2}}
+
1
\right).
\end{aligned}
\]
By Lemma \ref{lem:fg},
\[
\nu_n
=
\frac{\mu\sqrt{\kappa}}{\check T^{1/2-v/2}}
-
2\epsilon_n .
\]
Under the condition on \(\epsilon_n\) in Lemma \ref{lem:fg}, after decreasing constants if necessary,
\[
\nu_n
\ge
\frac{\mu\sqrt{\kappa}}{2\check T^{1/2-v/2}}.
\]
As a result,
\[
\small
\begin{aligned}
\frac{2Bp\sqrt p}{\nu_n\check T^{1/2-v/2}}
&\le
\frac{4Bp\sqrt p}{\mu\sqrt{\kappa}},
\\
\frac{2\sqrt p}{\nu_n\check T^{1/2-v/2}}
&\le
\frac{4\sqrt p}{\mu\sqrt{\kappa}}.
\end{aligned}
\]
Therefore,
\[
\small
\begin{aligned}
\prod_{j=s}^w
\left(
\frac{2Bp\sqrt p}{\nu_n\check T^{1/2-v/2}}
+
1
\right)
&\le
\exp\left(
\sum_{j=s}^w
\frac{4Bp\sqrt p}{\mu\sqrt{\kappa}}
\right)
\\
&\le
\exp\left(
\frac{4Bp\sqrt p}{\mu\sqrt{\kappa}}\check T
\right),
\end{aligned}
\]
since \(w\le \check T\). Hence, increasing constants if necessary,
\[
\small
\begin{aligned}
P_w(\delta)
&\le
c_0\check T
\exp\left(
\frac{4Bp\sqrt p}{\mu\sqrt{\kappa}}\check T
\right)
\mathrm{err}(\delta).
\end{aligned}
\]
Thus,
\[
\small
\begin{aligned}
p\big(P_{\check T}(\delta)+p\eta_n\big)^2
&\le
c_0
\check T
\exp\left(
\frac{Bp\sqrt p}{\mu\sqrt{\kappa}}\check T
\right)
\left(
\gamma_N
\frac{\log(p\gamma_N\check T K/\delta)}
{\eta_n^2 n}
+
p^2\eta_n^2
\right),
\end{aligned}
\]
where constants have been renamed and polynomial factors in \(\check T\) are absorbed into the exponential term. Combining the preceding displays completes the proof.
\end{proof}


 \begin{cor} Theorem \ref{thm:rate} holds. 
 \end{cor}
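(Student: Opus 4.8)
The plan is to assemble the three preparatory lemmas proved just above—Lemma \ref{lem:gradient_descent3} (the oracle descent rate under local strong concavity and strict quasi-concavity), Lemma \ref{lem:fg} (propagation of the estimation error through the circular cross-fitting recursion), and Lemma \ref{thm:regret1} (the combined per-cluster distance bound)—and then to turn the resulting bound on $\|\beta^*-\hat\beta^*\|$ into a welfare bound exactly as in the proof of Theorem \ref{thm:rate2b}. Concretely, I would first invoke Lemma \ref{thm:regret1} with the high-probability level set to $\delta = 1/n$, so that after a union bound over the $K$ clusters (which only enters the logarithmic factor), with probability at least $1-1/n$, for every $k$,
$$
\|\beta^*-\check\beta_k^{\check T}\|_2^2 \le \frac{\kappa}{\check T^{1-v}} + \check T\, e^{B_p\sqrt p\,\check T}\,\mathcal{O}\!\Big(\gamma_N\frac{\log(p\gamma_N\check T K n)}{\eta_n^2 n} + p^2\eta_n^2\Big).
$$

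Second, I would verify that the stochastic (second) term is dominated by the deterministic rate $\kappa/\check T^{1-v}$ under the stated hypotheses. Substituting $\eta_n = n^{-1/4-\xi}$ gives $1/(\eta_n^2 n) = n^{-1/2+2\xi}$ and $\eta_n^2 = n^{-1/2-2\xi}$, so the term is of order $\check T\, e^{B_p\sqrt p\,\check T}\big(\gamma_N\log(\cdot)\,n^{-1/2+2\xi} + p^2 n^{-1/2-2\xi}\big)$. Squaring the sample-size hypothesis $n^{1/4-\xi}\ge \bar C\sqrt{\log(n)\,p\,\gamma_N T^2 e^{B_p T}\log(KT)}$ yields $n^{1/2-2\xi}\ge \bar C^2\log(n)\,p\,\gamma_N T^2 e^{B_p T}\log(KT)$, which is precisely what forces $\check T\, e^{B_p\sqrt p\,\check T}\gamma_N\log(\cdot)/n^{1/2-2\xi} = o(\check T^{-1+v})$; the $p^2\eta_n^2$ piece is even smaller. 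Hence $\|\beta^*-\check\beta_k^{\check T}\|_2^2 = \mathcal{O}(\check T^{-1+v})$ uniformly in $k$, requiring only $\check T \ge \zeta^{1/v}$ for Lemma \ref{lem:gradient_descent3} to apply.

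Third, I would average and convert to welfare. By Jensen's inequality applied to $\hat\beta^* = \frac1K\sum_{k=1}^K \check\beta_k^{\check T}$,
$$
\|\beta^*-\hat\beta^*\|_2^2 \le \frac1K\sum_{k=1}^K \|\beta^*-\check\beta_k^{\check T}\|_2^2 = \mathcal{O}(\check T^{-1+v}).
$$
Then a second-order Taylor expansion of $W$ around $\beta^*$ closes the argument: because $\beta^*$ is interior with $\partial W(\beta^*)/\partial\beta = 0$ (Assumption \ref{ass:quasi_convexity}(C) and optimality of $\beta^*$) and $W$ has uniformly bounded Hessian (Assumption \ref{ass:regularity_basic}), one gets $W(\beta^*) - W(\hat\beta^*) \le \bar{C}''\,\|\beta^*-\hat\beta^*\|_2^2 = \mathcal{O}(\check T^{-1+v})$, which is the claim of Theorem \ref{thm:rate}.

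The main obstacle is the middle step: certifying that the estimation error—amplified by the factor $e^{B_p\sqrt p\,\check T}$ that the circular-cross-fitting recursion in Lemma \ref{lem:fg} unavoidably introduces—stays of strictly smaller order than the descent rate $\check T^{-1+v}$. This is exactly why the theorem demands that the sample size grow essentially exponentially in $T$ (via the $e^{B_p T}$ in the hypothesis on $n$), and the care lies in tracking constants so that the exponential in the numerator is matched by the one the sample-size condition supplies in the denominator, while ensuring the $p$-dependence and the union-bound logarithms do not disturb the comparison. Everything else is a routine combination of the lemmas already established.
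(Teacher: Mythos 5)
Your proposal is correct and follows essentially the same route as the paper's own proof: invoke Lemma \ref{thm:regret1} with $\delta = 1/n$, use the sample-size hypothesis (with $\eta_n = n^{-1/4-\xi}$) to absorb the $e^{B_p\sqrt{p}\check{T}}$-amplified estimation error into the $\check{T}^{-1+v}$ descent rate, apply Jensen's inequality to the averaged policy, and conclude via a second-order expansion using the vanishing gradient at $\beta^*$ and the bounded Hessian. The only cosmetic difference is that the paper phrases the middle verification as checking $\epsilon_n \le \frac{1}{4\mu\check{T}^{1/2-v/2}}$ so that $\nu_n$ is pinned down before citing Lemma \ref{thm:regret1}, whereas you verify the equivalent dominance directly on the error term of that lemma's conclusion.
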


\begin{proof}
Consider Lemma \ref{lem:fg} with \(\delta=1/n\). By the condition in Lemma \ref{lem:fg}, we need
\[
\frac{\mu\sqrt{\kappa}}{\check T^{1/2-v/2}}-2\epsilon_n>0.
\]
This is satisfied, for \(n\) large enough, if
\[
\small
\begin{aligned}
\sqrt p
\left[
\bar C
\sqrt{
\gamma_N
\frac{\log(p\gamma_N \check T K n)}
{\eta_n^2 n}
}
+
\eta_n
\right]
\le
\frac{\mu\sqrt{\kappa}}{4\check T^{1/2-v/2}} .
\end{aligned}
\]
This condition is implied by the rate assumptions of the theorem. As a result,
\[
\nu_n
=
\frac{\mu\sqrt{\kappa}}{\check T^{1/2-v/2}}
-
2\epsilon_n
\ge
\frac{\mu\sqrt{\kappa}}{2\check T^{1/2-v/2}} .
\]

By Lemma \ref{thm:regret1}, under the stated rate assumptions, with probability at least \(1-1/n\), for all \(k\in\{1,\ldots,K\}\),
\[
\left\|
\check{\beta}_k^{\check T}-\beta^*
\right\|_2^2
\lesssim
\frac{p}{\check T^{1-v}}.
\]
Let
\[
\hat{\beta}^*
=
\frac{1}{K}\sum_{k=1}^K \check{\beta}_k^{\check T}.
\]
By Jensen's inequality,
\[
\begin{aligned}
\left\|
\hat{\beta}^*-\beta^*
\right\|_2^2
&=
\left\|
\frac{1}{K}\sum_{k=1}^K
\left(\check{\beta}_k^{\check T}-\beta^*\right)
\right\|_2^2
\\
&\le
\frac{1}{K}
\sum_{k=1}^K
\left\|
\check{\beta}_k^{\check T}-\beta^*
\right\|_2^2
\lesssim
\frac{p}{\check T^{1-v}}.
\end{aligned}
\]
Finally, by Assumption \ref{ass:regularity_basic}, since \(\beta^*\) is an interior maximizer and \(W\) has uniformly bounded second derivatives,
\[
W(\beta^*)-W(\hat{\beta}^*)
\lesssim
\left\|
\hat{\beta}^*-\beta^*
\right\|_2^2.
\]
Combining the previous displays completes the proof.
\end{proof}

\subsection{Proof of Theorem \ref{thm:non_separable}} \label{sec:proof12}

By Equation \eqref{eqn:non_s}, we can write 
$
\mathbb{E}[\hat{\Delta}_k(\beta)] = m(1, 1, \beta) - m(0, 1, \beta) + O(\eta_n^2).
$
Following the same strategy as in the proof of Theorem \ref{thm:bias_direct},  it is easy to show that
$$
\small 
\begin{aligned} 
\mathbb{E}[\hat{S}(0, \beta)] = \frac{\partial m(0, 1, \beta)}{\partial \beta} + \frac{1}{2} \Big[\alpha_{t, k} - \alpha_{t - 1, k} - \alpha_{t, k+1}  + \alpha_{t-1, k+1}\Big] + \mathcal{O}(\eta_n). 
\end{aligned} 
$$ 
Similarly, 
$
\mathbb{E}[\hat{S}(1, \beta)] = \frac{\partial m(1, 1, \beta)}{\partial \beta} + \frac{1}{2} \Big[\alpha_{t, k} - \alpha_{t - 1, k} - \alpha_{t, k+1}  + \alpha_{t-1, k+1}\Big] + \mathcal{O}(\eta_n). 
$ 
The proof completes because $\frac{\partial m(1, 1, \beta)}{\partial \beta} = 0$.

\subsection{Proof of Theorem \ref{thm:opt_dynamics}} \label{sec:proof9}
We bound
$$
\small 
\begin{aligned} 
\sup_{\theta \in \Theta} \widetilde{W}(\theta) - W(\widehat{\theta}) \le 2 \sum_t q^t \times \underbrace{\sup_{(\beta_1, \beta_2) \in [0,1]^2} \Big|\widehat{\Gamma}(\beta_2, \beta_1) - \Gamma(\beta_2, \beta_1)\Big|}_{(A)}. 
\end{aligned} 
$$  
We focus on bounding $(A)$ since $\sum_t q^t < \infty$. 
To bound $(A)$ observe first that each element in the grid $\mathcal{G}$ has a distance of order $1/\sqrt{K}$, since the grid has two dimensions and $K/3$ components. \
Let $||\beta - \beta^r||_2^2 = |\beta_1 - \beta_1^r|^2 + |\beta_2 - \beta_2^r|^2$, denoting the $l2$-norm and similarly $||\beta - \beta^r||_1$ denoting the $l1$-norm. 
For any element $(\beta_2, \beta_1)$, we can write 
$$
\small 
\begin{aligned} 
\Gamma(\beta_2, \beta_1) = \underbrace{\Gamma(\beta_2^r, \beta_1^r)}_{(B)} + \underbrace{\frac{\partial \Gamma(\beta_2^r, \beta_1^r) }{\partial \beta_1^r} (\beta_1 - \beta_1^r) +  \frac{\partial \Gamma(\beta_2^r, \beta_1^r) }{\partial \beta_2^r} (\beta_2 - \beta_2^r)}_{(C)} + \underbrace{\mathcal{O}\Big(||\beta - \beta^r||_2^2  \Big)}_{(D)}
\end{aligned} 
$$ 
where $\beta^r \in \mathcal{G}$ is some value in the grid such that the remainder term (D) is of order $1/K$. We can now write 
$$
\small 
\begin{aligned}
(A) & \le \underbrace{\sup_{(\beta_1^r, \beta_2^r) \in \mathcal{G}, ||\beta - \beta^r||^2 \lesssim 1/K} \Big|\widetilde{\Gamma}(\beta_2^r, \beta_1^r) - \Gamma(\beta_2^r, \beta_1^r)\Big|}_{(i)}  + \underbrace{ \sup_{(\beta_1^r, \beta_2^r) \in \mathcal{G}} \Big|\widehat{g}_2(\beta_2^r, \beta_1^r) - \frac{\partial \Gamma(\beta_2^r , \beta_1^r)}{\partial \beta_2^r}\Big| \Big(||\beta - \beta^r||_1\Big)}_{(ii)} \\ 
& + \underbrace{\sup_{(\beta_1^r, \beta_2^r) \in \mathcal{G}} \Big|\widehat{g}_1(\beta_2^r, \beta_1^r) - \frac{\partial \Gamma(\beta_2^r , \beta_1^r)}{\partial \beta_1^r}\Big| \Big(||\beta - \beta^r||_1\Big)}_{(iii)} + \mathcal{O}\Big(||\beta - \beta^r||_2^2\Big). 
\end{aligned} 
$$ 
We now study each component separately. We start from $(i)$. We observe that under Assumption \ref{ass:dynamic}, by doing a Taylor expansion around $(\beta_1^r, \beta_2^r)$, it follows 
$
\mathbb{E}[\bar{Y}_{t+1}^{(k)}] = \Gamma(\beta_2^r, \beta_1^r)+ \mathcal{O}(\eta_n). 
$ 
Therefore by Lemma \ref{lem:concentration1}, and the union bound over $K$ many elements in $\mathcal{G}$ as $\gamma_N \log(\gamma_N K)/n \rightarrow 0$, $(i) \rightarrow 0$. Consider now $(ii)$. We observe that since $\mathcal{B}$ is compact, we have $\Big(|\beta_2 - \beta_2^r| + |\beta_1 - \beta_1^r|\Big) = \mathcal{O}(1)$. In addition, similarly to what discussed in Lemma \ref{lem:2}, it follows that with probability at least $1 - \delta$, 
$
\Big|\widehat{g}_1(\beta_2^r, \beta_1^r) - \frac{\partial \Gamma(\beta_2^r , \beta_1^r)}{\partial \beta_1^r}\Big| \le c_0\Big(\sqrt{\frac{\gamma_N \log(\gamma_N/\delta)}{\eta_n^2 n}} + \eta_n \Big). 
$
Therefore, by the union bound as $\frac{\gamma_N \log(\gamma_N K)}{\eta_n^2 n} = o(1)$ $(ii) = o_p(1)$ and similarly $(iii)$. The proof concludes because  $|\beta_1^r - \beta_1|^2 + |\beta_2^r - \beta_2|^2 \lesssim 1/K$ by construction of the grid.

\subsection{Proof of Theorem \ref{thm:rate3b} } \label{sec:proof_last}

 The proof mimics the proof of Theorem \ref{thm:rate2b}. 
  
 Consider Lemma \ref{thm:regret2} where we choose $\delta = 1/n$. Note that we can directly apply Lemma \ref{thm:regret2} also to the gradient estimated with Algorithm \ref{alg:adaptive2}, since, by the circular-cross fitting argument, each parameter $\check{\beta}_k^w$ is estimated using sequentially pairs of different clusters as in   Algorithm \ref{alg:adaptive2}. The rest of the proof follows verbatim from the one of Theorem \ref{thm:rate2b}.

 \section{Numerical studies: main results} \label{app:more_sim}

\subsection{Design and overview of main simulation results} \label{app:experiment}

To evaluate the performance of our design with many waves, we calibrate simulations to data from \cite{cai2015social} and \cite{alatas2012targeting, alatas2016network}, while making simplifying assumptions whenever necessary. As in our application, we let $\beta$ denote the treatment probability and $\eta_n = 10\%$.\footnote{In the online supplement, we report results as we vary $\eta_n$ (Figure \ref{fig:comparison_eta}).} In the first calibration, the outcome is insurance adoption, and the treatment is whether an individual received an intensive information session. In the second calibration, the treatment is whether a household received a cash transfer, and the outcome is program satisfaction. The experiment of \cite{cai2015social} contains multiple arms. Here, we only focus on the treatment effects of intensive information sessions, pooling the remaining arms together for simplicity. The experiment of \cite{alatas2012targeting} contains different arms assigned at the village level, as well as information on cash transfers assigned at the household level. Here, we study the effect of cash transfers only and control for village-level treatments when estimating the parameters of interest.

In each cluster $k$, we generate
\begin{equation} \label{eqn:calibrated_model} 
\small 
\begin{aligned} 
Y_{i,t} = \phi_0 + \phi_1 D_{i,t} + \phi_2 S_{i,t} + \phi_3 S_{i,t}^{2} - c D_{i,t} + \eta_{i,t}, \quad S_{i,t} = \frac{\sum_{j\neq i} A_{i,j} D_{j,t} }{ \sum_{j \neq i} A_{i,j}}, \eta_{i,t} \sim_{i.i.d.} \mathcal{N}(0, \sigma^2), 
\end{aligned} 
\end{equation}  
where $c$ is the cost of the treatment. 
We consider two sets of parameters $\Big(\phi_0, \phi_1, \phi_2, \phi_3, \sigma^2\Big)$ calibrated to data from \cite{cai2015social} and \cite{alatas2012targeting, alatas2016network} respectively. We obtain information on neighbors' treatment directly from data from  \cite{cai2015social}. For the second application, we merge data from  \cite{alatas2012targeting}, and \cite{alatas2016network}, and use information from approximately 100 observations whose neighbors' treatments are all observable to estimate the parameters.\footnote{This approach introduces a sampling bias in the estimation procedure, which we ignore for simplicity, given that our goal is not the analysis of the original experiment but only calibrating numerical studies.} For either application, we estimate a linear model as in Equation \eqref{eqn:calibrated_model}, also controlling for additional covariates to guarantee the unconfoundedness of the treatment.\footnote{For \cite{cai2015social} the covariates are gender, age, rice area, literacy level, a coefficient that captures the risk aversion, the baseline disaster probability, education, and a dummy containing information on whether the individual has one to five friends. For \cite{alatas2012targeting}, we control for the education level, village-level treatments, i.e., how individuals have been targeted in a village (i.e., via a proxy variable for income, a community-based method, or a hybrid), the size of the village, the consumption level, the ranking of the individual poverty level, the gender, marital status, household size, the quality of the roof and top.}  For simplicity, we consider as cost of treatment $c = \phi_1$, i.e., the opportunity cost of allocating the treatment to a population of disconnected individuals.

 We generate $K$ clusters, each with $N = 600$ units, and sample $n \in \{200, 400, 600\}$. 
We generate a geometric network 
$
A_{i,j} = 1\Big\{||U_i - U_j||_1 \le  2 \rho/\sqrt{N}\Big\},  U_i \sim_{i.i.d.} \mathcal{N}(0, I_2), 
$ 
where the parameter $\rho$ governs the density of the network. The geometric formation process and the $1/\sqrt{N}$ follow similarly to simulations in \cite{leung2019treatment}. We report results for $\rho = 2$ in Table \ref{tab:cov02}, while results are robust as we increase $\rho$ (see Appendix \ref{app:more_sim}). Throughout the analysis, without loss, we report reward divided by its maximum $W(\beta^*)$ (i.e., $W(\beta^*) = 1$), and we subtract the intercept $\phi_0$.


In Appendix \ref{sec:designs2}, we study the performance of the one-wave experiment. We show that the proposed test controls size uniformly across specifications and present desirable properties for power. In Table \ref{tab:cov02} we present simulations for the multi-wave experiment. In the adaptive experiment, we choose an adaptive learning rate $10\%/\sqrt{t}$ with gradient norm rescaling as Remark \ref{rem:learning_rate}.\footnote{This choice guarantees that for each iteration, we only vary treatment probabilities by at most $10\%$, and the size of the variation is decreasing over each iteration, as for the learning rate under strong concavity without norm rescaling. This choice is preferable to $10\%/\sqrt{T}$ because it allows for larger steps in the initial iterations. A valid alternative is $10\%/t$. The latter case has a practical drawback: updates become very small after a few iterations. Comparisons for different learning rates are in the online supplement (Fig \ref{fig:learning_rate_comparisons}). } Since the model does not allow for time-varying fixed effects, we estimate marginal effects without baseline outcomes. For the multi-wave experiment, we initialize parameters at a small treatment probability $\beta = 0.2$ (here the optimum is around $60\%$).

 We let $T \in \{5, 10, 15, 20\}$. In Table \ref{tab:cov02}, we report the improvement of the proposed method with respect to a grid search method that samples observations from an equally spaced grid between $[0.1, 0.9]$ with a size equal to the number of clusters (i.e., $2T$). We consider the best competitor between the one that maximizes the estimated reward obtained from a correctly specified quadratic function and the one that chooses the treatment with the largest value within the grid. For both the competing methods, but not for the proposed procedure, we divide the outcomes' variance $\sigma^2$ by $T$, simulating settings where researchers may sample outcomes $T$ times (hence outcomes with a \textit{lower} variance) from each cluster before estimating treatment effects, and obtaining \textit{more precise} information.
The panel at the top of Table \ref{tab:cov02} reports the out-of-sample improvement in the target outcome. The improvement is positive, and up to three percentage points for targeting information and up to sixty percentage points for targeting cash transfers. Improvements are generally larger for larger $T$. The panel at the bottom of Table \ref{tab:cov02} reports positive and large improvements for the in-sample reward across all the designs, worst-case across clusters. For the worst-case regret, we fix the number of clusters to $K = 40$ for the proposed method and study the properties as a function of the number of iterations.  The improvements are twice as large for targeting information and thirty percentage points larger for targeting cash transfers. These are often increasing in $T$ with a few exceptions since uniform concentration may deteriorate for large $T$ and small $n$ as we consider the worst-case reward across clusters. 

In the online Appendices  \ref{sec:aa1}, \ref{sec:aa2}, we report results across many other specifications of the network, policy functions, and choice of different parameters and different starting values (e.g., also when $\beta$ is initialized near the optimum).

\subsection{Main additions} \label{sec:designs2}

Here, we study the properties of the one wave experiment as we vary the number of clusters $K$ and the sample size from each cluster $n$. We are interested in testing the one-sided null of whether we should increase the number of treated individuals to increase welfare, i.e., 
\begin{equation} \label{eqn:h0_experiment} 
\small 
\begin{aligned} 
H_0: \frac{\partial W(\beta)}{\partial \beta} \le 0, \quad H_1 = \frac{\partial W(\beta)}{\partial \beta} > 0 \quad \beta \in [0.1, \cdots, \beta^*].
\end{aligned} 
\end{equation} 
In Figure \ref{fig:power}, we report the power of the test as a function of the regret, where the test is computed using Theorem \ref{thm:inference4} through the pivotal test statistic with the critical value in Theorem \ref{thm:inference4} (both pivotal test statistic with $t$-student critical value and randomization tests control size; here we use the pivotal test statistic for computational advantages in simulations over randomization tests). Power is increasing in the regret, the number of clusters, and sample size. However, the marginal improvement in the power from twenty to thirty clusters is small. This result is suggestive of the benefit of the method even with few clusters and a small sample size.

In Table \ref{tab:cov01} we report the size of the test. 


\begin{table}[!htp]\centering
\caption{One wave experiment. $200$ replications. Coverage for testing $H_0$ (size is $5\%$). First panel corresponds to $\rho = 2$, and second panel to $\rho = 6$.
}    \label{tab:cov01} 
\ra{1.3}
\scalebox{0.7}{\begin{tabular}{@{}lrrrrrrrrrrrrrr@{}}\toprule
& \multicolumn{6}{c}{Information} & \ & \multicolumn{6}{c}{Cash Transfer} \\
\cmidrule{2-7}  \cmidrule{9-14}  
$K = $ &  10 & 20  & 30 & 40&  &   && & 10 & 20  & 30 & 40 & &  \\
\Xhline{.8pt} 
$n=200$ & $0.915$ & $0.945$ & $0.910$ & $0.900$ &  &   && & $0.915$ & $0.940$ & $0.920$ & $0.905$ \\ 
$n=400$ & $0.980$ & $0.960$ & $0.915$ & $0.930$ &  &   && & $0.980$ & $0.960$ & $0.905$ & $0.915$ \\ 
 $n=600$ & $0.980$ & $0.995$ & $0.975$ & $0.935$ &  &   && & $0.980$ & $0.995$ & $0.995$ & $0.930$ \\  
 \Xhline{.8pt}  \\ 
 \Xhline{.8pt} 
$n=200$ & $0.925$ & $0.945$ & $0.910$ & $0.900$ &  &   && & $0.910$ & $0.940$ & $0.915$ & $0.900$ \\ 
$n=400$ & $0.980$ & $0.960$ & $0.925$ & $0.930$ &  &   && & $0.980$ & $0.960$ & $0.900$ & $0.930$ \\ 
 $n=600$ & $0.970$ & $0.995$ & $0.970$ & $0.935$ &  &   && & $0.985$ & $0.995$ & $0.970$ & $0.930$ \\ 
 \bottomrule
\end{tabular}
}
\end{table}

 \begin{figure}[!ht]
 \centering 
\includegraphics[scale=0.5]{./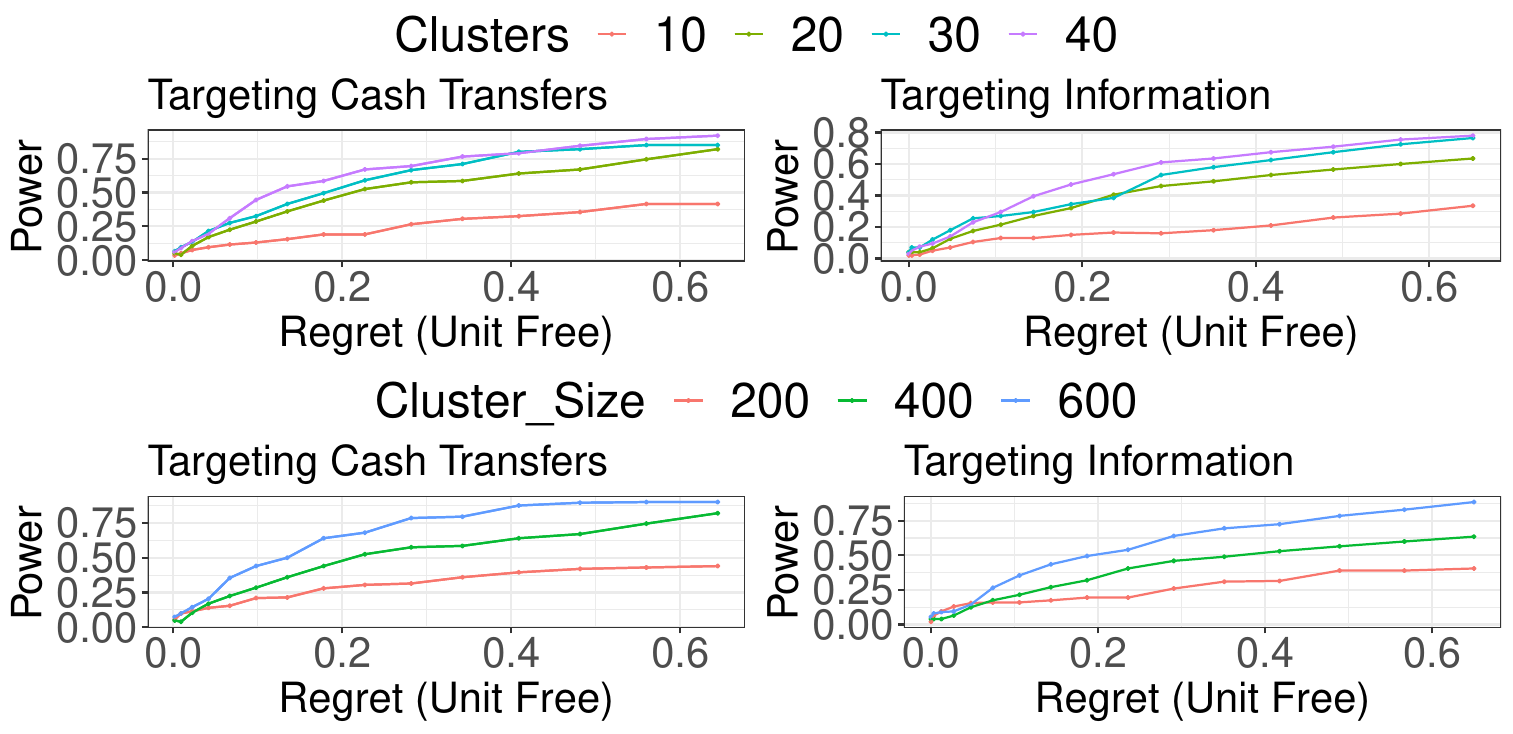}
\caption{One-wave experiment in Section \ref{sec:designs}. $200$ replications. Power plot for $\rho = 2$. The panels at the top fix $n = 400$ and varies $K$. The panels at the bottom fix $K = 20$ and vary $n$. } \label{fig:power}
\end{figure}

\begin{figure}[!ht]
\centering 
\includegraphics[scale=0.5]{./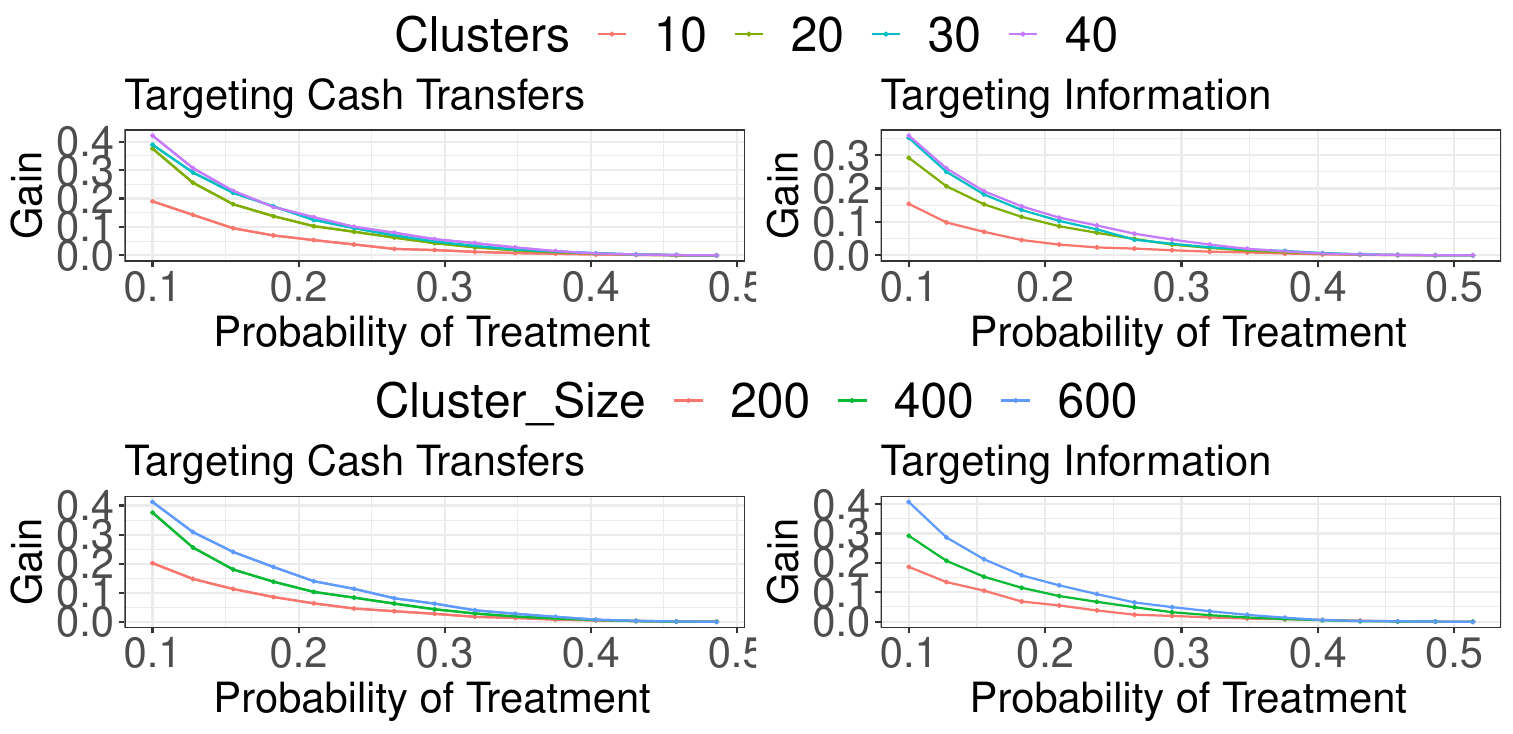}
\caption{One-wave experiment. $\rho = 2$. Expected percentage increase in welfare from increasing the probability of treatment $\beta$ by $5\%$ upon rejection of $H_0$. Here, the x-axis reports $\beta \in [0.1, \cdots, \beta^* - 0.05]$. The panels at the top fix $n = 400$ and vary the number of clusters. The panels at the bottom fix $K = 20$ and vary $n$.  } \label{fig:benefit} 
\end{figure}

\begin{figure}[!ht]
    \centering
    \includegraphics[scale = 0.3]{./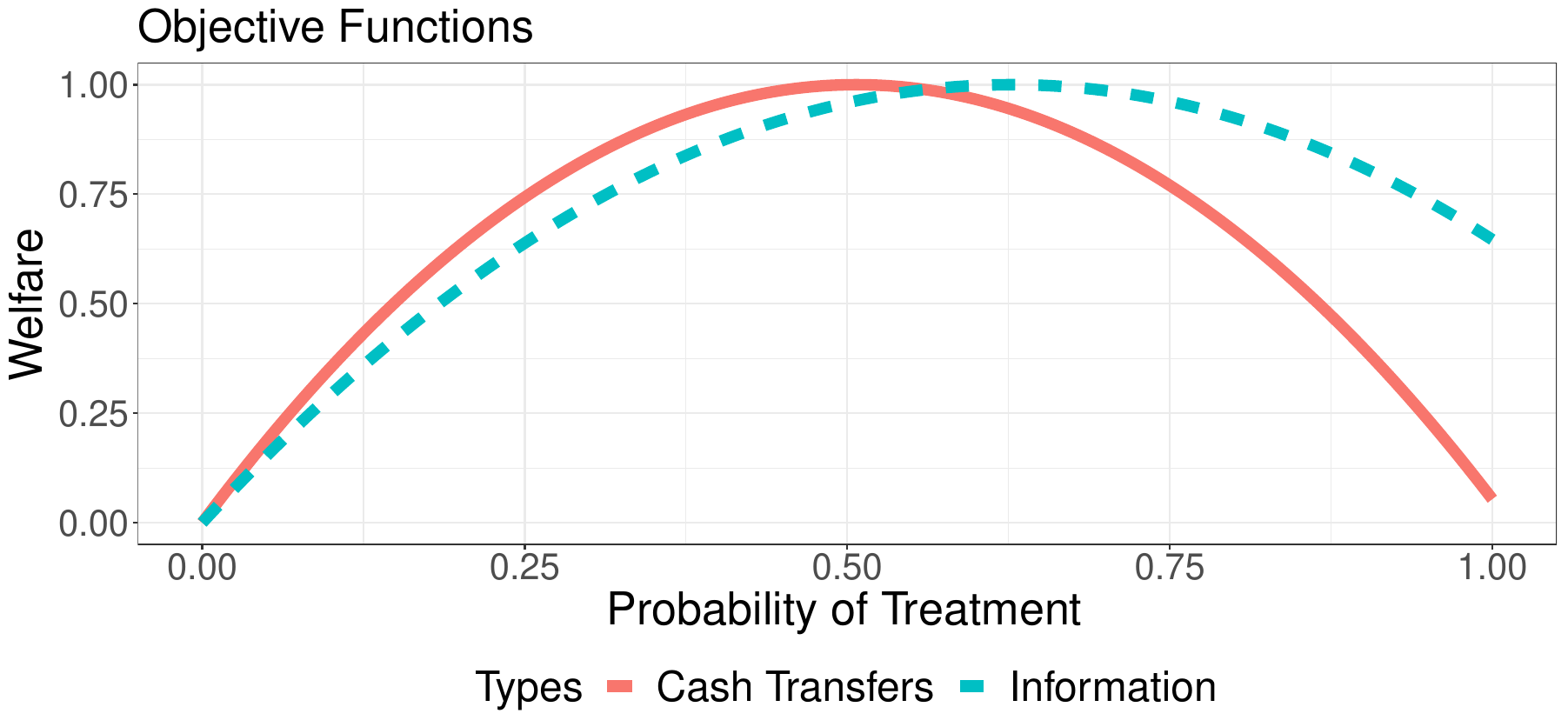}
    \caption{ Objective functions (rescaled by $W(\beta^*)$, and minus the intercept $\phi_0$ as functions of unconditional treatment probabilities, with cost of treatments $c = \phi_1$. The objectives are estimated using data  from \cite{alatas2012targeting} for the cash transfers and \cite{cai2015social} for the information campaign as described in Section \ref{sec:designs}. 
}
    \label{fig:objectives_analysis}
\end{figure}

 \subsection{One-wave experiment} \label{sec:aa1}
 
 In Figure \ref{fig:power2} we report the power plot for $\rho = 6$. In Figure \ref{fig:impr2} we report the welfare gain from increasing $\beta$ by $5\%$ upon rejection of $H_0$ for $\rho = 6$. In Figure \ref{fig:comparison_eta}, we report comparisons for different values of $\eta_n$.

\subsection{Multiple-wave experiment} \label{sec:aa2}

In Table \ref{tab:cov02b}, we provide comparison with competitors for $\rho = 6$. Results are robust as in the main text.

In Figure \ref{fig:learning_rate_comparisons} we report a comparison among different learning rates, which are the one which rescales by $1/t$, the one that rescales by $1/\sqrt{T}$ and the one that rescales by $1/\sqrt{t}$. 

In Figure \ref{fig:different_starting_values} we study the adaptive experiment as the starting value is the optimum minus $5\%$ and show that  the out-of-sample regret is small and close to zero. 


\subsection{Calibrated experiment with covariates} \label{sec:example_contd}

In this subsection, we turn to a calibrated experiment where we also control for covariates. We use data from \cite{alatas2012targeting, alatas2016network}. we estimate a function heterogenous in the distance of the household's village from the district's center. we use information from approximately four hundred observations, whose eighty percent or more neighbors are observed. We let $X_i \in \{0,1\}, X_i = 1$ if the household is far from the district's center than the median household, and estimate
\begin{equation} \label{eqn:dd} 
Y_i | X_i = x \quad =  \quad \phi_{0} + \tilde{X}_i \tau + D_i \phi_{1, x} + \frac{\sum_{j \neq i} A_{j,i} D_j}{\max\{\sum_{j \neq i} A_{j,i}, 1\}} \phi_{2, x} +\Big( \frac{\sum_{j \neq i} A_{j,i} D_j}{\max\{\sum_{j \neq i} A_{j,i}, 1\}}\Big)^2 \phi_{3, x} + \eta_i, 
\end{equation}   
where $\eta_i$ are unobservables centered on zero conditional on $X_i = x$, and $\tilde{X}_i$ denotes controls which also include $X_i$.\footnote{We also control for the education level, village-level treatments, i.e., how individuals have been targeted in a village (i.e., via a proxy variable for income, a community-based method, or a hybrid), the size of the village, the consumption level, the ranking of the individual poverty level, the gender, marital status, household size, the quality of the roof and top (which are indicators of poverty). }  Using the estimated parameter, we can then calibrate the simulations as follows.
 
We let $\eta_{i,t}, \sim \mathcal{N}(0, \sigma^2)$, where $\sigma^2$ is the residual variance from the regression. We then generate the network and the covariate as follows:
$$
A_{i,j} = 1\Big\{||U_i - U_j||_1 \le  2 \rho/\sqrt{N}\Big\}, \quad  U_i \sim_{i.i.d.} \mathcal{N}(0, I_2), \quad X_i = 1\{U_i^{(1)} > 0\}. 
$$
Here, $U_i^{(1)}$ is continuous and captures a measure of distances. Individuals are more likely to be friends if they have similar distances from the center, and $X_i$ is equal to one if an individual is far from the district's center from the median household. We fix $\rho = 1.5$ to guarantee that the objective's function optimum is approximately equal to the optimum observed from the data (in calibration, the optimum is $\beta \approx 0.26$, while $\beta^* \approx 0.29$ on the data). We then generate data 
\begin{equation} \label{eqn:heterogenous}
Y_{i,t} | X_i = x \quad =  \quad D_i \hat{\phi}_{1, x} + \frac{\sum_{j \neq i} A_{j,i} D_j}{\max\{\sum_{j \neq i} A_{j,i}, 1\}} \hat{\phi}_{2, x} +\Big( \frac{\sum_{j \neq i} A_{j,i} D_j}{\max\{\sum_{j \neq i} A_{j,i}, 1\}}\Big)^2 \hat{\phi}_{3, x} + \eta_{i,t}. 
\end{equation}   
where we removed covariates that did not interact with the treatment rule (i.e., do not affect welfare computations).
The policy function is 
$
\pi(x;\beta) = x \beta + (1 - x) (1 - \beta)
$
 where $\beta$ is the probability of treatment for individuals farer from the center. Here, we implicitely imposed a budget constraint $\beta P(X_i = 1) + (1 - \beta) P(X_i = 0) = 1/2$, where, by construction $P(X_i = 1) = 1/2$. 
 
 We collect results for the one-wave experiment in Figure \ref{fig:power_app}, \ref{fig:welfare_app} (left-panel), where we report power and the relative improvement from improving by $5\%$ the treatment probability for people in remote areas as discussed in the main text. Welfare improvements (and power) are increasing in the cluster size and the number of clusters. However, such improvements are negligible as we increase clusters from twenty to forty, suggesting that twenty clusters are sufficient to achieve the largest welfare effects. In the right-hand side panel of Figure \ref{fig:welfare_app} we report the out-of-sample regret. The regret is generally decreasing in the number of iterations, especially as the regret is further away from zero. As the regret gets almost zero (0.06$\%$), the regret oscillates around zero as the number of iterations increases due to sampling variation. This behavior is suggestive that for some applications, few iterations (in this case, ten) are sufficient to reach the optimum, up to a small error.  In Table \ref{tab:cov_app}, we observe perfect coverage for $n = 600$, and under-coverage by no more than five percentage points in the remaining cases.

\begin{figure}
\center
\includegraphics[scale=0.7]{./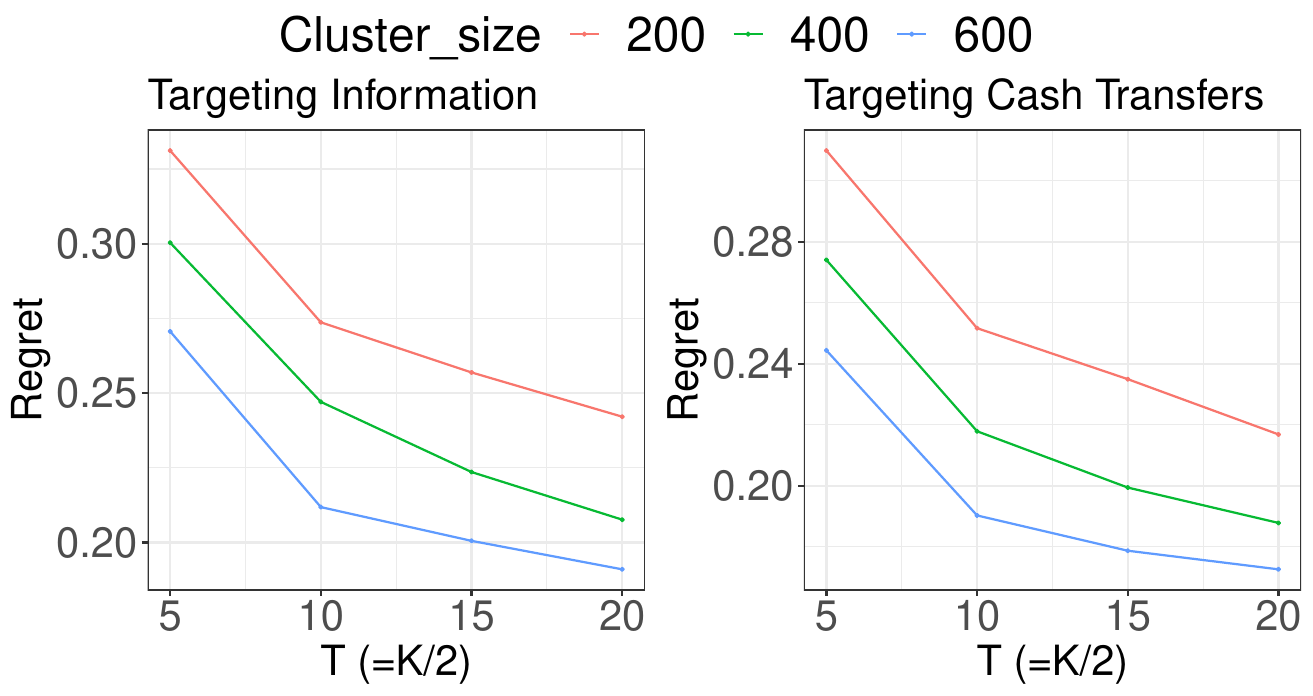}
\caption{Multi-wave experiment in Section \ref{sec:designs}. $200$ replications. In-sample regret, average across clusters, for $\rho = 2$.} \label{fig:in_sample}
\end{figure}

 \begin{figure}
 \centering 
\includegraphics[scale=0.7]{./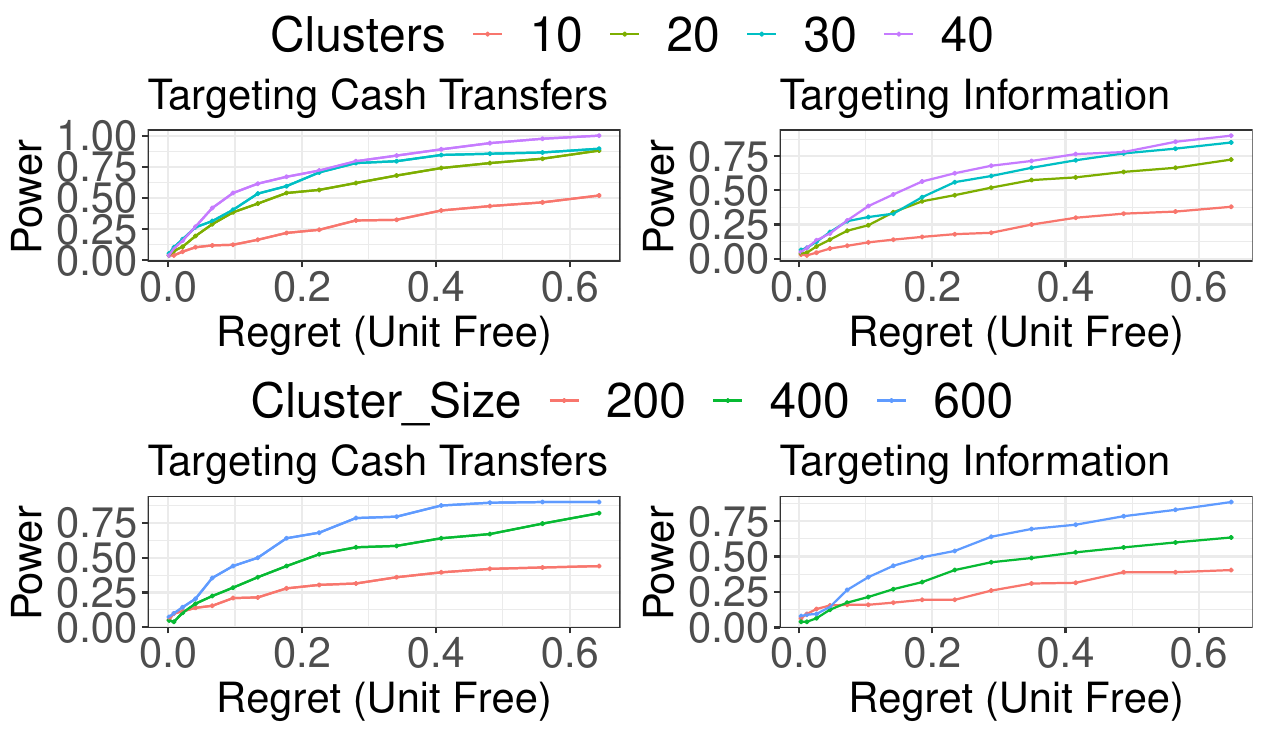}
\caption{One-wave experiment in Section \ref{sec:designs}. Power plot for $\rho = 6$. The panels at the top fix $n = 400$ and varies $K$. The panels at the bottom fix $K = 20$ and vary $n$.} \label{fig:power2}
\end{figure}

\begin{figure}
\centering 
\includegraphics[scale=0.7]{./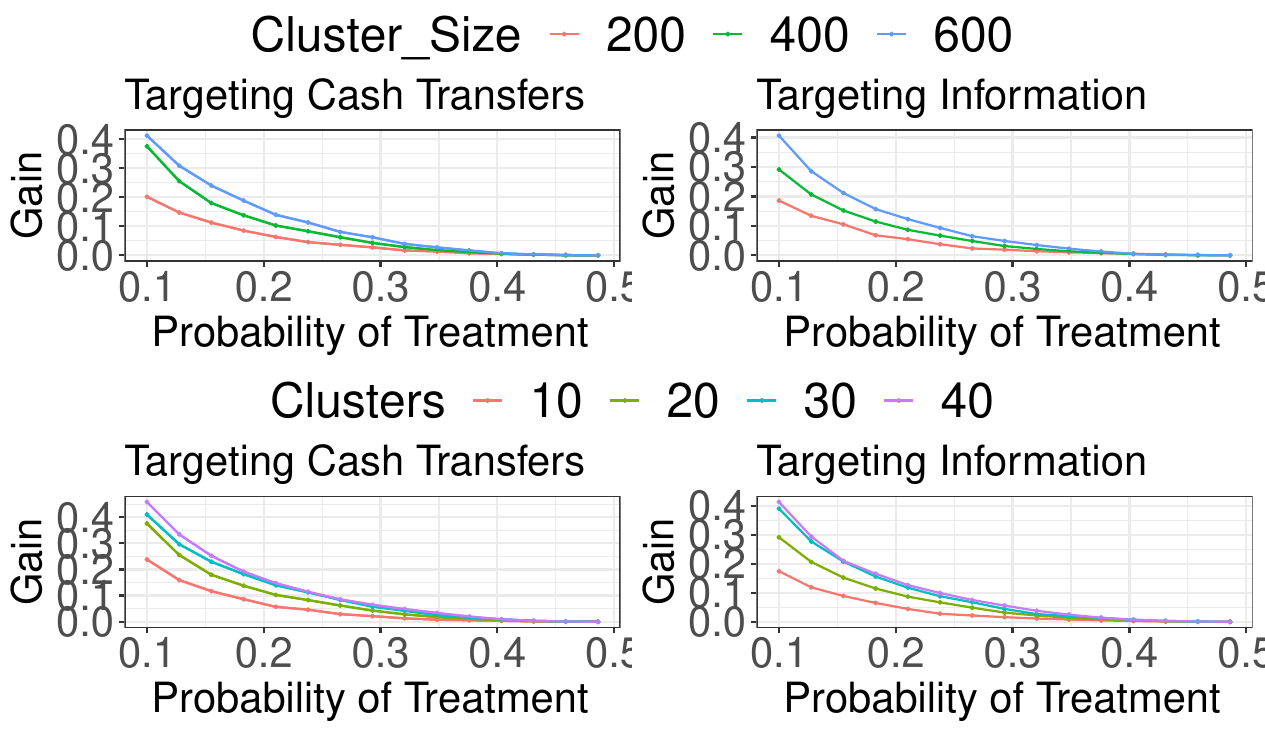}
\caption{One-wave experiment in Section \ref{sec:designs}. $\rho = 6$. Expected percentage increase in welfare from increasing the probability of treatment $\beta$ by $5\%$ upon rejection of $H_0$. Here, the x-axis reports $\beta \in [0.1, \cdots, \beta^* - 0.05]$. The panels at the top fix $n = 400$ and varies the number of clusters. The panels at the bottom fix $K = 20$ and vary $n$.} \label{fig:impr2}
\end{figure}

\begin{figure}
\center 
\includegraphics[scale=0.7]{./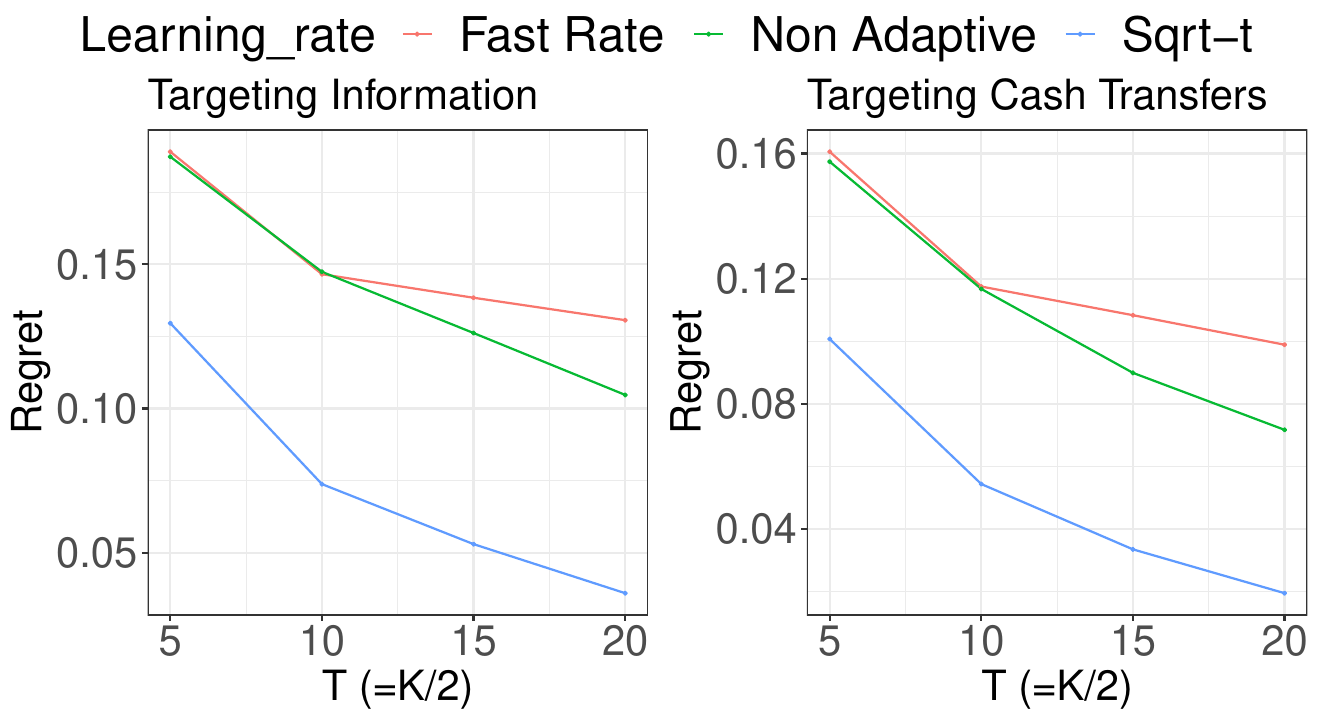}
\caption{Comparisons among different learning rates with experiment as in Section \ref{sec:main_design}. 200 replications, $\rho = 2, n = 600, K = 2T$. Fast rate denotes a rescaling of order $1/t$; non-adaptive depends on a rescaling of order $1/\sqrt{T}$; the last one (Sqrt-t) depends on a rescaling of order $1/\sqrt{t}$.} \label{fig:learning_rate_comparisons}
\end{figure}

\begin{table}[!ht]\centering
\caption{Multiple-wave experiment in Section \ref{sec:designs}. Relative improvement in welfare with respect to best competitor for $\rho = 6$. The panel at the top reports the out-of-sample regret and the one at the bottom the worst case in-sample regret across clusters. 
}    \label{tab:cov02b} 
\ra{1.3}
\begin{tabular}{@{}lrrrrrrrrrrrrrr@{}}\toprule
& \multicolumn{6}{c}{Information} & \ & \multicolumn{6}{c}{Cash Transfer} \\
\cmidrule{2-7}  \cmidrule{9-14}  
$T = $ &  5 & 10  & 15 & 20&  &   && & 5 & 10  & 15 & 20 & &  \\
\Xhline{.8pt} 
$n=200$ & $0.03$ & $0.105$ & $0.243$ & $0.156$ &  &   && & $0.233$ & $0.243$ & $0.264$ & $0.287$ \\ 
$n=400$ & $0.135$ & $0.130$ & $0.244$ & $0.258$ &  &   && & $0.243$ & $0.274$ & $0.321$ & $0.335$ \\ 
 $n=600$ & $0.217$ & $0.214$ & $0.281$ & $0.344$ &  &   && & $0.261$ & $0.313$ & $0.343$ & $0.360$\\  
 \Xhline{.8pt}  \\ 
 \Xhline{.8pt} 
$n=200$ & $0.587$ & $0.695$ & $0.670$ & $0.627$ &  &   && &  $0.247$ & $0.279$ & $0.300$ & $0.320$  \\ 
$n=400$ & $0.551$ & $0.667$ & $0.830$ & $0.869$ &  &   && &  $0.266$ & $0.306$ & $0.343$ & $0.352$ \\ 
 $n=600$ &  $0.589$ & $0.771$ & $0.897$ & $0.955$ &  &   && & $0.294$ & $0.360$ & $0.387$ & $0.387$ \\ 
 \bottomrule
\end{tabular}
\end{table}

 \begin{figure}
 \center
 \includegraphics[scale=0.6]{./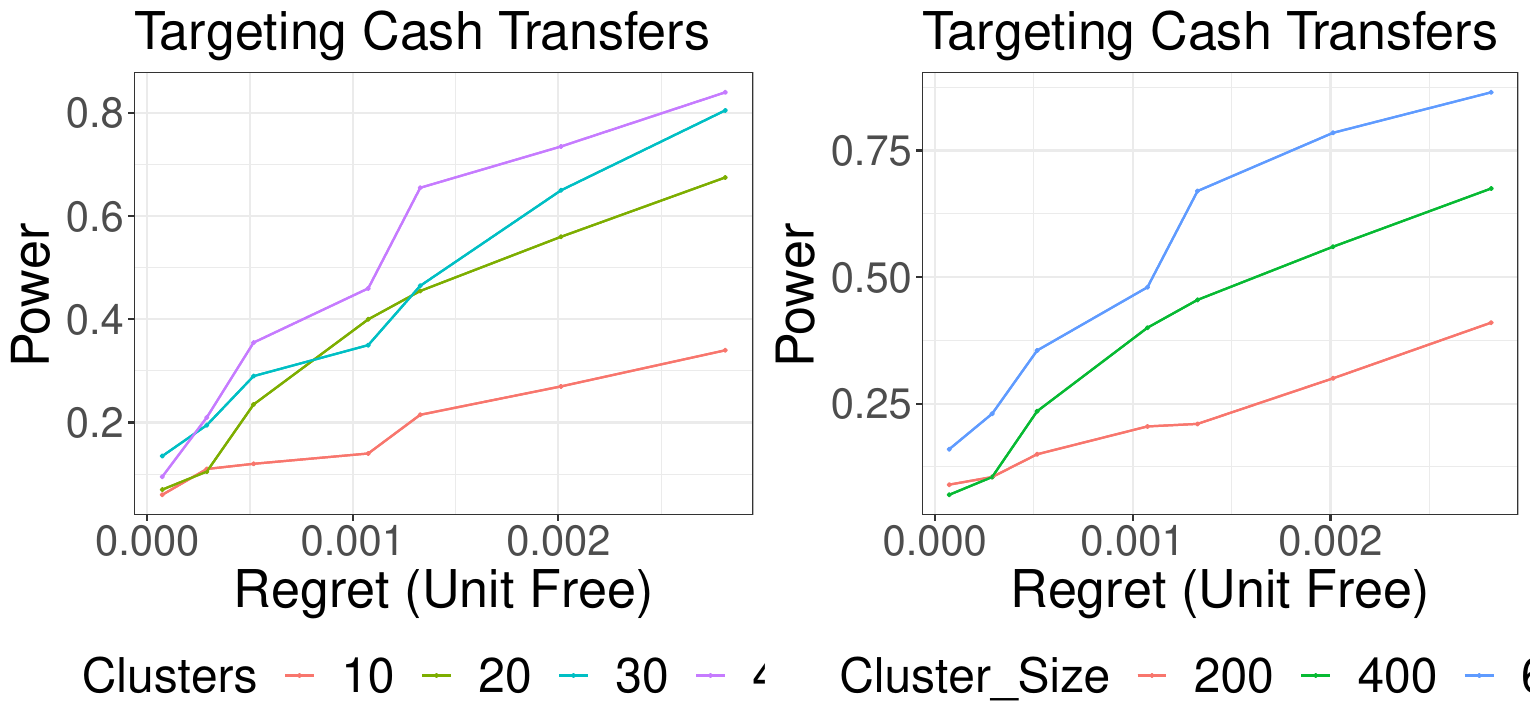}
 \caption{Single-wave experiment in Section \ref{sec:example_contd}. Power, 200 replications.}
 \label{fig:power_app}
 \end{figure}

 \begin{table}[!htbp] \centering 
  \caption{Single-wave experiment in Section \ref{sec:example_contd}, 200 replications. Coverage for tests with size $5\%$.} 
  \label{tab:cov_app} 
\begin{tabular}{@{\extracolsep{5pt}} ccccc} 
\\[-1.8ex]\hline 
$K = $ & $10$ & $20$ & $30$ & $40$ \\ 
\hline \\[-1.8ex] 
$n = 200$ & $0.955$ & $0.935$ & $0.900$ & $0.905$ \\ 
$n  = 400$ & $0.965$ & $0.945$ & $0.900$ & $0.950$ \\ 
$n  = 600$ & $0.935$ & $0.965$ & $0.920$ & $0.965$ \\ 
\hline \\[-1.8ex] 
\end{tabular} 
\end{table} 

 \begin{figure}
 \centering 
 \includegraphics[scale=0.7]{./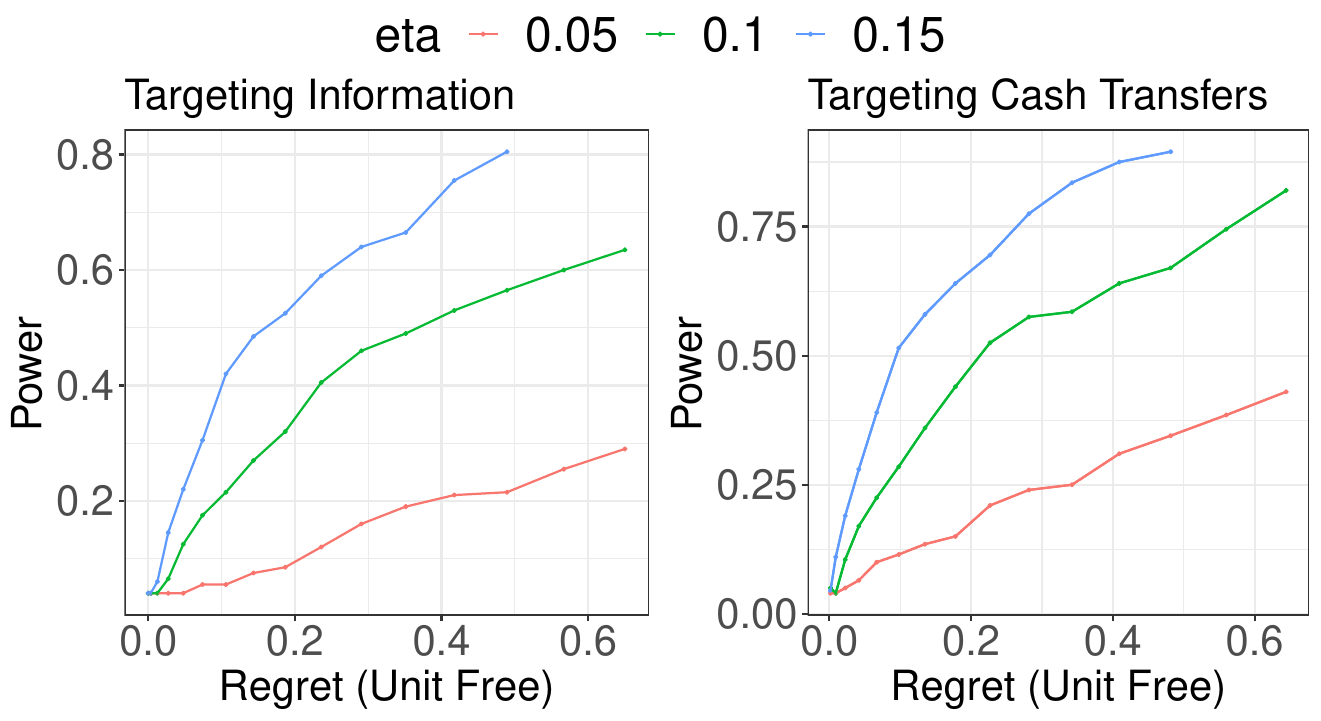}
 \caption{One wave experiment calibrated to \cite{alatas2012targeting} and \cite{cai2015social}. The plot reports power for different values of $\eta_n$ varies, with $K = 200, n = 400$, with 200 replications. } \label{fig:comparison_eta}
 \end{figure}
 
\begin{figure}
\center 
\includegraphics[scale=0.6]{./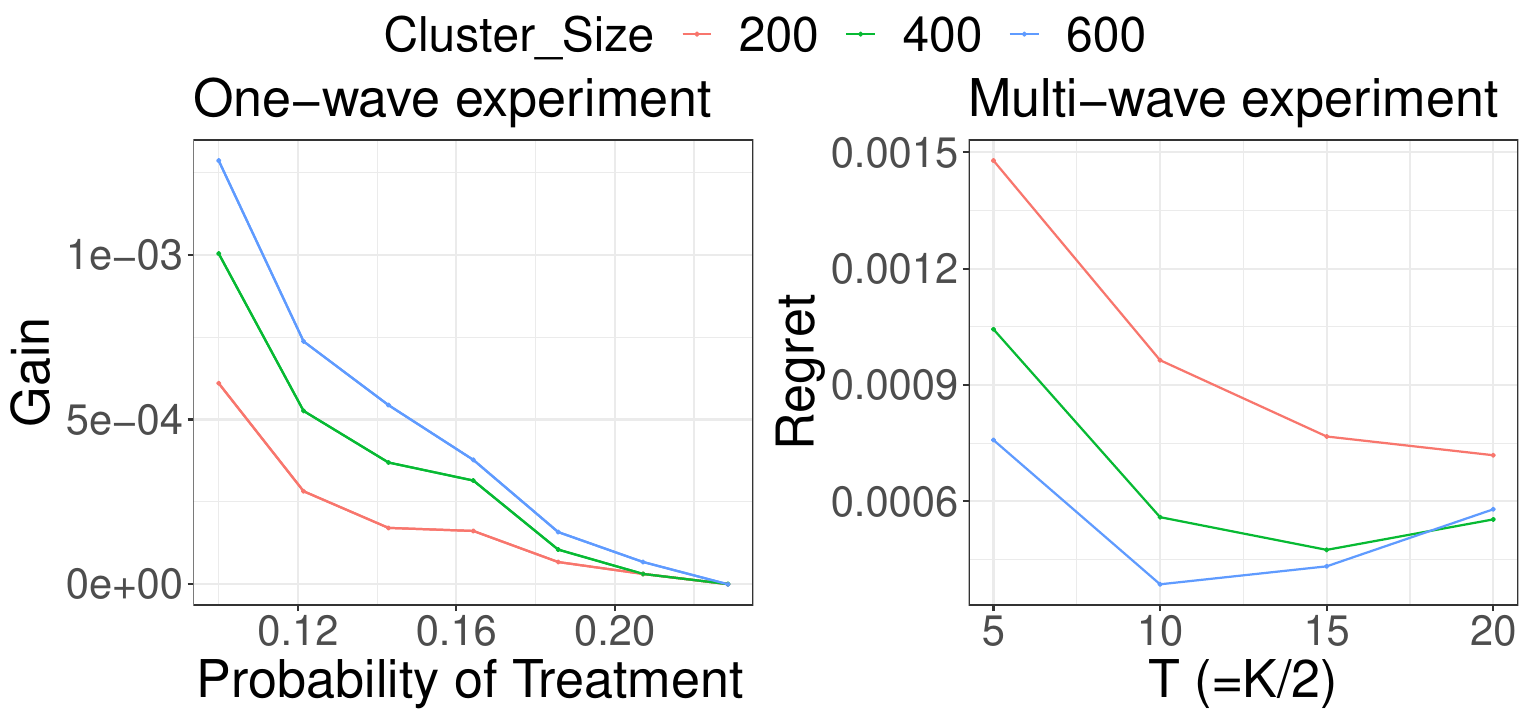}
\caption{Experiment in Section \ref{sec:example_contd}. Left-hand side panel reports the expected percentage increase in welfare from increasing the probability of treatment $\beta$ by $5\%$ to individuals in remote areas upon rejection of $H_0$. Here, the x-axis reports $\beta \in [0.1, \cdots, \beta^* - 0.05]$. The right-hand side panel reports the in-sample regret. 400 replications.  }
\label{fig:welfare_app} 
\end{figure}

 \begin{figure}
 \centering 
 \includegraphics[scale = 0.5]{./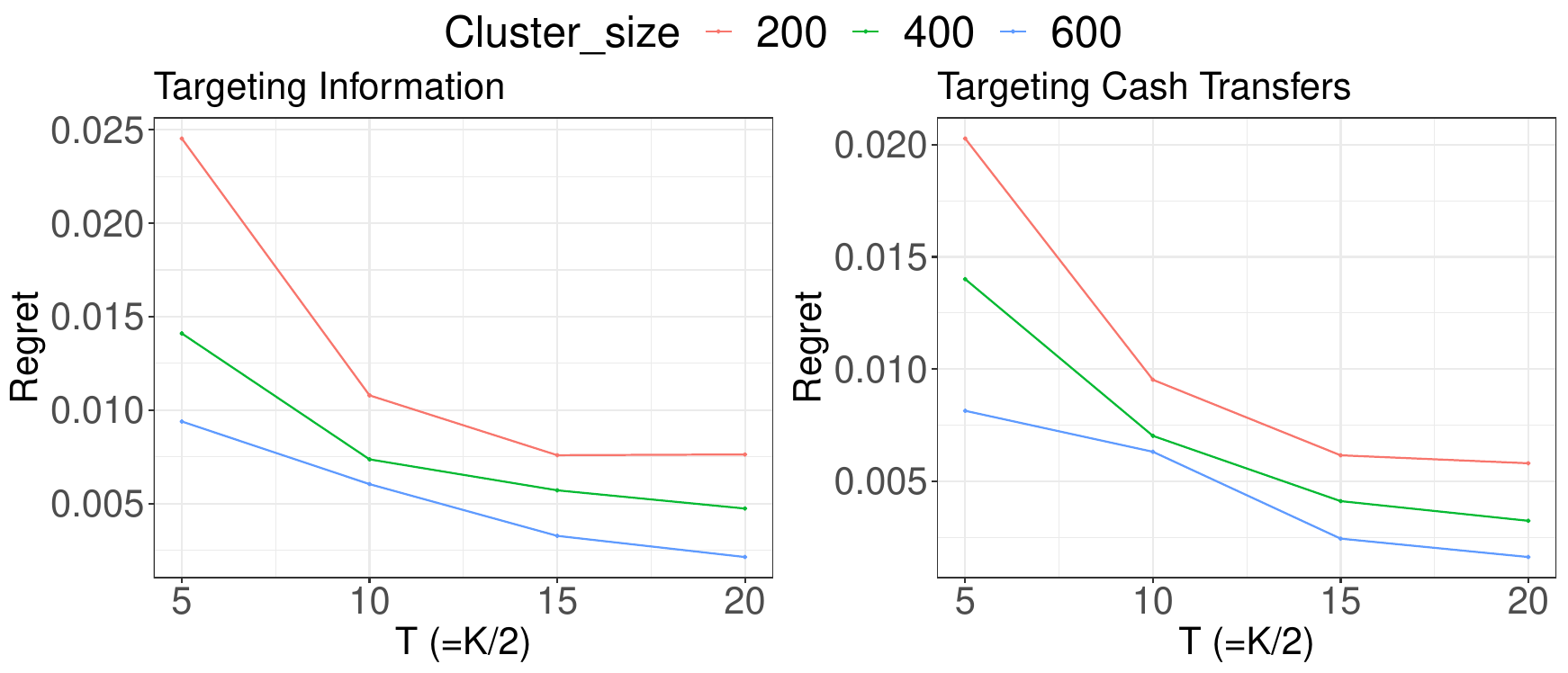}
 \caption{Multi-wave experiment wave experiment in Section \ref{sec:main_design}, as $\beta$ is initialized at the optimum value minus $5\%$. Reported in the figure is the out-of-sample welfare. 200 replications.}
 \label{fig:different_starting_values}
 \end{figure}

\begin{figure}[!htp]
\centering 
\includegraphics[scale=0.5]{./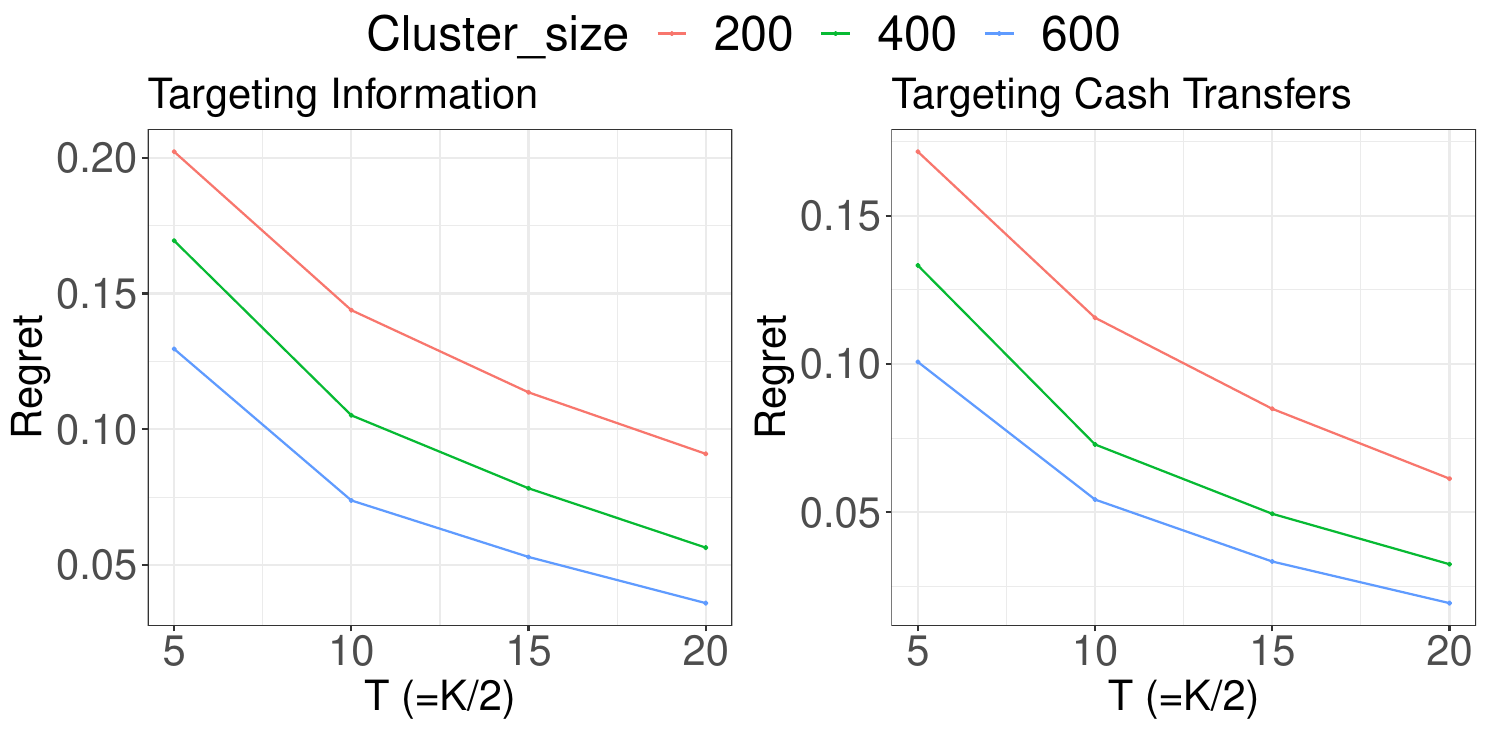}
\caption{Adaptive experiment $\rho = 2$. $200$ replications. The panel reports the out-of-sample regret of the method as a function of the number of iterations. } \label{fig:oos}
\end{figure}

\end{appendices}

 \end{document}